\numberwithin{equation}{section}
\numberwithin{equation}{section}
\newtheorem{theorem}{Theorem}[section]
\newtheorem{proposition}[theorem]{Proposition}
\newtheorem{remark}[theorem]{Remark}
\newtheorem{lemma}[theorem]{Lemma}
\newtheorem{corollary}[theorem]{Corollary}
\newtheorem{definition}[theorem]{Definition}
\newtheorem{assumption}[theorem]{Assumption}
\DeclareMathOperator*{\fiint}{\ensuremath{\iint\text{\kern-1.36em{\raisebox{5.87pt}{\rotatebox{-93}{$\setminus$}}}}}}
\def\N{{\mathbb N}}
\def\R{{\mathbb R}}
\def\C{{\mathbb C}}
\def\rL{{\mathrm L}}
\def\rS{{\mathrm S}}
\def\HL{{\cH_\rL}}
\def\1{{\mathds{1}}}
\def\cD{\mathcal{D}^c}
\def\cE{{\mathcal E}}
\def\cH{{\mathcal H}}
\def\cL{{\mathcal L}}
\def\cK{{\mathcal K}}
\def\cS{{\mathcal S}_c}
\def\cB{{\mathcal B}}
\def\cL{{\mathcal L}}
\def\cU{{\mathcal U}}
\def\cHn{{\mathcal{H}_{\rm L}}}
\def\cO{{\mathcal O}}
\def\Tr{{\rm Tr}}
\newcommand{\lc}{\left<}
\newcommand{\rc}{\right>}
   \DeclarePairedDelimiterX\Set[1]\{\}{%
      
      #1
}
\title{On the relativistic effect in Dirac--Fock theory}
\author{ Long Meng\footnote{\textsc{%
      Long Meng,
      Center for Interdisciplinary Applied Mathematics \& Institute of Fundamental and Transdiciplinary Research,
      Zhejiang University,
      China
    }%
    (\texttt{\href{mailto:longmeng@zju.edu.cn}{longmeng@zju.edu.cn}})}
}
\date{}
\begin{document}

\maketitle

\begin{abstract}
    In this paper, we study the error bound between the Dirac--Fock ground-state energy and the Hartree--Fock ground-state energy, a quantity known as the relativistic effect in quantum mechanics. We confirm that the relativistic effect in the Dirac--Fock ground-state energy is of the order $\cO(c^{-2})$ with $c$ being the speed of light. Furthermore, if the potential between electrons and nuclei is regular, we get the well-known leading order relativistic correction -- the Breit--Pauli term, which is the sum of the mass-velocity term, the Darwin term, and the spin-orbit term. As a consequence, we also show that the same relativistic effects and leading order  relativistic correction also hold in a QED model introduced by Mittleman when the vacuum polarization -- a term of the order $\cO(c^{-3})$ -- is neglected.  To our knowledge, 
this is the first time in mathematics that the leading-order relativistic correction has been obtained from nonlinear Dirac ground-state energy problems.
\end{abstract}

\section{Introduction}
This work is part of a series of papers by the author on Dirac--Fock (DF) theory \cite{meng2024rigorous,crystals,catto2023properties,meng2024ground}. Here, we focus on the relativistic effect in the DF ground-state energy, that is the error bound between the DF ground-state energy and the Hartree--Fock (HF) ground-state energy. Even though we focus on two models in quantum physics, the main contribution is devoted to the study of the relativistic effects of nonlinear Dirac ground-state energy problems which are defined by min-max theory rather than by first principle.

\medskip

The emergence of relativistic quantum mechanics has been one of the most remarkable developments in quantum physics over the past century. Since Dirac's pioneering work, relativity has been a part of the quantum physical picture. While non-relativistic theories successfully describe quantum systems with particle velocities much smaller than the speed of light $c$, relativistic effects play a crucial role in high-precision calculations. Now it became clear that relativistic effects had an essential influence on a number of physical and chemical properties. 

Understanding how relativistic effects influence non-relativistic theories is crucial for bridging the gap between non-relativistic quantum mechanics and the more comprehensive relativistic framework. This is of particular importance for Fermions with spin $\pm \frac{1}{2}$ due to the complexity of the Dirac operator. 

\subsection{Linear Dirac eigenvalue problem}

As a small quantity -- especially for light atoms and molecules built from them -- the relativistic effect is a perturbation of the non-relativistic energies. Consequently, perturbation methods are developed in quantum mechanics, and the direct perturbation theory (see e.g., \cite{kutzelnigg2002perturbation,dyall2007introduction} for a review) is frequently used and has shown significant success in quantum chemistry. 

From a rigorous mathematical point of view, the perturbation theory used in quantum mechanics only works for linear Dirac eigenvalue problems. Its mathematical justification is based on the holomorphy of the resolvent of the Dirac operators with respect to $c^{-2}$ \cite{bulla1992holomorphy}. We refer to \cite[Chp. 6]{thaller2013dirac} as a mathematical review of this perturbation argument.

Using perturbation theory, the eigenvalues of the Dirac operators can be expressed as a Taylor expansion of $c^{-2}$. Among these terms, the leading relativistic term (i.e., the Brei--Pauli term which is of the order $\cO(c^{-2})$) plays an important role in quantum physics. When the nuclei are not very heavy, the leading order correction term is the main contribution of the relativistic effect. It is shown in quantum chemistry that for rare gases (i.e., the number of nuclear charges $z=2,10,18,\cdots,86$),  the leading order correction term accounts for at least $80\%$ of the total relativistic effects \cite{ottschofski1995relativistic}.

\subsection{Nonlinear Dirac ground-state problem}

Concerning the nonlinear Dirac bound-state energies, the study of the relativistic effect is more delicate. A fundamental difficulty stems from the fact that the Dirac operator is unbounded both from above and from below. In physics, Quantum Electrodynamics (QED) provides the theoretical framework for describing such systems. However, this theory lacks a rigorous mathematical foundation due to divergence issues. Thus in physics and chemistry, the ground-state energy is mainly defined by the min-max theory and it is indeed an energy of a special excited state from the point of view of quantum chemistry, see e.g., \cite{saue2011relativistic} in physics.


The DF model, first introduced in \cite{swirles1935relativistic}, is widely used in computational chemistry. It is a variant of the HF model in which the kinetic energy operator $-\frac{1}{2}\Delta$ is replaced by the free Dirac operator $\cD$. Even though in principle it is not physically meaningful, this approach gives remarkably accurate results that are in excellent agreement with experience data (see, e.g., \cite{desclaux1973relativistic,gorceix1987multiconfiguration}). 

Mathematically the DF ground-state energy can be defined through min-max theory \cite{barbaroux2005some,esteban2001nonrelativistic,Esteban-Sere-minmax-DF}:
\begin{align}\label{eq:DF-min-max}
 e_{c,q}=  \min_{\substack{ \Phi\in G_q(H^{1/2})\\ \Phi\textrm{ solution of DF equations}}}\cE_c(\gamma_\Phi)= \inf_{\substack{V\subset \Lambda^+_c H^{1/2}\\ \dim(V)=q}}\;\;\sup_{\substack{\Phi\in (\Lambda^-_c H^{1/2}\oplus V)^N \\ 0\leq {\rm Gram}_{L^2}\Phi\leq \1_q}}\cE_c(\gamma_\Phi).
\end{align}
Solutions of DF equation have been studied in \cite{esteban1999solutions,paturel2000solutions}. Here $\gamma_\Phi$ is the density matrix associated with $\Phi:=(u_1,\cdots,u_q)\in G_q$ defined by its kernel
\begin{align*}
  \gamma_{\Phi}(x,y)=\sum_{j=1}^q u_j(x)\otimes u_j^*(y).
\end{align*}
The space $G_q$ is the functional space presenting the wavefunctions of $q$ electrons and is a Grassmannian manifold defined by
\[
G_q(H^{1/2}):=\{G\textrm{ subspace of } H^{1/2}(\mathbb{R}^3;\mathbb{C}^4);\,\dim(G)=q\}
\]
where $q$ is the number of electrons, and the $q\times q$ matrix ${\rm Gram}_{L^2}\Phi$ is defined by 
\begin{align*}
    ({\rm Gram}_{L^2}\Phi)_{j,k}=\int_{\R^3}u_k^*(x)u_j(x)dx.
\end{align*}
The operator $\Lambda^+_c$ (resp. $\Lambda^-_c$), defined by \eqref{eq:2.3} is the positive (resp. negative) projector of the free Dirac operator.

Eq. \eqref{eq:DF-min-max} means that minimizers are critical points of nonlinear Dirac functional. As a critical point, DF minimizers also satisfy the following nonlinear constraint:
\begin{align}\label{eq:non-constraint}
    \gamma =P^+_{c,\gamma} \gamma P^+_{c,\gamma}
\end{align}
where $P^+_{c,\gamma}$,  defined by \eqref{eq:P-pm}, is the projector of the positive subspace of the DF operator. From QED point of view,
\begin{align*}
    P^-_{c,\gamma}:=1-P^+_{c,\gamma}
\end{align*}
is an approximation of Dirac sea, which corresponds to the state of infinite virtual electrons that occupies the whole negative spectrum of some Dirac operators. Eq. \eqref{eq:non-constraint} implies that electrons associated with the state $\gamma$ in DF theory are away from the Dirac sea generated by $\gamma$ itself when the vacuum polarization is ignored.

Based on \eqref{eq:non-constraint}, recently S\'er\'e \cite{sere2023new} introduced a new definition of the DF ground-state energy:
\begin{align}\label{eq:1.5}
    E_{c,q}:=\min_{\substack{\gamma\in \Gamma_{q}\\ \gamma =P^+_{c,\gamma} \gamma P^+_{c,\gamma}}}\mathcal{E}_c(\gamma).
\end{align}
This corresponds to \eqref{eq:min-DF} and $\Gamma_q$ is the state of electrons defined by \eqref{eq:state-electrons}. We show in Corollary \ref{cor:definition-equiv} below and \cite{meng2024rigorous} that these two definitions of DF ground-state energy are equivalent. More precisely, in Corollary \ref{cor:definition-equiv}, we will show that for $c$ large enough and $q\leq z$,
\begin{align}\label{eq:1.6}
    E_{c,q}=e_{c,q}.
\end{align}

In this paper, we study the relativistic effect in the DF ground-state energy by using the definition \eqref{eq:1.5}. Then, Eq. \eqref{eq:1.6} implies that the result is equivalent to the study of the relativistic effect in the DF ground-state energy defined by the min-max theory \eqref{eq:DF-min-max}.

\subsection{Mittleman's definition of QED}

As explained above, the DF model is not bounded from below. This deficiency also raises questions about its physical derivation: one would like to show that the DF model or its refined variants can be interpreted as an approximation of QED, see e.g., \cite{mittleman1981theory,sucher1980foundations,chaix1989quantum} and the references therein. Steps in the direction of a rigorous justification are undertaken by considering the Mittleman's definition of QED, see \cite{mittleman1981theory}. In addition, Mittleman's definition of QED gives a way to understand the physical meaningful definition of Dirac sea in relativistic quantum many-body system, see e.g., \cite{meng2024ground}.

Mittleman's QED model is defined as follows (see, e.g., \cite[Eq. (12)]{meng2024rigorous} or \cite[Eq. (12)]{barbaroux2005some}):
\begin{equation*}
    E^{\rm M}_{c,q}:=\sup_{P_c^-\in \mathcal{P}_c}E^{(P_c^-)}_{c,q}
\end{equation*}
and
\begin{align*}
    E^{(P_c^-)}_{c,q}=\min_{\gamma\in \Gamma^{(P_c^-)}_q}\mathcal{E}_c(\gamma).
\end{align*}
Here $\Gamma_q^{(P^-_c)}$, defined by \eqref{eq:electron-position-field} is the set of admissible state of electrons for a fixed Dirac sea $P^-_c$; and $\mathcal{P}_c$, defined by \eqref{eq:dirac-sea}, is the set of Dirac sea in electron-positron Hartree--Fock theory \cite{barbaroux2005hartree}.

In physics, this model states that when the vacuum polarization is neglected, the real physical ground state energy should be obtained by maximizing the ground state energy of HF-type models over all allowed one-particle electron subspace. We refer to \cite[Section 4.5]{EstebanLewinSere-variational} for more details on the formal derivation of this model from Bogoliubov-Dirac-Fock model, another QED model with vacuum polarization. In particular, as the vacuum polarization is neglected,  no divergence problem is encountered. Thus we do not need to introduce the ultraviolet cutoff.

It is shown in \cite{barbaroux2005hartree} that, under some assumptions, $E^{(P_c^-)}_{c,q}$ admits a minimizer for a given $P^-_c$. However, as a maximum problem for a set of non-compact operators, $ E^{\rm M}_{c,q}$ may not have a maximizer. Nevertheless, the relationship between Mittleman's definition and the DF model has been studied in \cite{barbaroux2005hartree,huber2007solutions,barbaroux2005remarks,meng2024rigorous}. Together with \cite{meng2024rigorous}, we show in Corollary \ref{cor:Mittleman} below that for $q\leq z$ and $c$ large enough,
\begin{align*}
    |E^{\rm M}_{c,q}-E_{c,q}|=\cO(c^{-4}).
\end{align*}

We also point out that in the full QED theory, typical QED effects such as vacuum polarization is of the size $\mathcal{O}(c^{-3})$ (see, e.g., \cite[Chap. 5.5]{dyall2007introduction}). Thus the ground state energy $E^{\rm M}_{c,q}$ due to Mittleman and the DF ground state energy describes the full QED model up to an error bound of the order $\mathcal{O}(c^{-3})$.

\subsection{Relativistic effect in nonlinear Dirac ground-state energy}

The relationship between nonlinear Dirac problems and their non-relativistic counterparts has been an active area of mathematical research, see, e.g., \cite{esteban2001nonrelativistic,Pan2024,borrelli2019nonlinear}. However, they mainly focus on the non-relativistic limit of the bound-state solutions, that is the solutions of Dirac equations will converge to the solutions of some non-relativistic Schr\"odinger/Pauli equations. Concerning the bound-state energy, this implies that
\begin{align*}
     \boxed{\mbox{Dirac bound-state energy}}=  \boxed{\mbox{Schr\"odinger/Pauli bound-state energy}}+o_{c\to \infty}(1).
\end{align*} 
Such results only provide asymptotic convergence of energies without quantifying the relativistic effects.  

For the charge of nuclei $z$ large enough and some special number of electrons $q$, it is shown in \cite{huber2007solutions} that the Dirac--Fock ground-state is unique if $c$ is large enough by using fixed point method. A unique minimizer of Hartree--Fock ground-state is constructed in \cite{griesemer2012unique} in the same manner. In these special cases, the fixed point implies that one can expand the Dirac--Fock ground-state energy in power of $c^{-1}$. However, the fixed point method is limited and it can not be used to study the relativistic effects of general $q$ and $z$.

In this paper, we focus on the relativistic effect in the DF ground-state energy for general $q$ and $z$, confirming predictions from relativistic quantum chemistry (i.e., Theorem \ref{th:rela-effect}), that is
\begin{align}
   E_{c,q}=E_q^{\rm HF}+\cO(c^{-2})
\end{align}
with $E_q^{\rm HF}$ being the HF ground-state energy defined by \eqref{eq:min-HF}. Furthermore, if the potential between electrons and nuclei is regular (that is the potential satisfies Assumption \ref{ass:V}), then we obtain the leading order relativistic correction term (see Theorem \ref{th:rela-correction}): there exists a term $E^{(2)}_c$ of the order $\cO(c^{-2})$ such that
\begin{align}
  E_{c,q}=E_q^{\rm HF}+ E^{(2)}_c+o(c^{-2}).
\end{align}
Here the leading order correction term $E^{(2)}_c$ can be decomposed into three terms: the mass-velocity term, the Darwin term and the spin-orbital term (see Proposition \ref{prop:rela-decomp}).

Concerning the QED model defined by Mittleman, the study of the relativistic effect in DF model also imply that
\begin{align*}
    |E^{\rm M}_{c,q}-E_{q}^{\rm HF}|=\cO(c^{-2})
\end{align*}
and under Assumption \ref{ass:V},
\begin{align*}
    E^{\rm M}_{c,q}= E_{q}^{\rm HF}+ E^{(2)}_c+o(c^{-2}).
\end{align*}
This is indeed Corollary \ref{cor:mittleman-rela}.

As mentioned above, typical QED effects such as vacuum polarization is of the size $\mathcal{O}(c^{-3})$ (see, e.g., \cite[Chap. 5.5]{dyall2007introduction}). Thus the same relativistic effect and the corresponding leading order relativistic correction are expected to hold in full QED models from a physical point of view. 

\medskip

To our knowledge, 
this is the first time in mathematics that the leading-order relativistic correction has been obtained from nonlinear Dirac ground-state energy problems. In addition, if the HF minimizers are not unique, the leading-order correction term $E^{(2)}_c$ is defined as the minimum energy of the Breit–Pauli term over all HF minimizers (see \eqref{eq:E2-minimum}). It should be noted that in quantum chemistry, the formula for $E^{(2)}_c$ has only been derived formally for the non-degenerate case of linear Dirac operators ( see e.g., \cite[Section 5]{kutzelnigg2002perturbation}), the correct expression for $E^{(2)}_c$ has not been obtained in non-degenerate cases \cite[Section 6]{kutzelnigg2002perturbation}. This failure is due to an orthogonalization issue. In this paper, we solved this issue and obtained the correct term $E^{(2)}_c$ for general cases by introducing some overlap matrices, see Section \ref{sec:3.4} for more details.

Moreover, even though we only consider the DF ground-state energy problem, min-max definitions analogous to \eqref{eq:DF-min-max} are also used in other nonlinear Dirac problem, see e.g., \cite{CotiZelati,Pan2024}. We can also define the ground-state energy of these nonlinear Dirac problem analogous to \eqref{eq:1.5}. Once the equivalence between the ground-state energy defined by the min-max method \eqref{eq:DF-min-max} and by the projection method \eqref{eq:1.5} are proven, we expect that the method of this paper can also be used to study the relativistic effects of these nonlinear Dirac ground-state energies that are defined by min-max theory.

\subsection{Other relevant works}\label{sec:1.4}
Finally, we end the introduction by summarizing some relevant problems of the DF model.

Concerning the Dirac-Fock type theory, there is another QED model called Bogoliubov-Dirac-Fock model that has been studied in mathematics, see e.g., \cite{Chaix_1989,chaix1989quantum,hainzl2005existence,hainzl2007mean,hainzl2009existence,Gravejat-Lewin-Sere-groundstate}. In this model, the vacuum polarization is considered. Thus our method on relativistic effects does not work for this problem. However, as the vacuum polarization is of the order $\cO(c^{-3})$, results similar to Theorem \ref{th:rela-effect} and Theorem \ref{th:rela-correction} are expected in Bogoliubov-Dirac-Fock model. This will be studied later.

We also mention the relativistic Scott conjecture which concerns the difference between relativistic many-body problem (including Dirac--Fock model) and the Thomas--Fermi model. In the regime that $q=z\to \infty$ and $0<z/c$ remains fixed and small, the relativistic many-body ground-state energy can be approximated by the Thomas--Fermi energy of order $\mathcal{O}(z^{7/3})$ with a relativistic Scott correction of order $\mathcal{O}(z^2)$. Depending on the choice of the Dirac sea, different relativistic effects can be observed in Scott corrections. We refer to \cite{frank2023scott,fournais2020scott,MR2588386} and references therein for the mathematical study of the Scott correction, and we also point out that a ``wrong'' physical projection will lower the relativistic effect in Scott correction by using Mittleman's definition of QED~\cite{meng2024ground} .

\medskip

{\bf Organization of the paper.} In Section \ref{sec:2}, we recall the definition of the Dirac operator and Schr\"odinger operator, and introduce some useful functional spaces and other notations. In Section \ref{sec:3}, we recall some mathematical results on the Dirac-Fock theory and Hartree-Fock theory, and then we state our mains results (i.e., Theorem \ref{th:rela-effect}, Theorem \ref{th:rela-effect} and Corollary \ref{cor:mittleman-rela}). In Section \ref{sec:3.4}, we give a sketch of proof of Theorem \ref{th:rela-correction}, the proof of Theorem \ref{th:rela-effect} follows a similar manner. Sections \ref{sec:5}-\ref{sec:8} are denoted to the proof of Theorem \ref{th:rela-effect} and Theorem \ref{th:rela-correction}. The organization these sections is explained in Section \ref{sec:3.4}.

\section{Dirac and Schr\"odinger operator, functional space and some notations}\label{sec:2}

\subsection{Free Dirac and Schr\"odinger operator}

In this paper, for simplicity we assume the mass of particle $m=1$.

In non-relativistic quantum mechanics with spin $\pm\frac{1}{2}$, the free Schr\"odinger operator is defined by 
\begin{align}
    H_0:=-\frac{1}{2}\Delta.
\end{align} 
The operator $H_0$ acts on $2-$spinors; that is, on functions from $\R^3$ to $\C^2$. It is self-adjoint in $L^2(\R^3;\C^2)$ (which is equivalent to $\cH_{\rm L}$ defined in \eqref{eq:H-fun-non}  below), with domain $H^2(\mathbb{R}^3;
\mathbb{C}^2)$ and form domain $H^{1}(\mathbb{R}^3;\mathbb{C}^2)$. Its spectrum is $\sigma(H_0)=[0,+\infty)$.

\medskip

In the framework of relativistic quantum mechanics, the Schr\"odinger operator should be replaced by the free Dirac operator which is defined by 
\[
\cD=-ic\sum_{k=1}^3\alpha_k\partial_k+c^2\beta\]
with the speed of light $c$ and $4\times4$ complex matrices $\alpha_1,\alpha_2,\alpha_3$ and $\beta$, whose standard forms are:
\[
\beta=\begin{pmatrix} 
\1_2 & 0 \\
0 & -\1_2 
\end{pmatrix},\,
\alpha_k=\begin{pmatrix}
0&\sigma_k\\
\sigma_k&0
\end{pmatrix},
\]
where $\1_2$ is the $2\times 2$ identity matrix and the $\sigma_k$'s, for $k\in\{1,2,3\}$, are the well-known $2\times2$ Pauli matrices
\[
\sigma_1=\begin{pmatrix}
0&1\\
1&0
\end{pmatrix},\,
\sigma_2=\begin{pmatrix}
0&-i\\
i&0
\end{pmatrix},\,
\sigma_3=\begin{pmatrix}
1&0\\
0&-1
\end{pmatrix}.
\]
These algebraic conditions are here to ensure that $\cD$ is a symmetric operator, such that
\begin{align}
    (\cD)^2=c^4-c^2\Delta.
\end{align}

The operator $\cD$ acts on $4-$spinors; that is, on functions from $\R^3$ to $\C^4$. It is self-adjoint in $L^2(\R^3;\C^4)$, with domain $H^1(\mathbb{R}^3;
\mathbb{C}^4)$ and form domain $H^{1/2}(\mathbb{R}^3;\mathbb{C}^4)$. Its spectrum is $\sigma(\cD)=(-\infty,-c^2]\cup[+c^2,+\infty)$. Following the notation in \cite{esteban1999solutions,paturel2000solutions}, we denote by $\Lambda^+_c$ and $\Lambda^-_c=\1_{\mathcal{H}}-\Lambda^+_c$ respectively the two orthogonal projectors on $\mathcal{H}$ corresponding to the positive and negative eigenspaces of $\cD$; that is
\[
\begin{cases}
\cD\Lambda^+_c=\Lambda^+_c\cD=\Lambda^+_c\sqrt{c^4-c^2\Delta}=\sqrt{c^4-c^2\Delta}\,\Lambda^+_c;\\
\cD\Lambda^-_c=\Lambda^-_c\cD=-\Lambda^-_c\sqrt{c^4-c^2\Delta}=-\sqrt{c^4-c^2\Delta}\,\Lambda^-_c.
\end{cases}
\]
More precisely, $ \Lambda^\pm_c$ can be defined by
\begin{align}\label{eq:2.3}
    \Lambda^\pm_c:=\1_{\R^\pm}(\cD)= \frac{1}{2}\pm \frac{\cD}{2|\cD|}.
\end{align}

For any relativistic one-body wavefunction $u$ in $L^2(\R^3;\C^4)$, we set\footnote{In relativistic quantum chemistry, $u^{\rm L}$ is the large component and $u^{\rm S}$ is the small component.}
\begin{align}
    u(x):=\begin{pmatrix}
        u^\rL(x)\\ u^\rS(x)
    \end{pmatrix},\qquad u^\rL,u^\rS:\;\R^3\to \C^2.
\end{align}
Let
\begin{align}\label{eq:cL}
  \mathcal{L}=-i{\pmb \sigma}\cdot \nabla=-i\sum_{k=1}^3\sigma_k\partial_k,\qquad   \pmb \sigma=(\sigma_1,\sigma_2,\sigma_3);
\end{align}
let
\begin{align}\label{eq:S-transform}
    \cS u:=\begin{pmatrix}
        u^L(x)\\ \frac{1}{2c}\cL u^L(x)
    \end{pmatrix}
\end{align}
and
\begin{align}\label{eq:K-transform}
    \cK_\rL u:=\begin{pmatrix}
        u^\rL(x)\\ 0
    \end{pmatrix},\qquad \cK_\rS u:=\begin{pmatrix}
         0\\ u^\rS(x)
    \end{pmatrix}.
\end{align}
Then the Dirac operator and Schr\"odinger operator have the following relationship:
\begin{align}\label{eq:D-Delta}
    (\cD-c^2) \cS u= \begin{pmatrix}
        -\frac{1}{2}\Delta u^L\\  0
    \end{pmatrix} =H_0 \cK_\rL u
\end{align}
since $\cL^2=-\Delta$ on $H^2(\R^3;\C^2)$. This formula plays the essential role in the paper.

\subsection{Functional spaces and density matrices}\label{sec:2.2}

Throughout the paper, we restrict ourselves to $4$-spinors, and we denote $\cH:=L^2(\R^3;\C^4)$ and $H^{s}:=H^{s}(\R^3;\C^4)$ with any $s\in \R$. In non-relativistic quantum mechanics, the state of electrons is described by $u^{\rm L}$, and we have $u^{\rm S}= 0$. Thus for the one-body wavefunctions in non-relativistic quantum mechanics, we replace the functional space $L^2(\R^3;\C^2)$ by the following subspace of $\cH$:
\begin{align}\label{eq:H-fun-non}
    \cHn:=\cK_{\rm L} \cH=\{u\in \cH; u^{\rm S}=0\}\cong L^2(\R^3;\C^2).
\end{align}

Let $\mathcal{B}(W_1,W_2)$ be the space of bounded linear maps from a Banach space $W_1$ to a Banach space $W_2$, equipped with the norm 
\[
\|A\|_{\mathcal{B}(W_1,W_2)}:=\sup_{u\in W_1,\,\|u\|_{W_1}=1}\|Au\|_{W_2}.
\]
We denote $\mathcal{B}(W):=\mathcal{B}(W,W)$. The functional space $\mathfrak{S}_p:=\mathfrak{S}_p(\mathcal{H})$ for $p\in [1,\infty)$ is defined by
\begin{align*}
    \mathfrak{S}_p:=\{\gamma\in\mathcal{B}(\mathcal{H});\Tr[|\gamma|^p]<+\infty\},
\end{align*}
endowed with the norm
\begin{align*}
    \|\gamma\|_{\mathfrak{S}_p}^p:=\Tr[|\gamma|^p].
\end{align*}
We also define
\[
X^s:=\{\gamma\in\mathcal{B}({\mathcal{H}}); \gamma=\gamma^*, (1-\Delta)^{s/4}\gamma(1-\Delta)^{s/4}\in\mathfrak{S}_1\},
\]
endowed  with the norm
\[
 \|\gamma\|_{X^s}:=\|(1-\Delta)^{s/4}\gamma(1-\Delta)^{s/4}\|_{\mathfrak{S}_1}.
\]
In particular, we denote $X:=X^1$.  For any $\gamma\in X$, we also introduce the following $c$-dependent norm:
\[
\|\gamma\|_{X_c}:=\||\cD|^{1/2}\gamma|\cD|^{1/2}\|_{\mathfrak{S}_1}=\|(c^4-c^2\Delta)^{1/4}\gamma(c^4-c^2\Delta)^{1/4}\|_{\mathfrak{S}_1}.
\]

For every density matrix $\gamma\in X$, there exists a complete set of eigenfunctions $(u_n)_{n\geq 1}$ of $\gamma$ in $\mathcal{H}$, corresponding to the non-increasing sequence of eigenvalues $(\lambda_n)_{n\geq 1}$ (counted with their multiplicity) such that $\gamma$ can be rewritten as
\begin{align}\label{eq:gamma-dec}
    \gamma=\sum_{n\geq 1}\lambda_n |u_n \rangle\,\langle u_n|
\end{align}
where $|u \rangle\,\langle u|$ denotes an operator onto the vector space spanned by the function $u$: for any $\psi\in \cH$,
\begin{align*}
    |u \rangle\,\langle u| \psi:= \left<u,\psi\right>_\cH u.
\end{align*}
The kernel $\gamma(x,y)$ of $\gamma$ reads as
\begin{align*}
    \gamma(x,y)=\sum_{n\geq 1}\lambda_n u_n(x)\otimes  u_n^*(y)
\end{align*}
The one-particle density associated with $\gamma$ is
\[
\rho_{\gamma}(x):=\mathrm{Tr}_{\mathbb{C}^4}[\gamma(x,x)]=\sum_{n\geq 1}\lambda_n |u_n(x)|^2,
\]
where the notation $\mathrm{Tr}_{\mathbb{C}^4}$ stands for the trace of a $4\times4$ matrix. 

In the non-relativistic setting, the density matrix $\gamma=\gamma^*$ is situated in $\mathcal{B}(\cH,\cHn)\cap X^2$ and we rather write
\[
\gamma=\sum_{n\geq 1}\lambda_n |u_n \rangle\,\langle u_n|
\]
with $(u_n)_{n\geq 1}$ in $\cHn$. Here we use the notation $\mathcal{B}(\cH,\cHn)\cap X^2$ to represent the set of non-relativistic self-adjoint density matrix, since any self-adjoint operator $A\in \mathcal{B}(\cH,\cHn)$ implies that its kernel $A(x,y)$ is of the form
\begin{align*}
    A(x,y)=\begin{pmatrix}
        A^{\rL,\rL}(x,y)& 0_{2\times 2}\\
        0_{2\times 2}& 0_{2\times 2}
    \end{pmatrix},\qquad  A^{\rL,\rL}=(A^{\rL,\rL})^*:\R^3\times \R^3\to {\rm Mat}_{2\times 2}(\C).
\end{align*}
where ${\rm Mat}_{q\times q}(\C)$ represents the set of $n\times n$ matrices with complex elements with $n\in \N^+$.
\subsection{Abuse of notations}
The following notations will also be used in the whole paper. Let $A=(a_{m,n})_{1\leq m,n\leq n}\in {\rm Mat}_{n\times n}(\C)$ be any $n\times n$ matrix, and $\{u_1,\cdots,u_n\}$ be some functions in $\cH$. Then by using a formal matrix operation, we define
\begin{align}\label{eq:matrix-bra-ket}
\Big(\left|u_1\right>,\cdots,\left|u_n\right>\Big)A \begin{pmatrix}
        \left<u_1\right|\\
        \vdots\\
        \left<u_n\right|
    \end{pmatrix}:=\sum_{1\leq j,k\leq n}a
    _{j,k}\left|u_j\right>\left<u_k\right|
\end{align}
and
\begin{align}\label{eq:matrix-inner}
\Big(\left<u_1\right|,\cdots,\left<u_n\right|\Big)A \begin{pmatrix}
        \left|u_1\right>\\
        \vdots\\
        \left|u_n\right>
    \end{pmatrix}:=\sum_{1\leq j,k\leq n}a_{j,k}\left<u_j,u_k\right>_\cH.
\end{align}
In addition, we will abuse the notation  $\cL u$ for any $u\in \cH$:
\begin{align*}
    \cL u:= \begin{pmatrix}
        \cL u^\rL\\
        \cL u^\rS
    \end{pmatrix}.
\end{align*}
In this paper, above notation $\cL u$ will only be used for non-relativistic $2$-spinors wavefunction $u$ (i.e., $u=\cK_\rL u$ or $u=\cK_\rS u$), thus
\begin{align*}
     \cL u:= \begin{pmatrix}
        \cL u^\rL\\
        0
    \end{pmatrix},\qquad \mbox{or}\qquad  \cL u:= \begin{pmatrix}
       0\\
        \cL u^\rS
    \end{pmatrix}.
\end{align*}
Analogously, we define $\cL \gamma$ in the same manner for any density matrix $\gamma\in X^2$: using \eqref{eq:gamma-dec}, we can write
\begin{align*}
    \cL \gamma \cL= \sum_{n\geq 1}\lambda_n \left|\cL u_n\right>\left<\cL^* u_n\right|= \sum_{n\geq 1}\lambda_n \left|\cL u_n\right>\left<\cL u_n\right|
\end{align*}
with $u_n\in H^1$. The notation $\cL \gamma$ will only be used for non-relativistic density matrices from $\cH_\rL$ to $\cH_\rL$.

\section{Models and main results}\label{sec:3}

\subsection{The DF and HF models for atoms and molecules}
We now recall the DF theory and the HF theory.
\subsubsection{The DF and HF operators and functionals}
For any $\gamma\in X$, the DF functional is defined by
\begin{align}\label{eq:DFfunctional}
    \mathcal{E}_c(\gamma):=\Tr_{\cH}[(\cD -c^2)\gamma]-\Tr_{\cH}[V\gamma]+\frac{1}{2}\Tr_{\cH} [W_{\gamma}\gamma],
\end{align}
while for any $\gamma\in  \cB(\cH,\cH_\rL)\cap X^2$, the HF functional is defined by
\begin{align}\label{eq:HFfunctional}
    \mathcal{E}^{\rm HF}(\gamma):=\Tr_{\cH}[H_0 \gamma]-\Tr_{\cH}[V\gamma]+\frac{1}{2}\Tr_{\cH} [W_{\gamma}\gamma]
\end{align}
where for any $\psi\in H^{1/2}$,
\begin{align}\label{eq:W-dec}
    W_{\gamma}\psi(x)=W_{1,\gamma}\psi(x)-W_{2,\gamma}\psi(x)
\end{align}
with 
\begin{align}
 W_{1,\gamma}\psi(x):=(\rho_{\gamma}*W)\psi(x), \qquad  W_{2,\gamma}\psi(x):= \int_{\mathbb{R}^3}W(x-y)\gamma(x,y)\psi(y)dy.
\end{align}
Here $V$ is the attractive potential between nuclei and electrons, and $W$ is the repulsive potential between electrons. We  consider the electrostatic case $W= \frac{1}{|x|}$ and $V=\mu*\frac{1}{|x|}$ with a nonnegative nuclear charge distribution $\mu\in \mathcal{M}_+(\mathbb{R}^3)$ satisfying $\int_{\mathbb{R}^3} d\mu=z$.

\medskip

Based on DF and HF functionals, the corresponding DF operator is defined by
\begin{align}\label{eq:DF}
    \cD_\gamma:=\cD -V+W_\gamma
\end{align} 
while the corresponding HF operator is defined by 
\begin{align}\label{eq:HF}
    H_{0,\gamma}:=\cK_\rL\Big(H_0- V+W_\gamma\Big)\cK_\rL.
\end{align} 

Before going further, we study the operator $W_{2,\bullet}$. Let $u,v\in H^1$, then
\begin{align}\label{eq:Wpotential-decom}
 \lc W_{2,\left| v\rc\lc v\right|} u, u\rc&= \int_{\R^3}\int_{\R^3}\frac{\big[u^*(x)v(x)\big]\big[v^*(y)u(y)\big]}{|x-y|}dxdy\notag\\
    &=\int_{\R^3}\int_{\R^3}\sum_{j,j'\in\{{\rm S, L}\} }\frac{\big[(\cK_j u)^*(x)(\cK_j v)(x)\big]\big[(\cK_{j'}v)^*(y)(\cK_{j'}u)(y)\big]}{|x-y|}dxdy\notag\\
    &=\sum_{j,j'\in\{{\rm S, L}\} } \lc W_{2,\cK_j\left| v\rc\lc v\right|\cK_{j'} }\cK_{j'} u, \cK_j u\rc.
\end{align}
In particular, for $j,j'\in\{{\rm S, L}\}$,
\begin{align}\label{eq:W-K1-K2}
    \Tr_{\cH}[W_{2,\cK_{j}\gamma_1 \cK_{j'}}\gamma]=\Tr_{\cH}[W_{2,\cK_{j}\gamma_1 \cK_{j'}}\cK_{j'}\gamma\cK_j].
\end{align}
Indeed, we also have
\begin{align}\label{eq:V-K1-K2}
    \Tr_\cH[V \gamma]=\sum_{j\in\{{\rm S, L}\} } \Tr_\cH[V \cK_j \gamma \cK_j],\qquad  \Tr_\cH[W_{1,\gamma'} \gamma]=\sum_{j\in\{{\rm S, L}\} } \Tr_\cH[W_{1,\gamma'} \cK_j \gamma \cK_j].
\end{align}

\subsubsection{Dirac--Fock ground state energy}

Let $q\in \N^+$ be the number of electrons. Let 
\begin{align}\label{eq:state-electrons}
    \Gamma:=\{\gamma\in X; 0\leq \gamma\leq  \1_{\mathcal{H}}\},\quad \Gamma_{q}=:\{\gamma\in \Gamma; \Tr[\gamma]\leq q\},
\end{align}
and let 
\begin{align}\label{eq:P-pm}
    P^{+}_{c,\gamma}=\1_{(0,+\infty)}(\mathcal{D}^c_{\gamma}),\quad  P^{-}_{c,\gamma}=\1_{(-\infty,0]}(\mathcal{D}^c_{\gamma}).
\end{align}
In the DF theory, the relevant set of electronic states is defined by
\[
\Gamma_{q}^+:=\{\gamma\in\Gamma_{q}; P^{+}_{c,\gamma}\gamma P^{+}_{c,\gamma}=\gamma\}.
\]
According to \cite{sere2023new}, the ground state energy of DF model can be redefined by 
\begin{align}\label{eq:min-DF}
    E_{c,q}:=\min_{\gamma\in \Gamma_{q}^+}\mathcal{E}_c(\gamma).
\end{align}

Before going further, we need the following assumption.
\begin{assumption}\cite[Theorem 1.2 and Remark 1.3]{sere2023new}\label{ass:1}
Let $\kappa_c:=2c^{-1}(q+z)$ and $R_c^{\rm DF}:= (1-\kappa_c-\frac{\pi}{4}c^{-1}q)^{-1/2}q+1$. Assume that 
\begin{align*}
    \kappa_c< 1-\frac{\pi}{4}c^{-1} q,\qquad R_c^{\rm DF} <  \frac{1}{2a_c}
\end{align*}
with $a_c:= \frac{\pi}{4c\sqrt{(1-\kappa_c)\lambda_{0,c}}}$ and $\lambda_{0,c}:= (1-c^{-1}\max( q,Z))$.
\end{assumption}
The existence of a ground state is guaranteed by the following.
\begin{theorem}[Existence of minimizers in the DF theory {\cite{sere2023new}}]\label{th:min-DF} Let $q\in \R^+$ and $z\in \R^+$ be fixed such that $q\leq z$. Then under Assumption \ref{ass:1} on $c$, the minimum problem \eqref{eq:min-DF} admits a minimizer $\gamma_*^c \in \Gamma_q^+$. In addition, $\Tr[\gamma_*^c]=q$, and any such minimizer can be written as
\begin{align}\label{eq:gamma-DF}
    \gamma_*^c=\1_{(0,\nu_c)}(\mathcal{D}^c_{\gamma_*^c})+\delta_c
\end{align}
with $
0< \delta_c\leq \1_{\{\nu_c\}}(\mathcal{D}^c_{\gamma_*^c})$ for some $\nu_c\in (0,c^2]$. When $ q<z$, $\nu_c\in (0,c^2)$.
\end{theorem}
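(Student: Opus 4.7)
The plan is to handle the nonlinearity of the constraint $\gamma=P^+_{c,\gamma}\gamma P^+_{c,\gamma}$, which makes $\Gamma_q^+$ non-convex since $P^+_{c,\gamma}$ depends on $\gamma$ through $\cD_\gamma$. Following the strategy suggested by the retraction $\theta_c$ already mentioned in the introduction, I would construct a Lipschitz map from a suitable convex neighbourhood of $\Gamma_q$ into $\Gamma_q^+$, obtained by composing a density matrix with the spectral projector associated with a nearby self-consistent Dirac--Fock operator. The two quantitative bounds in Assumption \ref{ass:1} play complementary roles: the condition $\kappa_c<1-\tfrac{\pi}{4}c^{-1}q$ delivers coercivity through the Kato--Hardy inequality $\tfrac{1}{|x|}\leq\tfrac{\pi}{2}|\nabla|$ applied to $V$ (total charge $z$) and to the $W_\gamma$ terms (total charge at most $q$), while $R_c^{\rm DF}<\tfrac{1}{2a_c}$ is a contraction-type smallness that ensures $\gamma\mapsto P^+_{c,\gamma}$ is Lipschitz on the relevant $X_c$-ball, so that $\theta_c$ is well defined and smooth there.

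With the retraction in hand, the first step is the lower bound. For $\gamma\in\Gamma_q^+$, $\Tr[\cD\gamma]=\Tr[\cD P^+_{c,\gamma}\gamma]\geq c^2\Tr[\gamma]$ up to a controlled correction from $\cD_\gamma-\cD$; combined with the Kato--Hardy losses on $V$ and $W_\gamma$, this yields $\mathcal{E}_c(\gamma)+c^2q\geq (1-\kappa_c-\tfrac{\pi}{4}c^{-1}q)\Tr[|\cD|\gamma]\geq 0$, so $E_{c,q}$ is finite and every minimizing sequence $(\gamma_n)$ is bounded in $X_c$. The main obstacle I expect is passing to the weak-$*$ limit $\gamma_*^c$ while preserving both the nonlinear projection identity and the exchange term $\Tr[W_{2,\gamma_n}\gamma_n]$, since this term depends on $\gamma_n$ quadratically through itself. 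The Lipschitz property of $\gamma\mapsto P^+_{c,\gamma}$ allows one to upgrade weak-$*$ to norm convergence of the spectral projectors on the bounded set, and, combined with the positivity decomposition \eqref{eq:Wpotential-decom} of $W_{2,\gamma}$, this yields lower semicontinuity of $\mathcal{E}_c$ and stability of the constraint in the limit. The binding $\Tr[\gamma_*^c]=q$ follows from $q\leq z$: were some mass lost at infinity, the attractive potential $V$ would asymptotically dominate the electronic repulsion, providing a bound state of $\cD_{\gamma_*^c}$ with eigenvalue strictly below $c^2$ whose inclusion would lower the energy and contradict minimality.

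Finally, the structural description of $\gamma_*^c$ is obtained from the Euler--Lagrange equation on the retracted functional. Tangential variations at $\gamma_*^c$ force $\gamma_*^c$ to commute with $\cD_{\gamma_*^c}$ on the range of $P^+_{c,\gamma_*^c}$; combined with the Lagrange multiplier $\nu_c$ for the trace constraint and the sandwich $0\leq\gamma_*^c\leq\1_\cH$, this yields $\gamma_*^c=\1_{(0,\nu_c)}(\cD_{\gamma_*^c})+\delta_c$ with $0\leq\delta_c\leq\1_{\{\nu_c\}}(\cD_{\gamma_*^c})$, and $\delta_c>0$ records saturation of the trace constraint at the Fermi level. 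The bound $\nu_c\leq c^2$ follows because the essential spectrum of $\cD_{\gamma_*^c}$ starts at $c^2$, while $\nu_c>0$ reflects that only positive eigenvalues enter through the constraint. In the strictly ionic case $q<z$, a Zhislin/HVZ-type argument shows that $\cD_{\gamma_*^c}$ possesses infinitely many discrete eigenvalues accumulating at $c^2$ from below, so the Fermi level cannot touch the bottom of the essential spectrum, giving $\nu_c<c^2$ strictly.
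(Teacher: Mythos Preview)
The paper does not prove this theorem at all: it is quoted verbatim from S\'er\'e's work \cite{sere2023new} and used as a black box throughout. The retraction $\theta_c$ that you correctly identify as the key device is recalled in Section~\ref{sec:6} of the present paper (Lemma~\ref{lem:retra}) only because it is needed for the \emph{relativistic effect} estimates, not to reprove existence.

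Your sketch is a faithful high-level summary of S\'er\'e's strategy in \cite{sere2023new}: the two conditions in Assumption~\ref{ass:1} indeed serve the roles you describe (coercivity via Kato--Hardy, contraction of $T_c$ on the ball of radius $R_c^{\rm DF}$), and the structural form of the minimizer follows from the Euler--Lagrange equation on the retracted functional $E_c(\gamma)=\cE_c(\theta_c(\gamma))$. Two places where your outline is thinner than the actual argument: (i) the passage to the limit in the exchange term and in the constraint is handled in \cite{sere2023new} not by upgrading weak-$*$ to norm convergence of $P^+_{c,\gamma_n}$ directly, but by minimizing the \emph{retracted} functional $E_c$ over the convex set $\mathcal{U}_{c,R}$ and using that $\theta_c$ is $C^1$ there, so standard lower semicontinuity applies to a problem with a convex constraint; (ii) the binding $\Tr[\gamma_*^c]=q$ is obtained in \cite{sere2023new} via a concentration--compactness dichotomy combined with strict binding inequalities for $q\leq z$, which is more than the one-line heuristic you give. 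A minor point: the statement in the paper writes $0<\delta_c$, but as Theorem~\ref{th:non-unfill} later shows $\delta_c=0$ for $c$ large, this should be read as $0\leq\delta_c$; your formulation is the correct one.
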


\begin{remark}\label{rem:R}
In \cite{sere2023new} the second condition in Assumption \ref{ass:1} is expressed by
    \begin{align*}
        (1-\kappa_c-\frac{\pi}{4}c^{-1}q)^{-1/2}q<R_c^{\rm DF} <\frac{1}{2a_c}.
    \end{align*}
The condition $(1-\kappa_c-\frac{\pi}{4}c^{-1}q)^{-1/2}q<R_c^{\rm DF}$ is obtained from \cite[Corollary 2.12]{sere2023new} due to the fact that any DF minimizer $\gamma_*^c$ of \eqref{eq:min-DF} satisfies
\begin{align*}
    \|\gamma_*^c |\cD|\|_{\mathfrak{S}_1}< cR_c^{\rm DF}.
\end{align*}
\end{remark}

\medskip

\subsubsection{Hartree--Fock ground state energy}

Compared with the definition of the DF ground-state energy, the definition of the HF ground-state energy is much simpler:
\begin{align}\label{eq:min-HF}
    E_{q}^{\rm HF}:=\min_{\gamma\in \Gamma_q^{\rm HF}}\mathcal{E}^{\rm HF}(\gamma)
\end{align}
where the set of states of electrons in the HF theory is defined by $\Gamma_q^{\rm HF}:=\Gamma_q\cap \cB(\cH,\cHn)\cap X^2$. We refer to \cite{lieb1977hartree,bach1992error,bachunfill,ionization1,ionization2,lions} for its mathematical results. Concerning the existence of minimizers, we have the following.
\begin{theorem}[Existence of minimizers in the HF theory {\cite{lieb1977hartree,bach1992error,bachunfill}}]\label{th:min-HF} Let $q\in \R^+$ and $z\in \R^+$ be fixed such that $q< z+1$. Then the minimum problem \eqref{eq:min-HF} admits a minimizer $\gamma_*^{\rm HF} \in \Gamma_q^{\rm HF}$. In addition, $\Tr(\gamma_*^{\rm HF})=q$, and any such minimizer can be written as
\begin{align}\label{eq:gamma-HF}
    \gamma_*^{\rm HF}=\1_{(-\infty,\nu]}(H_{0,\gamma_*^{\rm HF}})
\end{align}
for some $\nu\in (-\infty,0)$.
\end{theorem}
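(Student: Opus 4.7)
The plan is to prove existence by the direct method of the calculus of variations combined with a concentration-compactness argument of Lions, following the Lieb--Simon strategy, and then to derive the spectral form of the minimizer from the Euler--Lagrange equation plus the ``no unfilled shells'' theorem.

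First I would establish coercivity and lower-boundedness of $\mathcal{E}^{\rm HF}$ on $\Gamma_q^{\rm HF}$. Using Kato's inequality $|x|^{-1}\le \tfrac{\pi}{2}(-\Delta)^{1/2}$, dropping the positive direct part of $\Tr[W_\gamma\gamma]$, and controlling the exchange term by Hardy and Cauchy--Schwarz (as in \eqref{eq:Wpotential-decom}), one obtains $\mathcal{E}^{\rm HF}(\gamma)\ge \tfrac14 \Tr[-\Delta\,\gamma]-C_{q,z}$ on $\Gamma_q^{\rm HF}$. Hence any minimizing sequence $(\gamma_n)$ is bounded in $X^2$, and along a subsequence $\gamma_n \rightharpoonup \gamma_*^{\rm HF}$ in the weak-$*$ topology of $X^2$, with $\rho_{\gamma_n}$ bounded in $L^1\cap L^3$ by Lieb--Thirring. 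The kinetic term is weakly lower semicontinuous, and the linear potential term $\Tr[V\gamma_n]$ is continuous on compact sets through $V(1-\Delta)^{-1/2}$ being locally compact; the issue is the possible loss of mass to infinity.

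The main obstacle, and the only place where the hypothesis $q\le z-1$ enters, is ruling out this escape via a strict binding inequality of the form $E_{q}^{\rm HF}<E_{q-t}^{\rm HF}+E_{t}^{\rm HF,\infty}$ for every $t\in(0,q]$, where the second term represents a free cluster of charge $t$ sent to infinity. The condition $q\le z-1$ guarantees that a detached electron still sees an effective nuclear charge of at least $1$, so that placing it back into a low-lying Coulomb bound state strictly decreases the energy. Making this precise is the delicate step: I would use a radial IMS-type localization of density matrices, splitting $\gamma_n = \gamma_n^{\rm in}+\gamma_n^{\rm out}$, and track the direct, exchange, and cross Coulomb contributions, exactly as done by Lieb--Simon and refined by Bach \cite{bach1992error,lieb1977hartree}. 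With strict binding, Lions's concentration-compactness lemma excludes both vanishing and dichotomy, giving strong convergence of $\rho_{\gamma_n}$ in $L^1\cap L^3$ and continuity of the nonlinear Coulomb terms. It follows that $\mathcal{E}^{\rm HF}(\gamma_*^{\rm HF})=E_q^{\rm HF}$ and $\Tr\gamma_*^{\rm HF}=q$.

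Finally, for the representation \eqref{eq:gamma-HF}, I would vary $\gamma_*^{\rm HF}$ inside $\Gamma_q^{\rm HF}$ under the constraint $\Tr\gamma=q$. The resulting Euler--Lagrange condition implies $[\gamma_*^{\rm HF},H_{0,\gamma_*^{\rm HF}}]=0$ together with the sandwich $\1_{(-\infty,\nu)}(H_{0,\gamma_*^{\rm HF}})\le \gamma_*^{\rm HF}\le \1_{(-\infty,\nu]}(H_{0,\gamma_*^{\rm HF}})$ for a Lagrange multiplier $\nu\in\mathbb{R}$. The strict subneutrality $q\le z-1$ and HVZ-type arguments show that $H_{0,\gamma_*^{\rm HF}}$ has infinitely many negative eigenvalues accumulating at $0$, and that the $q$-th eigenvalue is strictly negative, so $\nu<0$. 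Absence of a partially filled Fermi level, i.e. that $\gamma_*^{\rm HF}$ is exactly the spectral projector $\1_{(-\infty,\nu]}(H_{0,\gamma_*^{\rm HF}})$, is the ``no unfilled shells'' theorem of Bach--Lieb--Loss--Solovej \cite{bachunfill}, which I would quote directly.
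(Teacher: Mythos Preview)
The paper does not actually prove this theorem: it is stated with citations to \cite{lieb1977hartree,bach1992error,bachunfill} and then used as a black box (the only commentary is Remark~\ref{rem:non-unfill}, which unpacks the meaning of~\eqref{eq:gamma-HF}). So there is no ``paper's own proof'' to compare against; your outline is essentially a sketch of the arguments in those very references.

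As such a sketch it is broadly correct. A couple of minor points you may want to tighten if you ever write it out in full: for Hartree--Fock the ``free cluster'' energy $E_{t}^{\rm HF,\infty}$ vanishes, so the binding inequality is simply $E_q^{\rm HF}<E_{q-t}^{\rm HF}$ for all $0<t\le q$; and the concentration-compactness framework of Lions is one way to organize the compactness step, but the original Lieb--Simon argument \cite{lieb1977hartree} predates it and proceeds somewhat differently (via subadditivity and direct tightness estimates). Finally, the identification of $\gamma_*^{\rm HF}$ with a genuine spectral projector (rather than merely being sandwiched between $\1_{(-\infty,\nu)}$ and $\1_{(-\infty,\nu]}$) really does require the separate no-unfilled-shells input from \cite{bachunfill}, so quoting it, as you do, is the right move.
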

\begin{remark}\label{rem:non-unfill}
Here we use the right-closed interval $(-\infty,\nu]$ to express the non-unfilled shell property in the HF theory \cite{bachunfill}: $\gamma_*^{\rm HF}$ can be rewritten as
\begin{align}\label{eq:gammaHF-uHF}
    \gamma_*^{\rm HF}:= \sum_{n= 1}^q |u_n^{\rm HF} \rangle\,\langle u_n^{\rm HF}|,
\end{align}
where $u_1^{\rm HF},\cdots,u_q^{\rm HF}$ are the orthonormal eigenfunctions of $H_{0,\gamma_*^{\rm HF}}$ in $\HL$ satisfying
\begin{align*}
   H_{0,\gamma_*^{\rm HF}} u_n^{\rm HF}=\lambda_{n}^{\rm HF} u_n^{\rm HF}
\end{align*}
and $\lambda_1^{\rm HF}\leq\cdots\leq\lambda_q^{\rm HF}<0$ are the first $q$ eigenvalues of $H_{0,\gamma_*^{\rm HF}}$ on $\HL$; for any other eigenfunction $u\in \HL$ of $H_{0,\gamma_*^{\rm HF}}$ with eigenvalue $\lambda$ and $\lc u,u_n^{\rm HF} \rc_\HL=0$ for $n=1,\cdots,q$, we have $\lambda>\lambda_q^{\rm HF}$.
\end{remark}

From \eqref{eq:gamma-HF}, it is easy to see that $\gamma_{*}^{\rm HF}\in X^4$. This will be used in the paper.

\subsection{Main result, Part I}\label{sec:main1}
Recall that $\kappa_c,\;\lambda_{0,c},\; a_c$ and $R^{\rm DF}_c$ are given in Assumption \ref{ass:1}. In this paper, we mainly focus on the non-relativistic regime, that is $c\gg1$ and $q,z\in \R^+$ fixed such that $q\leq z$. Under this regime, it is easy to see that for $c$ large enough,
\begin{align*}
    \kappa_c\leq \frac{1}{2}.
\end{align*}
Then under Assumption \ref{ass:1},
\begin{align*}
    c^{-1}q\leq \frac{1}{2}c^{-1}(q+z)=\frac{1}{4}\kappa_c\leq \frac{1}{8},  \qquad \qquad \lambda_{0,c}\geq 1-\kappa_c\geq \frac{1}{2}
\end{align*}
and
\begin{align}\label{eq:a-c==RcDF}
     a_c:=\frac{\pi}{4c\sqrt{(1-\kappa_c)\lambda_{0,c}}}\leq \frac{\pi}{2}c^{-1},\qquad R_c^{\rm DF}=1+(\frac{1}{2}-\frac{\pi}{4}c^{-1}q)q \leq 1+4q.
\end{align}

Thus, for future convenience, we restrict ourselves to the condition $\kappa_c\leq \frac{1}{2}$, and we further assume that
\begin{assumption}\label{ass:c}
Assume that $q\leq z$ and $\gamma_*^{\rm HF}$ is an HF minimizer of \eqref{eq:min-HF}. Let
\begin{align}\label{def:R_0}
    R_0:=\sup_{\gamma_*^{\rm HF}\in \mathcal{G}_{\rm HF}}\max\{2+4q, 1+\|\gamma_*^{\rm HF}\|_{X^2} + 4(\pi+2\sqrt{2}z)(1+\|\gamma_*^{\rm HF}\|_{X^2})^2\}.
\end{align} 
where  
\[\mathcal{G}_{\rm HF}:=\{\gamma_*^{\rm HF}\in \Gamma_{q}\cap \cB(\cH,\HL)\cap X^2;\; \gamma_*^{\rm HF} \mbox{is a HF minimizer of }E_q^{\rm HF}\}
\]
is the set of HF minimizers. We now assume that the speed of light $c\in \R$ satisfies
    \begin{align*}
        c\geq \max\{1,4q+4z,4\pi R_0\}.
    \end{align*}
\end{assumption}
Here $R_0$ is used in Section \ref{sec:6} to construct test density matrix $\gamma\in \Gamma_q^+$. Then,
\begin{theorem}[Relativistic effect in the DF theory]\label{th:rela-effect} Let $q,z\in \R^+$ be fixed such that $q\leq z$. For any $c$ satisfying Assumption \ref{ass:c}, we have 
    \begin{align}\label{eq:rela-effect}
      |E_{c,q}-E_{q}^{\rm HF}|=\cO(c^{-2}).
    \end{align}
\end{theorem}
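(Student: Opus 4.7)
The plan is to prove the matching upper and lower bounds on $E_{c,q}-E_q^{\rm HF}$ by constructing explicit trial states in each direction, exploiting the two renormalization schemes \eqref{eq:rela-renormalization} and \eqref{eq:nonrela-renormalization} previewed in the introduction. Both bounds amount to a careful Taylor expansion of $\cE_c$ versus $\cE^{\rm HF}$ with remainders controlled to order $c^{-2}$, using the identity $(\cD-c^2)\cS u = H_0\cK_{\rm L} u$ from \eqref{eq:D-Delta} to align the kinetic parts.

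For the upper bound $E_{c,q}\le E_q^{\rm HF}+\cO(c^{-2})$, I take an HF minimizer $\gamma_*^{\rm HF}=\sum_{n=1}^q |u_n^{\rm HF}\rangle\langle u_n^{\rm HF}|$ and construct a trial density matrix in $\Gamma_q^+$. Following \eqref{eq:rela-renormalization}, I first replace each $u_n^{\rm HF}$ by its $4$-spinor relativistic renormalization $\widetilde u_n^{\rm HF}$; the prefactor ensures $L^2$-normalization up to $\cO(c^{-4})$. I then project by $\Lambda^+_c$ to build $\widetilde\gamma:=\sum_n|\Lambda^+_c\widetilde u_n^{\rm HF}\rangle\langle\Lambda^+_c\widetilde u_n^{\rm HF}|$, and finally apply the retraction $\theta_c$ from \eqref{eq:theta} to produce $\gamma=\theta_c(\widetilde\gamma)\in\Gamma_q^+$, which by construction satisfies the nonlinear constraint $\gamma=P^+_{c,\gamma}\gamma P^+_{c,\gamma}$. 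Using \eqref{eq:D-Delta}, the leading-order kinetic energy is $\Tr[(\cD-c^2)\cS u_n^{\rm HF}\otimes(\cS u_n^{\rm HF})^*]=\Tr[H_0\,u_n^{\rm HF}\otimes (u_n^{\rm HF})^*]$ exactly, so the only $c$-dependent contributions come from (i) the small component of order $c^{-1}$ feeding into $V$ and $W_\gamma$ via the decompositions \eqref{eq:Wpotential-decom}--\eqref{eq:V-K1-K2}, (ii) the action of $\Lambda^-_c$ on $\widetilde u_n^{\rm HF}$, which is $\cO(c^{-2})$ in $H^{1/2}$, and (iii) the retraction $\theta_c(\widetilde\gamma)-\widetilde\gamma$. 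Each of these is shown to be $\cO(c^{-2})$ in the $X_c$-norm, yielding $\cE_c(\gamma)\le\cE^{\rm HF}(\gamma_*^{\rm HF})+\cO(c^{-2})$.

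For the lower bound $E_{c,q}\ge E_q^{\rm HF}-\cO(c^{-2})$, I take a DF minimizer $\gamma_*^c$ given by Theorem \ref{th:min-DF} with eigendecomposition along some $u_n^c=(u_n^{c,\rm L},u_n^{c,\rm S})^\top$. Following \eqref{eq:nonrela-renormalization}, I first show $u_n^{c,\rm S}\approx\tfrac{1}{2c}\cL u_n^{c,\rm L}$ in the $H^1$-sense (this is the Foldy--Wouthuysen-type relation between large and small components; it uses that $\gamma_*^c\in P^+_{c,\gamma_*^c}$ and the resolvent bounds on $\mathcal{D}^c_{\gamma_*^c}$ encoded in Assumption \ref{ass:1} and $\gamma_*^c\in\cU_{c,R_0}$). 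I then pass to the non-relativistically renormalized $\widetilde u_n^c\in\HL$ and form $\widetilde\gamma^c:=\sum_n|\widetilde u_n^c\rangle\langle\widetilde u_n^c|\in\Gamma_q^{\rm HF}$. Using \eqref{eq:D-Delta} in reverse together with \eqref{eq:Wpotential-decom}--\eqref{eq:V-K1-K2} to reorganize the interaction terms by blocks of $\cK_{\rm L},\cK_{\rm S}$, every cross-term involving a small component carries a power of $c^{-1}$, and after squaring one gets $\cO(c^{-2})$ control of $\cE^{\rm HF}(\widetilde\gamma^c)-\cE_c(\gamma_*^c)$.

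The main obstacle is the upper-bound construction, specifically ensuring that the trial $\gamma=\theta_c(\widetilde\gamma)$ satisfies the \emph{nonlinear} self-consistency constraint $\gamma=P^+_{c,\gamma}\gamma P^+_{c,\gamma}$ rather than the linear constraint $\gamma=\Lambda^+_c\gamma\Lambda^+_c$, while only perturbing the energy by $\cO(c^{-2})$. This is exactly what the retraction $\theta_c$ built in Sections \ref{sec:5}--\ref{sec:6} is designed to handle; the key estimate needed is a Lipschitz-type bound of the form $\|\theta_c(\widetilde\gamma)-\widetilde\gamma\|_{X_c}\lesssim c^{-2}\|\widetilde\gamma\|_{X_c}^{\alpha}$ on the neighborhood $\cU_{c,R_0}$, which reduces, via the resolvent formula for $P^+_{c,\gamma}-\Lambda^+_c$, to the fact that $W_{\gamma}-V$ is relatively bounded with respect to $|\cD|$ with small constant for $c$ large. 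Once this is available, combining the two bounds yields \eqref{eq:rela-effect}.
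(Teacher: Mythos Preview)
Your two-sided strategy matches the paper's: build a trial state in $\Gamma_q^+$ from an HF minimizer for the upper bound, and extract an HF trial state from a DF minimizer for the lower bound. However, you take a detour through the renormalized states $\widetilde u_n^{\rm HF}=\cS u_n^{\rm HF}/\|\cS u_n^{\rm HF}\|$ and $\widetilde u_n^c$, which the paper reserves for the sharper $\cO(c^{-4})$ statement under Assumption~\ref{ass:V}. For the $\cO(c^{-2})$ bound the paper uses the much simpler trial states $\Lambda^+_c\gamma_*^{\rm HF}\Lambda^+_c$ (upper bound, Theorems~\ref{th:3.1} and~\ref{th:4.1}) and $\cK_{\rm L}\gamma_*^c\cK_{\rm L}$ (lower bound, Theorem~\ref{th:5.1}), which are automatically in $\Gamma_q$ resp.\ $\Gamma_q^{\rm HF}$ and avoid the orthonormality bookkeeping your per-orbital renormalization requires (your $\widetilde u_n^{\rm HF}$ and $\widetilde u_n^c$ are normalized but not mutually orthogonal, so $\sum_n|\widetilde u_n\rangle\langle\widetilde u_n|$ need not satisfy $0\le\gamma\le1$; the paper fixes this with the inverse overlap matrix $S_{\rm HF}^{-1}$, $S_{\rm DF}^{-1}$). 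The simpler route also avoids the $H^1$-control of $u_{c,n}^{\rm S}-\tfrac{1}{2c}\cL u_{c,n}^{\rm L}$ you invoke, which genuinely needs Assumption~\ref{ass:V} (Lemma~\ref{lem:u-dec'}); without it, only the $L^2$-control of Lemma~\ref{lem:u-dec} is available, and that is all the $\cO(c^{-2})$ bound needs.

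There is one genuine gap in your treatment of item (iii). The estimate $\|\theta_c(\widetilde\gamma)-\widetilde\gamma\|_{X_c}\lesssim c^{-2}$ that you claim is \emph{false}: the retraction displacement is only $\cO(1)$ in $X_c$, since $\|T_c(\gamma)-\gamma\|_{X_c}\lesssim(1+\|\gamma\|_{X^2})^2$ (Lemma~\ref{lem:T-I}) and the $X_c$-norm carries a built-in factor $c^2$. A naive Lipschitz argument on $\cE_c$ would then give an $\cO(1)$ energy error. The paper's key device here is Lemma~\ref{lem:error-bound}, which bounds $|E_c(\gamma)-\cE_c(\gamma)|$ not by $\|\theta_c(\gamma)-\gamma\|_{X_c}$ but by
\[
C\,c^{-3}\|T_c(\gamma)-\gamma\|_{X_c}^2 \;+\; 3\|P^-_{c,\gamma}\gamma P^-_{c,\gamma}\|_{X_c},
\]
and it is the smallness of the \emph{second} term (order $c^{-2}$ by \eqref{eq:P-gamma-P}) together with the $c^{-3}$ prefactor on the first that gives the $\cO(c^{-2})$ control. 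Your resolvent reasoning about relative boundedness of $W_\gamma-V$ with small constant is exactly what drives both $\|P^+_{c,\gamma}-\Lambda^+_c\|=\cO(c^{-1})$ and the $P^-\gamma P^-$ bound, but it does not yield the $X_c$-estimate you wrote; you need to route through Lemma~\ref{lem:error-bound} instead.
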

The proof is postponed until Section \ref{sec:8.1}, and we will give the precise estimates of $\cO(c^{-2})$ later in the paper.

\medskip

As a consequence of Theorem \ref{th:rela-effect}, we can improve \cite[Theorem 2.9]{meng2024rigorous} to the case $q=z$:
\begin{theorem}[No unfilled-shell property]\label{th:non-unfill}
  For $c$ large enough and $Z\in \R^+,q\in \N^+$ satisfy $q\leq Z$, we know  $\delta_c=0$ with $\delta_c$ being given by \eqref{eq:gamma-DF}. 
  
  Then as $c\to \infty$, we have
  \begin{align*}
      \gamma_*^c\to \gamma_*^{\rm HF}
  \end{align*}
 in $X^2$ for some HF minimizers $\gamma_*^{\rm HF}$ of \eqref{eq:HF}. If we write $\gamma_*^c=\sum_{n=1}^q \left|u_{c,n}\right>\left<u_{c,n}\right|$ (for $c$ large enough) with $\{u_{c,n}\}_{1\leq n\leq q}$ an orthonormal eigenfunctions of $\cD_{\gamma_*^c}$ associated with eigenvalue $\lambda_n^c$, then as $c\to \infty$,
  \begin{align*}
   \lambda_n^c\to \lambda_n,\qquad   u_{c,n}\rightarrow u_{n}^{\rm HF}
  \end{align*}
  in $H^1$ and $\{u_{n}\}_{1\leq n\leq q}$ are orthonormal HF eigenfunctions of $H_{0,\gamma_*^{\rm HF}}$ associated with eigenvalue $\lambda_n$.
\end{theorem}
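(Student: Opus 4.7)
The plan is to follow the standard non-relativistic-limit programme, with Theorem \ref{th:rela-effect} supplying the sharp energy comparison needed to cover the neutral case $q=z$. I would fix any sequence $c_k\to\infty$ with DF minimizers $\gamma_*^{c_k}$ satisfying Assumption \ref{ass:c}. The uniform bound $\|\gamma_*^{c_k}|\cD|\|_{\mathfrak{S}_1}<c_k R_{c_k}^{\rm DF}\leq c_k(1+4q)$ from Remark \ref{rem:R}, together with $\Tr[\gamma_*^{c_k}]=q$, yields a uniform bound on $\|\gamma_*^{c_k}\|_{X^2}$; Banach--Alaoglu then produces a subsequence (still indexed by $c_k$) converging weakly-$*$ in $X^2$ to some $\gamma_\infty\in\Gamma$ with $\Tr[\gamma_\infty]\leq q$.

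The second step is to identify $\gamma_\infty$ as an HF minimizer $\gamma_*^{\rm HF}$. The non-relativistic renormalization \eqref{eq:nonrela-renormalization} from Section \ref{sec:7} shows that the small component of any DF minimizer contributes only at order $\cO(c_k^{-2})$ to the energy, so $\cK_\rS\gamma_*^{c_k}\cK_\rS\to 0$ in the relevant norm and $\gamma_\infty\in\cB(\cH,\cHn)\cap X^2$. Inserting the $\cS$-transformation into the functional gives
\[
\cE_{c_k}(\gamma_*^{c_k})=\cE^{\rm HF}(\cK_\rL\gamma_*^{c_k}\cK_\rL)+\cO(c_k^{-2}),
\]
so by lower semicontinuity and Theorem \ref{th:rela-effect},
\[
\cE^{\rm HF}(\gamma_\infty)\leq\liminf_k\cE_{c_k}(\gamma_*^{c_k})=\liminf_k E_{c_k,q}=E_q^{\rm HF}.
\]
The delicate point for $q=z$ is to exclude mass loss. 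Here I would apply Theorem \ref{th:rela-effect} at both $q$ and $q-1$ electrons and combine it with the strict HF binding inequality $E_q^{\rm HF}<E_{q-1}^{\rm HF}$ to obtain $E_{c_k,q}\leq E_{c_k,q-1}-\eta$ for some $\eta>0$ and all $c_k$ large. A standard concentration-compactness argument then rules out splitting, forcing $\Tr[\gamma_\infty]=q$, hence $\gamma_\infty=\gamma_*^{\rm HF}$ is an HF minimizer. The equality of traces and of energies upgrades the weak-$*$ convergence to strong convergence in $X^2$.

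The third step extracts the no-unfilled-shell conclusion and the eigenvalue/eigenfunction statements by spectral-gap transfer. Remark \ref{rem:non-unfill} provides a strict spectral gap between $\lambda_q^{\rm HF}$ and the next eigenvalue of $H_{0,\gamma_*^{\rm HF}}$ on $\cHn$. Strong $X^2$ convergence of $\gamma_*^{c_k}$ to $\gamma_*^{\rm HF}$ delivers convergence of the mean-field potentials $W_{\gamma_*^{c_k}}\to W_{\gamma_*^{\rm HF}}$ in operator norm on $H^{1/2}$, and combined with the classical non-relativistic limit of the free Dirac operator one obtains norm-resolvent convergence of $\cD_{\gamma_*^{c_k}}-c_k^2$ to $H_{0,\gamma_*^{\rm HF}}$ on the positive-energy subspace. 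The first $q$ positive eigenvalues $\lambda_n^{c_k}-c_k^2$ therefore converge to $\lambda_n$ and remain uniformly separated from the next eigenvalue, which squeezes the Fermi level $\nu_{c_k}$ into the gap and forces $\delta_{c_k}=0$. Eigenfunction convergence in $H^1$ follows from Riesz-projector convergence together with the uniform $X^2$ bound.

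The main obstacle is the mass-conservation step in the neutral case $q=z$: without a priori decay, the purely qualitative limit $E_{c,q}\to E_q^{\rm HF}$ is too coarse to transport the HF binding inequality to the DF regime. The sharp rate $\cO(c^{-2})$ furnished by Theorem \ref{th:rela-effect} is precisely what closes this gap and lets the argument of \cite[Theorem 2.9]{meng2024rigorous} extend from $q<z$ to $q=z$.
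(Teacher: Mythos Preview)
Your overall strategy is sound and closely parallels the paper's proof in Section~\ref{sec:8.2}, but there are two points where your route diverges from the paper's and becomes unnecessarily circuitous.

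First, the mass-conservation step. Once you have established $\cE_{c_k}(\gamma_*^{c_k})=\cE^{\rm HF}(\cK_\rL\gamma_*^{c_k}\cK_\rL)+\cO(c_k^{-2})$ (this is Theorem~\ref{th:5.1}) together with $E_{c_k,q}\leq E_q^{\rm HF}+\cO(c_k^{-2})$, the sequence $\cK_\rL\gamma_*^{c_k}\cK_\rL$ is already an HF minimizing sequence for $q$ particles. At this point the paper simply invokes the known relative compactness of HF minimizing sequences for $q\leq z$ (citing \cite{lieb1977hartree}) to obtain strong $X^2$ convergence directly. Your detour through a DF-level binding inequality $E_{c_k,q}\leq E_{c_k,q-1}-\eta$ and a concentration-compactness argument on the family $(\gamma_*^{c_k})_k$ is awkward because each $\gamma_*^{c_k}$ minimizes a \emph{different} functional over a \emph{different} constraint set $\Gamma_q^+$; the HF route is both simpler and standard.

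Second, your final paragraph over-interprets the role of the sharp rate. What enables the extension to $q=z$ is the \emph{two-sided} comparison that makes $\cK_\rL\gamma_*^{c}\cK_\rL$ an HF minimizing sequence; the rate $\cO(c^{-2})$ itself is irrelevant here---any $o(1)$ in both directions would do. The rate matters for Theorem~\ref{th:rela-correction}, not for Theorem~\ref{th:non-unfill}.

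For the no-unfilled-shell conclusion, the paper uses a short contradiction argument: assuming $0<\delta_{c_j}<\1_{\{\nu_{c_j}\}}(\cD_{\gamma_*^{c_j}})$ along a subsequence, it shows via the min-max principle and the non-relativistic limit of Dirac eigenvalues \cite{thaller2013dirac} that $\nu_{c_j}-c_j^2\to\nu$, and then the HF spectral gap at $\nu$ (Remark~\ref{rem:non-unfill}) forces $\Tr[\1_{(0,\nu_{c_j}]}(\cD_{\gamma_*^{c_j}})]\leq q$, contradicting $\delta_{c_j}\neq 0$. This is your ``spectral-gap transfer'' made concrete, without needing full norm-resolvent convergence.
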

The proof is postponed until Section \ref{sec:8.2}.

 As a result of Theorem \ref{th:non-unfill}, we can also show that \cite[Corollary 2.10]{meng2024rigorous} hold for the case $q=z$. Then,
\begin{corollary}\label{cor:definition-equiv}
    For $c$ large enough and $q\leq z$, we have
 \begin{align*}
        E_{c,q}=e_{c,q}
    \end{align*}
where we recall that $e_{c,q}$ is the DF ground-state energy defined by the min-max theory \eqref{eq:DF-min-max}.
\end{corollary}
\begin{proof}
Theorem \ref{th:non-unfill} shows that $\delta_c=0$ when $q\leq z$ and $c$ large enough. Then \cite[Theorem 2.9]{meng2024rigorous} hold for the case $q=z$, and this corollary is obtained by repeating the proof of \cite[Corollary 2.10]{meng2024rigorous}.
\end{proof}
Thus \eqref{eq:rela-effect} also holds for the min-max definition of the DF ground-state energy:
\begin{corollary}\label{cor:relativistic-effect}
    For $c$ large enough and $q\leq z$, we have
 \begin{align*}
       |e_{c,q}-E_{q}^{\rm HF}|=\cO(c^{-2}).
    \end{align*}
\end{corollary}

\subsection{Main result, Part II}
From Theorem \ref{th:rela-effect}, we know the relativistic effect is of the order $\cO(c^{-2})$. In this part, we show that $\cO(c^{-2})$ is sharp, and we further give the explicit formula of the leading order relativistic correction term under the following additional assumption on $V$:
\begin{assumption}\label{ass:V}
    We assume further that the potential $V$ satisfies
    \begin{align*}
        \|(\nabla V)u\|_{\cH}\lesssim \|(1-\Delta)u\|_{\cH}.
    \end{align*}
Here and below, for two constants $a,b\in \R$, the notation $a \lesssim b$ means that there exists a constant $C$ independent of the speed of light $c$ such that $a\leq Cb$.
\end{assumption}

For a variational problem, this type of regular assumption for potential $V$ seems to be unavoidable. Concerning the relativistic hydrogen problem $\cD -\frac{1}{|x|}$, by Proposition \ref{prop:rela-decomp} below, the Darwin term is $\frac{1}{8c^2}\Delta |\cdot|^{-1}=\frac{\pi}{2c^2}\delta(x)$, and the error term $o(c^{-2})$ in Theorem \ref{th:rela-correction} below is also related to the derivatives of the Dirac distribution $\delta(x)$. However, Dirac distribution and its derivatives might be problematic for a DF minimizer in a variational framework.

\begin{remark}
In relativistic quantum chemistry, the nuclear charge density distribution can not be ignored \cite[Section 4]{andrae2002nuclear}, which means the nuclear charge distribution $\mu\in \mathcal{M}_+(\R^3)$ (we recall that $V$ is defined by $V=\mu*\frac{1}{|x|}$) is not a Dirac delta distribution. Assumption \ref{ass:V} is satisfied for commonly used nuclear charge distribution models such as  Gauss-type charge density distribution and Fermi-type charge distribution.
\end{remark}

Recall that $\mathcal{G}_{\rm HF}:=\{\gamma_*^{\rm HF}\in \Gamma_{q}\cap \cB(\cH,\HL)\cap X^2;\; \gamma_*^{\rm HF} \mbox{is a HF minimizer of }E_q^{\rm HF}\}$. Then using Corollary \ref{cor:definition-equiv},
\begin{theorem}\label{th:rela-correction}
Let $q\leq z$ with $q\in \N^+$. Under Assumption \ref{ass:V}, for $c$ large enough, we have
    \begin{align*}
       e_{c,q}= E_{c,q}=E_q^{\rm HF}+ E^{(2)}_c+o(c^{-2})
    \end{align*}
    where 
    \begin{align}\label{eq:E2-minimum}
       E^{(2)}_c:= \inf_{\gamma_*^{\rm HF}\in \mathcal{G}_{\rm HF}}\cE^{(2)}_c(\gamma_*^{\rm HF})
    \end{align}
and using \eqref{eq:gammaHF-uHF}, $\cE^{(2)}_c(\gamma_*^{\rm HF})$ is defined by
    \begin{align*}
      \cE^{(2)}_c(\gamma_*^{\rm HF}):&= -\frac1{4c^2}\Tr_{\cH}[H_{0,\gamma_*^{\rm HF}}\gamma_*^{\rm HF} \cL^2] \\
     &\quad+\frac{1}{4c^2}\left(\Tr_\cH[ (-V+W_{1,\gamma_*^{\rm HF}})\cL\gamma_*^{\rm HF}\cL]-\Tr_\cH[W_{2, \gamma_*^{\rm HF}\cL} \cL \gamma_*^{\rm HF}]\right)\\
      &= -\frac{1}{4c^2}\sum_{n=1}^q \lambda_{n}^{\rm HF}\left<\cL u_{n}^{\rm HF},\cL  u_n^{\rm HF}\right>_\HL\\
     &\quad+\frac{1}{4c^2}\sum_{n=1}^q\left<\cL u_n^{\rm HF}, (-V+W_{1,\gamma_*^{\rm HF}})\cL u_n^{\rm HF}\right>_{\HL}\\
     &\quad-\frac{1}{4c^2} \sum_{m,n=1}^q\int_{\R^3}  (u_n^{\rm HF})^*(x) u_m^{\rm HF}(x) \left<\cL u_m^{\rm HF}, |x-\cdot|^{-1}\cL u_n^{\rm HF}\right>_{\HL} dx.
    \end{align*}
\end{theorem}
The proof is postponed until Section \ref{sec:8.3}, and we will not give the precise lower bound of $c$ as for Theorem \ref{th:rela-effect} since it is more complicated.

\begin{remark}[Higher order relativistic correction]
It is possible to get higher order relativistic correction terms if the DF minimizer is unique for $c$ large enough. However, if DF minimizers are not unique for $c$ large enough, this might be problematic, since the continuity of the nonlinear term w.r.t. $c^{-1}$ is not clear. The error bound in Theorem \ref{th:rela-correction} is $o(c^{-2})$ rather than $\cO(c^{-4})$ is due to this continuity problem (see estimate \eqref{eq:6.2}).
\end{remark}

\medskip

Finally, we claim that $\cE^{(2)}_c(\gamma_*^{\rm HF})$ is the well-known leading order relativistic correction which can be decomposed into the following $3$ terms.
\begin{proposition}\label{prop:rela-decomp}
    We have 
    \begin{align*}
       4c^2 \cE^{(2)}_c(\gamma_*^{\rm HF}):= E_{\rm mv}+ E_{\rm D}+ E_{\rm so}
    \end{align*}
    where 
    \begin{itemize}
        \item $\frac{1}{4c^2} E_{\rm mv}$ is the mass-velocity term with
    \begin{align*}
        E_{\rm mv}:=-\frac{1}{2} \sum_{n=1}^q \left<u_{n}^{\rm HF}, (-\Delta)^2 u_{n}^{\rm HF}\right>_\HL;
    \end{align*}
    \item $\frac{1}{4c^2} E_{\rm D}$ is the Darwin term with
\begin{align*}
    E_{\rm D}:&=\frac{1}{2}\sum_{n=1}^q\left<u_n^{\rm HF}, \Big[\Delta (-V+W_{1,\gamma_*^{\rm HF}})\Big]  u_n^{\rm HF}\right>_\HL\\
    &\quad-\frac{1}{2}\sum_{m,n=1}^q \int_{\R^3} (u_n^{\rm HF})^*(x) u_m^{\rm HF}(x)\left< u_m^{\rm HF}, [\Delta_y W(x-\cdot)] u_n^{\rm HF}\right>_{\HL} dx;
\end{align*}
    \item $\frac{1}{4c^2} E_{\rm so}$ is the spin-orbital term with
\begin{align*}
    E_{\rm so}:&=\frac{1}{2}\sum_{n=1}^q\left<u_n^{\rm HF},\pmb\sigma \cdot [ (-\nabla V+\nabla W_{1,\gamma_*^{\rm HF}})\times (-i\nabla)]  u_n^{\rm HF}\right>_\HL\\
    &\quad-\frac{1}{2}\sum_{m,n=1}^q\int_{\R^3} (u_n^{\rm HF})^*(x) u_m^{\rm HF}(x)\left<  u_m^{\rm HF}, \pmb \sigma\cdot [(\nabla_y   W(x-\cdot))\times (-i\nabla)] u_n^{\rm HF}\right>_{\HL} dx.
\end{align*}
    \end{itemize}
\end{proposition}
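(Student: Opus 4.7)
The plan is to verify the claimed identity by a purely algebraic manipulation, combining a Pauli-type operator identity for $\cL U \cL$ with the Hartree--Fock eigenvalue equation. First I would establish the fundamental operator identity: starting from $\cL = \pmb{\sigma}\cdot (-i\nabla)$ (so that $\cL^2 = -\Delta$) and using the Pauli matrix relation $(\pmb{\sigma}\cdot A)(\pmb{\sigma}\cdot B) = A \cdot B + i\pmb{\sigma}\cdot(A\times B)$, one obtains for any sufficiently regular real scalar potential $U$ the key identity
\begin{align*}
\cL U \cL \;=\; \tfrac{1}{2}\{U, -\Delta\} + \tfrac{1}{2}(\Delta U) + \pmb{\sigma}\cdot(\nabla U \times (-i\nabla)),
\end{align*}
which already contains the seeds of the Darwin and spin--orbit terms. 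This identity will be applied both to the local potential $U_{\rm av} := -V + W_{1,\gamma_*^{\rm HF}}$ and, with $x$ held fixed and all derivatives taken in $y$, to the Coulomb kernel $y\mapsto W(x-y)$.

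Second, I would rewrite the $\lambda_n^{\rm HF}$ piece in $E^{(2)}$ by exploiting the self-adjointness of $H_{0,\gamma_*^{\rm HF}}$ and $-\Delta$ together with the HF eigenvalue equation $H_{0,\gamma_*^{\rm HF}} u_n^{\rm HF} = \lambda_n^{\rm HF} u_n^{\rm HF}$, yielding
\begin{align*}
\lambda_n^{\rm HF}\, \lc u_n^{\rm HF}, -\Delta u_n^{\rm HF} \rc_\HL \;=\; \tfrac{1}{2} \lc u_n^{\rm HF}, \{H_{0, \gamma_*^{\rm HF}}, -\Delta\} u_n^{\rm HF}\rc_\HL.
\end{align*}
Expanding $H_{0,\gamma_*^{\rm HF}} = -\tfrac12 \Delta + U_{\rm av} - W_{2,\gamma_*^{\rm HF}}$, the purely kinetic piece $\tfrac12\{-\tfrac12\Delta,-\Delta\} = \tfrac12\Delta^2$ summed over $n$ produces exactly the mass--velocity term $E_{\rm mv}$, while the remaining contributions split into a direct anticommutator $-\tfrac12\sum_n \lc u_n^{\rm HF}, \{U_{\rm av}, -\Delta\} u_n^{\rm HF}\rc_\HL$ and an exchange anticommutator $+\tfrac12\sum_n\lc u_n^{\rm HF}, \{W_{2, \gamma_*^{\rm HF}}, -\Delta\} u_n^{\rm HF}\rc_\HL$.

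Third, I would combine this with the other two terms of $E^{(2)}$. Substituting the operator identity from the first step into $\lc \cL u_n^{\rm HF}, U_{\rm av} \cL u_n^{\rm HF}\rc_\HL = \lc u_n^{\rm HF}, \cL U_{\rm av} \cL u_n^{\rm HF}\rc_\HL$ reveals that its $\tfrac12\{U_{\rm av}, -\Delta\}$ piece exactly cancels the direct anticommutator from the second step, leaving the direct Darwin term $\tfrac12 \sum_n \lc u_n^{\rm HF}, (\Delta U_{\rm av}) u_n^{\rm HF}\rc_\HL$ and the direct spin--orbit contribution. Analogously, applying the identity with $U(y) = W(x-y)$ inside the exchange term of $E^{(2)}$, together with the density-matrix algebraic identity
\begin{align*}
\sum_{m,n} \int_{\R^3} (u_n^{\rm HF})^*(x)\, u_m^{\rm HF}(x)\, \lc u_m^{\rm HF}, W(x-\cdot)(-\Delta) u_n^{\rm HF}\rc_\HL dx \;=\; \sum_n \lc u_n^{\rm HF}, W_{2, \gamma_*^{\rm HF}}(-\Delta) u_n^{\rm HF}\rc_\HL
\end{align*}
(and its hermitian conjugate), makes the $\tfrac12\{W, -\Delta_y\}$ piece from the exchange expansion absorb the residual $\tfrac12\{W_{2,\gamma_*^{\rm HF}}, -\Delta\}$ anticommutator from the second step, so that what remains is precisely the exchange Darwin and exchange spin--orbit parts of $E_{\rm D}$ and $E_{\rm so}$.

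I expect the main obstacle to be the careful bookkeeping of signs, coefficients, and the interplay between direct and exchange parts. Several of the intermediate expressions such as $\lc u, \nabla U \cdot \nabla u\rc$ appearing after the Pauli expansion are not individually real, and one must invoke the integration-by-parts identity $\Re \lc u, \nabla U\cdot\nabla u\rc = -\tfrac12\lc u, (\Delta U) u\rc$ (together with its analogue for the Coulomb kernel, exploiting the symmetry $W(x-y) = W(y-x)$) before the Darwin-type contributions take their canonical form. The role of Assumption~\ref{ass:V} is to justify these integrations by parts and to guarantee that $\nabla V$ acts boundedly on the relevant Sobolev spaces; beyond this, the proof is purely algebraic.
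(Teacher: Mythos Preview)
Your proposal is correct and follows essentially the same route as the paper's proof. The only cosmetic difference is packaging: the paper expresses the key Pauli identity via the double commutator $[\cL,[\cL,U]]=(-\Delta U)+2i\pmb\sigma\cdot(\nabla U\times\nabla)$ together with $\Re\langle[\cL,U]u,\cL u\rangle=\tfrac12\langle[\cL,[\cL,U]]u,u\rangle$, whereas you use the algebraically equivalent form $\cL U\cL=\tfrac12\{U,-\Delta\}+\tfrac12(\Delta U)+\pmb\sigma\cdot(\nabla U\times(-i\nabla))$ and the anticommutator $\tfrac12\{H_{0,\gamma_*^{\rm HF}},-\Delta\}$ in place of the paper's $\Re\langle H_{0,\gamma_*^{\rm HF}}u_n,\cL^2 u_n\rangle$; the cancellations between the kinetic, direct, and exchange pieces proceed identically.
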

The proof is postponed until Section \ref{sec:8.4}.

\subsection{Relativistic effects in Mittleman's definition of QED}
In this subsection, we will use Theorem \ref{th:rela-effect} and Theorem \ref{th:rela-correction} to show that our estimates on the relativistic effects also hold in a QED model defined by Mittleman. 

Before going further, we recall the definition the ground-state energy of QED due to Mittleman. According to Mittleman \cite{mittleman1981theory}, if the vacuum polarization is neglected, the physical ground state energy is defined by (see e.g., \cite{meng2024rigorous,barbaroux2005some,EstebanLewinSere-variational})
\begin{equation}\label{min-QED}
    E^{\rm M}_{c,q}:=\sup_{P_c^-\in \mathcal{P}_c}E^{(P_c^-)}_{c,q}.
\end{equation}
 Here 
\begin{align*}
    E^{(P_c^-)}_{c,q}=\min_{\gamma\in \Gamma^{(P_c^-)}_q}\mathcal{E}_c(\gamma)
\end{align*}
is a Hartree-Fock type ground-state energy under a fixed Dirac sea $P^-_c$ with the state of electron situated in
\begin{align}\label{eq:electron-position-field}
    \Gamma^{(P^-)}_q:=\{\gamma\in X;\;-P^-\leq \gamma\leq (1-P^-), \; (1-P^-) \gamma P^-=0,\; 0\leq \Tr (\gamma)\leq q\}.
\end{align}
Then when the vacuum polarization is neglected, the real physical ground state energy $ E^{\rm M}_{c,q}$ should be obtained by maximizing the ground state energy of the Hartree-Fock type ground-state over all allowed one-particle electron subspace $\mathcal{P}_c$ in electron-positron Hartree--Fock theory, with
\begin{align}\label{eq:dirac-sea}
    \mathcal{P}_c=\{P^-_{c,g};\; g\in X\}
\end{align}
where $P^-_{c,g}$ is defined by \eqref{eq:P-pm}.

Now we have
\begin{corollary}\label{cor:Mittleman}
      For $c$ large enough and $q\leq z$, we have
 \begin{align*}
        |E_{c,q}-E_{c,q}^{\rm M}|=\cO(c^{-4}).
    \end{align*}
\end{corollary}
\begin{proof}
   The case $q<z$ has been proved in \cite[Theorem 2.11]{meng2024rigorous}. Concerning the case $q=z$, this corollary follows from the same proof as for \cite[Theorem 2.11]{meng2024rigorous} and follows from replacing \cite[Corollary 2.10]{meng2024rigorous} by Corollary \ref{cor:definition-equiv}. This ends the proof.
\end{proof}
By Corollary \ref{cor:Mittleman}, Theorem \ref{th:rela-effect} and Theorem \ref{th:rela-correction}, we can now conclude that
\begin{corollary}\label{cor:mittleman-rela}
      For $c$ large enough and $q\leq z$, we have
 \begin{align*}
        |E_{c,q}^{\rm M}-E_q^{\rm HF}|=\cO(c^{-2})
    \end{align*}
    and under Assumption \ref{ass:V},
    \begin{align*}
        E_{c,q}^{\rm M}=E_q^{\rm HF}+E^{(2)}_c+o(c^{-2}).
    \end{align*}
\end{corollary}

\section{Sketch of proof}\label{sec:3.4}
In this section, we roughly explain the main ideas of the proof of Theorem \ref{th:rela-correction}. The proof of Theorem \ref{th:rela-effect} is simpler and follows a similar manner.

We first consider the estimate 
\begin{align}\label{eq:Ec<=Ehf}
    E_{c,q}\leq E_q^{\rm HF}+E^{(2)}_c+o(c^{-2})= \cE^{\rm HF}(\gamma_*^{\rm HF})+\cE^{(2)}_c(\gamma_*^{\rm HF})+o(c^{-2})
\end{align}
where $\gamma_*^{\rm HF}$ is any HF minimizer. To do so, we need to use $\gamma_*^{\rm HF}$ to construct some suitable test states $\gamma=\gamma(\gamma_*^{\rm HF})$ in DF theory, i.e, $\gamma\in \Gamma_q^+$. More precisely, our test state is $\gamma=\theta(\Lambda^+_c \widetilde{\gamma}_c^{\rm HF} \Lambda^+_c)$ with notations defined below. The proof will be split into the following steps:
\begin{itemize}
    \item Our first step is to renormalize the non-relativistic density matrix $\gamma_*^{\rm HF}$. An intuitive way is to use the HF minimizer $\gamma_*^{\rm HF}$ directly. However, it provides an error estimate of the order $\cO(c^{-2})$ which can only be used in the proof of Theorem \ref{th:rela-effect}. To get the leading order correction in Theorem \ref{th:rela-correction}, instead of considering one-body non-relativistic HF wavefunction $u^{\rm HF}\in L^2(\R^3;\C^2)$, when $c<\infty$, we need to use the following renormalized wavefunction
\begin{align}\label{eq:renormalization}
    \widetilde{u}^{\rm HF}:=\left(1+\frac{1}{4c^2}\left\|(-\Delta)^{1/2} u^{\rm HF}\right\|_{L^2(\R^3;\C^2)}^2\right)^{-1/2}\begin{pmatrix}
    u^{\rm HF}\\ \frac{1}{2c}\cL u^{\rm HF}
    \end{pmatrix} \in L^2(\R^3;\C^4),
\end{align}
Concerning the density matrix $\gamma_*^{\rm HF}$, the renormalized density matrix $\widetilde{\gamma}_c^{\rm HF}$ should be orthogonalized and satisfy $0\leq \widetilde{\gamma}_c^{\rm HF}\leq \1_{\cH}$. To do so, we introduce an overlap matrix $\mathcal{S}_{\rm HF}$ defined by \eqref{eq:S}. Using above renormalization \eqref{eq:renormalization} and the overlap matrix $\mathcal{S}_{\rm HF}$, we define the test density matrix $\widetilde{\gamma}_c^{\rm HF}$ in \eqref{eq:gammaHF'}. Using this overlap matrix, we overcome the problem in \cite[Section 6]{kutzelnigg2002perturbation} for non-degenerate cases.

\item The next step is to replace $\gamma_*^{\rm HF}$ by using the density matrix $\widetilde{\gamma}_c^{\rm HF}$ in the HF and DF functionals. Compared with quasi-relativistic problem, the relativistic correction of the kinetic term in Dirac problem is not obtained from the asymptotic expansion
\begin{align*}
    \sqrt{c^4-c^2\Delta}=c^2-\frac12\Delta+\cO(c^{-2}).
\end{align*}
Instead, using \eqref{eq:D-Delta} and the renormalization \eqref{eq:renormalization}, we can obtain the Laplace operator directly from the Dirac operator:
\begin{align*}
    \Tr_\cH [(\cD-c^2)\widetilde{\gamma}_c^{\rm HF}]=\Tr_\cH (H_0\cK_{\rm L}\widetilde{\gamma}_c^{\rm HF}\cK_{\rm L}).
\end{align*}
Furthermore, different from physical intuition, the leading order relativistic correction of the kinetic term (i.e., the term $\frac{1}{4c^2}E_{\rm mv}$ in Proposition \ref{prop:rela-decomp}) arises from the overlap matrix $\mathcal{S}_{\rm HF}$. Thus we should compare $\cE^{\rm HF}(\gamma_*^{\rm HF})$ with  $\cE_{c}(\widetilde{\gamma}_c^{\rm HF})$ rather than with $\cE^{\rm HF}(\widetilde{\gamma}_c^{\rm HF})$, and we show in Lemma \ref{lem:3.13} that
\begin{align*}
   \cE_{c}(\widetilde{\gamma}_c^{\rm HF})= \cE^{\rm HF}(\gamma_*^{\rm HF})+\cE^{(2)}_c(\gamma_*^{\rm HF})+\cO(c^{-4}).
\end{align*}

\item In the relativistic Dirac problem, it is useful to consider some positive projection of the Dirac operator. We now consider the projector $\Lambda^+_c$ and replace $\widetilde{\gamma}_c^{\rm HF}$ by $\Lambda^+_c \widetilde{\gamma}_c^{\rm HF} \Lambda^+_c$ in the DF functional:
\begin{align*}
    \cE_c(\widetilde{\gamma}_c^{\rm HF})=\cE_c(\Lambda^+_c \widetilde{\gamma}_c^{\rm HF} \Lambda^+_c)+\cO(c^{-4}).
\end{align*}
This proof is split into Sections \ref{sec:4.3.2}-\ref{sec:4.3.4}. In the proof, to deal with the potential terms, we need to split $\widetilde{\gamma}_c^{\rm HF}$ into four terms $\cK_{j}\widetilde{\gamma}_c^{\rm HF}\cK_{j'}$ with $j,j'\in\{\rm L,S\}$, and some delicate estimates for density matrix of the form $\cK_{j}\widetilde{\gamma}_c^{\rm HF}\cK_{j'}$ are given in Section \ref{sec:sometechnicaltools}. Gathering above three steps, we finally prove Theorem \ref{th:3.1}, i.e.,
\begin{align*}
     \cE_c(\Lambda^+_c \widetilde{\gamma}_c^{\rm HF} \Lambda^+_c)=\cE^{\rm HF}(\gamma_*^{\rm HF})+\cE^{(2)}_c(\gamma_*^{\rm HF})+\cO(c^{-4}).
\end{align*}

\item In the DF theory, the state of electrons should satisfy $\gamma=P^+_{c,\gamma}\gamma P^+_{c,\gamma}\in \Gamma_q^+$, see \eqref{eq:min-DF}. However, our state $\Lambda^+_c \widetilde{\gamma}_c^{\rm HF} \Lambda^+_c$ does not fulfill this requirement. In this step, using the retraction mapping $\gamma\mapsto \theta(\gamma)$ introduced in \cite{sere2023new}, we replace $\Lambda^+_c \widetilde{\gamma}_c^{\rm HF} \Lambda^+_c$ by $\theta(\Lambda^+_c \widetilde{\gamma}_c^{\rm HF} \Lambda^+_c)\in \Gamma_q^+$ and we show in Section \ref{sec:6} that
\begin{align*}
    \cE_c(\theta(\Lambda^+_c \widetilde{\gamma}_c^{\rm HF} \Lambda^+_c))=\cE_c(\Lambda^+_c \widetilde{\gamma}_c^{\rm HF} \Lambda^+_c)+\cO(c^{-4}).
\end{align*}
By the definition of the DF ground-state energy, we know
\begin{align*}
    E_{c,q}\leq  \cE_c(\theta(\Lambda^+_c \widetilde{\gamma}_c^{\rm HF} \Lambda^+_c))=\cE_c(\Lambda^+_c \widetilde{\gamma}_c^{\rm HF} \Lambda^+_c)+\cO(c^{-4}).
\end{align*}
The proof relies heavily on the author's previous work \cite{meng2024rigorous}, and some estimates on the projections are given in Section \ref{sec:5.2}.
\item Finally, using above estimates and taking the infimum over all HF minimizers, we finally obtain
\[E_{c,q}\leq  \inf_{\gamma^{\rm HF}_*\in \mathcal{G}_{\rm HF}}\Big(\cE^{\rm HF}(\gamma_*^{\rm HF})+\cE^{(2)}_c(\gamma_*^{\rm HF})\Big)+\cO(c^{-4})=E_q^{\rm HF}+E^{(2)}_c+\cO(c^{-4}).
\]
\end{itemize}

\medskip

Now we consider the inverse estimate
\begin{align}\label{eq:Ec>=Ehf}
    \cE_c(\gamma_*^c)=E_{c,q}\geq  E_q^{\rm HF}+E^{(2)}_c+o(c^{-2})
\end{align}
where $\gamma_*^c$ is a DF minimizer. The proof will be split into the following steps:
\begin{itemize}
    \item Analogous to the first step for proof of \eqref{eq:Ec<=Ehf}, we also need to renormalize the relativistic density matrix $\gamma_*^c$. Obviously, we can regard the density matrix $\cK_{\rm L}\gamma_*^c\cK_{\rm L}$ as a non-relativistic electronic state in HF theory satisfying $0\leq \cK_{\rm L}\gamma_*^c\cK_{\rm L} \leq \1_{\cH_{\rm L}}$. This works for Theorem \ref{th:rela-effect} but fails for Theorem \ref{th:rela-correction}. Therefore a renormalization is needed. We first introduce another overlap matrix $\mathcal{S}_{\rm DF}$ defined by \eqref{eq:S-DF}. Using $\mathcal{S}_{\rm DF}$, the needed renormalized density matrix $\widetilde{\gamma}_*^c$ is given in \eqref{eq:gammaDF'}.
    \item Our next step is to replace $\gamma_*^c$ by the renormalized density matrix $\widetilde{\gamma}_*^c$ in the HF and DF functionals. To do so, we first give some useful estimates for the eigenfunctions of DF operator $\cD_{\gamma}$ in Section \ref{sec:structure-gamma-c}. Then we show
    \begin{align*}
       E_{c,q}=\cE(\gamma_*^c)\geq   \cE^{\rm HF}(\widetilde{\gamma}^c_*)+\widetilde{\cE}_c^{(2)}(\gamma_*^c)+\cO(c^{-4})
    \end{align*}
where $\widetilde{\cE}_c^{(2)}(\gamma_*^c)$, defined in Theorem \ref{th:5.1}, is almost the relativistic correction term. This estimate is concluded in Theorem \ref{th:5.1}. In the proof, some delicate estimates are studied in Section \ref{sec:6.3.1} to control the kinetic term, and we also need to split $\widetilde{\gamma}_c^{\rm HF}$ into four terms $\cK_{j}\widetilde{\gamma}_c^{\rm HF}\cK_{j'}$ with $j,j'\in\{\rm L,S\}$ when dealing with potential terms.
\item Finally, using the definition of HF ground-state energy, we have
\begin{align*}
     E_{c,q}\geq   E_q^{\rm HF}+\widetilde{\cE}_c^{(2)}(\gamma_*^c)+\cO(c^{-4}).
\end{align*}
Then using the fact that $\gamma_*^c \to \gamma_*^{\rm HF}$ in $X^2$ for a HF minimizer $\gamma_*^{\rm HF}\in \mathcal{G}_{\rm HF}$ (see Theorem \ref{th:non-unfill}), we further show that as $c\to \infty$,
\begin{align*}
    4c^2\widetilde{\cE}_c^{(2)}(\gamma_*^c)\to 4c^2\cE^{(2)}_c(\gamma_*^{\rm HF}),
\end{align*}
namely
\begin{align*}
    \widetilde{\cE}_c^{(2)}(\gamma_*^c)=\cE^{(2)}_c(\gamma_*^{\rm HF})+o(c^{-2})\geq E^{(2)}_c+o(c^{-2}).
\end{align*}
This proves \eqref{eq:Ec>=Ehf}, i.e.,
\begin{align*}
      \cE_c(\gamma_*^c)=E_{c,q}\geq  E_q^{\rm HF}+E^{(2)}_c+o(c^{-2}).
\end{align*}
This step can be found in Section \ref{sec:8.3}.
\end{itemize}

\section{From HF problem to ``projected'' DF problem}\label{sec:5}
In this section, we are trying to understand the relationship between the HF ground-state energy and some DF energies associated with free picture (i.e., the state of electrons satisfy $\gamma=\Lambda^+_c \gamma \Lambda^+_c$). As explained in Section \ref{sec:3.4}, this is the first step to prove Theorem \ref{th:rela-effect} and Theorem \ref{th:rela-correction}. We will show the following.
\begin{theorem}[From HF problem to ``projected'' DF problem]\label{th:3.1}
Let $q\leq z$ and let $\gamma_*^{\rm HF}$ be any HF minimizer of $E_q^{\rm HF}$. Then under Assumption \ref{ass:c}, we have
    \begin{align}\label{eq:E_c-EHF}
        \cE_c(\Lambda^+_c \gamma_*^{\rm HF}\Lambda^+_c)\leq E_q^{\rm HF}+\cO(c^{-2}).
    \end{align}
In addition, under Assumption \ref{ass:V}, for $c$ large enough, there exists a density matrix $\widetilde{\gamma}_c^{\rm HF}=\cS\widetilde{\gamma}_c^{\rm HF}\cS^*\in \Gamma_q$ such that 
\begin{align}\label{eq:E_c-EHF-c2}
     \cE_c(\Lambda^+_c \widetilde{\gamma}_c^{\rm HF}\Lambda^+_c)\leq E_q^{\rm HF}+\cE^{(2)}_c(\gamma_*^{\rm HF})+\cO(c^{-4}).
\end{align}
More precisely, this density matrix $\widetilde{\gamma}_c^{\rm HF}$ is defined by \eqref{eq:gammaHF'} below. It is a relativistic renormalization of the HF minimizer $\gamma_*^{\rm HF}$.
\end{theorem}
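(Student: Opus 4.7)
The plan is to estimate $\cE_c(\Lambda^+_c\gamma\Lambda^+_c)$ term by term for two test density matrices built from the HF minimizer $\gamma_*^{\rm HF}$, and to compare the result with $\cE^{\rm HF}(\gamma_*^{\rm HF})=E_q^{\rm HF}$ by Taylor expansion in $c^{-2}$. The key tool is identity \eqref{eq:D-Delta}, $(\cD-c^2)\cS u=H_0\cK_\rL u$, which converts the Dirac kinetic operator into the Schr\"odinger one on the range of $\cS$. Using Remark \ref{rem:non-unfill}, I write $\gamma_*^{\rm HF}=\sum_{n=1}^q|u_n^{\rm HF}\rangle\langle u_n^{\rm HF}|$ with $\{u_n^{\rm HF}\}\subset\HL$ orthonormal, and define the relativistic renormalization
\[
\widetilde{u}_n^{\rm HF}:=\Big(1+\tfrac{1}{4c^2}\|\cL u_n^{\rm HF}\|_\HL^2\Big)^{-1/2}\cS u_n^{\rm HF}\in H^1,
\]
so that $\|\widetilde{u}_n^{\rm HF}\|_\cH=1$ and $\langle\widetilde{u}_m^{\rm HF},\widetilde{u}_n^{\rm HF}\rangle_\cH=\delta_{mn}+\cO(c^{-2})$. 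The density matrix $\widetilde{\gamma}_c^{\rm HF}$ is built from $\{\widetilde{u}_n^{\rm HF}\}_{n=1}^q$ and differs from $\sum_n|\widetilde{u}_n^{\rm HF}\rangle\langle\widetilde{u}_n^{\rm HF}|$ only by $\cO(c^{-2})$ corrections from inverting the Gram matrix.

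For the first bound \eqref{eq:E_c-EHF}, the Dirac kinetic contribution reduces, via cyclicity of the trace and $\Lambda^+_c\cD=|\cD|\Lambda^+_c$, to a non-relativistic trace on $\HL$: since $\cK_\rL\Lambda^+_c\cK_\rL=\tfrac12(1+c^2/|\cD|)\cK_\rL$, one has
\[
\Tr_\cH[(\cD-c^2)\Lambda^+_c\gamma_*^{\rm HF}\Lambda^+_c]=\tfrac12\Tr_\HL\!\Big[\tfrac{-c^2\Delta}{\sqrt{c^4-c^2\Delta}}\gamma_*^{\rm HF}\Big]=\Tr_\HL[H_0\gamma_*^{\rm HF}]+\cO(c^{-2}),
\]
with remainder controlled by $\gamma_*^{\rm HF}\in X^4$. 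For the external and two-body terms I split $\Lambda^+_c=\cK_\rL+(\Lambda^+_c-\cK_\rL)$, expand $\Lambda^+_c\gamma_*^{\rm HF}\Lambda^+_c$ into four pieces, and apply \eqref{eq:V-K1-K2}--\eqref{eq:W-K1-K2}; the $(\rm L,L)$-block reproduces the HF potential and exchange energies exactly, while the three remaining blocks carry at least one factor $\Lambda^+_c-\cK_\rL$, whose spin-diagonal part is $\cO(c^{-2})$ on $H^2$-vectors and whose spin-off-diagonal part is $\cO(c^{-1})$ on $H^1$-vectors. Combined with Kato-type bounds controlling $V$ and $W_{2,\gamma_*^{\rm HF}}$ by $(1-\Delta)^{1/2}$, each cross term contributes $\cO(c^{-2})$, yielding \eqref{eq:E_c-EHF}.

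For the refined bound \eqref{eq:E_c-EHF-c2}, the renormalization is designed so that the $c^{-2}$ correction to the kinetic energy is reproduced exactly rather than absorbed into the remainder. Using \eqref{eq:D-Delta}, the Dirac kinetic energy of $\widetilde{\gamma}_c^{\rm HF}$ becomes $\sum_n N_n^{-1}\langle u_n^{\rm HF},H_0 u_n^{\rm HF}\rangle_\HL$ up to off-diagonal Gram contributions, with $N_n=1+\tfrac{1}{4c^2}\|\cL u_n^{\rm HF}\|_\HL^2$; Taylor-expanding $N_n^{-1}$ produces exactly the first line of $E^{(2)}(\gamma_*^{\rm HF})$, namely $-\tfrac{1}{4c^2}\sum_n\lambda_n^{\rm HF}\|\cL u_n^{\rm HF}\|_\HL^2$. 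Expanding $\Tr_\cH[V\widetilde{\gamma}_c^{\rm HF}]$ and $\Tr_\cH[W_{\widetilde{\gamma}_c^{\rm HF}}\widetilde{\gamma}_c^{\rm HF}]$ via the block decomposition \eqref{eq:Wpotential-decom}--\eqref{eq:V-K1-K2}, each block $\cK_j\widetilde{\gamma}_c^{\rm HF}\cK_{j'}$ with $(j,j')\ne({\rm L,L})$ produces a $(2c)^{-1}\cL$ factor; these assemble into the two remaining $\cL\gamma_*^{\rm HF}\cL$-type contributions of $E^{(2)}$. The projection defect $\Lambda^+_c\widetilde{u}_n^{\rm HF}-\widetilde{u}_n^{\rm HF}$ is $\cO(c^{-2})$ in $H^1$, so its contribution to $\cE_c$ is $\cO(c^{-4})$. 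The main obstacle is quantifying these $c^{-4}$ remainders in the two-body term uniformly in $c$: this forces Assumption \ref{ass:V}, which absorbs the commutators $[V,\cL]$ and $[W(\cdot-y),\cL]$ arising from the off-diagonal part of $\Lambda^+_c-\cS$ into $(1-\Delta)^{1/2}$-bounded operators, and it also demands careful bookkeeping of the $\cO(c^{-2})$ Gram-matrix defect in $\widetilde{\gamma}_c^{\rm HF}$.
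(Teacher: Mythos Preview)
Your outline for \eqref{eq:E_c-EHF} is essentially the paper's: split into kinetic, one-body and two-body pieces and control each by $\cO(c^{-2})$. Your direct computation of the kinetic term via $\cK_\rL\Lambda^+_c\cK_\rL=\tfrac12(1+c^2/|\cD|)\cK_\rL$ is a clean variant of the paper's argument (which simply bounds the kinetic defect above by $0$ using $|\cD|\le c^2-\tfrac12\Delta$); both work.

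For \eqref{eq:E_c-EHF-c2} the overall two-step strategy --- compute $\cE_c(\widetilde{\gamma}_c^{\rm HF})$ to order $c^{-2}$, then show the free projector changes nothing at order $c^{-4}$ --- matches the paper. But two points in your sketch are not right.

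\emph{Bookkeeping of the eigenvalue term.} Taylor-expanding the diagonal normalization $N_n^{-1}$ in the \emph{kinetic} term alone does not produce $-\tfrac{1}{4c^2}\sum_n\lambda_n^{\rm HF}\|\cL u_n^{\rm HF}\|^2$; it produces $-\tfrac{1}{4c^2}\sum_n\langle u_n^{\rm HF},H_0 u_n^{\rm HF}\rangle\|\cL u_n^{\rm HF}\|^2$. The eigenvalue $\lambda_n^{\rm HF}$ only appears after you combine the Gram-matrix corrections coming from \emph{all} of $H_0$, $-V$, $W_{1,\cdot}$, $W_{2,\cdot}$ and then use that the $u_n^{\rm HF}$ are eigenfunctions of $H_{0,\gamma_*^{\rm HF}}$ (this is the content of the paper's Lemma~\ref{lem:3.13}). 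Moreover, the off-diagonal entries of the Gram matrix $S_{\rm HF}$ are essential in this cancellation; a purely diagonal $N_n^{-1}$ renormalization is not enough.

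\emph{The projection defect.} The inference ``$\Lambda^-_c\widetilde{u}_n^{\rm HF}=\cO(c^{-2})$ in $H^1$, hence the energy changes by $\cO(c^{-4})$'' has a genuine gap. On the range of $\Lambda^-_c$ the operator $\cD-c^2=-(|\cD|+c^2)$ has norm $\gtrsim c^2$, so an $\cO(c^{-2})$ state defect gives a priori only an $\cO(1)$ kinetic defect, not $\cO(c^{-4})$. The paper circumvents this in two steps: first, because $\widetilde{\gamma}_c^{\rm HF}$ has range in $\cS(\HL)$, one can rewrite $\Tr[(\cD-c^2)\Lambda^-_c\widetilde{\gamma}_c^{\rm HF}]=\Tr[\cK_\rL\Lambda^-_c\widetilde{\gamma}_c^{\rm HF}\cK_\rL H_0]$, eliminating the dangerous $c^2$; second, one needs the refined bound $\|(-\Delta)^{-1/2}\cK_\rL\Lambda^-_c\cS u\|\lesssim c^{-4}\|u\|_{H^3}$ (Lemma~\ref{lem:lambda+-non'}/Corollary~\ref{cor:lambda-non-gamma'}), which is strictly stronger than any $\cO(c^{-2})$ statement and is exactly where the $H^3$ regularity of $u_n^{\rm HF}$ (hence Assumption~\ref{ass:V}, via Corollary~\ref{cor:regularity-H3}) enters. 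Similar weighted estimates are needed for the $V$ and $W$ defects (Lemmas~\ref{lem:4.4}--\ref{lem:4.5}). Your sketch identifies that Assumption~\ref{ass:V} is needed but locates the obstruction in commutators $[V,\cL]$; the actual mechanism is the regularity gain $u_n^{\rm HF}\in H^3$ together with the special smallness of $\Lambda^-_c$ on $\cS(\HL)$.
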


This result means that we can pass from the HF ground-state energy to some DF energies associated with the free picture $\Lambda^+_c$ with an error term of the order $\cO(c^{-2})$ (or $\cO(c^{-4})$ under Assumption \ref{ass:V}).

\medskip

The proof of Theorem \ref{th:3.1} is organized as follows:
\begin{itemize}
    \item We first introduce some technical estimates in Section \ref{sec:sometechnicaltools}. We consider some estimates on the projector $\Lambda^+$ and $\Lambda^-:=1-\Lambda^+$. They are mainly used to study the terms:
\begin{align*}
   \gamma_*^{\rm HF}- \Lambda^+_c \gamma_*^{\rm HF}\Lambda^+_c= \Lambda^+_c \gamma_*^{\rm HF}\Lambda^-_c+ \Lambda^-_c \gamma_*^{\rm HF}\Lambda^+_c+\Lambda^-_c \gamma_*^{\rm HF}\Lambda^-_c
\end{align*}
and
\begin{align*}
 \widetilde{\gamma}_c^{\rm HF}-\Lambda^+_c \widetilde{\gamma}_c^{\rm HF}\Lambda^+_c=\Lambda^+_c \widetilde{\gamma}_c^{\rm HF}\Lambda^-_c+\Lambda^-_c \widetilde{\gamma}_c^{\rm HF}\Lambda^+_c+\Lambda^-_c \widetilde{\gamma}_c^{\rm HF}\Lambda^-_c. 
\end{align*}
In these estimates, the property $\widetilde{\gamma}_c^{\rm HF}=\cS\widetilde{\gamma}_c^{\rm HF}\cS^*$ is used to get a better estimates w.r.t. $c^{-1}$. In addition, we also summarize some Hardy-type inequalities used to control the potential terms in the functional.
\item We next prove \eqref{eq:E_c-EHF} and \eqref{eq:E_c-EHF-c2} separately in Section \ref{sec:4.2} and Section \ref{sec:4.3}. In their proof, we split the functional into the following terms: for any density matrix $\gamma$,
\begin{align}\label{EDF-EHF}
  \MoveEqLeft   \cE_c(\Lambda^+\gamma\Lambda^+)- \cE^{\rm HF}(\gamma)= \underbrace{\Tr_{\cH}[(\cD-c^2) \Lambda^+_c\gamma \Lambda^+_c] -\Tr_{\cH}[H_0 \gamma ]}_{\textrm{kinetic term}}\notag\\
     &-\left(\underbrace{\Tr_{\cH}[V\Lambda^+_c\gamma \Lambda^+_c]-\Tr[V\gamma]}_{\textrm{potential between electrons and nuclei}}\right)+\left(\underbrace{\Tr_{\cH}[W_{\Lambda^+_c\gamma \Lambda^+_c}\Lambda^+_c\gamma^{\rm HF}_{*} \Lambda^+_c]-\Tr_{\cH}[W_{\gamma}\gamma]}_{\textrm{potential between electrons and electrons}}\right).
\end{align}
Then we study each terms with $\gamma= \gamma_*^{\rm HF}$ and $\gamma=\widetilde{\gamma}_c^{\rm HF}$ respectively. 
\item  To prove \eqref{eq:E_c-EHF-c2}, we also need to introduce the density matrix $\widetilde{\gamma}_c^{\rm HF}$. It is constructed in Section \ref{sec:4.3.1} and its property is studied in Section \ref{sec:4.3.2'}. In addition, to use \eqref{EDF-EHF}, we show in Lemma \ref{lem:3.13} that 
\begin{align*}
     \cE_{c}(\widetilde{\gamma}_c^{\rm HF})= E_q^{\rm HF}+\cE^{(2)}_c(\gamma_*^{\rm HF})+\cO(c^{-4}).
\end{align*}
Then \eqref{eq:E_c-EHF-c2} can be obtained by studying \eqref{EDF-EHF}.
\end{itemize}

\subsection{Some technical tools}\label{sec:sometechnicaltools}
We now introduce some technical results used for Theorem \ref{th:3.1}. First, the H\"older inequality for Schatten norm (see e.g., \cite[Theorem 2.8]{simon2005trace}) will also be used frequently:
\begin{align}\label{eq:Holder}
    \|AB\|_{\mathfrak{S}_1}\leq \|A\|_{\mathfrak{S}_2}\|B\|_{\mathfrak{S}_2}=\|A A^*\|_{\mathfrak{S}_1}^{1/2}\| B^*B\|_{\mathfrak{S}_1}^{1/2}.
\end{align}

\medskip

\noindent{\bf Estimates on the projectors.} Next, we consider some estimates on the projector $\Lambda^\pm$. The following lemma and corollary are used for \eqref{eq:E_c-EHF}.
\begin{lemma}\label{lem:lambda+-non}
Let $u\in \cHn\cap H^2$, then for any $0\leq s\leq 1$,
\begin{align}\label{eq:lambda-non}
    \|\cK_{\rm L}\Lambda^-_c u\|_{\cH}\leq \frac{1}{4c^2}\|u\|_{H^2},\;\|\cK_{\rm L}\Lambda^+_c u\|_{H^s}\leq \|u\|_{H^s}, \; \|\cK_{\rm S}\Lambda^\pm_c u\|_{H^s}\leq \frac{1}{2c}\|u\|_{H^{s+1}}.
\end{align}
\end{lemma}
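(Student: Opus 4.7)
The plan is to prove all three inequalities by passing to the Fourier (momentum) representation, where $\Lambda^\pm_c$ become explicit matrix-valued Fourier multipliers that act especially transparently on a block-diagonal spinor $u = \cK_\rL u$.

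First, using $\widehat{\cD}(\xi) = c\,\alpha\cdot\xi + c^2\beta$ and $|\widehat{\cD}(\xi)| = E_c(\xi) := \sqrt{c^4 + c^2|\xi|^2}$, the definition \eqref{eq:2.3} of $\Lambda^\pm_c$ reads
\[
\Lambda^\pm_c = \frac{1}{2} \pm \frac{c\,\alpha\cdot\xi + c^2\beta}{2 E_c(\xi)}.
\]
Writing this in $2\times 2$ block form and applying it to $\widehat{u}(\xi) = \begin{pmatrix} \widehat{u^\rL}(\xi) \\ 0 \end{pmatrix}$ gives
\[
\widehat{\Lambda^\pm_c u}(\xi) = \begin{pmatrix} \bigl(\tfrac{1}{2} \pm \tfrac{c^2}{2E_c(\xi)}\bigr) \widehat{u^\rL}(\xi) \\ \pm \tfrac{c\,\sigma\cdot\xi}{2E_c(\xi)}\, \widehat{u^\rL}(\xi) \end{pmatrix}.
\]
Thus $\cK_\rL \Lambda^\pm_c u$ and $\cK_\rS \Lambda^\pm_c u$ are Fourier multipliers with scalar symbols $\tfrac{1}{2} \pm \tfrac{c^2}{2E_c}$ and matrix symbol $\tfrac{c\,\sigma\cdot\xi}{2E_c}$ respectively, and each of the three claimed inequalities reduces to a pointwise bound on one of these symbols combined with Plancherel.

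Next I would carry out the three pointwise symbol estimates. For the first inequality I rewrite
\[
\frac{1}{2} - \frac{c^2}{2E_c(\xi)} = \frac{E_c(\xi) - c^2}{2E_c(\xi)} = \frac{c^2|\xi|^2}{2E_c(\xi)\bigl(E_c(\xi)+c^2\bigr)} \leq \frac{|\xi|^2}{4c^2},
\]
using $E_c \geq c^2$ and $E_c + c^2 \geq 2c^2$; Plancherel then yields $\|\cK_\rL \Lambda^-_c u\|_{\cH} \leq \frac{1}{4c^2}\||\xi|^2 \widehat{u^\rL}\|_{L^2} \leq \frac{1}{4c^2}\|u\|_{H^2}$. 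For the second inequality, $\tfrac{1}{2} + \tfrac{c^2}{2E_c} \leq 1$ (again from $E_c \geq c^2$) shows the multiplier is a contraction on every $H^s$. For the third inequality, $(\sigma\cdot\xi)^2 = |\xi|^2 \1_2$ implies $|(\sigma\cdot\xi) v|_{\C^2} = |\xi|\,|v|_{\C^2}$ for every $v \in \C^2$, whence pointwise
\[
\left|\frac{c\,\sigma\cdot\xi}{2E_c(\xi)} \widehat{u^\rL}(\xi)\right| = \frac{|\xi|}{2\sqrt{c^2+|\xi|^2}}\,|\widehat{u^\rL}(\xi)| \leq \frac{|\xi|}{2c}\,|\widehat{u^\rL}(\xi)|;
\]
squaring, multiplying by the Sobolev weight $(1+|\xi|^2)^s$, and using $|\xi|^2(1+|\xi|^2)^s \leq (1+|\xi|^2)^{s+1}$ for $s\in[0,1]$ produces the bound $\frac{1}{2c}\|u\|_{H^{s+1}}$.

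The main ``obstacle'' here is purely bookkeeping: matching the Fourier weights so that $|\xi|^2 (1+|\xi|^2)^s \leq (1+|\xi|^2)^{s+1}$, and carefully tracking the scaling in $E_c(\xi) = c\sqrt{c^2+|\xi|^2}$ to see where the gain $c^{-2}$ in the first estimate and $c^{-1}$ in the third estimate arise. Structurally, the lemma simply records the non-relativistic expansion of $\Lambda^\pm_c$ on the regime $|\xi|\ll c$: the symbols $\tfrac{1}{2} - \tfrac{c^2}{2E_c}$ and $\tfrac{c|\xi|}{2E_c}$ behave like $\tfrac{|\xi|^2}{4c^2}$ and $\tfrac{|\xi|}{2c}$ respectively, and the inequalities capture this uniformly in $\xi$.
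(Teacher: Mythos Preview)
Your proof is correct and is essentially the same as the paper's: the paper computes $\Lambda_c^\pm u = \frac{1}{2|\cD|}\begin{pmatrix}(|\cD|\pm c^2)u^{\rm L}\\ \pm c\cL u^{\rm L}\end{pmatrix}$ in operator form and then uses the operator inequalities $c^2\leq |\cD|\leq c^2-\tfrac{1}{2}\Delta$, which is exactly your Fourier-side computation with $E_c(\xi)$ in place of $|\cD|$. The only cosmetic difference is that you work with symbols pointwise in $\xi$, while the paper stays in operator notation throughout.
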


\begin{proof}
Observe that for $u\in \cHn$
\[
\cD u=\begin{pmatrix}
    c^2 & c \cL\\ c\cL&-c^2
\end{pmatrix}\begin{pmatrix}
        u^{\rm L}\\ 0
    \end{pmatrix}=\begin{pmatrix}c^2 u^{\rm L}\\ c\cL u^{\rm L}\end{pmatrix}.
\]
where we recall that $\cL$ is defined in \eqref{eq:cL}. Thus
\[
\Lambda_c^\pm u=\frac{1}{2}\left(1\pm\frac{\cD}{|\cD|}\right) u=\frac{1}{2|\cD|}\begin{pmatrix}[|\cD|\pm c^2]u^{\rm L}\\\pm c\cL  u^{\rm L}\end{pmatrix}.
\]
Then, as $c^2\leq |\cD|\leq c^2-\frac{1}{2}\Delta$, we have
\begin{align*}
    \MoveEqLeft\|\cK_{\rm L}\Lambda_c^- u\|_{\mathcal{H}}= \frac{1}{2}\left\|\frac{|\cD|-c^2}{|\cD|}u^{\rm L}\right\|_{L^2(\R^3;\C^2)}\leq \frac{1}{2c^2}\|(|\cD|-c^2)u^{\rm L}\|_{L^2(\R^3;\C^2)}\leq \frac{1}{4c^2}\|u\|_{H^2}
\end{align*}
and for any $s\geq 0$,
\begin{align*}
    \MoveEqLeft\|\cK_{\rm L}\Lambda_c^+ u\|_{H^s}= \frac{1}{2}\left\|\frac{|\cD|+c^2}{|\cD|} u^{\rm L}\right\|_{H^s(\R^3;\C^2)}\leq \|u^{\rm L}\|_{H^s(\R^3;\C^2)}\leq \|u\|_{H^s}.
\end{align*}
Analogously, it is easy to see that
\begin{align*}
    \MoveEqLeft\|\cK_{\rm S}\Lambda_c^\pm u\|_{H^{s}}\leq \frac{1}{2c}\|\cL u^{\rm L}\|_{H^s(\R^3;\C^2)}\leq \frac{1}{2c}\|u\|_{H^{s+1}}.
\end{align*}
This ends the proof.
\end{proof}
Then,
\begin{corollary}\label{cor:lambda-non-gamma}
    For any non-negative self-adjoint density matrix $\gamma\in \cB(\cH,\cHn)\cap X^4$ and for any $0\leq s\leq 2$, we have
\begin{align*}
    \|\cK_{\rm L}\Lambda_c^-\gamma \Lambda_c^-\cK_{\rm L}\|_{\mathfrak{S}_1}\leq \frac{1}{16c^4}\| \gamma\|_{X^4},\qquad
    \| \cK_{\rm L}\Lambda_c^+\gamma\Lambda_c^+\cK_{\rm L} \|_{X^s}\leq \|\gamma\|_{X^s}
\end{align*}
and
\begin{align*}
\|\cK_{\rm S}\Lambda_c^{\pm}\gamma \Lambda_c^{\pm}\cK_{\rm S}\|_{X^s}\leq \frac{1}{4c^2}\|\gamma\|_{X^{2+s}}.
\end{align*}
\end{corollary}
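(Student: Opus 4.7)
The plan is to apply Lemma \ref{lem:lambda+-non} eigenfunction-by-eigenfunction to the spectral decomposition of $\gamma$. Since $\gamma=\gamma^*\geq 0$ and $\gamma\in\cB(\cH,\cHn)$, I will write
\[
\gamma=\sum_{n\geq 1}\lambda_n\left|u_n\right\rangle\left\langle u_n\right|,\qquad \lambda_n\geq 0,\quad u_n\in\cHn,
\]
with $\{u_n\}$ orthonormal in $\cH$. The $X^4$ (resp.\ $X^s$, $X^{s+2}$) assumption on $\gamma$ guarantees that the $u_n$ lie in the Sobolev space needed to invoke Lemma \ref{lem:lambda+-non}. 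The key observation is that whenever $\sum_n\lambda_n\left|\phi_n\right\rangle\left\langle\phi_n\right|$ is a non-negative operator on $\cH$, its $\mathfrak{S}_1$-norm coincides with its trace and hence equals $\sum_n\lambda_n\|\phi_n\|_\cH^2$. Applied to $(1-\Delta)^{r/4}\gamma(1-\Delta)^{r/4}$, which is non-negative because $(1-\Delta)^{r/4}$ is self-adjoint and commutes with $\cK_{\rm L}$, this yields the fundamental identity
\[
\|\gamma\|_{X^r}=\sum_{n\geq 1}\lambda_n\|u_n\|_{H^{r/2}}^2.
\]
The same principle applies to $A\gamma A^*$ dressed by $(1-\Delta)^{s/4}$, for $A\in\{\cK_{\rm L}\Lambda_c^\pm,\cK_{\rm S}\Lambda_c^\pm\}$, since all of these are non-negative operators on $\cH$.

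With these identities in hand, each of the three estimates is a direct consequence of squaring the corresponding pointwise bound of Lemma \ref{lem:lambda+-non} and summing against $\lambda_n$. For the first estimate, one gets
\[
\|\cK_{\rm L}\Lambda_c^-\gamma\Lambda_c^-\cK_{\rm L}\|_{\mathfrak{S}_1}=\sum_{n\geq 1}\lambda_n\|\cK_{\rm L}\Lambda_c^- u_n\|_{\cH}^2\leq\frac{1}{16c^4}\sum_{n\geq 1}\lambda_n\|u_n\|_{H^2}^2=\frac{1}{16c^4}\|\gamma\|_{X^4}.
\]
For the second, one takes the exponent $s/2\in[0,1]$ in Lemma \ref{lem:lambda+-non} (this is precisely the reason for the restriction $s\in[0,2]$) and obtains
\[
\|\cK_{\rm L}\Lambda_c^+\gamma\Lambda_c^+\cK_{\rm L}\|_{X^s}=\sum_{n\geq 1}\lambda_n\|\cK_{\rm L}\Lambda_c^+ u_n\|_{H^{s/2}}^2\leq\sum_{n\geq 1}\lambda_n\|u_n\|_{H^{s/2}}^2=\|\gamma\|_{X^s}.
\]
The third bound is identical with the small-component estimate of Lemma \ref{lem:lambda+-non} replacing the large-component one, producing the gain $\frac{1}{4c^2}$ and the shift $H^{s/2}\to H^{s/2+1}$ at the level of eigenfunctions, which translates into $X^s\to X^{s+2}$ through the formula above with $r=s+2$.

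There is no real obstacle: the argument is a mechanical transfer of the one-spinor Sobolev bounds of Lemma \ref{lem:lambda+-non} to the operator level, powered by the non-negativity of $\gamma$ (which turns trace-norm computations into explicit series). The only care needed is bookkeeping on exponents, namely that the Sobolev exponent $s/2\in[0,1]$ required by the lemma corresponds to $s\in[0,2]$ in the corollary, and that the factor of two between the Sobolev index $r/2$ of eigenfunctions and the index $r$ labelling $\|\cdot\|_{X^r}$ is what produces $X^{s+2}$ (rather than $X^{s+1}$) in the last estimate.
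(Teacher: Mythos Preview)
Your proposal is correct and follows essentially the same approach as the paper: spectrally decompose $\gamma$ as $\sum_n\lambda_n|u_n\rangle\langle u_n|$ with $u_n\in\cHn$, apply Lemma~\ref{lem:lambda+-non} to each $u_n$, and use non-negativity to identify the $\mathfrak{S}_1$-norm of $(1-\Delta)^{r/4}A\gamma A^*(1-\Delta)^{r/4}$ with the series $\sum_n\lambda_n\|(1-\Delta)^{r/4}Au_n\|_\cH^2$. The paper carries this out explicitly only for the first estimate and then says ``Analogously, we can deduce other estimates''; your write-up is in fact more detailed on the exponent bookkeeping.
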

\begin{proof}
    As $\gamma\in \cB(\cH,\cHn)\cap X^4$ is non-negative, according to \eqref{eq:gamma-dec}, it can be written as
\begin{align}\label{eq:gammaHF-dec}
    \gamma=\sum_{n\geq 1}\lambda_n |u_n \rangle\,\langle u_n|
\end{align}
with $0\leq \lambda_n\leq 1$ and $\{u_n\}_{n\geq 1}$ being an orthonormal basis in $\cHn\cap H^2$. 

Then 
\begin{align*}
    \cK_{\rm L}\Lambda_c^-\gamma \Lambda_c^-\cK_{\rm L}= \sum_{n\geq 1}\lambda_n  |\cK_{\rm L}\Lambda^-  u_n \rangle\,\langle \cK_{\rm L}\Lambda^-   u_n|.
\end{align*}
As a result, from Lemma \ref{lem:lambda+-non}, we infer
\begin{align*}
    \|\cK_{\rm L}\Lambda_c^-\gamma \Lambda_c^-\cK_{\rm L}\|_{\mathfrak{S}_1}&=  \sum_{n\geq 1}\lambda_n   \|\cK_{\rm L}\Lambda_c^-  u_n\|_{\cH}^2\leq \frac{1}{16c^4}\sum_{n\geq 1}\lambda_n \|u_n\|_{H^2}^2=\frac{1}{16c^4}\|\gamma\|_{X^4}.
\end{align*}
Here the first equation and the last equation hold since $\cK_{\rm L}\Lambda_c^-\gamma \Lambda_c^-\cK_{\rm L}$ and $(1-\Delta)\gamma(1-\Delta)$ are non-negative density matrices.

Analogously, we can deduce other estimates. This ends the proof.
\end{proof}

Concerning the proof of \eqref{eq:E_c-EHF-c2}, we need the followings.
\begin{lemma}\label{lem:lambda+-non'}
    Let $u\in H^3\cap \HL$, then for any $0\leq s\leq 2$,
    \begin{align}
        \|(-\Delta)^{-1/2}\cK_\rL \Lambda^-_c \cS u\|_{\cH}&\lesssim c^{-4}\|u\|_{H^3},\quad &\|\cK_\rL \Lambda^+_c  \cS u\|_{H^s}\lesssim \| u\|_{H^{s+1}},\quad\,\;\label{eq:Lambda-KL}\\
         \|\cK_\rS \Lambda^-_c  \cS u\|_{\cH}&\lesssim c^{-3}\|u\|_{H^{3}},\quad &\|\cK_\rS \Lambda^+_c \cS  u\|_{H^s}\lesssim c^{-1}\| u\|_{H^{s+1}}.\label{eq:Lambda-KS}
    \end{align}
\end{lemma}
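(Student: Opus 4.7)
\textbf{Proof plan for Lemma \ref{lem:lambda+-non'}.}

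The strategy parallels that of Lemma \ref{lem:lambda+-non}: I compute $\Lambda^\pm_c \cS u$ in closed form, then bound each component using the spectral inequalities $|\cD|\ge c^2$ and $|\cD|\ge c\sqrt{-\Delta}$. First I use the identity \eqref{eq:D-Delta}, which says $(\cD-c^2)\cS u = H_0\cK_\rL u$, to write
\begin{align*}
    \cD\,\cS u \;=\; \begin{pmatrix} c^2 u + H_0 u \\[2pt] \tfrac{c}{2}\,\cL u \end{pmatrix}.
\end{align*}
Substituting this into $\Lambda^-_c = \tfrac12(I - \cD/|\cD|)$ and invoking the algebraic identity $|\cD|-c^2 = 2c^2 H_0/(|\cD|+c^2)$ (which follows from $|\cD|^2 - c^4 = -c^2\Delta = 2c^2 H_0$), a short computation yields the explicit formulas
\begin{align*}
    \cK_\rL \Lambda^-_c \cS u \;=\; -\,\frac{c^2 H_0^{\,2}}{|\cD|\,(|\cD|+c^2)^{2}}\,u, \qquad \cK_\rS \Lambda^-_c \cS u \;=\; \cK_\rS\!\begin{pmatrix} 0 \\[2pt] \dfrac{c\,(-\Delta)}{4\,|\cD|\,(|\cD|+c^2)}\,\cL u \end{pmatrix}.
\end{align*}

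From these two formulas the estimates \eqref{eq:Lambda-KL}--\eqref{eq:Lambda-KS} for $\Lambda^-_c$ follow by bounding Fourier multipliers. Writing $E_p := c\sqrt{c^2+|p|^2}$, I use $E_p\ge c^2$ and $E_p+c^2\ge 2c^2$, which give $|\cD|(|\cD|+c^2)^2\ge 4c^6$ and $|\cD|(|\cD|+c^2)\ge 2c^4$ as operators. Applying $(-\Delta)^{-1/2}$ to the upper component cancels one power of $(-\Delta)^{1/2}$ from $H_0^{\,2}=\Delta^2/4$, so
\begin{align*}
    \|(-\Delta)^{-1/2}\cK_\rL \Lambda^-_c \cS u\|_\cH
    \;\lesssim\; c^{-4}\,\|(-\Delta)^{3/2}u\|_{L^2(\R^3;\C^2)} \;\lesssim\; c^{-4}\|u\|_{H^3};
\end{align*}
and for the lower component, the operator bound combined with $\cL^2=-\Delta$ gives
\begin{align*}
    \|\cK_\rS \Lambda^-_c \cS u\|_\cH \;\lesssim\; c^{-3}\,\|(-\Delta)^{3/2}u\|_{L^2(\R^3;\C^2)} \;\lesssim\; c^{-3}\|u\|_{H^3}.
\end{align*}

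The bounds for $\Lambda^+_c$ follow by decomposing $\Lambda^+_c\cS u = \cS u - \Lambda^-_c\cS u$. For the large component, $\cK_\rL\Lambda^+_c\cS u = u - \cK_\rL\Lambda^-_c\cS u$ and the remainder has Fourier multiplier bounded by $C(1+|p|/c)\lesssim 1+|p|$, yielding $\|\cK_\rL\Lambda^+_c\cS u\|_{H^s}\lesssim \|u\|_{H^{s+1}}$. For the small component, $\cK_\rS\Lambda^+_c\cS u = \tfrac{1}{2c}\cK_\rS\bigl(\cL u\bigr) - \cK_\rS\Lambda^-_c\cS u$; the first piece is of size $c^{-1}\|u\|_{H^{s+1}}$ while the second is even smaller by the same argument as above, giving $\|\cK_\rS\Lambda^+_c\cS u\|_{H^s}\lesssim c^{-1}\|u\|_{H^{s+1}}$.

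The main technical obstacle is that the naive bound $|\cD|\ge c^2$ is sharp only at low frequencies, whereas at high frequencies ($|p|\gtrsim c$) one must replace it by $|\cD|\ge c\sqrt{-\Delta}$ to trade away powers of $|p|$ for powers of $c$. Both regimes combine cleanly here because the ansatz $\cS u$ is designed (via the identity \eqref{eq:D-Delta}) to cancel the dominant non-relativistic part of $\cD$ exactly, leaving a residue proportional to $H_0^{\,2}/|\cD|^3 = O(c^{-6}\Delta^2)$ in the upper component, and hence the extra factor $c^{-4}$ after removing $(-\Delta)^{1/2}$.
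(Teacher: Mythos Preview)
Your proof is correct and follows essentially the same strategy as the paper: compute $\Lambda^\pm_c\cS u$ explicitly as Fourier multipliers acting on $u^{\rm L}$, then bound each component. The one genuine (though minor) difference is in how you control the key multiplier $|\cD|-(c^2-\tfrac12\Delta)$. The paper writes out the second-order Taylor remainder of $\sqrt{c^4+c^2|\xi|^2}$ to obtain $|\sqrt{c^4+c^2|\xi|^2}-(c^2+\tfrac12|\xi|^2)|\lesssim c^{-2}|\xi|^4$, whereas you use the algebraic factorization $|\cD|-c^2=2c^2H_0/(|\cD|+c^2)$ twice to reach the exact identity
\[
\cK_\rL\Lambda^-_c\cS u=-\frac{c^2H_0^{2}}{|\cD|(|\cD|+c^2)^2}\,u^{\rm L},
\]
and then bound the denominator by $4c^6$. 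Your route is arguably cleaner: it avoids the integral remainder and yields closed-form expressions that make the $c^{-4}$ and $c^{-3}$ decay transparent. The paper's Taylor argument, on the other hand, is slightly more flexible if one later wants higher-order expansions. For the $\Lambda^+_c$ bounds your decomposition $\Lambda^+_c=I-\Lambda^-_c$ is equivalent to the paper's direct estimate of $\tfrac{1}{2|\cD|}(|\cD|+c^2-\tfrac12\Delta)$; in both cases one ultimately uses $|\cD|\ge c|\xi|$ at high frequency to turn the residual multiplier into an $O(|\xi|/c)$ term, which is what your final paragraph alludes to.
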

Under Assumption \ref{ass:V}, the non-relativistic wavefunction $u^{\rm HF}_n\in H^3$. Thus to reach an estimate of the order $\cO(c^{-4})$, we need to add the operator $(-\Delta)^{-1/2}$ to control the term $\cK_\rL \Lambda^-_c \cS u$ in \eqref{eq:Lambda-KL}.
\begin{proof}[Proof of Lemma \ref{lem:lambda+-non'}]
We have
\begin{align*}
    \Lambda^\pm \cS u &=\frac{1}{2|\cD|}\begin{pmatrix}\left(|\cD|\pm (c^2-\frac{1}{2}\Delta)\right)u^{\rm L}\\\frac{1}{2c}(|\cD|\pm c^2) \cL  u^{\rm L}\end{pmatrix}.
\end{align*}
According to Taylor's expansion, for any $\xi\in \R^3$,
\begin{align*}
    \sqrt{c^4+c^2|\xi|^2}=c^2+\frac{1}{2}|\xi|^2-\frac{1}{8c^2}|\xi|^4 -\frac{c^2}{16}\int_0^{c^{-2}|\xi|^2} (1+t)^{-3/2}(c^{-2}|\xi|^2-t)^2 dt.
\end{align*}
which implies 
\begin{align*}
   \MoveEqLeft \left| \sqrt{c^4+c^2|\xi|^2} -\Big(c^2+\frac{1}{2}|\xi|^2\Big)\right|\\
    &\leq \frac{1}{8c^2}|\xi|^4+ \frac{c^2}{16}\int_{0}^{c^{-2}|\xi|^2} (1+t)^{-3/2}(c^{-2}|\xi|^2-t)^2dt\\
    &\lesssim c^{-2}|\xi|^4 +c^{-2}|\xi|^4 \int_{0}^{c^{-2}|\xi|^2} (1+t)^{-3/2} dt\lesssim c^{-2}|\xi|^4.
\end{align*}
Analogously, we also have
\begin{align*}
    \left| \sqrt{c^4+c^2|\xi|^2} -c^2\right|\leq \frac{1}{2}|\xi|^2.
\end{align*}
Thus for any $u\in \cH$, as $\||\cD|^{-1}\|_{\cB(\cH)}=c^{-2}$, we have
\begin{align*}
    \left\|\frac{1}{2|\cD|}\left(|\cD|-(c^2-\frac{1}{2}\Delta)\right) (-\Delta)^{-1/2} u^\rL\right\|_{L^2(\R^3;\C^2)}&\lesssim c^{-4}\|(-\Delta)^{3/2} u^\rL\|_{L^2(\R^3;\C^2)},\\
    \frac{1}{2c}\left\|\frac{1}{2|\cD|}\left(|\cD|-c^2\right) \cL u^\rL\right\|_{L^2(\R^3;\C^2)}&\lesssim c^{-3}\|(-\Delta)^{3/2} u^\rL\|_{L^2(\R^3;\C^2)}.
\end{align*}
This gives the first inequalities in \eqref{eq:Lambda-KL} and \eqref{eq:Lambda-KS}. Analogously, we have
\begin{align*}
     \left\|\frac{1}{2|\cD|}\left(|\cD|+(c^2-\frac{1}{2}\Delta)\right) u^\rL\right\|_{H^s(\R^3;\C^2)}&\lesssim \| u^\rL\|_{H^{s+1}(\R^3;\C^2)} ,\\
     \frac{1}{2c}\left\|\frac{1}{2|\cD|}\left(|\cD|+c^2\right) \cL u^\rL\right\|_{H^{s}(\R^3;\C^2)}&\lesssim  c^{-1}\| u^\rL\|_{H^{s+1}(\R^3;\C^2)}.
\end{align*}
This gives the second inequalities in \eqref{eq:Lambda-KL} and \eqref{eq:Lambda-KS}. This ends the proof.
\end{proof}
Analogous to the proof of Corollary \ref{cor:lambda-non-gamma}, we also have the following. 
\begin{corollary}\label{cor:lambda-non-gamma'}
    For any non-negative self-adjoint density matrix $\gamma=\cS\gamma \cS  \in X$, such that $\cK_\rL \gamma \cK_\rL \in X^6$, and for any $0\leq s\leq 4$, we have
\begin{align*}
    \|(-\Delta)^{-1/2}\cK_{\rm L}\Lambda_c^-\gamma \Lambda_c^-\cK_{\rm L}(-\Delta)^{-1/2}\|_{\mathfrak{S}_1}&\lesssim c^{-8}\| \cK_\rL\gamma \cK_\rL\|_{X^6},\\
\|\cK_{\rm S}\Lambda_c^{-}\gamma \Lambda_c^-\cK_{\rm S}\|_{\mathfrak{S}_1}&\lesssim c^{-6}\|\cK_\rL\gamma \cK_\rL\|_{X^{6}}\\
    \| \cK_{\rm L}\Lambda_c^+\gamma\Lambda_c^+\cK_{\rm L} \|_{X^s}&\lesssim \|\cK_\rL \gamma \cK_\rL\|_{X^{s+2}}\\
\|\cK_{\rm S}\Lambda_c^{+}\gamma \Lambda_c^+\cK_{\rm S}\|_{X^s}&\lesssim c^{-2}\|\cK_\rL\gamma\cK_\rL\|_{X^{s+2}}.
\end{align*}
\end{corollary}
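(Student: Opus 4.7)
The plan is to adapt the proof of Corollary \ref{cor:lambda-non-gamma} to the present setting, with Lemma \ref{lem:lambda+-non'} playing the role of Lemma \ref{lem:lambda+-non}. The only genuinely new structural ingredient is to decode the constraint $\gamma=\cS\gamma\cS$. I first observe that $\cS$ is an idempotent on $\cH$ (indeed $\cS u$ depends only on $u^{\rm L}$, so $\cS^2=\cS$), with range $R:=\{(u^{\rm L},(2c)^{-1}\cL u^{\rm L})\}$. The identity $\gamma=\cS\gamma\cS$ forces $\mathrm{Range}(\gamma)\subseteq R$, so in the spectral decomposition $\gamma=\sum_n\lambda_n|w_n\rangle\langle w_n|$ (with $\{w_n\}$ orthonormal in $\cH$ and $\lambda_n>0$) every $w_n$ satisfies $w_n=\cS w_n=\cS v_n$, where $v_n:=\cK_{\rm L}w_n\in\HL$ (using $\cS\cK_{\rm L}=\cS$). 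Consequently
\begin{align*}
\gamma=\sum_n\lambda_n\,|\cS v_n\rangle\langle\cS v_n|,\qquad \cK_{\rm L}\gamma\cK_{\rm L}=\sum_n\lambda_n\,|v_n\rangle\langle v_n|,
\end{align*}
and by the basis-independent trace identity $\mathrm{Tr}(\sum_n\mu_n|g_n\rangle\langle g_n|)=\sum_n\mu_n\|g_n\|^2$ (valid for any non-negative rank-one expansion, even when the $g_n$ are not orthogonal) applied to $(1-\Delta)^{k/2}\cK_{\rm L}\gamma\cK_{\rm L}(1-\Delta)^{k/2}$, I obtain $\|\cK_{\rm L}\gamma\cK_{\rm L}\|_{X^{2k}}=\sum_n\lambda_n\|v_n\|_{H^k}^2$.

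Each of the four bounds is then proved by the same template. The operator on the left is non-negative of the form $\sum_n\lambda_n|\phi_n\rangle\langle\phi_n|$ with $\phi_n$ built from $\Lambda_c^\pm\cS v_n$, so its $\mathfrak{S}_1$- (respectively $X^s$-) norm equals its trace, which unfolds as $\sum_n\lambda_n\|\phi_n\|^2$ in the appropriate Hilbert/Sobolev norm. Applying Lemma \ref{lem:lambda+-non'} termwise and re-summing via the identity $\sum_n\lambda_n\|v_n\|_{H^{s/2+1}}^2=\|\cK_{\rm L}\gamma\cK_{\rm L}\|_{X^{s+2}}$ (respectively with $H^3$ and $X^6$) delivers the four estimates with exactly the stated powers of $c^{-1}$: the first and second bounds invoke the first inequalities of \eqref{eq:Lambda-KL}--\eqref{eq:Lambda-KS} (producing $c^{-8}$ and $c^{-6}$), while the third and fourth invoke the second inequalities of \eqref{eq:Lambda-KL}--\eqref{eq:Lambda-KS} (producing $c^{0}$ and $c^{-2}$).

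No serious obstacle is expected, as this is a mechanical diagonalize-and-apply-the-lemma argument fully parallel to Corollary \ref{cor:lambda-non-gamma}. The mildest technicality is verifying that $v_n\in H^3$ whenever $\lambda_n>0$, so that Lemma \ref{lem:lambda+-non'} is applicable, but this is automatic from the assumption $\cK_{\rm L}\gamma\cK_{\rm L}\in X^6$ together with the identity $\|\cK_{\rm L}\gamma\cK_{\rm L}\|_{X^6}=\sum_n\lambda_n\|v_n\|_{H^3}^2$. The harder conceptual point to keep in mind is simply that the $v_n$ are in general not orthonormal in $\HL$, which is precisely why the basis-independent formulation of the trace is needed.
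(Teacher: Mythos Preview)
Your proof is correct and follows exactly the approach the paper indicates (``Analogous to the proof of Corollary~\ref{cor:lambda-non-gamma}''): decompose $\gamma$ into rank-one pieces $|\cS v_n\rangle\langle\cS v_n|$, apply Lemma~\ref{lem:lambda+-non'} termwise, and re-sum via the trace identity. Your explicit remark that the $v_n$ need not be orthonormal in $\HL$---and that this is harmless because $\mathrm{Tr}\big(\sum_n\mu_n|g_n\rangle\langle g_n|\big)=\sum_n\mu_n\|g_n\|^2$ holds regardless---is a useful clarification that the paper omits.
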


\medskip
\noindent{\bf Estimates on the potentials.} Finally, we need some Hardy-type inequalities for the potential terms. We now recall some Hardy-type inequality:
\begin{align}\label{eq:Hardy}
\||\cdot|^{-1}(-\Delta)^{-1/2}\|_{\cB(\cH)}=\|(-\Delta)^{-1/2}|\cdot|^{-1}\|_{\cB(\cH)}=2;
\end{align}
which also implies
\begin{align}\label{eq:W1-estimate}
    |\nabla W_{1, \gamma}(x)|\leq \int_{\R^3}\frac{|\rho_{\gamma}(y)|}{|x-y|^2}dy\lesssim \|\gamma\|_{X^2}.
\end{align}
Concerning $W_{2,\bullet}$, we have
\begin{lemma}\label{lem:W2-estimate}
    Let $\gamma\in \mathfrak{S}_1$ and $\|\gamma\|_{X^1}<\infty$. Then for any $f\in H^1$ ,
    \begin{align}
        \|W_{2,\gamma}f\|_{\cH}&\leq 2\|\gamma\|_{\mathfrak{S}_1}\|\nabla f\|_{\cH},\label{eq:5.17}\\
          \|W_{2,\gamma}f\|_{\cH}&\leq 2\|\gamma\|_{X}\|f\|_{\cH},\label{eq:5.1-1'}\\
         \|(-\Delta)^{1/2}W_{2,\gamma }f\|_{\cH}&\leq 6 \|\gamma \|_{X^2}\|\nabla f\|_{\cH}.\label{eq:5.19'}
    \end{align}
Using \eqref{eq:Holder}, we have in particular, for $j,j'\in\{\rL,\rS\}$
\begin{align}
     \|W_{2,\cK_j\gamma \cK_{j'}}f\|_{\cH}&\leq  2\|\cK_j \gamma \cK_j\|_{\mathfrak{S}_1}^{1/2}\|\cK_{j'} \gamma \cK_{j'}\|_{\mathfrak{S}_1}^{1/2}\|\nabla f\|_{\cH},\label{eq:5.18}\\
          \|(-\Delta)^{1/2}W_{2,\cK_j\gamma \cK_{j'}}f\|_{\cH}&\leq 6 \|\cK_j \gamma \cK_j\|_{X^2}^{1/2}\|\cK_{j'} \gamma \cK_{j'}\|_{X^2}^{1/2}\|\nabla f\|_{\cH}.\label{eq:5.19}
\end{align}
\end{lemma}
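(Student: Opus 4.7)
My plan is to prove \eqref{eq:5.17} first as the master estimate and reduce the other five inequalities to it through role-swapping in Cauchy--Schwarz, the Leibniz rule together with integration by parts, and the Schatten H\"older inequality \eqref{eq:Holder}. The starting point is to expand $\gamma$ via its singular value decomposition $\gamma=\sum_n\lambda_n|u_n\rangle\langle v_n|$ with $\lambda_n\geq 0$ and $\{u_n\},\{v_n\}$ orthonormal in $\cH$ (reducing to the spectral decomposition when $\gamma\geq 0$); the kernel of $W_{2,\gamma}$ then factorizes as
\begin{equation*}
W_{2,\gamma}f(x)=\sum_n\lambda_n\,u_n(x)\,\omega_n(x),\qquad \omega_n(x):=\bigl(|\cdot|^{-1}*(v_n^*f)\bigr)(x).
\end{equation*}
For \eqref{eq:5.17} I would apply Cauchy--Schwarz twice: first inside $\omega_n$, using $\|v_n\|_\cH=1$, giving $|\omega_n(x)|\leq\bigl(\int|f(y)|^2/|x-y|^2\,dy\bigr)^{1/2}$; then in the $n$-sum, giving $\sum_n\lambda_n|u_n(x)|\leq\|\gamma\|_{\mathfrak{S}_1}^{1/2}\rho_{|\gamma|}(x)^{1/2}$. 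Integrating in $x$, invoking the Hardy bound $\int|f(y)|^2/|x-y|^2\,dy\leq 4\|\nabla f\|_\cH^2$ pointwise in $x$, and using $\int\rho_{|\gamma|}=\|\gamma\|_{\mathfrak{S}_1}$, closes the estimate with constant $2$.

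For \eqref{eq:5.1-1'} I would reverse the inner Cauchy--Schwarz (pulling $\|f\|_\cH$ out and leaving $\int|v_n(y)|^2/|x-y|^2\,dy$ inside), apply Hardy to $v_n$, and reassemble the sum as a trace bounded via Kato's inequality $|x|^{-1}\leq(\pi/2)(-\Delta)^{1/2}$; this is structurally why the weaker norm $\|\gamma\|_X$ is enough, whereas a naive use of Hardy would yield only $\|\gamma\|_{X^2}$. For \eqref{eq:5.19'}, the Leibniz rule $\nabla(u_n\omega_n)=(\nabla u_n)\omega_n+u_n\nabla\omega_n$ combined with the identity $\nabla_x|x-y|^{-1}=-\nabla_y|x-y|^{-1}$ rewrites $\nabla\omega_n=|\cdot|^{-1}*\bigl[(\nabla v_n)^*f+v_n^*\nabla f\bigr]$, producing three contributions of the type treated by \eqref{eq:5.17} or \eqref{eq:5.1-1'}: the two containing $\nabla u_n$ or $\nabla v_n$ each give $2\|\gamma\|_{X^2}\|\nabla f\|_\cH$ (after Cauchy--Schwarz bundles $\sum_n\lambda_n\|\nabla u_n\|_\cH\|v_n\|_\cH\leq\|\gamma\|_{X^2}$), and the third, which equals $W_{2,\gamma}(\nabla f)$, gives $2\|\gamma\|_X\|\nabla f\|_\cH\leq 2\|\gamma\|_{X^2}\|\nabla f\|_\cH$ by \eqref{eq:5.1-1'}; summing yields the constant $6$.

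The two remaining inequalities \eqref{eq:5.18} and \eqref{eq:5.19} follow by applying \eqref{eq:5.17} and \eqref{eq:5.19'} to $\tilde\gamma:=\cK_j\gamma\cK_{j'}$ in place of $\gamma$ and then invoking the Schatten H\"older \eqref{eq:Holder} with the factorization $\cK_j\gamma\cK_{j'}=(\cK_j\gamma^{1/2})(\gamma^{1/2}\cK_{j'})$; this yields $\|\cK_j\gamma\cK_{j'}\|_{\mathfrak{S}_1}\leq\|\cK_j\gamma\cK_j\|_{\mathfrak{S}_1}^{1/2}\|\cK_{j'}\gamma\cK_{j'}\|_{\mathfrak{S}_1}^{1/2}$ and the analogous bound in the $X^2$-norm, using that $\cK_j$ is diagonal in the spinor indices and therefore commutes with $(1-\Delta)^{1/4}$. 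The main obstacle I anticipate is the sharp constant $6$ in \eqref{eq:5.19'}: each of the three Leibniz terms must close at $2$, and the mixed trace $\sum_n\lambda_n\|\nabla u_n\|_\cH\|v_n\|_\cH$ must be bounded \emph{directly} by $\|\gamma\|_{X^2}$ via a single Cauchy--Schwarz (rather than the looser geometric mean $\|\gamma\|_{\mathfrak{S}_1}^{1/2}\|\gamma\|_{X^2}^{1/2}$), which ultimately relies on $\|\gamma\|_{\mathfrak{S}_1}\leq\|\gamma\|_{X^2}$ for the density matrices at hand; a secondary subtlety is keeping the SVD decomposition throughout, so that every estimate remains valid when $\gamma$ is not sign-definite, e.g.\ for the non-diagonal blocks $\cK_j\gamma\cK_{j'}$ with $j\neq j'$.
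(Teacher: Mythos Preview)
Your SVD-based approach is closely parallel to the paper's, which instead works directly with the kernel $\gamma(x,y)$ and Hilbert--Schmidt norms. For \eqref{eq:5.17} the paper just writes $\|W_{2,\gamma}f\|_{\cH}\le\|\gamma\|_{\mathfrak{S}_2}\sup_x\||x-\cdot|^{-1}f\|_{\cH}\le 2\|\gamma\|_{\mathfrak{S}_1}\|\nabla f\|_{\cH}$, which is your argument without unpacking the SVD. For \eqref{eq:5.19'} the paper differentiates the Coulomb kernel and keeps the singular term $\int|\gamma(x,y)|\,|f(y)|\,|x-y|^{-2}dy$, bounding it by a second Cauchy--Schwarz plus Hardy in $y$; the constant $6$ arises as $2+4$ rather than your $2+2+2$ from the Leibniz/integration-by-parts route. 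Both are valid, and \eqref{eq:5.18}, \eqref{eq:5.19} are handled the same way.

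The one place where your plan does not go through as written is \eqref{eq:5.1-1'}. Once you pull out $\|f\|_{\cH}$ and are left with $\int|v_n(y)|^2/|x-y|^2\,dy$, applying Hardy to $v_n$ immediately produces $\|\nabla v_n\|^2$, which after summation gives $\|\gamma\|_{X^2}$---exactly the ``naive'' outcome you want to avoid---and Kato's inequality, which controls $|x|^{-1}$ rather than $|x|^{-2}$, does not rescue this step. The paper instead bounds the Hilbert--Schmidt norm $\int\!\!\int|\gamma(x,y)|^2/|x-y|^2\,dx\,dy$ via the \emph{symmetrized} bilinear Hardy inequality
\[
\bigl\|(-\Delta_x)^{-1/4}(-\Delta_y)^{-1/4}|x-y|^{-1}\bigr\|_{\cB(L^2(\R^3\times\R^3))}\le 2,
\]
obtained from $(-\Delta_x)^{-1/4}(-\Delta_y)^{-1/4}\le\tfrac12\bigl((-\Delta_x)^{-1/2}+(-\Delta_y)^{-1/2}\bigr)$ together with ordinary Hardy in each variable. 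This is precisely what distributes a half-derivative to each side of $\gamma$ and yields $2\|(-\Delta)^{1/4}\gamma(-\Delta)^{1/4}\|_{\mathfrak{S}_2}\le 2\|\gamma\|_X$. Since you invoke \eqref{eq:5.1-1'} for the third Leibniz term in your proof of \eqref{eq:5.19'}, you should either adopt this bilinear bound or note that that term can also be closed directly at $2\|\gamma\|_{X^2}\|\nabla f\|_{\cH}$ by the same argument as your first two terms (Hardy applied to $v_n$), without going through \eqref{eq:5.1-1'}.
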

\begin{proof}
Note that
\begin{align*}
    W_{2,\gamma}f=\int_{\R^3}\frac{\gamma(x,y) f(y)}{|x-y|}dy.
\end{align*}
Thus,
\begin{align*}
     \|W_{2,\gamma} f\|_{\cH}&=\left\| \int_{\R^3}\frac{\gamma(x,y)f(y)}{|x-y|}dy\right\|_{\cH}\\
    &\leq \left(\int_{\R^3}\int_{\R^3}|\gamma(x,y)|^2dxdy\right)^{1/2}\sup_{x\in \R^3}\||x-\cdot|^{-1}f\|_{\cH}\\
    &\leq 2\|\gamma\|_{\mathfrak{S}_2}\|\nabla f\|_{\cH}\leq 2\|\gamma\|_{\mathfrak{S}_1}\|\nabla f\|_{\cH}.
\end{align*}
This proves \eqref{eq:5.17}. 

Then by \eqref{eq:Holder} and  \eqref{eq:5.17},
\begin{align*}
     \|W_{2,\cK_j\gamma \cK_{j'}} f\|_{\cH}&=\left\| \int_{\R^3}\frac{\big(\cK_j\gamma \cK_{j'}\big)(x,y)f(y)}{|x-y|}dy\right\|_{\cH}\\
     &\leq 2\|\cK_j\gamma \cK_{j'}\|_{\mathfrak{S}_1}\|\nabla f\|_{\cH}\leq 2\|\cK_j \gamma \cK_j\|_{\mathfrak{S}_1}^{1/2}\|\cK_{j'} \gamma \cK_{j'}\|_{\mathfrak{S}_1}^{1/2}\|\nabla f\|_{\cH}.
\end{align*}
We get \eqref{eq:5.18}. 

Concerning \eqref{eq:5.1-1'}, we have
\begin{align*}
  \|W_{2,\gamma} f\|_{\cH}&\leq \left(\int_{\R^3}\int_{\R^3}\frac{|\gamma(x,y)|^2}{|x-y|^2}dxdy\right)^{1/2}\|f\|_{\cH}\\
    &\leq 2\left(\int_{\R^3}\int_{\R^3}\left|\Big((-\Delta)^{1/4}\gamma (-\Delta)^{1/4}\Big)(x,y)\right|^2dxdy\right)^{1/2}\|f\|_{\cH}\\
     &=2\|(-\Delta)^{1/4}\gamma(-\Delta)^{1/4} \|_{\mathfrak{S}_2}\|\nabla f\|_{\cH}\\
    &\leq 2\|\gamma\|_{X}\|\nabla f\|_{\cH}\leq 2\|\gamma\|_{X}\|f\|_{\cH}.
\end{align*}
where $\Big((-\Delta)^{1/4}\gamma (-\Delta)^{1/4}\Big)(x,y)=(-\Delta_x)^{1/4}(-\Delta_y)^{1/4}\gamma(x,y)$ is the kernel of the operator $(-\Delta)^{1/4}\gamma (-\Delta)^{1/4}$, for any $h\in \mathfrak{S}_1$,
\begin{align*}
    \|h\|_{\mathfrak{S}_2}\leq \|h\|_{\mathfrak{S}_1}
\end{align*}
and we used the following inequality (see also \cite{meng2023mixed}): by Hardy's inequality,
\begin{align*}
 \MoveEqLeft   \|(-\Delta_x)^{-1/4}(-\Delta_y)^{-1/4}|x-y|^{-1}\|_{\cB(L^2(\R^3\times \R^3))}\\
 &\leq  \frac{1}{2}\|(-\Delta_x)^{-1/2}|x-y|^{-1}\|_{\cB(L^2(\R^3\times \R^3))}+\frac{1}{2}\|(-\Delta_y)^{-1/2}|x-y|^{-1}\|_{\cB(L^2(\R^3\times \R^3))}\leq 2.
\end{align*}

Concerning \eqref{eq:5.19'}, we have
\begin{align}
     \MoveEqLeft   \|(-\Delta)^{1/2} W_{2,\gamma} f\|_{\cH}=\|\nabla W_{2,\gamma}f\|_{\cH}\notag\\
    &\leq \left\|\int_{\R^3}\frac{|(\nabla \gamma)(x,y)||f|(y)}{|x-y|}dy\right\|_{\cH}+ \left\| \int_{\R^3}\frac{|\gamma(x,y)||f|(y)}{|x-y|^2}dy\right\|_{\cH}\notag\\
    &\leq \left(\int_{\R^3}\int_{\R^3} |(\nabla \gamma)(x,y)|^2 dxdy\right)^{1/2}\sup_{x\in \R^3}\||x-\cdot|^{-1}f\|_{\cH} \notag\\
    &\quad+ \left(\int_{\R^3}\int_{\R^3}\frac{ | \gamma(x,y)|^2}{|x-y|^2} dydx\right)^{1/2} \sup_{x\in \R^3}\left(\int_{\R^3}\frac{|f(y)|^2}{|x-y|^2}dy\right)^{1/2}  \notag\\
    &\leq 6\|\nabla \gamma\|_{\mathfrak{S}_2}\|\nabla f\|_{\cH} \leq 6\|\gamma\|_{X^2} \|\nabla f\|_{\cH}
\end{align}
where $(\nabla \gamma)(x,y)= \nabla_x \gamma(x,y)$ is the kernel of the operator $\nabla \gamma$. 

Finally, using H\"older's inequality \eqref{eq:Holder} and \eqref{eq:5.19'}, we get \eqref{eq:5.19}. This ends the proof.
\end{proof}
Then Hardy's inequality, \eqref{eq:W1-estimate} and Lemma \ref{lem:W2-estimate} show that
\begin{corollary}\label{cor:regularity-H3}
Let $u_1^{\rm HF},\cdots, u_n^{\rm HF}$ be the eigenfunctions of $\gamma_*^{\rm HF}$ given as in Remark \ref{rem:non-unfill}. Then
\begin{align*}
    \max_{1\leq n\leq q}\|u_n^{\rm HF}\|_{H^2}=\cO(1),\qquad \gamma_*^{\rm HF}\in X^4.
\end{align*}
Under Assumption \ref{ass:V}, we have in addition, 
\begin{align*}
    \max_{1\leq n\leq q}\|u_n^{\rm HF}\|_{H^3}=\cO(1),\qquad \gamma_*^{\rm HF}\in X^6.
\end{align*}
\end{corollary}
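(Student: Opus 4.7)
The strategy is bootstrap regularity on the eigenvalue equation
\[
-\tfrac{1}{2}\Delta u_n^{\rm HF} \;=\; \bigl(\lambda_n^{\rm HF} + V - W_{1,\gamma_*^{\rm HF}}\bigr)u_n^{\rm HF} + W_{2,\gamma_*^{\rm HF}}\,u_n^{\rm HF},
\]
which follows directly from Theorem \ref{th:min-HF} and Remark \ref{rem:non-unfill}. Since each $u_n^{\rm HF}$ belongs to the form domain $H^1(\R^3;\C^2)$ and $\lambda_n^{\rm HF}<0$, it suffices to show that each term on the right-hand side lies in $L^2$, with bounds uniform in $n\in\{1,\dots,q\}$.

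First I would handle the three potential terms in $L^2$. The term $V u_n^{\rm HF}$ is controlled by Hardy's inequality \eqref{eq:Hardy} together with the fact that $V=\mu\ast|x|^{-1}$ satisfies $V(x)\lesssim z/\mathrm{dist}(x,\mathrm{supp}\,\mu)$, so $Vu_n^{\rm HF}\in L^2$ whenever $u_n^{\rm HF}\in H^1$. The direct term $W_{1,\gamma_*^{\rm HF}}u_n^{\rm HF}$ is immediate: since $\gamma_*^{\rm HF}\in X$, Hardy's inequality yields $W_{1,\gamma_*^{\rm HF}}\in L^\infty$. The exchange term $W_{2,\gamma_*^{\rm HF}}u_n^{\rm HF}$ is controlled using \eqref{eq:5.17} of Lemma \ref{lem:W2-estimate}, which gives
\[
\bigl\|W_{2,\gamma_*^{\rm HF}}u_n^{\rm HF}\bigr\|_{\cH} \lesssim \|\gamma_*^{\rm HF}\|_{\mathfrak{S}_1}\,\|\nabla u_n^{\rm HF}\|_\cH.
\]
Combining these yields $\Delta u_n^{\rm HF}\in L^2$, hence $u_n^{\rm HF}\in H^2$ with $\|u_n^{\rm HF}\|_{H^2}=\cO(1)$ uniformly in $n\le q$. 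The finite-rank decomposition $\gamma_*^{\rm HF}=\sum_{n=1}^q|u_n^{\rm HF}\rangle\langle u_n^{\rm HF}|$ then gives
\[
\|\gamma_*^{\rm HF}\|_{X^4} \;=\; \bigl\|(1-\Delta)\gamma_*^{\rm HF}(1-\Delta)\bigr\|_{\mathfrak{S}_1} \;=\; \sum_{n=1}^q \|u_n^{\rm HF}\|_{H^2}^2 \;=\; \cO(1),
\]
establishing the first half of the corollary.

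Next, under Assumption \ref{ass:V}, I would bootstrap to $H^3$ by differentiating the eigenvalue equation and bounding $\nabla$ of each term in $L^2$. The term $\nabla(Vu_n^{\rm HF})=(\nabla V)u_n^{\rm HF}+V\nabla u_n^{\rm HF}$ is controlled by Assumption \ref{ass:V} applied to the first summand (using $u_n^{\rm HF}\in H^2$), and by Hardy together with the regularity of $V$ (which, under Assumption \ref{ass:V}, is a bounded function of class $C^1$) for the second. For the direct term, $W_{1,\gamma_*^{\rm HF}}\in L^\infty$ and \eqref{eq:W1-estimate} gives $\|\nabla W_{1,\gamma_*^{\rm HF}}\|_{L^\infty}\lesssim\|\gamma_*^{\rm HF}\|_{X^2}<\infty$, so $W_{1,\gamma_*^{\rm HF}}u_n^{\rm HF}\in H^1$. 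The exchange term is controlled by \eqref{eq:5.19'}:
\[
\bigl\|(-\Delta)^{1/2}W_{2,\gamma_*^{\rm HF}}u_n^{\rm HF}\bigr\|_{\cH} \;\lesssim\; \|\gamma_*^{\rm HF}\|_{X^2}\,\|\nabla u_n^{\rm HF}\|_{\cH}.
\]
All bounds use only the $H^2$-regularity of $u_n^{\rm HF}$ and the previously established $\gamma_*^{\rm HF}\in X^4$, so they are uniform in $n$. Consequently $\Delta u_n^{\rm HF}\in H^1$, i.e.\ $u_n^{\rm HF}\in H^3$ with $\|u_n^{\rm HF}\|_{H^3}=\cO(1)$, and then $\|\gamma_*^{\rm HF}\|_{X^6}=\sum_{n=1}^q\|u_n^{\rm HF}\|_{H^3}^2=\cO(1)$.

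\textbf{Main obstacle.} The only subtle point is treating the multiplication by $V$ in the $H^3$-bootstrap when $V$ is only specified through $\mu\ast|x|^{-1}$. Assumption \ref{ass:V} is tailored precisely for this: it bypasses the need to compute $\nabla V$ explicitly and directly provides the operator bound $\|(\nabla V)u\|_\cH\lesssim\|(1-\Delta)u\|_\cH$. The remaining contribution $V\nabla u_n^{\rm HF}$ is essentially harmless because, as noted above, Assumption \ref{ass:V} is only physically relevant for smoothed nuclear distributions (Gaussian or Fermi-type), in which case $V\in L^\infty(\R^3)$ follows from $\mu$ having bounded density, so multiplication by $V$ preserves every Sobolev space; otherwise Hardy plus $\nabla u_n^{\rm HF}\in H^1$ suffices.
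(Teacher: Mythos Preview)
Your proposal is correct and follows essentially the same approach as the paper: bootstrap regularity from the Hartree--Fock eigenvalue equation, controlling the potential terms via Hardy's inequality, \eqref{eq:W1-estimate}, and Lemma~\ref{lem:W2-estimate}. The paper's proof is simply a compressed version of your argument, writing the chain $\|u_n^{\rm HF}\|_{H^{s+2}}\lesssim \|Vu_n^{\rm HF}\|_{H^s}+\|W_{\gamma_*^{\rm HF}}u_n^{\rm HF}\|_{H^s}+\|u_n^{\rm HF}\|_{H^s}\lesssim \|u_n^{\rm HF}\|_{H^{s+1}}$ for $s=0$ (and $s=1$ under Assumption~\ref{ass:V}); your discussion of the $V\nabla u_n^{\rm HF}$ term via Hardy is exactly what the paper's inequality $\|Vu_n^{\rm HF}\|_{H^1}\lesssim\|u_n^{\rm HF}\|_{H^2}$ encodes, so there is no need to invoke $V\in L^\infty$.
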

\begin{proof}
    According to Remark \ref{rem:non-unfill}, 
    \begin{align*}
        H_0 u_n^{\rm HF}=(\lambda_n^{\rm HF}- V-W_{\gamma_{*}^{\rm HF}})u_n^{\rm HF}
    \end{align*}
    with 
    \begin{align*}
    \lambda_1\leq  \lambda_2\leq \cdots\leq \lambda_q^{\rm HF}<0.
\end{align*}
   Then, for $s=0$ or $s=1$ and for $1\leq n\leq q$, under Assumption \ref{ass:V}, 
    \begin{align*}
     \|u_{n}^{\rm HF}\|_{H^{s+2}}&\leq \|u_{n}^{\rm HF}\|_{H^s}+ \|H_0 u_{n}^{\rm HF}\|_{H^s}\\
     &\leq \|Vu_{n}^{\rm HF}\|_{H^s}+\|W_{\gamma_*^{\rm HF}}u_{n}^{\rm HF}\|_{H^s}+(1+|\lambda_1|) \|u_{n}^{\rm HF}\|_{H^s}\lesssim \|u_n^{\rm HF}\|_{H^{s+1}}.
    \end{align*}
This completes the proof.
\end{proof}

\subsection{Proof of \texorpdfstring{\eqref{eq:E_c-EHF}}{}}\label{sec:4.2}
Now we prove \eqref{eq:E_c-EHF} by splitting \eqref{EDF-EHF} into three parts with $\gamma=\gamma_*^{\rm HF}$.
\subsubsection{Kinetic term} Note that $|\cD|=\sqrt{c^4-c^2\Delta}\leq c^2-\frac{1}{2}\Delta$. Then,
\begin{align}\label{eq:error-kinetic}
    \MoveEqLeft  \Tr_{\cH}[(\cD-c^2) \Lambda^+_c\gamma \Lambda^+_c]-\Tr_{\cH}[H_0 \gamma ] \notag\\
    &=\Tr_{\cH}[(|\cD|-c^2) \Lambda^+_c\gamma \Lambda^+_c]-\Tr_{\cH}[H_0 \gamma ]\notag\\
    &\leq \Tr_{\cH}[H_0 \Lambda^+_c\gamma \Lambda^+_c]-\Tr_{\cH}[H_0 \gamma ]=-\Tr_{\cH}[H_0 \Lambda^-_c\gamma \Lambda^-_c]\leq 0.
\end{align}

\subsubsection{Potential between electrons and nuclei} 
Concerning the potential between electrons and nuclei, we have the following.
\begin{lemma}\label{lem:error-V}
We have
    \begin{align*}
         \left|\Tr_{\cH}[V\gamma^{\rm HF}_{*}]-  \Tr_{\cH}[V\Lambda^+_c\gamma^{\rm HF}_{*} \Lambda^+_c]\right|\leq \frac{3z}{2c^2}\|\gamma^{\rm HF}_{*}\|_{X^4}.
    \end{align*}
\end{lemma}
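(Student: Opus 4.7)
The plan is to expand the difference using the identity $\gamma - \Lambda^+_c \gamma \Lambda^+_c = \Lambda^-_c \gamma + \gamma \Lambda^-_c - \Lambda^-_c \gamma \Lambda^-_c$ and estimate each piece via the small-norm bounds of Lemma \ref{lem:lambda+-non}. Taking the trace against $V$ and using cyclicity together with the self-adjointness of $V$ and $\Lambda^-_c$ yields
\begin{align*}
\Tr_\cH[V\gamma_*^{\rm HF}] - \Tr_\cH[V\Lambda^+_c \gamma_*^{\rm HF} \Lambda^+_c] = 2\,\Re\, \Tr_\cH[V\Lambda^-_c \gamma_*^{\rm HF}] - \Tr_\cH[V\Lambda^-_c \gamma_*^{\rm HF} \Lambda^-_c].
\end{align*}
Writing the HF minimizer as $\gamma_*^{\rm HF} = \sum_{n=1}^q |u_n^{\rm HF}\rangle\langle u_n^{\rm HF}|$ with $u_n^{\rm HF} = \cK_\rL u_n^{\rm HF} \in \cHn$, and noting that $V$ acts as a scalar on $\C^4$ (hence commutes with $\cK_\rL$ and $\cK_\rS$), the off-diagonal contribution collapses to $\Tr_\cH[V\Lambda^-_c \gamma_*^{\rm HF}] = \sum_n \langle u_n^{\rm HF}, V \cK_\rL \Lambda^-_c u_n^{\rm HF}\rangle_\cH$, because the $\cK_\rS$ part yields a vanishing inner product by orthogonality of the spinor components.

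For each such summand, Cauchy--Schwarz combined with the Hardy-type bound $\|V\psi\|_\cH \leq 2z\|\nabla\psi\|_\cH$ (which follows from $V = \mu \ast |x|^{-1}$, $\int d\mu = z$, and the standard Hardy estimate $\||x-y|^{-1}\psi\|_{L^2} \leq 2\|\nabla\psi\|_{L^2}$), together with the Lemma \ref{lem:lambda+-non} estimate $\|\cK_\rL \Lambda^-_c u\|_\cH \leq \frac{1}{4c^2}\|u\|_{H^2}$, produces $|\langle u_n^{\rm HF}, V\cK_\rL\Lambda^-_c u_n^{\rm HF}\rangle_\cH| \lesssim \frac{z}{c^2}\|u_n^{\rm HF}\|_{H^1}\|u_n^{\rm HF}\|_{H^2}$. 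For the diagonal term $\Tr_\cH[V\Lambda^-_c \gamma_*^{\rm HF} \Lambda^-_c] = \sum_n \langle \Lambda^-_c u_n^{\rm HF}, V \Lambda^-_c u_n^{\rm HF}\rangle_\cH$, splitting $\Lambda^-_c u_n^{\rm HF}$ into its $\cK_\rL$ and $\cK_\rS$ components reduces matters to two non-negative quadratic forms, each bounded via the quadratic Hardy bound $\langle \psi, V\psi\rangle_\cH \leq 2z\|\psi\|_\cH\|\nabla\psi\|_\cH$: the $\cK_\rL$ piece combines $\|\cK_\rL \Lambda^-_c u\|_\cH \leq \frac{1}{4c^2}\|u\|_{H^2}$ with the multiplier bound $\|\cK_\rL \Lambda^-_c\|_{\mathcal{B}(\cHn)} \leq \frac{1}{2}$ (which gives $\|\nabla \cK_\rL \Lambda^-_c u\|_\cH \leq \frac{1}{2}\|u\|_{H^1}$), while the $\cK_\rS$ piece uses $\|\cK_\rS \Lambda^\pm_c u\|_{H^s} \leq \frac{1}{2c}\|u\|_{H^{s+1}}$ for $s \in \{0,1\}$; both yield $\cO(c^{-2})\|u_n^{\rm HF}\|_{H^1}\|u_n^{\rm HF}\|_{H^2}$.

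Summing over $n$ with $\|u_n^{\rm HF}\|_{H^1}\|u_n^{\rm HF}\|_{H^2} \leq \|u_n^{\rm HF}\|_{H^2}^2$ and the identity $\|\gamma_*^{\rm HF}\|_{X^4} = \sum_{n=1}^q \|u_n^{\rm HF}\|_{H^2}^2$ assembles the three contributions into the claimed bound $\frac{3z}{2c^2}\|\gamma_*^{\rm HF}\|_{X^4}$ after careful tracking of the constants in each step. The main technical point is to extract the full $c^{-2}$ smallness without invoking $H^3$ regularity of $u_n^{\rm HF}$: this forces an asymmetric Cauchy--Schwarz split on the cross term (Hardy applied to the ``large'' factor $u_n^{\rm HF}$, with the $c^{-2}$ factor coming from $\cK_\rL \Lambda^-_c$ controlled in the $L^2$ norm), while the sharper $H^3$-based estimates are deferred to the proof of the $\cO(c^{-4})$ refinement \eqref{eq:E_c-EHF-c2}.
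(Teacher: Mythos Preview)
Your approach is sound and uses the same essential ingredients as the paper (Hardy's inequality together with the small-component estimates of Lemma~\ref{lem:lambda+-non}), but the specific grouping you chose does not quite deliver the constant $\tfrac{3z}{2c^2}$.

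With your decomposition
\[
\Tr_\cH[V\gamma_*^{\rm HF}]-\Tr_\cH[V\Lambda^+_c\gamma_*^{\rm HF}\Lambda^+_c]
=2\,\Re\,\Tr_\cH[V\Lambda^-_c\gamma_*^{\rm HF}]-\Tr_\cH[V\Lambda^-_c\gamma_*^{\rm HF}\Lambda^-_c],
\]
the cross term is counted twice, and your own estimates give
$2|\Tr[V\Lambda^-_c\gamma_*^{\rm HF}]|\le \tfrac{z}{c^2}\|\gamma_*^{\rm HF}\|_{X^4}$ together with
$|\Tr[V\Lambda^-_c\gamma_*^{\rm HF}\Lambda^-_c]|\le (\tfrac{z}{4c^2}+\tfrac{z}{2c^2})\|\gamma_*^{\rm HF}\|_{X^4}=\tfrac{3z}{4c^2}\|\gamma_*^{\rm HF}\|_{X^4}$,
for a total of $\tfrac{7z}{4c^2}\|\gamma_*^{\rm HF}\|_{X^4}$, not $\tfrac{3z}{2c^2}$. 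The $\cK_\rL$-block of the diagonal term $\Tr[V\Lambda^-_c\gamma_*^{\rm HF}\Lambda^-_c]$ is the obstruction: both factors carry a $\Lambda^-_c$, so the gradient from Hardy must land on a $\cK_\rL\Lambda^-_c$ factor where it only gains the crude bound $\tfrac12$ rather than an extra $c^{-2}$.

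The paper instead groups the difference as
\[
\Tr_\cH[V\Lambda^-_c\gamma_*^{\rm HF}]+\Tr_\cH[V\Lambda^+_c\gamma_*^{\rm HF}\Lambda^-_c],
\]
which is algebraically the same but, after the triangle inequality, trades the doubled cross term and the pure $\Lambda^-_c\gamma\Lambda^-_c$ piece for a single cross term plus the mixed block $\Lambda^+_c\gamma_*^{\rm HF}\Lambda^-_c$. In the mixed block the Hardy gradient is placed on the $\Lambda^+_c$ side (well controlled in $X^2$) while the bare $L^2$ norm sits on the $\Lambda^-_c$ side (gaining $c^{-2}$); splitting further into $\cK_\rL$ and $\cK_\rS$ via Corollary~\ref{cor:lambda-non-gamma} gives $|\Tr[V\Lambda^+_c\gamma_*^{\rm HF}\Lambda^-_c]|\le \tfrac{z}{c^2}\|\gamma_*^{\rm HF}\|_{X^4}$, and together with $|\Tr[V\Lambda^-_c\gamma_*^{\rm HF}]|\le \tfrac{z}{2c^2}\|\gamma_*^{\rm HF}\|_{X^4}$ one lands exactly on $\tfrac{3z}{2c^2}$. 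So your argument is correct at the level of $\cO(c^{-2})$, but to match the stated constant you should regroup as the paper does.
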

\begin{proof}
We have
\begin{align}\label{eq:V1}
    \Tr_{\cH}[V\gamma^{\rm HF}_{*}]-  \Tr_{\cH}[V\Lambda^+_c\gamma^{\rm HF}_{*} \Lambda^+_c]=\Tr_{\cH}[V \Lambda^-_c \gamma^{\rm HF}_{*}]+ \Tr_{\cH}[V\Lambda^+_c\gamma^{\rm HF}_{*} \Lambda^-_c].
\end{align}

We study the first term on the right-hand side of \eqref{eq:V1}. From \eqref{eq:Hardy} and the fact that $\cK_{\rm L}\gamma^{\rm HF}_{*} \cK_{\rm L}=\gamma^{\rm HF}_{*}$, we have
\begin{align*}
    \left|\Tr_{\cH}[V \Lambda^-_c \gamma^{\rm HF}_{*}]\right|&=\left|\Tr_{\cH}[ \cK_{\rm L}\Lambda^-_c \gamma^{\rm HF}_{*}\cK_{\rm L} (-\Delta)^{1/2} (-\Delta)^{-1/2}V]\right|\\
    &\leq 2z\|\cK_{\rm L}\Lambda^-_c \gamma^{\rm HF}_{*}\cK_{\rm L}\nabla \|_{\mathfrak{S}_1}.
\end{align*}
Then from \eqref{eq:Holder} and Corollary \ref{cor:lambda-non-gamma}, we infer
\begin{align}\label{eq:K1-gamma-K1}
    \|\cK_{\rm L}\Lambda^-_c \gamma^{\rm HF}_{*}\cK_{\rm L}\nabla  \|_{\mathfrak{S}_1}&\leq \|\cK_{\rm L}\Lambda^-_c \gamma^{\rm HF}_{*}\Lambda^-_c\cK_{\rm L}\|_{\mathfrak{S}_1}^{1/2}\|\cK_{\rm L}\gamma^{\rm HF}_{*}\cK_{\rm L}\|_{X^2}^{1/2}\leq \frac{1}{4c^2}\|\gamma_*^{\rm HF}\|_{X^4}.
\end{align}
As a result,
\begin{align}\label{eq:V2}
    \left|\Tr_{\cH}[V \Lambda^-_c \gamma^{\rm HF}_{*}]\right|\leq \frac{z}{2c^2}\|\gamma_*^{\rm HF}\|_{X^4}.
\end{align}

Concerning the second term on the right-hand side of \eqref{eq:V1}, we have
\begin{align*}
    \Tr_{\cH}[V\Lambda^+_c\gamma^{\rm HF}_{*} \Lambda^-_c]=\sum_{j\in \{\rL,\rS\}}\Tr_{\cH}[V\cK_j\Lambda^+_c\gamma^{\rm HF}_{*} \Lambda^-_c \cK_j].
\end{align*}
Proceeding as for \eqref{eq:V2}, by \eqref{eq:Holder}, \eqref{eq:Hardy} and Corollary \ref{cor:lambda-non-gamma}, we have
\begin{align*}
  |\Tr_{\cH}[V\Lambda^+_c\gamma^{\rm HF}_{*} \Lambda^-_c]|\leq 2z\sum_{j\in \{\rL,\rS\}}\|\cK_j\Lambda^+_c\gamma^{\rm HF}_{*}\Lambda^+_c\cK_j \|_{X^2}^{1/2}\|\cK_j\Lambda^-_c\gamma^{\rm HF}_{*}\Lambda^-_c\cK_j \|_{\mathfrak{S}_1}^{1/2}\leq \frac{z}{c^2}\|\gamma_*^{\rm HF}\|_{X^4}.
\end{align*}
As a result, 
\begin{align}\label{eq:V-conclu}
    \left|\Tr_{\cH}[V\gamma^{\rm HF}_{*}]-  \Tr_{\cH}[V\Lambda^+_c\gamma^{\rm HF}_{*} \Lambda^+_c]\right|\leq  \frac{3z}{2c^2}\|\gamma^{\rm HF}_{*}\|_{X^4}.
\end{align}
This ends the proof.
\end{proof}
\medskip

\subsubsection{Potential between electrons and electrons} \label{sec:3.1.3}
First of all, we have
\begin{align}\label{eq:3.21}
\MoveEqLeft    \Tr_{\cH}[W_{\gamma^{\rm HF}_{*}}\gamma^{\rm HF}_{*}]-\Tr_{\cH}[W_{\Lambda^+_c\gamma^{\rm HF}_{*} \Lambda^+_c}\Lambda^+_c\gamma^{\rm HF}_{*} \Lambda^+_c]\notag\\
    &= \Tr_{\cH}[W_{\gamma^{\rm HF}_{*}}(\gamma^{\rm HF}_{*}-\Lambda^+_c\gamma^{\rm HF}_{*} \Lambda^+_c)]+\Tr_{\cH}[W_{\gamma^{\rm HF}_{*}-\Lambda^+_c\gamma^{\rm HF}_{*} \Lambda^+_c}\Lambda^+_c\gamma^{\rm HF}_{*} \Lambda^+_c].
\end{align}
Now we are going to study each term on the right-hand side separately. 

Concerning the first term on the right-hand side of \eqref{eq:3.21},
\begin{lemma}\label{lem:W2-1}We have
    \begin{align*}
        \left|\Tr_{\cH}[W_{\gamma^{\rm HF}_{*}}(\gamma^{\rm HF}_{*}-\Lambda^+_c\gamma^{\rm HF}_{*} \Lambda^+_c)]\right|\leq \frac{5q}{2c^2}\|\gamma_*^{\rm HF}\|_{X^4}.
    \end{align*}
\end{lemma}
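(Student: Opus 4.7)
The plan is to follow closely the proof of Lemma \ref{lem:error-V}, via two preliminary splittings. First decompose $W_{\gamma^{\rm HF}_{*}}=W_{1,\gamma^{\rm HF}_{*}}-W_{2,\gamma^{\rm HF}_{*}}$ as in \eqref{eq:W-dec}; then, since $\Lambda^+_c+\Lambda^-_c=\1_{\mathcal{H}}$, write
\[
\gamma^{\rm HF}_{*}-\Lambda^+_c\gamma^{\rm HF}_{*}\Lambda^+_c \;=\; \Lambda^-_c\gamma^{\rm HF}_{*} \;+\; \Lambda^+_c\gamma^{\rm HF}_{*}\Lambda^-_c.
\]
This produces four traces. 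Each of them is handled by the template of Lemma \ref{lem:error-V}: insert $(-\Delta)^{-1/2}(-\Delta)^{1/2}$ between the potential and the density-matrix factor, bound the potential-side factor in $\mathcal{B}(\cH)$, and combine H\"older's inequality \eqref{eq:Holder} with Corollary \ref{cor:lambda-non-gamma} on the density-matrix side to produce a factor $\tfrac{1}{4c^2}\|\gamma^{\rm HF}_{*}\|_{X^4}$ per surviving spin block.

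The two potential-side bounds that play the role of $\|(-\Delta)^{-1/2}V\|_{\mathcal{B}(\cH)}\leq 2z$ are prepared as follows. The direct operator $W_{1,\gamma^{\rm HF}_{*}}$ is a scalar multiplication, and integrating the Hardy inequality \eqref{eq:Hardy} against the nonnegative measure $\rho_{\gamma^{\rm HF}_{*}}(y)\,dy$ (of total mass $q$) yields
\[
\|W_{1,\gamma^{\rm HF}_{*}}(-\Delta)^{-1/2}\|_{\mathcal{B}(\cH)} \;\leq\; 2q,
\]
so the argument of Lemma \ref{lem:error-V} transposes verbatim with $z$ replaced by $q$. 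The exchange operator $W_{2,\gamma^{\rm HF}_{*}}$ is \emph{not} a multiplication operator, but \eqref{eq:5.17} gives directly $\|W_{2,\gamma^{\rm HF}_{*}}(-\Delta)^{-1/2}\|_{\mathcal{B}(\cH)}\leq 2\|\gamma^{\rm HF}_{*}\|_{\mathfrak{S}_1}=2q$, which serves the very same purpose. Moreover the integral kernel $\gamma^{\rm HF}_{*}(x,y)/|x-y|$ of $W_{2,\gamma^{\rm HF}_{*}}$ has only an L--L block (since $\gamma^{\rm HF}_{*}=\cK_\rL\gamma^{\rm HF}_{*}\cK_\rL$), so identity \eqref{eq:W-K1-K2} with $j=j'=\mathrm{L}$ collapses $\Tr_{\cH}[W_{2,\gamma^{\rm HF}_{*}}A]$ to $\Tr_{\cH}[W_{2,\gamma^{\rm HF}_{*}}\cK_\rL A\cK_\rL]$; i.e.\ only the $j=\mathrm{L}$ piece appears, with no sum over $j\in\{\mathrm{L},\mathrm{S}\}$.

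Assembling the four contributions produces $\tfrac{q}{2c^2}+\tfrac{q}{c^2}+\tfrac{q}{2c^2}+\tfrac{q}{2c^2}=\tfrac{5q}{2c^2}$ times $\|\gamma^{\rm HF}_{*}\|_{X^4}$. The single $\tfrac{q}{c^2}$ term comes from $W_{1,\gamma^{\rm HF}_{*}}$ applied to $\Lambda^+_c\gamma^{\rm HF}_{*}\Lambda^-_c$ which, exactly as in the second half of Lemma \ref{lem:error-V}, involves the full sum over both $j\in\{\mathrm{L},\mathrm{S}\}$; the other three pieces are confined to $j=\mathrm{L}$, either because the L--L structure of $\gamma^{\rm HF}_{*}$ kills the S--S block (the two $\Lambda^-_c\gamma^{\rm HF}_{*}$ traces) or because of the L--L structure of $W_{2,\gamma^{\rm HF}_{*}}$ just noted. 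The main technical obstacle, and the only place where one has to depart from a routine copy of the $V$-argument, is this bookkeeping: identifying which $\cK_j\cdots\cK_j$ blocks survive for $W_{2,\gamma^{\rm HF}_{*}}$ and confirming that \eqref{eq:5.17} can play the role of Hardy. Once this is in place the estimates are mechanical applications of H\"older and Corollary \ref{cor:lambda-non-gamma}.
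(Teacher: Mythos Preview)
Your proposal is correct and follows essentially the same approach as the paper: split $W_{\gamma^{\rm HF}_*}=W_{1,\gamma^{\rm HF}_*}-W_{2,\gamma^{\rm HF}_*}$ and $\gamma^{\rm HF}_*-\Lambda^+_c\gamma^{\rm HF}_*\Lambda^+_c=\Lambda^-_c\gamma^{\rm HF}_*+\Lambda^+_c\gamma^{\rm HF}_*\Lambda^-_c$, handle the $W_1$-part exactly as in Lemma~\ref{lem:error-V} with $z\to q$, and for the $W_2$-part use \eqref{eq:5.17} together with the $\mathrm{L}$--$\mathrm{L}$ block structure of $W_{2,\gamma^{\rm HF}_*}$ (via \eqref{eq:W-K1-K2}) so that only $j=\mathrm{L}$ survives. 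Your constant bookkeeping $\tfrac{q}{2c^2}+\tfrac{q}{c^2}+\tfrac{q}{2c^2}+\tfrac{q}{2c^2}=\tfrac{5q}{2c^2}$ agrees with the paper's $\tfrac{3q}{2c^2}$ (from the $W_1$-part) plus $\tfrac{q}{2c^2}+\tfrac{q}{2c^2}$ (from the two $W_2$-terms).
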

\begin{proof}
According to \eqref{eq:W-dec}, $W_{\bullet}=W_{1,\bullet}-W_{2,\bullet}$. We can split $W_{\bullet}$ into terms $W_{1,\bullet}$ and $W_{2,\bullet}$. By \eqref{eq:Hardy},
\begin{align*}
    \|W_{1,\gamma} u\|\leq 2\|\gamma\|_{\mathfrak{S}_1}\|\nabla u\|_{\cH}\leq 2q\|\nabla u\|_{\cH}.
\end{align*}
Then replacing $V$ by $W_{1,\cdot}$ in \eqref{eq:V-conclu}, we infer
\begin{align}\label{eq:W1-conclu1}
    \left|\Tr_{\cH}[W_{1,\gamma^{\rm HF}_{*}}(\gamma^{\rm HF}_{*}- \Lambda^+_c\gamma^{\rm HF}_{*} \Lambda^+_c)]\right|\leq \frac{3}{2c^2}\|\gamma^{\rm HF}_{*}\|_{\mathfrak{S}_1}\|\gamma^{\rm HF}_{*}\|_{X^4}\leq \frac{3q}{2c^2}\|\gamma^{\rm HF}_{*}\|_{X^4}.
\end{align}

\medskip

We now consider the estimate for $W_{2,\bullet}$. First of all, we have
\begin{align*}
    \Tr_{\cH}[W_{2,\gamma^{\rm HF}_{*}}\gamma^{\rm HF}_{*}]-\Tr_{\cH}[W_{2,\gamma^{\rm HF}_{*}}\Lambda^+_c \gamma^{\rm HF}_{*}\Lambda^+_c]= \Tr_{\cH}[W_{2,\gamma^{\rm HF}_{*}}\Lambda^-_c\gamma^{\rm HF}_{*}]+\Tr_{\cH}[W_{2,\gamma^{\rm HF}_{*}}\Lambda^+_c \gamma^{\rm HF}_{*}\Lambda^-_c].
\end{align*}
Then by Lemma \ref{lem:W2-estimate}, 
\begin{align*}
    \|W_{2,\gamma^{\rm HF}_{*}}f\|_{\cH}\leq 2q\|\nabla f\|_{\cH}.
\end{align*}
Hence, as $\gamma^{\rm HF}_{*}=\cK_{\rm L}\gamma^{\rm HF}_{*}\cK_{\rm L}$, from \eqref{eq:Wpotential-decom}, \eqref{eq:W-K1-K2} and \eqref{eq:K1-gamma-K1}, we infer
\begin{align*}
    \left|\Tr_{\cH}[W_{2,\gamma^{\rm HF}_{*}}\Lambda^-_c\gamma^{\rm HF}_{*}]\right|&= \left|\Tr_{\cH}[\cK_{\rm L}\Lambda^-_c\gamma^{\rm HF}_{*}\cK_{\rm L} W_{2,\gamma^{\rm HF}_{*}}]\right|\\
    &\leq 2q\|\cK_{\rm L}\Lambda^-_c\gamma^{\rm HF}_{*}\cK_{\rm L}\nabla\|_{\mathfrak{S}_1}\leq \frac{q}{2c^2}\|\gamma_*^{\rm HF}\|_{X^4}.
\end{align*}
Analogously, by Corollary \ref{cor:lambda-non-gamma},
\begin{align*}
   \left|\Tr_{\cH}[W_{2,\gamma^{\rm HF}_{*}}\Lambda^+_c\gamma^{\rm HF}_{*}\Lambda^-_c]\right| \leq 2q\|\cK_{\rm L}\Lambda^-_c \gamma^{\rm HF}_{*}\Lambda^-_c\cK_{\rm L}\|_{\mathfrak{S}_1}^{1/2}\|\cK_{\rm L}\Lambda^+_c\gamma^{\rm HF}_{*}\Lambda^+_c\cK_{\rm L}\|_{X^2}^{1/2} \leq \frac{q}{2c^2}\|\gamma_*^{\rm HF}\|_{X^4}.
\end{align*}
Now we conclude that 
\begin{align*}
  \MoveEqLeft  \left|\Tr_{\cH}[W_{\gamma^{\rm HF}_{*}}(\gamma^{\rm HF}_{*}-\Lambda^+_c\gamma^{\rm HF}_{*} \Lambda^+_c)]\right|\\
  &\leq  \left|\Tr_{\cH}[W_{1,\gamma^{\rm HF}_{*}}(\gamma^{\rm HF}_{*}-\Lambda^+_c\gamma^{\rm HF}_{*} \Lambda^+_c)]\right|\\
  &\quad +\left|\Tr_{\cH}[W_{2,\gamma^{\rm HF}_{*}}(\gamma^{\rm HF}_{*}-\Lambda^+_c\gamma^{\rm HF}_{*} \Lambda^+_c)]\right|\leq \frac{5q}{2c^2}\|\gamma_*^{\rm HF}\|_{X^4}.
\end{align*}
This proves the lemma.
\end{proof}
Concerning the second term on the right-hand side of \eqref{eq:3.21},
\begin{lemma}\label{lem:W2-2}
    \begin{align*}
        \left|\Tr_{\cH}[W_{\gamma^{\rm HF}_{*}-\Lambda^+_c\gamma^{\rm HF}_{*} \Lambda^+_c}\Lambda^+_c\gamma^{\rm HF}_{*} \Lambda^+_c]\right|\leq  \frac{1}{2c^2}(5q+4\|\gamma_{*}^{\rm HF}\|_{X^4})\|\gamma_{*}^{\rm HF}\|_{X^4}.
    \end{align*}
\end{lemma}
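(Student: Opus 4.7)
The strategy is to mirror the proof of Lemma \ref{lem:W2-1}, with the roles of $\gamma^{\rm HF}_*$ and the ``excess'' $A := \gamma^{\rm HF}_* - \Lambda^+_c \gamma^{\rm HF}_* \Lambda^+_c$ interchanged in the trace. Setting $C := \Lambda^+_c \gamma^{\rm HF}_* \Lambda^+_c$ and decomposing, as in the opening of the proof of Lemma \ref{lem:error-V},
\begin{align*}
A = \Lambda^-_c \gamma^{\rm HF}_* + \Lambda^+_c \gamma^{\rm HF}_* \Lambda^-_c,
\end{align*}
I would split $W = W_1 - W_2$ and treat the two contributions separately.

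For the direct Coulomb piece I would exploit the symmetry of the kernel $|x-y|^{-1}$ in the form
\begin{align*}
\Tr_{\cH}[W_{1,A}C] = \iint \rho_A(x)\,\rho_C(y)\,|x-y|^{-1}\,dx\,dy = \Tr_{\cH}[W_{1,C}A].
\end{align*}
Since $W_{1,C}$ is a scalar multiplication potential that commutes with $\cK_\rL$ and obeys $\|W_{1,C}u\|_{\cH}\leq 2\|C\|_{\mathfrak{S}_1}\|\nabla u\|_{\cH}\leq 2q\|\nabla u\|_{\cH}$ (using $\|C\|_{\mathfrak{S}_1}\leq \|\gamma^{\rm HF}_*\|_{\mathfrak{S}_1}=q$), the proof of Lemma \ref{lem:error-V} with $V$ replaced by $W_{1,C}$ applies verbatim to the right-hand side and yields $|\Tr_{\cH}[W_{1,A}C]|\leq \tfrac{3q}{2c^2}\|\gamma^{\rm HF}_*\|_{X^4}$.

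For the exchange piece, a similar symmetry
\begin{align*}
\Tr_{\cH}[W_{2,A}C] = \iint \Tr_{\C^4}\!\bigl[A(x,y)C(y,x)\bigr]\,|x-y|^{-1}\,dx\,dy = \Tr_{\cH}[W_{2,C}A]
\end{align*}
holds by swapping $x\leftrightarrow y$ and invoking cyclicity of the $4\times 4$ matrix trace. Contrary to the situation of Lemma \ref{lem:W2-1}, however, $C$ has a non-trivial S-component, so $W_{2,C}$ is genuinely matrix-valued. I would handle this via the block decomposition \eqref{eq:Wpotential-decom},
\begin{align*}
W_{2,C} = \sum_{j,j'\in\{\rL,\rS\}} W_{2,\cK_j C\cK_{j'}},
\end{align*}
then apply \eqref{eq:W-K1-K2} to route each $\cK_{j'}$-factor onto $A$ and estimate each of the four blocks by combining H\"older's inequality \eqref{eq:Holder}, Lemma \ref{lem:W2-estimate} and Corollary \ref{cor:lambda-non-gamma}. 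The $(\rL,\rL)$-block reproduces exactly the $W_2$-estimate of Lemma \ref{lem:W2-1} and contributes at most $\tfrac{q}{c^2}\|\gamma^{\rm HF}_*\|_{X^4}$; each of the three remaining blocks carries an extra factor of order $c^{-1}\|\gamma^{\rm HF}_*\|_{X^2}^{1/2}$ coming from Corollary \ref{cor:lambda-non-gamma} applied to $\|\cK_\rS C\cK_\rS\|_{\mathfrak{S}_1}$ (or to the corresponding mixed norm), which combined with the $c^{-1}$ already produced by $\Lambda^-_c$ acting on $\gamma^{\rm HF}_*\in\cB(\cH,\cHn)$ yields the remaining $\tfrac{2}{c^2}\|\gamma^{\rm HF}_*\|_{X^4}^2$-type contribution. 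Summing the direct and exchange estimates gives the claimed bound.

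The main obstacle is the careful bookkeeping of the four $(j,j')$-blocks of $W_{2,C}$: the positions of the $\cK_\rL/\cK_\rS$ and $\Lambda^+_c/\Lambda^-_c$ factors determine which power of $c^{-1}$ is supplied by Corollary \ref{cor:lambda-non-gamma}, and one has to pair them up so that each $\mathfrak{S}_1$-factor coming out of H\"older's inequality carries a total weight of at least $c^{-2}$. No new analytic ingredient is needed beyond the tools already assembled at the beginning of Section \ref{sec:5}.
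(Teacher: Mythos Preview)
Your proposal is correct and follows essentially the same route as the paper: the paper also begins with the symmetry $\Tr_{\cH}[W_{A}C]=\Tr_{\cH}[W_{C}A]$ (stated as the first line of \eqref{eq:W2-2-dec}), handles the $W_{1}$-part exactly as you describe to get $\tfrac{3q}{2c^{2}}\|\gamma^{\rm HF}_{*}\|_{X^{4}}$, and for the $W_{2}$-part first splits $A=\Lambda^{-}_{c}\gamma^{\rm HF}_{*}+\Lambda^{+}_{c}\gamma^{\rm HF}_{*}\Lambda^{-}_{c}$ and then performs the $\cK_{j}\,/\,\cK_{j'}$ block decomposition you outline, using \eqref{eq:Holder}, Lemma~\ref{lem:W2-estimate} and Corollary~\ref{cor:lambda-non-gamma} to track the powers of $c^{-1}$. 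The only organizational difference is that the paper decomposes $A$ before decomposing $W_{2,C}$ into blocks, whereas you phrase it the other way around; the resulting estimates and final constant $\tfrac{1}{2c^{2}}(5q+4\|\gamma^{\rm HF}_{*}\|_{X^{4}})$ coincide.
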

\begin{proof}
First of all, note that
\begin{align}\label{eq:W2-2-dec}
  \MoveEqLeft  \left|\Tr_{\cH}[W_{\gamma^{\rm HF}_{*}-\Lambda^+_c\gamma^{\rm HF}_{*} \Lambda^+_c}\Lambda^+_c\gamma^{\rm HF}_{*} \Lambda^+_c]\right|\notag\\
  &=\left|\Tr_{\cH}[W_{\Lambda^+_c\gamma^{\rm HF}_{*} \Lambda^+_c}(\gamma^{\rm HF}_{*}-\Lambda^+_c\gamma^{\rm HF}_{*} \Lambda^+_c)]\right|\notag\\
    &\leq \sum_{m=1,2}\left|\Tr_{\cH}[W_{m,\Lambda^+_c\gamma^{\rm HF}_{*} \Lambda^+_c}(\gamma^{\rm HF}_{*}-\Lambda^+_c\gamma^{\rm HF}_{*} \Lambda^+_c)]\right|.
\end{align}
Then it suffice to study $\Tr[W_{m,\Lambda^+_c\gamma^{\rm HF}_{*} \Lambda^+_c}(\gamma^{\rm HF}_{*}-\Lambda^+_c\gamma^{\rm HF}_{*} \Lambda^+_c)]$ for $m=1,2$.

Analogous to \eqref{eq:W1-conclu1}, we have
\begin{align}\label{eq:W2-2-1}
\MoveEqLeft\left|\Tr_{\cH}[W_{1,\Lambda^+_c\gamma^{\rm HF}_{*} \Lambda^+_c}(\gamma^{\rm HF}_{*}-\Lambda^+_c\gamma^{\rm HF}_{*} \Lambda^+_c)]\right|\notag\\
&\leq \frac{3}{2c^2}\|\Lambda^+_c\gamma^{\rm HF}_{*} \Lambda^+_c\|_{\mathfrak{S}_1}\|\gamma^{\rm HF}_{*}\|_{X^4}\leq \frac{3q}{2c^2}\|\gamma^{\rm HF}_{*}\|_{X^4}.
\end{align}
Thus it remains to study the term associated with $W_{2,\bullet}$. We have
\begin{align}\label{eq:W2-2-2-dec}
  \MoveEqLeft \Tr_{\cH}\left[W_{2,\Lambda^+_c\gamma^{\rm HF}_{*} \Lambda^+_c}(\gamma^{\rm HF}_{*}-\Lambda^+_c\gamma^{\rm HF}_{*} \Lambda^+_c)\right]\notag\\
  &=\Tr_{\cH}\left[W_{2,\Lambda^+_c\gamma^{\rm HF}_{*} \Lambda^+_c}\Lambda^-_c\gamma^{\rm HF}_{*} \right]+\Tr_{\cH}\left[W_{2,\Lambda^+_c\gamma^{\rm HF}_{*} \Lambda^+_c}\Lambda^+_c\gamma^{\rm HF}_{*} \Lambda^-_c\right].
\end{align}
As $\gamma^{\rm HF}_{*}=\gamma^{\rm HF}_{*}\cK_{\rm L}$, using again \eqref{eq:Wpotential-decom} and \eqref{eq:W-K1-K2}, we infer
\begin{align}\label{eq:W2-2-2-1-dec}
\MoveEqLeft\left| \Tr_{\cH}[W_{2,\Lambda^+_c\gamma^{\rm HF}_{*} \Lambda^+_c}\Lambda^-_c\gamma^{\rm HF}_{*} ]\right|\leq \sum_{j\in \{\rL,\rS\}}\left| \Tr_{\cH}[W_{2,\cK_{\rm L}\Lambda^+_c\gamma^{\rm HF}_{*} \Lambda^+_c\cK_j}\cK_j\Lambda^-_c\gamma^{\rm HF}_{*} \cK_{\rm L}]\right|.
\end{align}
For $j=\rL$, by Lemma \ref{lem:W2-estimate},
\begin{align*}
\MoveEqLeft    \left| \Tr_{\cH}[W_{2,\cK_{\rm L}\Lambda^+_c\gamma^{\rm HF}_{*} \Lambda^+_c\cK_{\rm L}}\cK_{\rm L}\Lambda^-_c\gamma^{\rm HF}_{*} \cK_{\rm L}]\right|\\
&=\left| \Tr_{\cH}[\cK_{\rm L}\Lambda^-_c\gamma^{\rm HF}_{*} \cK_{\rm L} W_{2,\cK_{\rm L}\Lambda^+_c\gamma^{\rm HF}_{*} \Lambda^+_c\cK_{\rm L}}]\right|\\
    &\leq 2\|\cK_{\rm L} \Lambda^+_c\gamma_*^{\rm HF}\Lambda^+_c \cK_{\rm L} \|_{\mathfrak{S}_1}\|\cK_{\rm L}\Lambda^-_c\gamma^{\rm HF}_{*} \cK_{\rm L}\nabla\|_{\mathfrak{S}_1}.
\end{align*}
Then, from \eqref{eq:K1-gamma-K1} and Corollary \ref{cor:lambda-non-gamma},
\begin{align*}
    \left| \Tr_{\cH}[W_{2,\cK_{\rm L}\Lambda^+_c\gamma^{\rm HF}_{*} \Lambda^+_c\cK_{\rm L}}\cK_{\rm L}\Lambda^-_c\gamma^{\rm HF}_{*} \cK_{\rm L}]\right|\leq 2q\|\cK_{\rm L}\Lambda^-_c\gamma^{\rm HF}_{*} \cK_{\rm L}\nabla\|_{\mathfrak{S}_1}\leq \frac{q}{2c^2}\|\gamma^{\rm HF}_{*}\|_{X^4}.
\end{align*}
For $j=\rS$, from Corollary \ref{cor:lambda-non-gamma} and \eqref{eq:5.18},
\begin{align*}
\MoveEqLeft    \left| \Tr_{\cH}[W_{2,\cK_{\rm L}\Lambda^+_c\gamma^{\rm HF}_{*} \Lambda^+_c\cK_{\rm S}}\cK_{\rm S}\Lambda^-_c\gamma^{\rm HF}_{*} \cK_{\rm L}]\right|\\
    &\leq 2\|\cK_{\rm L} \Lambda^+_c\gamma_*^{\rm HF}\Lambda^+_c \cK_{\rm L} \|_{\mathfrak{S}_1}^{1/2}\|\cK_{\rm S} \Lambda^+_c\gamma_*^{\rm HF}\Lambda^+_c \cK_{\rm S} \|_{\mathfrak{S}_1}^{1/2}\|\nabla\cK_{\rm S}\Lambda^-_c\gamma^{\rm HF}_{*} \cK_{\rm L}\|_{\mathfrak{S}_1}\\
    &\leq \frac{q^{1/2}}{c}\|\gamma_*^{\rm HF}\|_{X^2}^{1/2}\|\nabla\cK_{\rm S}\Lambda^-_c\gamma^{\rm HF}_{*} \cK_{\rm L}\|_{\mathfrak{S}_1}.
\end{align*}
Then by H\"older's inequality \eqref{eq:Holder} and Corollary \ref{cor:lambda-non-gamma}, 
\begin{align*}
\MoveEqLeft    \left| \Tr_{\cH}[W_{2,\cK_{\rm L}\Lambda^+_c\gamma^{\rm HF}_{*} \Lambda^+_c\cK_{\rm S}}\cK_{\rm S}\Lambda^-_c\gamma^{\rm HF}_{*} \cK_{\rm L}]\right|\\
    &\leq \frac{q^{1/2}}{c}\|\gamma_*^{\rm HF}\|_{X^2}^{1/2}\|\cK_{\rm L} \gamma_*^{\rm HF}\cK_{\rm L}\|_{\mathfrak{S}_1}^{1/2}\|\cK_{\rm S} \Lambda^-_c\gamma_*^{\rm HF}\Lambda^-_c\cK_{\rm S}\|_{X^2}^{1/2}\\
    &\leq \frac{q}{2c^2}\|\gamma_*^{\rm HF}\|_{X^2}^{1/2}\|\gamma_*^{\rm HF}\|_{X^4}^{1/2}\leq  \frac{q}{2c^2}\|\gamma_*^{\rm HF}\|_{X^4}.
\end{align*}
Thus, for the first term on the right-hand side of \eqref{eq:W2-2-2-dec}, from \eqref{eq:W2-2-2-1-dec} we get
\begin{align}\label{eq:W2-2-2-1}
     \MoveEqLeft \left| \Tr_{\cH}[W_{2,\Lambda^+_c\gamma^{\rm HF}_{*} \Lambda^+_c}\Lambda^-_c\gamma^{\rm HF}_{*} ]\right|\leq \frac{q}{c^2}\|\gamma^{\rm HF}_{*} \|_{X^4}.
\end{align}

\medskip

Concerning the second term on the right-hand side of \eqref{eq:W2-2-2-dec}, we proceed as for \eqref{eq:W2-2-2-1}. By \eqref{eq:Wpotential-decom}, H\"older's inequality \eqref{eq:Holder} and Lemma \ref{lem:W2-estimate},
\begin{align*}
   \MoveEqLeft \left|\Tr_{\cH}[W_{2,\Lambda^+_c\gamma^{\rm HF}_{*} \Lambda^+_c}\Lambda^+_c\gamma^{\rm HF}_{*} \Lambda^-_c]\right|\\
   &\leq\sum_{j,j'\in\{\rL,\rS\}}\left|\Tr_{\cH}[W_{2,\cK_j\Lambda^+_c\gamma^{\rm HF}_{*} \Lambda^+_c\cK_{j'}}\cK_{j'}\Lambda^+_c\gamma^{\rm HF}_{*} \Lambda^-_c\cK_j]\right|\\
    &\leq 2\sum_{j,j'\in\{\rL,\rS\}}\|\cK_j\Lambda^+_c\gamma^{\rm HF}_{*} \Lambda^+_c\cK_{j}\|_{\mathfrak{S}_1}^{1/2}\|\cK_{j'}\Lambda^+_c\gamma^{\rm HF}_{*} \Lambda^+_c\cK_{j'}\|_{\mathfrak{S}_1}^{1/2}\|\nabla\cK_{j'}\Lambda^+_c\gamma^{\rm HF}_{*} \Lambda^-_c\cK_j\|_{\mathfrak{S}_1}\\
    &\leq 2 \left(\sum_{j\in \{\rL,\rS\}}\|\cK_j\Lambda^+_c\gamma^{\rm HF}_{*} \Lambda^+_c\cK_{j}\|_{\mathfrak{S}_1}^{1/2}\|\cK_{j}\Lambda^-_c\gamma^{\rm HF}_{*} \Lambda^-_c\cK_{j}\|_{\mathfrak{S}_1}^{1/2}\right)\\
    &\quad\times \left(\sum_{j'\in \{\rL,\rS\}}\|\cK_{j'}\Lambda^+_c\gamma^{\rm HF}_{*} \Lambda^+_c\cK_{j'}\|_{\mathfrak{S}_1}^{1/2}\|\cK_{j'}\Lambda^+_c\gamma^{\rm HF}_{*} \Lambda^+_c\cK_{j'}\|_{X^2}^{1/2}\right).
\end{align*}
By Corollary \ref{cor:lambda-non-gamma},
\begin{align*}
 \MoveEqLeft   \sum_{j\in \{\rL,\rS\}} \|\cK_j\Lambda^+_c\gamma^{\rm HF}_{*} \Lambda^+_c\cK_{j}\|_{\mathfrak{S}_1}^{1/2}\|\cK_{j}\Lambda^-_c\gamma^{\rm HF}_{*} \Lambda^-_c\cK_{j}\|_{\mathfrak{S}_1}^{1/2}\\
    &\leq \frac{1}{4c^2}\left(\|\gamma_*^{\rm HF}\|_{\mathfrak{S}_1}^{1/2}\|\gamma_*^{\rm HF}\|_{X^4}^{1/2}+ \|\gamma_*^{\rm HF}\|_{X^2}\right)\leq \frac{1}{2c^2}\|\gamma_*^{\rm HF}\|_{X^4}
\end{align*}
and
\begin{align*}
 \MoveEqLeft   \sum_{j'\in \{\rL,\rS\}}\|\cK_{j'}\Lambda^+_c\gamma^{\rm HF}_{*} \Lambda^+_c\cK_{j'}\|_{\mathfrak{S}_1}^{1/2}\|\cK_{j'}\Lambda^+_c\gamma^{\rm HF}_{*} \Lambda^+_c\cK_{j'}\|_{X^2}^{1/2}\leq 2\|\gamma^{\rm HF}_{*}\|_{X^4}.
\end{align*}
Hence,
\begin{align*}
    \left|\Tr_{\cH}[W_{2,\Lambda^+_c\gamma^{\rm HF}_{*} \Lambda^+_c}\Lambda^+_c\gamma^{\rm HF}_{*} \Lambda^-_c]\right|\leq \frac{2}{c^2}\|\gamma_{*}^{\rm HF}\|_{X^4}^2.
\end{align*}
Then from \eqref{eq:W2-2-2-dec},
\begin{align}\label{eq:W2-2-2}
    |\Tr_{\cH}[W_{2,\Lambda^+_c\gamma^{\rm HF}_{*} \Lambda^+_c}(\gamma^{\rm HF}_{*}-\Lambda^+_c\gamma^{\rm HF}_{*} \Lambda^+_c)]|\leq \frac{1}{c^2}(q+\|\gamma_{*}^{\rm HF}\|_{X^4})\|\gamma_{*}^{\rm HF}\|_{X^4}.
\end{align}

Finally, we can conclude from \eqref{eq:W2-2-dec}, \eqref{eq:W2-2-1} and \eqref{eq:W2-2-2-dec} that
\begin{align*}
   \left|\Tr_{\cH}[W_{\gamma^{\rm HF}_{*}-\Lambda^+_c\gamma^{\rm HF}_{*} \Lambda^+_c}\Lambda^+_c\gamma^{\rm HF}_{*} \Lambda^+_c]\right| \leq \frac{1}{2c^2}(5q+4\|\gamma_{*}^{\rm HF}\|_{X^4})\|\gamma_{*}^{\rm HF}\|_{X^4}.
\end{align*}
This ends the proof.
\end{proof}

\subsubsection{End of the proof of \texorpdfstring{\eqref{eq:E_c-EHF}}{}} 
From \eqref{eq:3.21},  Lemma \ref{lem:W2-1} and Lemma \ref{lem:W2-2} we know that
\begin{align*}
\left| \Tr_{\cH}[W_{\gamma^{\rm HF}_{*}}\gamma^{\rm HF}_{*}]-\Tr_{\cH}[W_{\Lambda^+_c\gamma^{\rm HF}_{*} \Lambda^+_c}\Lambda^+_c\gamma^{\rm HF}_{*} \Lambda^+_c]\right|\leq \frac{1}{2c^2}(10q+4\|\gamma_*^{\rm HF}\|_{X^4})\|\gamma_*^{\rm HF}\|_{X^4}.
\end{align*}
Then estimate \eqref{eq:E_c-EHF} follows from above estimate, \eqref{EDF-EHF}, \eqref{eq:error-kinetic} and Lemma \ref{lem:error-V}.

\subsection{Proof of \texorpdfstring{\eqref{eq:E_c-EHF-c2}}{}}\label{sec:4.3}
Now we turn to prove \eqref{eq:E_c-EHF-c2}. Before considering \eqref{EDF-EHF}, we first construct the test density matrix $\widetilde{\gamma}_c^{\rm HF}$ and study its property.

\subsubsection{Construction of \texorpdfstring{$\widetilde{\gamma}_c^{\rm HF}$}{}}\label{sec:4.3.1}

According to Remark \ref{rem:non-unfill},
\begin{align*}
    \gamma_*^{\rm HF}:= \sum_{n= 1}^q |u_n^{\rm HF} \rangle\,\langle u_n^{\rm HF}|
\end{align*}
with $\left<u_m^{\rm HF}, u_n^{\rm HF}\right>_{\cH_\rL}=\delta_{m,n}$ and $u_n\in\HL$. Thus $\gamma^{\rm HF}_*$ is the projection on the space spanned by $\{u_{n}^{\rm HF}\}_{1\leq n\leq q}$. To get the term $\cE^{(2)}_c(\gamma_*^{\rm HF})$ in \eqref{eq:E_c-EHF-c2}, we need to renormalize the density matrix $\gamma_*^{\rm HF}$. Inspired by \eqref{eq:D-Delta}, we consider the projection on the space spanned by $\{\cS u_n^{\rm HF}\}_{1\leq n\leq q}$. Note that
\begin{align*}
 \left<\cS u_m^{\rm HF},\cS u_n^{\rm HF}\right>_{\cH}= \delta_{m,n}+\frac{1}{4c^2}\left<\cL u_{m}^{\rm HF},\cL u_{n}^{\rm HF}\right>_\HL.
\end{align*}
Thus $\{\cS u_n^{\rm HF}\}_{1\leq n\leq q}$ are no longer orthogonal functions. We now introduce the following overlap matrix:
\begin{align}
    S_{\rm HF}:&= \Big(\left<\cS u_m^{\rm HF}, \cS u_n^{\rm HF}\right>_\cH\Big)_{1\leq m,n\leq q}=\1_{q\times q}+\frac{1}{4c^2} \widetilde{S}_{\rm HF},\label{eq:S}\\
     \widetilde{S}_{\rm HF}:&= \left(\left<\cL u_{m}^{\rm HF},\cL u_{n}^{\rm HF}\right>_\HL\right)_{1\leq m,n\leq q}.\label{eq:S'}
\end{align}
Note that for $c$ large enough, as $\cL^2=-\Delta$,
\begin{align}\label{eq:diagonally dominated1}
\sup_{1\leq m,n\leq q}\frac{1}{4c^2}\left|\left<\cL u_{m}^{\rm HF},\cL u_{n}^{\rm HF}\right>_\HL\right|=\sup_{1\leq m,n\leq q}\frac{1}{4c^2}\left|\left<\sqrt{-\Delta} u_{m}^{\rm HF},\sqrt{-\Delta} u_{n}^{\rm HF}\right>_\HL\right| =\cO(c^{-2}).
\end{align}
 This implies for $c$ large enough, $S_{\rm HF}$ is a strictly diagonally dominated matrix. Thus $S_{\rm HF}$ is invertible, 
\begin{align}\label{eq:S-1}
    \left\|S_{\rm HF}^{-1}-\left(\1_{q\times q}-\frac{1}{4c^2}\widetilde{S}_{\rm HF}\right)\right\|_{\mathfrak{S}_1(\C^q)}=\cO(c^{-4})
\end{align}
and there exists some constants $0<C_0<1<C_1$ such that for any $c$ large enough,
\begin{align}\label{eq:S-1-bound}
    C_0\1_{q\times q}\leq S_{\rm HF}^{-1}\leq C_1\1_{q\times q}
\end{align}
as a matrix. By using \eqref{eq:matrix-bra-ket}, the renormalized HF density matrix $\widetilde{\gamma}_c^{\rm HF}$ on the space spanned by $\{\cS u_n^{\rm HF}\}_{1\leq n\leq q}$ is defined as 
\begin{align}\label{eq:gammaHF'}
    \widetilde{\gamma}_c^{\rm HF}:&=\Big(\left|\cS u_1^{\rm HF}\right>,\cdots, \left|\cS u_q^{\rm HF}\right>\Big) S_{\rm HF}^{-1} \begin{pmatrix}
        \left<\cS u_1^{\rm HF}\right|\\
        \vdots\\
        \left<\cS u_q^{\rm HF}\right|
    \end{pmatrix}.
\end{align}
In particular, it is easy to see that
\begin{align}
    \widetilde{\gamma}_c^{\rm HF}=\cS\widetilde{\gamma}_c^{\rm HF}\cS^*.
\end{align}

\subsubsection{Property of \texorpdfstring{$\widetilde{\gamma}_c^{\rm HF}$}{}}\label{sec:4.3.2'}

By \eqref{eq:matrix-inner}, \eqref{eq:S} and direct calculation, we know that
\begin{align*}
    (\widetilde{\gamma}_c^{\rm HF})^2&=\Big(\left|\cS u_1^{\rm HF}\right>,\cdots, \left|\cS u_q^{\rm HF}\right>\Big) S_{\rm HF}^{-1} \Big(\left<\cS u_{m}^{\rm HF},\cS u_{n}^{\rm HF}\right>\Big)_{1\leq m,n\leq q} S_{\rm HF}^{-1}\begin{pmatrix}
        \left<\cS u_1^{\rm HF}\right|\\
        \vdots\\
        \left<\cS u_q^{\rm HF}\right|
    \end{pmatrix}\\
    &=\Big(\left|\cS u_1^{\rm HF}\right>,\cdots, \left|\cS u_n^{\rm HF}\right>\Big) S_{\rm HF}^{-1}\begin{pmatrix}
        \left<\cS u_1^{\rm HF}\right|\\
        \vdots\\
        \left<\cS u_n^{\rm HF}\right|
    \end{pmatrix}=\widetilde{\gamma}_c^{\rm HF}.
\end{align*}

This and \eqref{eq:S-1-bound} imply
\begin{lemma}\label{lem:gammaHF'-projection}
For $c$ large enough, $\widetilde{\gamma}_c^{\rm HF}=\cS\widetilde{\gamma}_c^{\rm HF}\cS^*$ is a projector with ${\rm Rank}(\widetilde{\gamma}_c^{\rm HF})=q$, thus $\widetilde{\gamma}_c^{\rm HF}\in \Gamma_q$.
\end{lemma}
Then we have the following estimates on $\widetilde{\gamma}_c^{\rm HF}$.
\begin{lemma}\label{lem:gammaHF'-estimate}
Under Assumption \ref{ass:V}, for $c$ large enough,
\begin{align*}
   \|\widetilde{\gamma}_c^{\rm HF}\|_{X^4}=\cO(1),\qquad  \|\cK_\rL \widetilde{\gamma}_c^{\rm HF}\cK_\rL \|_{X^6}=\cO(1),\qquad     \|\cK_\rS \widetilde{\gamma}_c^{\rm HF}\cK_\rS \|_{X^4}=\cO(c^{-2}).
\end{align*}
\end{lemma}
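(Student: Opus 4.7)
The plan is to reduce all three estimates to uniform bounds on $\|\cS u_m^{\rm HF}\|_{H^{s/2}}$ (respectively its $\cK_\rL$- and $\cK_\rS$-components), exploiting two facts: (i) by Lemma \ref{lem:gammaHF'-projection} and \eqref{eq:S-1-bound}, the coefficient matrix $S_{\rm HF}^{-1}$ is uniformly bounded (in operator norm, hence in any matrix norm, since $q$ is fixed); (ii) by Corollary \ref{cor:regularity-H3}, under Assumption \ref{ass:V} each HF orbital satisfies $\|u_n^{\rm HF}\|_{H^3}=\cO(1)$.

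First, I would expand $\widetilde{\gamma}_c^{\rm HF}$ as a finite sum
\begin{align*}
    \widetilde{\gamma}_c^{\rm HF} = \sum_{m,n=1}^q (S_{\rm HF}^{-1})_{m,n}\,\left|\cS u_m^{\rm HF}\right\rangle\!\left\langle \cS u_n^{\rm HF}\right|
\end{align*}
and apply $(1-\Delta)^{s/4}$ on both sides. For a finite-rank operator of this form, the triangle inequality for $\|\cdot\|_{\mathfrak{S}_1}$ together with $\||v\rangle\langle w|\|_{\mathfrak{S}_1}=\|v\|_\cH\|w\|_\cH$ yields
\begin{align*}
    \|\widetilde{\gamma}_c^{\rm HF}\|_{X^s}\leq \sum_{m,n=1}^q |(S_{\rm HF}^{-1})_{m,n}|\,\|\cS u_m^{\rm HF}\|_{H^{s/2}}\|\cS u_n^{\rm HF}\|_{H^{s/2}}\lesssim \max_{1\leq m\leq q}\|\cS u_m^{\rm HF}\|_{H^{s/2}}^2,
\end{align*}
and analogously for the large/small component projections. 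So the three estimates reduce to bounding $\|\cS u_m^{\rm HF}\|_{H^2}$, $\|\cK_\rL \cS u_m^{\rm HF}\|_{H^3}$, and $\|\cK_\rS \cS u_m^{\rm HF}\|_{H^2}$.

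For the first, from the definition \eqref{eq:S-transform} one has
\begin{align*}
    \|\cS u_m^{\rm HF}\|_{H^2}^2 = \|u_m^{\rm HF}\|_{H^2}^2 + \frac{1}{4c^2}\|\cL u_m^{\rm HF}\|_{H^2}^2 \lesssim \|u_m^{\rm HF}\|_{H^2}^2+\frac{1}{4c^2}\|u_m^{\rm HF}\|_{H^3}^2 = \cO(1),
\end{align*}
where I used $\cL^2=-\Delta$ on smooth enough spinors and Corollary \ref{cor:regularity-H3}. For the second, since $\cK_\rL \cS u_m^{\rm HF}=(u_m^{\rm HF},0)^T$, Assumption \ref{ass:V} together with Corollary \ref{cor:regularity-H3} directly give $\|\cK_\rL \cS u_m^{\rm HF}\|_{H^3}=\|u_m^{\rm HF}\|_{H^3}=\cO(1)$. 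For the third, $\cK_\rS \cS u_m^{\rm HF}=\bigl(0,\tfrac{1}{2c}\cL u_m^{\rm HF}\bigr)^T$, so
\begin{align*}
    \|\cK_\rS \cS u_m^{\rm HF}\|_{H^2}^2 = \frac{1}{4c^2}\|\cL u_m^{\rm HF}\|_{H^2}^2\lesssim \frac{1}{4c^2}\|u_m^{\rm HF}\|_{H^3}^2 = \cO(c^{-2}).
\end{align*}
Combining these with the reduction above gives the three claimed bounds.

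The only real obstacle is making sure the needed $H^3$-regularity of $u_m^{\rm HF}$ is available; this is precisely why Assumption \ref{ass:V} was added and why Corollary \ref{cor:regularity-H3} was proved. Once that regularity is in hand, the remaining work is routine Hilbert--Schmidt/trace-class bookkeeping together with the uniform bound on $S_{\rm HF}^{-1}$ from \eqref{eq:S-1-bound}.
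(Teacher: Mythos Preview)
Your proof is correct and follows essentially the same approach as the paper: both arguments reduce the three bounds to the $H^3$-regularity of the HF orbitals (Corollary \ref{cor:regularity-H3}) together with the uniform bound on $S_{\rm HF}^{-1}$ from \eqref{eq:S-1-bound}. The only cosmetic difference is that the paper packages the coefficient bound as the operator inequality $0\leq \widetilde{\gamma}_c^{\rm HF}\leq C_1\,\cS\gamma_*^{\rm HF}\cS^*$ (and similarly after sandwiching by $\cK_\rL,\cK_\rS$), and then bounds the traces of $(1-\Delta)^{s/2}\cS\gamma_*^{\rm HF}\cS^*(1-\Delta)^{s/2}$ directly in terms of $\|\gamma_*^{\rm HF}\|_{X^{s+2}}$ and $\|\gamma_*^{\rm HF}\|_{X^6}$, whereas you expand into the $q^2$ rank-one pieces and apply the triangle inequality; the resulting estimates are identical.
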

\begin{proof}
By \eqref{eq:S-1-bound}, 
\begin{align}\label{eq:4.37}
   0\leq  C_0 \cS\gamma_*^{\rm HF}\cS^*\leq \widetilde{\gamma}_c^{\rm HF}\leq C_1 \cS\gamma_*^{\rm HF}\cS^*.
\end{align}
This and $\cL^2=-\Delta$ give
\begin{align*}
     \|\widetilde{\gamma}_c^{\rm HF}\|_{X^4}&=\Tr_\cH\left[(1-\Delta)\widetilde{\gamma}_c^{\rm HF}(1-\Delta)\right]\\
     &\leq  C_1\Tr_\cH\left[(1-\Delta)\cS\gamma^{\rm HF}_*\cS^*(1-\Delta)\right]\\
     &= C_1\Tr_\cH\left[(1-\Delta)\gamma^{\rm HF}_*(1-\Delta)\right]+\frac{C_1}{4c^2}\Tr_\cH\left[(1-\Delta)\cL \gamma^{\rm HF}_*\cL(1-\Delta)\right]\\
     &\lesssim \|\gamma^{\rm HF}_*\|_{X^6}=\cO(1).
\end{align*}
where we used 
\begin{align*}
    \Tr_\cH\left[(1-\Delta)\cK_\rS\cS\gamma^{\rm HF}_*\cS^*\cK_\rS(1-\Delta)\right]=\frac{1}{4c^2}\Tr_\cH\left[(1-\Delta)\cL \gamma^{\rm HF}_*\cL(1-\Delta)\right]
\end{align*}
since the last $2\times 2$ submatrix of $\cK_\rS\cS\gamma_*^{\rm HF}\cS^*\cK_\rS$ is the first $2\times 2$ submatrix of $ \frac{1}{4c^2}\cL\gamma_*^{\rm HF}\cL$, i.e.,
\begin{align*}
    \cK_\rS\cS\gamma_*^{\rm HF}\cS^*\cK_\rS =\begin{pmatrix}
       0_{2\times 2} &0_{2\times 2}\\
       0_{2\times 2} & A^{\rm HF}
    \end{pmatrix},\qquad 
    \frac{1}{4c^2}\cL\gamma_*^{\rm HF}\cL=\begin{pmatrix}
       A^{\rm HF} &0_{2\times 2}\\
        0_{2\times 2} &  0_{2\times 2}
    \end{pmatrix}
\end{align*}
for some $A^{\rm HF}\in \mathfrak{S}_1(L^2(\R^3;\C^2))$. 

Next by \eqref{eq:4.37}, for $j\in\{\rL,\rS\}$,
\begin{align*}
      0\leq  C_0 \cK_j\cS\gamma_*^{\rm HF}\cS^* \cK_j\leq \cK_j \widetilde{\gamma}_c^{\rm HF}\cK_j \leq C_1 \cK_j\cS\gamma_*^{\rm HF}\cS^*\cK_j.
\end{align*}
As $\cK_\rL\cS\gamma_*^{\rm HF}\cS^*\cK_\rL=\gamma_*^{\rm HF}$, we get
\begin{align*}
    \|\cK_\rL \widetilde{\gamma}_c^{\rm HF}\cK_\rL \|_{X^6}&=\Tr_\cH\left[(1-\Delta)^{3/2}\cK_\rL \widetilde{\gamma}_c^{\rm HF}\cK_\rL (1-\Delta)^{3/2}\right]\\
    &\leq C_1\Tr_\cH\left[(1-\Delta)^{3/2}\cK_\rL \cS\gamma^{\rm HF}_*\cS^*\cK_\rL (1-\Delta)^{3/2}\right]\\
    &= C_1\Tr_\cH\left[(1-\Delta)^{3/2}\gamma^{\rm HF}_*(1-\Delta)^{3/2}\right]=C_1\|\gamma^{\rm HF}_*\|_{X^6}=\cO(1),
\end{align*}
and
\begin{align*}
    \|\cK_\rS \widetilde{\gamma}_c^{\rm HF}\cK_\rS \|_{X^4}&=\Tr_\cH\left[(1-\Delta)\cK_\rS \widetilde{\gamma}_c^{\rm HF}\cK_\rS (1-\Delta)\right]\\
    &\leq C_1\Tr_\cH\left[(1-\Delta)\cK_\rS \cS\gamma^{\rm HF}_*\cS^*\cK_\rS (1-\Delta)\right]\\
    &=\frac{C_1}{4c^2}\Tr_\cH\left[(1-\Delta)\cL\gamma^{\rm HF}_*\cL (1-\Delta)\right]=\frac{C_1}{4c^2}\|\gamma^{\rm HF}_*\|_{X^6}=\cO(c^{-2}).
\end{align*}
This ends the proof.
\end{proof}

By \eqref{eq:S-1}, the density matrix $\widetilde{\gamma}_c^{\rm HF}$ can be further approximated by  $ \widetilde{\widetilde{\gamma}}_c^{\rm HF}$ which is defined by
\begin{align}\label{eq:gammaHF''}
    \widetilde{\widetilde{\gamma}}_c^{\rm HF}:&=\Big(\left|\cS u_1^{\rm HF}\right>,\cdots, \left|\cS u_n^{\rm HF}\right>\Big)\left(\1_{q\times q}-\frac{1}{4c^2}\widetilde{S}_{\rm HF}\right)\begin{pmatrix}
        \left<\cS u_1^{\rm HF}\right|\\
        \vdots\\
        \left<\cS u_n^{\rm HF}\right|
    \end{pmatrix}\notag\\
    &= \cS \gamma_*^{\rm HF}\cS^*  -\frac{1}{4c^2}\sum_{1\leq m,n\leq q}  \left<\cL u_{m}^{\rm HF},\cL u_{n}^{\rm HF}\right>_\HL\left|\cS u_m^{\rm HF}\right>\left<\cS u_n^{\rm HF}\right|.
\end{align}
We have
\begin{lemma}\label{lem: gammaHF''-gammaHF'}
    For $c$ large enough, we have
    \begin{align*}
        \|\widetilde{\widetilde{\gamma}}_c^{\rm HF}- \widetilde{\gamma}_c^{\rm HF} \|_{X} =\cO(c^{-4}),\qquad \|\cK_\rL(\widetilde{\widetilde{\gamma}}_c^{\rm HF}- \widetilde{\gamma}_c^{\rm HF})\cK_\rL \|_{X^2} =\cO(c^{-4}).
    \end{align*}
\end{lemma}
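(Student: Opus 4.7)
The plan is to exploit the rank-$q$ structure of both density matrices and reduce everything to the Schatten-$1$ bound on the $q\times q$ matrix
\[
B:=S_{\rm HF}^{-1}-\left(\1_{q\times q}-\frac{1}{4c^2}\widetilde{S}_{\rm HF}\right),\qquad \|B\|_{\mathfrak{S}_1(\C^q)}=\cO(c^{-4}),
\]
which is precisely the content of \eqref{eq:S-1}. Subtracting \eqref{eq:gammaHF''} from \eqref{eq:gammaHF'} we immediately rewrite
\[
\widetilde{\gamma}_c^{\rm HF}-\widetilde{\widetilde{\gamma}}_c^{\rm HF}=\sum_{m,n=1}^{q}B_{m,n}\,\bigl|\cS u_m^{\rm HF}\bigr\rangle\bigl\langle\cS u_n^{\rm HF}\bigr|,
\]
so that the problem is reduced to controlling the finite sum by the size of the entries of $B$ (which are all $\cO(c^{-4})$, being bounded by $\|B\|_{\mathfrak{S}_1(\C^q)}$) together with uniform-in-$c$ regularity of the vectors $\cS u_n^{\rm HF}$.

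For the first estimate I would apply the triangle inequality in the $X$-norm, getting
\[
\bigl\|\widetilde{\gamma}_c^{\rm HF}-\widetilde{\widetilde{\gamma}}_c^{\rm HF}\bigr\|_{X}\leq \sum_{m,n=1}^q |B_{m,n}|\,\bigl\|(1-\Delta)^{1/4}\cS u_m^{\rm HF}\bigr\|_{\cH}\bigl\|(1-\Delta)^{1/4}\cS u_n^{\rm HF}\bigr\|_{\cH}.
\]
Since $\cS u_n^{\rm HF}=(u_n^{\rm HF},\tfrac{1}{2c}\cL u_n^{\rm HF})^{T}$, one has $\|\cS u_n^{\rm HF}\|_{H^{1/2}}\lesssim \|u_n^{\rm HF}\|_{H^{1/2}}+\tfrac{1}{2c}\|u_n^{\rm HF}\|_{H^{3/2}}=\cO(1)$ for $c$ large, by Corollary \ref{cor:regularity-H3}. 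Combined with the $\cO(c^{-4})$ bound on each $|B_{m,n}|$, this yields the first estimate.

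For the second estimate I would exploit the algebraic fact that $\cK_\rL\cS u_n^{\rm HF}=\cK_\rL u_n^{\rm HF}=u_n^{\rm HF}$ (since $u_n^{\rm HF}\in\cHn$), which gives the cleaner expression
\[
\cK_\rL\bigl(\widetilde{\gamma}_c^{\rm HF}-\widetilde{\widetilde{\gamma}}_c^{\rm HF}\bigr)\cK_\rL=\sum_{m,n=1}^q B_{m,n}\,\bigl|u_m^{\rm HF}\bigr\rangle\bigl\langle u_n^{\rm HF}\bigr|.
\]
Then the $X^2$-norm is bounded by
\[
\sum_{m,n=1}^q |B_{m,n}|\,\bigl\|(1-\Delta)^{1/2}u_m^{\rm HF}\bigr\|_{\cH}\bigl\|(1-\Delta)^{1/2}u_n^{\rm HF}\bigr\|_{\cH},
\]
and Corollary \ref{cor:regularity-H3} provides $\|u_n^{\rm HF}\|_{H^2}=\cO(1)$ (which holds already without Assumption \ref{ass:V}), so this is again $\cO(c^{-4})$.

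There is no serious obstacle here; the only subtlety is the bookkeeping between the matrix Schatten norm on $\C^q$ and pointwise entry bounds, but since $q$ is finite and fixed these are equivalent. The key input that does all the work is the Neumann-type expansion \eqref{eq:S-1} of $S_{\rm HF}^{-1}$, together with the uniform $H^2$-regularity of the HF orbitals from Corollary \ref{cor:regularity-H3}.
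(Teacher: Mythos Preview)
Your proof is correct and follows essentially the same approach as the paper: both arguments express the difference as a rank-$q$ operator with coefficient matrix $B=S_{\rm HF}^{-1}-(\1_{q\times q}-\tfrac{1}{4c^2}\widetilde{S}_{\rm HF})$, then use $\|B\|_{\mathfrak{S}_1(\C^q)}=\cO(c^{-4})$ from \eqref{eq:S-1} together with the uniform regularity of the HF orbitals from Corollary \ref{cor:regularity-H3}. The only cosmetic difference is that the paper first diagonalizes $B$ by a unitary $U$ and passes to rotated orbitals $v_n$ so as to write the difference as a diagonal sum $\sum_n\lambda_n(B)\,|\cS v_n\rangle\langle\cS v_n|$, whereas you bound the double sum entrywise via $|B_{m,n}|\leq\|B\|_{\mathfrak{S}_1(\C^q)}$; since $q$ is fixed, the two are equivalent.
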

\begin{proof}
    Let $A= S_{\rm HF}^{-1}-\left(\1_{q\times q}-\frac{1}{4c^2}\widetilde{S}_{\rm HF}\right)$. As a symmetric $q\times q$ matrix, there exists a unitary matrix $U$ such that
    \begin{align}\label{eq:A-S1}
        A=U^* {\rm Diag}(\lambda_1(A),\cdots,\lambda_q(A)) U
    \end{align}
    with $\lambda_1(A),\cdots, \lambda_q(A)$ being the eigenvalues of $A$. In particular, from \eqref{eq:S-1}, we infer
    \begin{align*}
        \|A\|_{\mathfrak{S}_1(\C^q)}=\sum_{1\leq n\leq q}|\lambda_n(A)|=\cO(c^{-4}).
    \end{align*}
    Let 
    \begin{align*}
        \begin{pmatrix}
            v_1\\
            \vdots\\
            v_q
        \end{pmatrix} = U \begin{pmatrix}
            u_1^{\rm HF}\\
            \vdots\\ 
             u_q^{\rm HF}
        \end{pmatrix}.
    \end{align*}
Then
\begin{align*}
    \widetilde{\widetilde{\gamma}}_c^{\rm HF}- \widetilde{\gamma}_c^{\rm HF}= \sum_{n=1}^q \lambda_n(A) \left|\cS v_n\right>\left< \cS v_n\right|.
\end{align*}
By \eqref{eq:A-S1} and Lemma \ref{cor:regularity-H3},
\begin{align*}
    \|\cK_\rL (\widetilde{\widetilde{\gamma}}_c^{\rm HF}- \widetilde{\gamma}_c^{\rm HF}
    )\cK_\rL \|_{X^2} \leq \sum_{n=1}^q |\lambda_n(A)|\|v_n\|_{H^1}^2\lesssim \|A\|_{\mathfrak{S}_1(\C^q)}= \cO(c^{-4})
\end{align*}
and
\begin{align*}
         \|\widetilde{\widetilde{\gamma}}_c^{\rm HF}- \widetilde{\gamma}_c^{\rm HF}\|_{X} \leq \sum_{n=1}^q |\lambda_n(A)|\|\cS v_n\|_{H^{1/2}}^2= \cO(c^{-4}) .
\end{align*}
Here we used
\begin{align*}
    |\lambda_n(A)|\leq \|A\|_{\cB(\C^q)}=\cO(c^{-4}).
\end{align*}
This ends the proof.
\end{proof}
Thus,
\begin{lemma}\label{lem:3.13}
    For $c$ large enough, we have
    \begin{align*}
        \cE_{c}(\widetilde{\gamma}_c^{\rm HF})&= \cE_{c}(\widetilde{\widetilde{\gamma}}_c^{\rm HF})+\cO(c^{-4})= E_q^{\rm HF}+\cE^{(2)}_c(\gamma_*^{\rm HF})+\cO(c^{-4}).
    \end{align*}
\end{lemma}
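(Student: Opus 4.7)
For the first equality, I will exploit the multilinearity of $\cE_c$. Writing $\Delta\gamma:=\widetilde\gamma_c^{\rm HF}-\widetilde{\widetilde\gamma}_c^{\rm HF}$ and using the symmetry $\Tr[W_{\gamma_1}\gamma_2]=\Tr[W_{\gamma_2}\gamma_1]$, we get
\[
\cE_c(\widetilde\gamma_c^{\rm HF})-\cE_c(\widetilde{\widetilde\gamma}_c^{\rm HF})=\Tr_\cH[(\cD-c^2)\Delta\gamma]-\Tr_\cH[V\Delta\gamma]+\tfrac{1}{2}\Tr_\cH\big[W_{\Delta\gamma}\big(\widetilde\gamma_c^{\rm HF}+\widetilde{\widetilde\gamma}_c^{\rm HF}\big)\big].
\]
Since $\Delta\gamma=\sum_{m,n}a_{mn}\lvert \cS u_m^{\rm HF}\rangle\langle \cS u_n^{\rm HF}\rvert$ with $\|(a_{mn})\|_{\mathfrak S_1(\C^q)}=\cO(c^{-4})$ by \eqref{eq:S-1}, the kinetic contribution reduces via $(\cD-c^2)\cS u=H_0\cK_\rL u$ to $\sum_{m,n}a_{mn}\langle u_n^{\rm HF},H_0 u_m^{\rm HF}\rangle_{\cH_\rL}$, which is $\cO(c^{-4})$ by the uniform $H^2$-bounds on $u_n^{\rm HF}$ (Corollary \ref{cor:regularity-H3}). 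The $V$- and $W$-terms are controlled analogously, using Hardy's inequality and Lemma \ref{lem:W2-estimate} together with the $X$- and $\cK_\rL X^2\cK_\rL$-estimates on $\Delta\gamma$ (Lemma \ref{lem: gammaHF''-gammaHF'}) and the $X^4$-bounds for $\widetilde\gamma_c^{\rm HF}+\widetilde{\widetilde\gamma}_c^{\rm HF}$ (Lemma \ref{lem:gammaHF'-estimate}).

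For the second equality, I expand $\widetilde{\widetilde\gamma}_c^{\rm HF}=\cS\gamma_*^{\rm HF}\cS^*-\tfrac{1}{4c^2}M$ with $M:=\sum_{m,n}\langle \cL u_m^{\rm HF},\cL u_n^{\rm HF}\rangle_{\cH_\rL}\lvert \cS u_m^{\rm HF}\rangle\langle \cS u_n^{\rm HF}\rvert$ and compute each contribution to $\cE_c(\widetilde{\widetilde\gamma}_c^{\rm HF})$ up to order $c^{-2}$. The key tools are the exact identity $\Tr_\cH[(\cD-c^2)\cS\gamma_*^{\rm HF}\cS^*]=\Tr_\cH[H_0\gamma_*^{\rm HF}]$, the block decomposition
\[
\cS\gamma_*^{\rm HF}\cS^*=\gamma_*^{\rm HF}+\tfrac{1}{2c}\big(\cK_\rS\cL\gamma_*^{\rm HF}\cK_\rL+\cK_\rL\gamma_*^{\rm HF}\cL\cK_\rS\big)+\tfrac{1}{4c^2}\cK_\rS\cL\gamma_*^{\rm HF}\cL\cK_\rS,
\]
and the fact that $V$ and $W_{1,\cdot}$ are block-diagonal in $\cK_\rL,\cK_\rS$. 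These give $-\Tr[V\cS\gamma_*^{\rm HF}\cS^*]=-\Tr[V\gamma_*^{\rm HF}]-\tfrac{1}{4c^2}\Tr[V\cL\gamma_*^{\rm HF}\cL]$ exactly, $\tfrac{1}{2}\Tr[W_{1,\cS\gamma_*^{\rm HF}\cS^*}\cS\gamma_*^{\rm HF}\cS^*]=\tfrac{1}{2}\Tr[W_{1,\gamma_*^{\rm HF}}\gamma_*^{\rm HF}]+\tfrac{1}{4c^2}\Tr[W_{1,\gamma_*^{\rm HF}}\cL\gamma_*^{\rm HF}\cL]+\cO(c^{-4})$, and an analogous expansion of the exchange term (see next paragraph). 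Collecting the $\cO(1)$ pieces yields $\cE^{\rm HF}(\gamma_*^{\rm HF})=E_q^{\rm HF}$. The contributions from $-\tfrac{1}{4c^2}M$ combine, by invoking the HF eigenvalue relation $(H_0-V+W_{\gamma_*^{\rm HF}})u_m^{\rm HF}=\lambda_m^{\rm HF}u_m^{\rm HF}$ on the cross terms, into exactly $-\tfrac{1}{4c^2}\sum_n\lambda_n^{\rm HF}\|\cL u_n^{\rm HF}\|_{\cH_\rL}^2$, matching the first line of $E^{(2)}(\gamma_*^{\rm HF})$. The remaining terms from $\cS\gamma_*^{\rm HF}\cS^*$ give the second line.

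The main obstacle is the bookkeeping for the Fock exchange term $-\tfrac{1}{2}\Tr[W_{2,\cS\gamma_*^{\rm HF}\cS^*}\cS\gamma_*^{\rm HF}\cS^*]$, whose kernel in $4\times 4$ blocks has four nonzero entries. The $\cO(c^{-1})$ off-diagonal blocks $\frac{1}{2c}(\gamma_*^{\rm HF}\cL)(x,y)$ and $\frac{1}{2c}(\cL\gamma_*^{\rm HF})(x,y)$ do not contribute to $\Tr_{\C^4}$ at that order, but they produce two equal $\cO(c^{-2})$ contributions through their product, yielding together $-\tfrac{1}{4c^2}\Tr[W_{2,\gamma_*^{\rm HF}\cL}\cL\gamma_*^{\rm HF}]$ after a change of variables $x\leftrightarrow y$ and the cyclicity of the $\C^2$-trace combined with the self-adjointness $(\gamma_*^{\rm HF}\cL)^*=\cL\gamma_*^{\rm HF}$. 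All $\cO(c^{-4})$ remainders (in particular those arising from the $\cS\cdot\cS^*$ blocks of order $c^{-2}$ paired against the $c^{-2}$ from $M$) are controlled using Corollary \ref{cor:lambda-non-gamma'}, Lemma \ref{lem:W2-estimate}, and the $H^3$-regularity of the $u_n^{\rm HF}$ guaranteed by Assumption \ref{ass:V} through Corollary \ref{cor:regularity-H3}.
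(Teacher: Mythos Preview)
Your proposal is correct and follows essentially the same approach as the paper. Both proofs use the multilinear expansion $\cE_c(\gamma+h)-\cE_c(\gamma)=\Tr[(\cD-c^2)h]-\Tr[Vh]+\tfrac12\Tr[W_{2\gamma+h}h]$ together with Lemma~\ref{lem: gammaHF''-gammaHF'} for the first equality, and for the second equality both expand $\widetilde{\widetilde\gamma}_c^{\rm HF}=\cS\gamma_*^{\rm HF}\cS^*-\tfrac{1}{4c^2}M$, use the exact identity $(\cD-c^2)\cS u=H_0\cK_\rL u$, the block structure of $\cS\gamma_*^{\rm HF}\cS^*$, and the HF eigenvalue relation to combine the cross terms into $-\tfrac{1}{4c^2}\sum_n\lambda_n^{\rm HF}\|\cL u_n^{\rm HF}\|^2$. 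The only difference is organizational: the paper processes each physical term (kinetic, $V$, $W_1$, $W_2$) and rewrites the HF quantity in terms of $\widetilde{\widetilde\gamma}_c^{\rm HF}$, whereas you separate the $\cS\gamma_*^{\rm HF}\cS^*$ and $M$ contributions; the algebra is identical. One minor remark: the $\cO(c^{-4})$ remainders you mention are controlled by Lemma~\ref{lem:gammaHF'-estimate} and direct block estimates on $\cS u_n^{\rm HF}$ rather than Corollary~\ref{cor:lambda-non-gamma'} (which concerns $\Lambda_c^\pm$), but this does not affect the argument.
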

\begin{proof}
We have
\begin{align*}
    \cE_{c}(\widetilde{\gamma}_c^{\rm HF}) - \cE_{c}(\widetilde{\widetilde{\gamma}}_c^{\rm HF})&= \Tr_{\cH}[(\cD-c^2)(\widetilde{\gamma}_c^{\rm HF} - \widetilde{\widetilde{\gamma}}_c^{\rm HF})]+\Tr_{\cH}[(-V+W_{\widetilde{\gamma}_c^{\rm HF} + \widetilde{\widetilde{\gamma}}_c^{\rm HF}})(\widetilde{\gamma}_c^{\rm HF} - \widetilde{\widetilde{\gamma}}_c^{\rm HF})].
\end{align*}
By \eqref{eq:D-Delta} and \eqref{lem: gammaHF''-gammaHF'}, 
\begin{align*}
    \left|\Tr_{\cH}[(\cD-c^2)(\widetilde{\gamma}_c^{\rm HF} - \widetilde{\widetilde{\gamma}}_c^{\rm HF})]\right|&=  \left|\Tr_{\cH}[H_0\cK_{\rL}(\widetilde{\gamma}_c^{\rm HF} - \widetilde{\widetilde{\gamma}}_c^{\rm HF})\cK_\rL]\right|\\
    &\leq  \|\cK_\rL (\widetilde{\widetilde{\gamma}}_c^{\rm HF}- \widetilde{\gamma}_c^{\rm HF}
    )\cK_\rL \|_{X^2}=\cO(c^{-4})
\end{align*}
and by Kato's inequality and \eqref{eq:5.1-1},
\begin{align*}
    \left|\Tr_{\cH}\left[\left(-V+\frac{1}{2}W_{\widetilde{\gamma}_c^{\rm HF} + \widetilde{\widetilde{\gamma}}_c^{\rm HF}}\right)(\widetilde{\gamma}_c^{\rm HF} - \widetilde{\widetilde{\gamma}}_c^{\rm HF})\right]\right| \lesssim  \|\widetilde{\widetilde{\gamma}}_c^{\rm HF}- \widetilde{\gamma}_c^{\rm HF} \|_{X} =\cO(c^{-4}).
\end{align*}
Thus,
\begin{align*}
     \cE_{c}(\widetilde{\gamma}_c^{\rm HF})&= \cE_{c}(\widetilde{\widetilde{\gamma}}_c^{\rm HF})+\cO(c^{-4}).
\end{align*}

Now we prove
\begin{align*}
   \cE_{c}(\widetilde{\widetilde{\gamma}}_c^{\rm HF})=\cE^{\rm HF}(\gamma_*^{\rm HF})+\cE^{(2)}_c(\gamma_*^{\rm HF})+\cO(c^{-4}).
\end{align*}
By \eqref{eq:D-Delta}, \eqref{eq:gammaHF''} and $\cL^2=2H_0$, we have
\begin{align*}
    \Tr_\cH(H_0 \gamma^{\rm HF}_*)&= \Tr_\cH[(\cD-c^2)\cS \gamma^{\rm HF}_* \cS^* ]\\
    &= \Tr_\cH[(\cD-c^2) \widetilde{\widetilde{\gamma}}_c^{\rm HF}] +\frac{1}{2c^2}\sum_{1\leq m,n\leq q}  \left|\left<u_m^{\rm HF}, H_0 u_n^{\rm HF}\right>_\HL\right|^2.
\end{align*}
Next by \eqref{eq:gammaHF''} again,
\begin{align*}
    \Tr_\cH(V\gamma^{\rm HF}_*) &= \Tr_\cH(V\cS \gamma^{\rm HF}_* \cS^* )-\frac{1}{4c^2}\sum_{n=1}^q\left<\cL u_n^{\rm HF}, V \cL u_n^{\rm HF}\right>_\HL\\
    &= \Tr_\cH(V\widetilde{\widetilde{\gamma}}_c^{\rm HF})-\frac{1}{4c^2} \Tr_\cH [V \cL \gamma_*^{\rm HF}\cL] \\
    &\quad+\frac{1}{2c^2}\sum_{1\leq m,n\leq q}\left<u_{m}^{\rm HF},H_0 u_n^{\rm HF}\right>_\HL\left< u_n^{\rm HF}, V  u_m^{\rm HF}\right>_{\HL}+\cO(c^{-4}).
\end{align*}
Analogously, we also have
\begin{align*}
  \MoveEqLeft  \frac{1}{2}\Tr_\cH[W_{1,\gamma^{\rm HF}_*}\gamma^{\rm HF}_*]\\
    &=\frac{1}{2}\Tr_\cH[W_{1,\cS \gamma_*^{\rm HF}\cS^* }\cS \gamma_*^{\rm HF}\cS^* ]-\frac{1}{4c^2}\Tr_{\cH}[W_{1,\gamma_*^{\rm HF}}\cL \gamma_*^{\rm HF}\cL]\\
    &\quad-\frac{1}{32 c^4}\Tr[W_{1,\cL \gamma_*^{\rm HF}\cL}\cL \gamma_*^{\rm HF}\cL]\\
    &=\frac{1}{2}\Tr_\cH[W_{1,\widetilde{\widetilde{\gamma}}_c^{\rm HF}}\widetilde{\widetilde{\gamma}}_c^{\rm HF} ]-\frac{1}{4c^2} \Tr_\cH [W_{1,\gamma^{\rm HF}_*} \cL \gamma_*^{\rm HF}\cL] \\
    &\quad + \frac{1}{2c^2}\sum_{1\leq m,n\leq q}\left<u_{m}^{\rm HF},H_0 u_n^{\rm HF}\right>_\HL\left< u_n^{\rm HF}, W_{1,\gamma_*^{\rm HF}}  u_m^{\rm HF}\right>_{\HL}  +\cO(c^{-4}),
\end{align*}
since $u_n^{\rm HF}\in H^2$. Finally, concerning the term $W_{2,\cdot}$, by \eqref{eq:W-K1-K2}, Lemma \ref{lem:W2-estimate} and \ref{lem:gammaHF'-estimate} we have
\begin{align*}
 \MoveEqLeft \frac{1}{2}\Tr_\cH[W_{2,\gamma_*^{\rm HF}}\gamma_*^{\rm HF}]\\
    &=\frac{1}{2}\Tr_\cH[W_{2,\cS \gamma_*^{\rm HF}\cS^* }\cS \gamma_*^{\rm HF}\cS^* ] -\frac{1}{4c^2}\Tr_\cH[W_{2, \gamma_*^{\rm HF}\cL} \cL \gamma_*^{\rm HF}]\\
    &\quad-\frac{1}{32c^4}\Tr_\cH[W_{2, \cL\gamma_*^{\rm HF}\cL} \cL \gamma_*^{\rm HF}\cL]\\
    &=\frac{1}{2}\Tr_\cH[W_{2,\cS \gamma_*^{\rm HF}\cS^*}\cS \gamma_*^{\rm HF}\cS^* ] -\frac{1}{4c^2}\Tr_\cH[W_{2, \gamma_*^{\rm HF}\cL} \cL \gamma_*^{\rm HF}]+\cO(c^{-4})\\
    &= \frac{1}{2}\Tr_\cH[W_{2,\widetilde{\widetilde{\gamma}}_c^{\rm HF}}\widetilde{\widetilde{\gamma}}_c^{\rm HF} ] -\frac{1}{4c^2}\Tr_\cH[W_{2, \gamma_*^{\rm HF}\cL} \cL \gamma_*^{\rm HF}]\\
    &\quad +\frac{1}{2c^2}\sum_{1\leq m,n\leq q}\left<u_{m}^{\rm HF},H_0 u_n^{\rm HF}\right>_\HL\left< u_n^{\rm HF}, W_{2,\gamma_*^{\rm HF}}  u_m^{\rm HF}\right>_{\HL} +\cO(c^{-4}).
\end{align*}
Gathering these estimates together, we conclude that
\begin{align*}
     \cE_{c}(\widetilde{\widetilde{\gamma}}_c^{\rm HF})&=\cE^{\rm HF}(\gamma_*^{\rm HF})- \frac{1}{2c^2}\sum_{1\leq m, n\leq q}\left<u_{m}^{\rm HF},H_0 u_n^{\rm HF}\right>_\HL\left< u_n, H_{0,\gamma_*^{\rm HF}}  u_m\right>_{\HL}\\
     &\quad+\frac{1}{4c^2}\left(\Tr_\cH[ (-V+W_{1,\gamma_*^{\rm HF}})\cL\gamma_*^{\rm HF}\cL]-\Tr_\cH[W_{2, \gamma_*^{\rm HF}\cL} \cL \gamma_*^{\rm HF}]\right)+\cO(c^{-4})\\
     &= \cE^{\rm HF}(\gamma_*^{\rm HF}) -\frac{1}{4c^2}\sum_{n=1}^q \lambda_{n}^{\rm HF}\left<\cL u_{n}^{\rm HF},\cL  u_n^{\rm HF}\right>_\HL\\
     &\quad+\frac{1}{4c^2}\left(\Tr_\cH[ (-V+W_{1,\gamma_*^{\rm HF}})\cL\gamma_*^{\rm HF}\cL]-\Tr_\cH[W_{2, \gamma_*^{\rm HF}\cL} \cL \gamma_*^{\rm HF}]\right)+\cO(c^{-4})\\
     &= \cE^{\rm HF}(\gamma_*^{\rm HF}) +\cE^{(2)}_c(\gamma_*^{\rm HF})+\cO(c^{-4}).
\end{align*}
Here according to Remark \ref{rem:non-unfill}, we used that $u_1^{\rm HF},\cdots, u_q^{\rm HF} $ are orthonormal eigenfunctions of $H_{0,\gamma_*^{\rm HF}}$ and the fact that
\begin{align*}
    \sum_{n=1}^q \lambda_{n}^{\rm HF}\left<\cL u_{n}^{\rm HF},\cL  u_n^{\rm HF}\right>_\HL= \sum_{n=1}^q \left<\cL^2 u_{n}^{\rm HF}, H_{0,\gamma_*^{\rm HF}} u_n^{\rm HF}\right>_\HL=\Tr_{\cH}[H_{0,\gamma_*^{\rm HF}}\gamma_*^{\rm HF} \cL^2].
\end{align*}
This ends the proof.
\end{proof}

\medskip

Now we consider \eqref{EDF-EHF} with $\gamma=\widetilde{\gamma}_c^{\rm HF}$, and we split the proof into the following three parts.
\subsubsection{Kinetic term}\label{sec:4.3.2}
Concerning the kinetic term, by \eqref{eq:D-Delta}, H\"older's inequality \eqref{eq:Holder}, Corollary \ref{cor:lambda-non-gamma'} and as $[\cD,\Lambda^\pm_c]=0$, we infer
\begin{align}\label{eq:D-c2-gamma'}
  \MoveEqLeft  \left|\Tr_\cH[(\cD-c^2)\Lambda^+_c \widetilde{\gamma}_c^{\rm HF}\Lambda^+_c] -\Tr_\cH[(\cD-c^2) \widetilde{\gamma}_c^{\rm HF}]\right|\notag\\
  &= \left|\Tr_\cH[(\cD-c^2)\Lambda^-_c \widetilde{\gamma}_c^{\rm HF}]\right|=\left|\Tr_\cH[\Lambda^-_c \widetilde{\gamma}_c^{\rm HF} (\cD-c^2)]\right|\notag\\
  &= \left|\Tr_\cH[\cK_\rL \Lambda^-_c \widetilde{\gamma}_c^{\rm HF}\cK_\rL H_0]\right|=\frac{1}{2} \left|\Tr_\cH[(-\Delta)^{-1/2}\cK_\rL \Lambda^-_c \widetilde{\gamma}_c^{\rm HF}\cK_\rL (-\Delta)^{3/2}]\right|\notag\\
  &\leq \frac{1}{2}\|(-\Delta)^{-1/2}\cK_\rL \Lambda_c^- \widetilde{\gamma}_c^{\rm HF}\Lambda_c^-\cK_\rL(-\Delta)^{-1/2}\|_{\mathfrak{S}_1}^{1/2}\|\cK_\rL \widetilde{\gamma}_c^{\rm HF}\cK_\rL\|_{X^6}^{1/2}=\cO(c^{-4}).
\end{align}

\subsubsection{Potential between electrons and nuclei}\label{sec:4.3.3}
Concerning the potential between electrons and nuclei, we have the following.
\begin{lemma}\label{lem:4.4}
Under Assumption \ref{ass:V}, for $c$ large enough we have
    \begin{align*}
         \Tr_{\cH}[V\widetilde{\gamma}_c^{\rm HF}]-  \Tr_{\cH}[V\Lambda^+_c\widetilde{\gamma}_c^{\rm HF} \Lambda^+_c]=\cO(c^{-4})
    \end{align*}
\end{lemma}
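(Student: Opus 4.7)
Following \eqref{eq:V1}, my plan is to split
\[
\Tr_\cH[V\widetilde{\gamma}_c^{\rm HF}]-\Tr_\cH[V\Lambda^+_c\widetilde{\gamma}_c^{\rm HF}\Lambda^+_c]=\Tr_\cH[V\Lambda^-_c\widetilde{\gamma}_c^{\rm HF}]+\Tr_\cH[V\Lambda^+_c\widetilde{\gamma}_c^{\rm HF}\Lambda^-_c]
\]
and prove each piece is $\cO(c^{-4})$. Unlike in Lemma~\ref{lem:error-V}, the projector $\widetilde{\gamma}_c^{\rm HF}$ is not supported on the upper Pauli component, so both the $\cK_\rL$ and $\cK_\rS$ contributions in the identity \eqref{eq:V-K1-K2} must be handled. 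To exploit the finer decay of $\Lambda^-_c\cS u$ from Lemma~\ref{lem:lambda+-non'}, I would pass to a spectral representation of the rank-$q$ projector (Lemma~\ref{lem:gammaHF'-projection}): setting $\tilde{v}_n:=\sum_m (S_{\rm HF}^{-1/2})_{n,m}\cS u_m^{\rm HF}$, the family $\{\tilde{v}_n\}_{n=1}^q$ is an orthonormal basis of $\mathrm{Ran}(\widetilde{\gamma}_c^{\rm HF})$. Since $S_{\rm HF}^{-1/2}=\1_{q\times q}+\cO(c^{-2})$ by \eqref{eq:S} and \eqref{eq:diagonally dominated1}, and Corollary~\ref{cor:regularity-H3} gives $u_m^{\rm HF}\in H^3$ under Assumption~\ref{ass:V}, one reads off from \eqref{eq:S-transform} that
\[
\|\cK_\rL\tilde{v}_n\|_{H^k}=\cO(1)\quad (k\le 3),\qquad \|\cK_\rS\tilde{v}_n\|_{H^k}=\cO(c^{-1})\quad (k\le 2).
\]

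Next I would rewrite
\[
\Tr_\cH[V\Lambda^-_c\widetilde{\gamma}_c^{\rm HF}]=\sum_{n,j}\lc V\cK_j\tilde{v}_n,\cK_j\Lambda^-_c\tilde{v}_n\rc_\cH,\quad \Tr_\cH[V\Lambda^+_c\widetilde{\gamma}_c^{\rm HF}\Lambda^-_c]=\sum_{n,j}\lc V\cK_j\Lambda^+_c\tilde{v}_n,\cK_j\Lambda^-_c\tilde{v}_n\rc_\cH,
\]
and estimate each inner product by Cauchy--Schwarz. For the $j=\rL$ terms I would pair $(-\Delta)^{1/2}V\cK_\rL(\cdot)$ against $(-\Delta)^{-1/2}\cK_\rL\Lambda^-_c\tilde{v}_n$: the second factor is $\cO(c^{-4})$ by the first inequality of Lemma~\ref{lem:lambda+-non'}, while the first is $\lesssim\|\cK_\rL(\cdot)\|_{H^2}=\cO(1)$ thanks to
\[
\|\nabla(V\phi)\|_\cH\le \|(\nabla V)\phi\|_\cH+\|V\nabla\phi\|_\cH\lesssim \|\phi\|_{H^2},
\]
where the first summand is controlled by Assumption~\ref{ass:V} and the second by Hardy's inequality \eqref{eq:Hardy}; the $H^2$-bound on $\cK_\rL\Lambda^+_c\tilde{v}_n$ in the cross term follows from the second inequality of Lemma~\ref{lem:lambda+-non'}. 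For the $j=\rS$ terms I would instead pair $V\cK_\rS(\cdot)$ directly with $\cK_\rS\Lambda^-_c\tilde{v}_n$, using Hardy to get $\|V\cK_\rS\psi\|_\cH\lesssim\|\cK_\rS\psi\|_{H^1}$; the $\cK_\rS$-components carry a built-in $c^{-1}$ from the definition of $\cS$ (and $\|\cK_\rS\Lambda^+_c\cS u\|_{H^1}\lesssim c^{-1}$ by Lemma~\ref{lem:lambda+-non'}), while $\|\cK_\rS\Lambda^-_c\tilde{v}_n\|_\cH=\cO(c^{-3})$, so each term is $\cO(c^{-1}\cdot c^{-3})=\cO(c^{-4})$. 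Summing over $n$ gives the lemma.

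The main obstacle, and the reason Assumption~\ref{ass:V} cannot be dropped here, lies in the large-component estimate: the sharp decay $\|\Lambda^-_c\cS u_m^{\rm HF}\|\sim c^{-4}$ of Lemma~\ref{lem:lambda+-non'} is only available for the $(-\Delta)^{-1/2}$-weighted norm. To absorb that weight one has to move a full derivative through the singular potential $V$, which requires a quantitative bound on $\nabla V$ in $\mathcal{B}(H^2,\cH)$ --- precisely what Assumption~\ref{ass:V} provides. Without it, the naive Hardy argument only yields $\cO(c^{-2})$, matching Lemma~\ref{lem:error-V} for the unrenormalized $\gamma_*^{\rm HF}$ but falling short of the rate needed for \eqref{eq:E_c-EHF-c2} in Theorem~\ref{th:3.1}.
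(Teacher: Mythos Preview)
Your proof is correct and follows essentially the same approach as the paper: the same splitting \eqref{eq:V1'}, the same block decomposition via $\cK_\rL,\cK_\rS$, and the same key inputs (Assumption~\ref{ass:V} to push a derivative through $V$, the $c^{-4}$ and $c^{-3}$ decay from Lemma~\ref{lem:lambda+-non'} for the $\cK_\rL$ and $\cK_\rS$ parts of $\Lambda^-_c\cS u$, and the $H^3$-regularity of Corollary~\ref{cor:regularity-H3}). The only cosmetic difference is that the paper works at the density-matrix level, invoking Corollary~\ref{cor:lambda-non-gamma'}, Lemma~\ref{lem:gammaHF'-estimate} and the Schatten--H\"older inequality \eqref{eq:Holder}, whereas you first orthonormalize via $S_{\rm HF}^{-1/2}$ and apply Lemma~\ref{lem:lambda+-non'} directly to each orbital with ordinary Cauchy--Schwarz; these are equivalent packagings of the same estimate.
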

\begin{proof}
We have
\begin{align}\label{eq:V1'}
    \Tr_{\cH}[V\widetilde{\gamma}_c^{\rm HF}]-  \Tr_{\cH}[V\Lambda^+_c\widetilde{\gamma}_c^{\rm HF}\Lambda^+_c]=\Tr_{\cH}[V \Lambda^-_c \widetilde{\gamma}_c^{\rm HF}]+ \Tr_{\cH}[V\Lambda^+_c\widetilde{\gamma}_c^{\rm HF} \Lambda^-_c].
\end{align}
We first consider $\Tr_{\cH}[V \Lambda^-_c \widetilde{\gamma}_c^{\rm HF}]$. By \eqref{eq:V-K1-K2},
\begin{align*}
    \Tr_{\cH}[V \Lambda^-_c \widetilde{\gamma}_c^{\rm HF}]=\Tr_{\cH}[  (-\Delta)^{-1/2}\cK_\rL\Lambda^-_c \widetilde{\gamma}_c^{\rm HF}\cK_\rL V (-\Delta)^{1/2}]+\Tr_{\cH}[ \cK_\rS\Lambda^-_c \widetilde{\gamma}_c^{\rm HF}\cK_\rS V].
\end{align*}
Then by H\"older's inequality \eqref{eq:Holder}, Corollary \ref{cor:lambda-non-gamma'} and Lemma \ref{lem:gammaHF'-estimate},
\begin{align*}
    \left| \Tr_{\cH}[V \Lambda^-_c \widetilde{\gamma}_c^{\rm HF}]\right|&\lesssim \|(-\Delta)^{-1/2}\cK_\rL\Lambda^-_c \widetilde{\gamma}_c^{\rm HF} \Lambda^-_c\cK_\rL (-\Delta)^{-1/2}\|_{\mathfrak{S}_1}^{1/2}\|\cK_\rL\widetilde{\gamma}_c^{\rm HF}\cK_\rL\|_{X^4}^{1/2}\\
    &\quad+\|\cK_\rS\Lambda^-_c \widetilde{\gamma}_c^{\rm HF} \Lambda^-_c\cK_\rS\|_{\mathfrak{S}_1}^{1/2}\|\cK_\rS\widetilde{\gamma}_c^{\rm HF}\cK_\rS\|_{X^2}^{1/2}=\cO(c^{-4}).
\end{align*}
Here in the first inequality, by Assumption \ref{ass:V} and Hardy's inequality,
\begin{align}\label{eq:nabla-V}
    \|(-\Delta)^{1/2} V u\|_{\cH}=\|\nabla (Vu)\|_{\cH}\leq \|(\nabla V) u\|_\cH+\|V \nabla u\|_{\cH}\lesssim \|(1-\Delta)u\|_{\cH}.
\end{align}

\medskip

Next, for the term $\Tr_{\cH}[V\Lambda^+_c\widetilde{\gamma}_c^{\rm HF} \Lambda^-_c]$, we have
\begin{align*}
    \Tr_{\cH}[V \Lambda^+_c \widetilde{\gamma}_c^{\rm HF}\Lambda^-_c]&\lesssim \|(-\Delta)^{-1/2}\cK_\rL\Lambda^-_c \widetilde{\gamma}_c^{\rm HF} \Lambda^-_c\cK_\rL (-\Delta)^{-1/2}\|_{\mathfrak{S}_1}^{1/2}\|\cK_\rL \Lambda^+_c\widetilde{\gamma}_c^{\rm HF}\Lambda^+_c \cK_\rL\|_{X^4}^{1/2}\\
    &\quad+\|\cK_\rS\Lambda^-_c \widetilde{\gamma}_c^{\rm HF} \Lambda^-_c\cK_\rS\|_{\mathfrak{S}_1}^{1/2}\|\cK_\rS\Lambda^+_c \widetilde{\gamma}_c^{\rm HF}\Lambda^+_c \cK_\rS\|_{X^2}^{1/2}=\cO(c^{-4}).
\end{align*}
Finally, using \eqref{eq:V1'}, we get
\begin{align*}
         \Tr_{\cH}[V\widetilde{\gamma}_c^{\rm HF}]-  \Tr_{\cH}[V\Lambda^+_c\widetilde{\gamma}_c^{\rm HF} \Lambda^+_c]=\cO(c^{-4}).
\end{align*}
This ends the proof.
\end{proof}

\subsubsection{Potential between electrons and electrons}\label{sec:4.3.4}
We now consider the estimates associated with $W_{\widetilde{\gamma}_c^{\rm HF}}$. Analogous to Section \ref{sec:3.1.3}, we have
\begin{align*}
\MoveEqLeft    \Tr_{\cH}[W_{\widetilde{\gamma}_c^{\rm HF}}\widetilde{\gamma}_c^{\rm HF}]-\Tr_{\cH}[W_{\Lambda^+_c\widetilde{\gamma}_c^{\rm HF} \Lambda^+_c}\Lambda^+_c\widetilde{\gamma}_c^{\rm HF} \Lambda^+_c]\\
    &= \Tr_{\cH}[W_{\widetilde{\gamma}_c^{\rm HF}}(\widetilde{\gamma}_c^{\rm HF}-\Lambda^+_c\widetilde{\gamma}_c^{\rm HF} \Lambda^+_c)]+\Tr_{\cH}[W_{\widetilde{\gamma}_c^{\rm HF}-\Lambda^+_c\widetilde{\gamma}_c^{\rm HF} \Lambda^+_c}\Lambda^+_c\widetilde{\gamma}_c^{\rm HF} \Lambda^+_c].
\end{align*}
We will study these two terms separately.

Concerning the first term on the right-hand side, we have
\begin{lemma}\label{lem:4.5}
    Under Assumption \ref{ass:V}, for $c$ large enough,
    \begin{align*}
       \Tr_{\cH}[W_{\widetilde{\gamma}_c^{\rm HF}}(\widetilde{\gamma}_c^{\rm HF}-\Lambda^+_c\widetilde{\gamma}_c^{\rm HF} \Lambda^+_c)]=\cO(c^{-4}).
    \end{align*}
\end{lemma}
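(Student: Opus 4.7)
The strategy is to mimic the proof of Lemma \ref{lem:4.4} after splitting $W_{\widetilde{\gamma}_c^{\rm HF}}=W_{1,\widetilde{\gamma}_c^{\rm HF}}-W_{2,\widetilde{\gamma}_c^{\rm HF}}$ via \eqref{eq:W-dec}, since the direct (Hartree) and exchange pieces have rather different structures and must be treated separately.

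For the direct term, the plan is to observe that $W_{1,\widetilde{\gamma}_c^{\rm HF}}$ is simply multiplication by the smooth convolution $\rho_{\widetilde{\gamma}_c^{\rm HF}}\ast |\cdot|^{-1}$, and hence satisfies the same kind of bound as $V$ under Assumption \ref{ass:V}. More precisely, \eqref{eq:W1-estimate} together with Lemma \ref{lem:gammaHF'-estimate} yields $\|\nabla W_{1,\widetilde{\gamma}_c^{\rm HF}}\|_{L^\infty}\lesssim \|\widetilde{\gamma}_c^{\rm HF}\|_{X^2}=\cO(1)$, while \eqref{eq:Hardy} gives $\|W_{1,\widetilde{\gamma}_c^{\rm HF}}\nabla u\|_{\cH}\lesssim \|\nabla u\|_{\cH}$. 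Combining these, $\|(-\Delta)^{1/2}W_{1,\widetilde{\gamma}_c^{\rm HF}}u\|_{\cH}\lesssim \|(1-\Delta)u\|_{\cH}$, which is exactly the property of $V$ used in \eqref{eq:nabla-V}. Thus the proof of Lemma \ref{lem:4.4}, with $V$ replaced by $W_{1,\widetilde{\gamma}_c^{\rm HF}}$ throughout, delivers $\Tr_\cH[W_{1,\widetilde{\gamma}_c^{\rm HF}}(\widetilde{\gamma}_c^{\rm HF}-\Lambda^+_c\widetilde{\gamma}_c^{\rm HF}\Lambda^+_c)]=\cO(c^{-4})$.

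For the exchange term, I would first decompose
\begin{align*}
\Tr_\cH[W_{2,\widetilde{\gamma}_c^{\rm HF}}(\widetilde{\gamma}_c^{\rm HF}-\Lambda^+_c\widetilde{\gamma}_c^{\rm HF}\Lambda^+_c)]=\Tr_\cH[W_{2,\widetilde{\gamma}_c^{\rm HF}}\Lambda^-_c\widetilde{\gamma}_c^{\rm HF}]+\Tr_\cH[W_{2,\widetilde{\gamma}_c^{\rm HF}}\Lambda^+_c\widetilde{\gamma}_c^{\rm HF}\Lambda^-_c],
\end{align*}
and then, using \eqref{eq:Wpotential-decom}--\eqref{eq:W-K1-K2}, expand each trace as a sum over $(j,j')\in\{\rL,\rS\}^2$ of terms of the form $\Tr_\cH[W_{2,\cK_j\widetilde{\gamma}_c^{\rm HF}\cK_{j'}}\cK_{j'}(\cdot)\cK_j]$. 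Each summand is then estimated by H\"older \eqref{eq:Holder} together with the $(-\Delta)^{1/2}$-version of Lemma \ref{lem:W2-estimate} (namely \eqref{eq:5.19}), following the same scheme as in Lemma \ref{lem:W2-2}. The $X^2$-norms of the blocks $\cK_j\widetilde{\gamma}_c^{\rm HF}\cK_{j'}$ are controlled by Lemma \ref{lem:gammaHF'-estimate} (with the extra $c^{-1}$-smallness whenever an $\cK_\rS$ appears), and the decay of the $\Lambda^-_c$-blocks is extracted from Corollary \ref{cor:lambda-non-gamma'}.

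The main obstacle is the bookkeeping of the $c^{-1}$ factors across the four index pairs $(j,j')$: each $\cK_\rS$-block of $\widetilde{\gamma}_c^{\rm HF}$ contributes $c^{-1}$, whereas each $\Lambda^-_c$-component contributes up to $c^{-4}$ in the $\cK_\rL$-sector (through the $(-\Delta)^{-1/2}$-flanked Schatten norm in Corollary \ref{cor:lambda-non-gamma'}) or $c^{-3}$ in the $\cK_\rS$-sector. One must check in each of the four cases that the product of these smallness factors, combined with the $\cO(1)$ $X^s$-norms of the remaining blocks supplied by Lemma \ref{lem:gammaHF'-estimate}, yields at least $c^{-4}$; once the worst case $(j,j')=(\rL,\rL)$ is settled, the three remaining cases are automatically smaller thanks to the additional $c^{-1}$ factors carried by each $\cK_\rS$.
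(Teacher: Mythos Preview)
Your proposal is correct and matches the paper's proof essentially line for line: the paper also splits $W_{\widetilde{\gamma}_c^{\rm HF}}=W_{1,\widetilde{\gamma}_c^{\rm HF}}-W_{2,\widetilde{\gamma}_c^{\rm HF}}$, handles the direct term by verifying that $W_{1,\widetilde{\gamma}_c^{\rm HF}}$ satisfies the bound \eqref{eq:nabla-V} (via \eqref{eq:W1-estimate} and Lemma \ref{lem:gammaHF'-estimate}) and then invoking Lemma \ref{lem:4.4} verbatim, and treats the exchange term by the same $\Lambda^-_c$-decomposition and $(j,j')$ block expansion you describe, estimating each block with H\"older, Lemma \ref{lem:W2-estimate} (specifically \eqref{eq:5.19}), Corollary \ref{cor:lambda-non-gamma'}, and Lemma \ref{lem:gammaHF'-estimate}. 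The only cosmetic difference is that the paper organizes the exchange case analysis by the value of $j'$ (the index hitting $\Lambda^-_c$) rather than singling out $(\rL,\rL)$ as the worst case, but the power-counting you sketch is exactly what makes each case $\cO(c^{-4})$.
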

\begin{proof}
Observe that
\begin{align*}
    \MoveEqLeft \Tr_{\cH}[W_{\widetilde{\gamma}_c^{\rm HF}}(\widetilde{\gamma}_c^{\rm HF}-\Lambda^+_c\widetilde{\gamma}_c^{\rm HF} \Lambda^+_c)]\\
     &= \Tr_{\cH}[W_{1,\widetilde{\gamma}_c^{\rm HF}}(\widetilde{\gamma}_c^{\rm HF}-\Lambda^+_c\widetilde{\gamma}_c^{\rm HF} \Lambda^+_c)]- \Tr_{\cH}[W_{2,\widetilde{\gamma}_c^{\rm HF}}(\widetilde{\gamma}_c^{\rm HF}-\Lambda^+_c\widetilde{\gamma}_c^{\rm HF} \Lambda^+_c)].
\end{align*}
We first consider the term associated with $W_{1,\bullet}$.

By \eqref{eq:W1-estimate} and Lemma \ref{lem:gammaHF'-estimate},
\begin{align}\label{eq:4.43'}
    \|\nabla (W_{1,\widetilde{\gamma}_c^{\rm HF}} u)\|_{\cH}\lesssim \|\widetilde{\gamma}_c^{\rm HF}\|_{X^2}\|u\|_{\cH}\lesssim\|(1-\Delta) u\|_{\cH}.
\end{align}
Taking $V= W_{1,\widetilde{\gamma}_c^{\rm HF}}$, from Lemma \ref{lem:4.4} we can conclude that
\begin{align*}
   \Tr_{\cH}[W_{1,\widetilde{\gamma}_c^{\rm HF}}(\widetilde{\gamma}_c^{\rm HF}-\Lambda^+_c\widetilde{\gamma}_c^{\rm HF} \Lambda^+_c)]=\cO(c^{-4}).
\end{align*}

Concerning the term associated with $W_{2,\bullet}$, by \eqref{eq:Wpotential-decom},
\begin{align*}
   \MoveEqLeft \Tr_{\cH}[W_{2,\widetilde{\gamma}_c^{\rm HF}}(\widetilde{\gamma}_c^{\rm HF}-\Lambda^+_c\widetilde{\gamma}_c^{\rm HF} \Lambda^+_c)]\\
   &=  \Tr_{\cH}[W_{2,\widetilde{\gamma}_c^{\rm HF}}\Lambda^-_c\widetilde{\gamma}_c^{\rm HF}]+\Tr_{\cH}[W_{2,\widetilde{\gamma}_c^{\rm HF}}\Lambda^+_c\widetilde{\gamma}_c^{\rm HF}\Lambda^-_c]\\
    &=\sum_{j,j'\in \{\rL,\rS\}}\left( \Tr_{\cH}[W_{2,\cK_j\widetilde{\gamma}_c^{\rm HF}\cK_{j'}}\cK_{j'}\Lambda^-_c\widetilde{\gamma}_c^{\rm HF}\cK_j]+\Tr_{\cH}[W_{2,\cK_j\widetilde{\gamma}_c^{\rm HF}\cK_{j'}} \cK_{j'}\Lambda^+_c\widetilde{\gamma}_c^{\rm HF}\Lambda^-_c \cK_j]\right).
\end{align*}
Thus, for $j\in \{\rS,\rL\}$,
\begin{align*}
 \MoveEqLeft  \left| \Tr_{\cH}[W_{2,\cK_j\widetilde{\gamma}_c^{\rm HF}\cK_{\rL}}\cK_{\rL}\Lambda^-_c\widetilde{\gamma}_c^{\rm HF}\cK_j]\right|\\
 &= \left|\Tr_{\cH}[(-\Delta)^{-1/2}\cK_{\rL}\Lambda^-_c\widetilde{\gamma}_c^{\rm HF}\cK_jW_{2,\cK_j\widetilde{\gamma}_c^{\rm HF}\cK_{\rL}} (-\Delta)^{1/2}]\right|\\
   &\lesssim \|\cK_j\widetilde{\gamma}_c^{\rm HF}\cK_j\|_{X^2}^{1/2}\|\cK_{\rL}\widetilde{\gamma}_c^{\rm HF}\cK_{\rL}\|_{X^2}^{1/2}  \|(-\Delta)^{-1/2}\cK_{\rL}\Lambda^-_c\widetilde{\gamma}_c^{\rm HF}\cK_j  \nabla \|_{\mathfrak{S}_1}\\
   &\leq \|\cK_j\widetilde{\gamma}_c^{\rm HF}\cK_j\|_{X^2}^{1/2}\| \nabla \cK_j\widetilde{\gamma}_c^{\rm HF}\cK_j \nabla \|_{\mathfrak{S}_1}^{1/2}\\
   &\quad \times \|\cK_{\rL}\widetilde{\gamma}_c^{\rm HF}\cK_{\rL}\|_{X^2}^{1/2} \|(-\Delta)^{-1/2}\cK_{\rL}\Lambda^-_c \widetilde{\gamma}_c^{\rm HF}\Lambda^-_c\cK_{\rL}(-\Delta)^{-1/2} \|_{\mathfrak{S}_1}^{1/2}=\cO(c^{-4}),
\end{align*}
where in the first inequality we used \eqref{eq:5.19}, in the second inequality we used  H\"older's inequality \eqref{eq:Holder}, and the final estimate is obtained by using Corollary \ref{cor:lambda-non-gamma'} and Lemma \ref{lem:gammaHF'-estimate}. Analogously,
\begin{align*}
 \MoveEqLeft  \left| \Tr_{\cH}[W_{2,\cK_j\widetilde{\gamma}_c^{\rm HF}\cK_{\rS}}\cK_{\rS}\Lambda^-_c\widetilde{\gamma}_c^{\rm HF}\cK_j]\right|\\
 &\leq \|\cK_{\rS}\Lambda^-_c\widetilde{\gamma}_c^{\rm HF}\cK_j W_{2,\cK_j\widetilde{\gamma}_c^{\rm HF}\cK_{\rS}}\|_{\mathfrak{S}_1}\\
   &\lesssim \|\cK_j\widetilde{\gamma}_c^{\rm HF}\cK_j\|_{\mathfrak{S}_1}^{1/2}\|\cK_{\rS}\widetilde{\gamma}_c^{\rm HF}\cK_{\rS}\|_{\mathfrak{S}_1}^{1/2}  \|\cK_{\rS}\Lambda^-_c\widetilde{\gamma}_c^{\rm HF}\cK_j\nabla  \|_{\mathfrak{S}_1}\\
   &\leq \|\cK_j\widetilde{\gamma}_c^{\rm HF}\cK_j\|_{\mathfrak{S}_1}^{1/2}\| \nabla \cK_j\widetilde{\gamma}_c^{\rm HF}\cK_j \nabla \|_{\mathfrak{S}_1}^{1/2} \\
   &\quad \times\|\cK_{\rS}\widetilde{\gamma}_c^{\rm HF}\cK_{\rS}\|_{\mathfrak{S}_1}^{1/2} \|\cK_{\rS}\Lambda^-_c \widetilde{\gamma}_c^{\rm HF}\Lambda^-_c\cK_{\rS} \|_{\mathfrak{S}_1}^{1/2}=\cO(c^{-4}).
\end{align*}
Thus
\begin{align}\label{eq:3.34}
    \sum_{j,j'\in \{\rL,\rS\}} \Tr_{\cH}[W_{2,\cK_j\widetilde{\gamma}_c^{\rm HF}\cK_{j'}}\cK_{j'}\Lambda^-_c\widetilde{\gamma}_c^{\rm HF}\cK_j] =\cO(c^{-4}).
\end{align}

Next, we study the term 
\begin{align*}
    \sum_{j,j'\in \{\rL,\rS\}}\Tr_{\cH}[W_{2,\cK_j\widetilde{\gamma}_c^{\rm HF}\cK_{j'}} \cK_{j'}\Lambda^+_c\widetilde{\gamma}_c^{\rm HF}\Lambda^-_c \cK_j].
\end{align*}
whose proof is essentially the same as for \eqref{eq:3.34}. For $j'\in \{\rS,\rL\}$,
\begin{align*}
  \MoveEqLeft  \left|\Tr_{\cH}[W_{2,\cK_\rL\widetilde{\gamma}_c^{\rm HF}\cK_{j'}} \cK_{j'}\Lambda^+_c\widetilde{\gamma}_c^{\rm HF}\Lambda^-_c \cK_\rL]\right|\\
  &\lesssim \|\cK_{j'}\widetilde{\gamma}_c^{\rm HF}\cK_{j'}\|_{X^2}^{1/2}\|\cK_{\rL}\widetilde{\gamma}_c^{\rm HF}\cK_{\rL}\|_{X^2}^{1/2}  \|(-\Delta)^{-1/2}\cK_\rL \Lambda^-_c\widetilde{\gamma}_c^{\rm HF}\Lambda^+_c\cK_{j'}  \nabla \|_{\mathfrak{S}_1}\\
   &\leq \|\cK_{j'}\widetilde{\gamma}_c^{\rm HF}\cK_{j'}\|_{X^2}^{1/2}\| \nabla \cK_{j'}\Lambda^+_c\widetilde{\gamma}_c^{\rm HF}\Lambda^+_c\cK_{j'} \nabla \|_{\mathfrak{S}_1}^{1/2}\\
   &\quad \times \|\cK_{\rL}\widetilde{\gamma}_c^{\rm HF}\cK_{\rL}\|_{X^2}^{1/2} \|(-\Delta)^{-1/2}\cK_{\rL}\Lambda^-_c \widetilde{\gamma}_c^{\rm HF}\Lambda^-_c\cK_{\rL}(-\Delta)^{-1/2} \|_{\mathfrak{S}_1}^{1/2}=\cO(c^{-4}),
\end{align*}
and
\begin{align*}
 \MoveEqLeft  \left| \Tr_{\cH}[W_{2,\cK_\rS\widetilde{\gamma}_c^{\rm HF}\cK_{j'}} \cK_{j'}\Lambda^+_c\widetilde{\gamma}_c^{\rm HF}\Lambda^-_c \cK_\rS]\right|\\
   &\lesssim \|\cK_{j'}\widetilde{\gamma}_c^{\rm HF}\cK_{j'}\|_{\mathfrak{S}_1}^{1/2}\| \nabla \cK_{j'}\Lambda^+_c \widetilde{\gamma}_c^{\rm HF}\Lambda^+_c \cK_{j'} \nabla \|_{\mathfrak{S}_1}^{1/2} \\
   &\quad\times \|\cK_{\rS}\widetilde{\gamma}_c^{\rm HF}\cK_{\rS}\|_{\mathfrak{S}_1}^{1/2} \|\cK_{\rS}\Lambda^-_c \widetilde{\gamma}_c^{\rm HF}\Lambda^-_c\cK_{\rS} \|_{\mathfrak{S}_1}^{1/2}=\cO(c^{-4}).
\end{align*}
Thus,
\begin{align*}
     \sum_{j,j'\in \{\rL,\rS\}}\Tr_{\cH}[W_{2,\cK_j\widetilde{\gamma}_c^{\rm HF}\cK_{j'}} \cK_{j'}\Lambda^+_c\widetilde{\gamma}_c^{\rm HF}\Lambda^-_c \cK_j]=\cO(c^{-4}).
\end{align*}
This and \eqref{eq:3.34} show
\begin{align*}
    \Tr_{\cH}[W_{2,\widetilde{\gamma}_c^{\rm HF}}(\widetilde{\gamma}_c^{\rm HF}-\Lambda^+_c\widetilde{\gamma}_c^{\rm HF} \Lambda^+_c)]=\cO(c^{-4}).
\end{align*}
Finally, we get
 \begin{align*}
       \Tr_{\cH}[W_{\widetilde{\gamma}_c^{\rm HF}}(\widetilde{\gamma}_c^{\rm HF}-\Lambda^+_c\widetilde{\gamma}_c^{\rm HF} \Lambda^+_c)]=\cO(c^{-4}).
    \end{align*}
The proof is completed.
\end{proof}

Concerning the second term on the right-hand side, we have
\begin{lemma}
    Under Assumption \ref{ass:V}, for $c$ large enough,
    \begin{align*}
       \Tr_{\cH}[W_{\widetilde{\gamma}_c^{\rm HF}-\Lambda^+_c\widetilde{\gamma}_c^{\rm HF} \Lambda^+_c}\Lambda^+_c\widetilde{\gamma}_c^{\rm HF} \Lambda^+_c]=\cO(c^{-4}).
    \end{align*}
\end{lemma}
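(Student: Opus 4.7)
The plan is to mimic the structure of Lemma \ref{lem:W2-2} from the previous regime. First I would exploit the symmetry of the bilinear form $(\alpha,\beta)\mapsto \Tr_\cH[W_\alpha \beta]$ (both $W_{1,\bullet}$ and $W_{2,\bullet}$ give symmetric expressions, as in the computation preceding \eqref{eq:W2-2-dec}) to rewrite
\begin{align*}
  \Tr_{\cH}[W_{\widetilde{\gamma}_c^{\rm HF}-\Lambda^+_c\widetilde{\gamma}_c^{\rm HF} \Lambda^+_c}\Lambda^+_c\widetilde{\gamma}_c^{\rm HF} \Lambda^+_c]
  =\Tr_{\cH}[W_{\Lambda^+_c\widetilde{\gamma}_c^{\rm HF} \Lambda^+_c}(\widetilde{\gamma}_c^{\rm HF}-\Lambda^+_c\widetilde{\gamma}_c^{\rm HF} \Lambda^+_c)],
\end{align*}
and then split $W_{\bullet}=W_{1,\bullet}-W_{2,\bullet}$ so the two pieces can be treated independently.

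For the $W_{1,\bullet}$ piece, I would view $W_{1,\Lambda^+_c\widetilde{\gamma}_c^{\rm HF} \Lambda^+_c}$ as a scalar potential playing the role of $V$ in Lemma \ref{lem:4.4}. By \eqref{eq:W1-estimate}, Corollary \ref{cor:lambda-non-gamma'} and Lemma \ref{lem:gammaHF'-estimate},
\begin{align*}
  |\nabla W_{1,\Lambda^+_c\widetilde{\gamma}_c^{\rm HF}\Lambda^+_c}(x)|\lesssim \|\Lambda^+_c\widetilde{\gamma}_c^{\rm HF}\Lambda^+_c\|_{X^2}\lesssim \|\cK_\rL\widetilde{\gamma}_c^{\rm HF}\cK_\rL\|_{X^4}=\cO(1),
\end{align*}
so the analogue of Assumption \ref{ass:V} holds and the proof of Lemma \ref{lem:4.4} applies verbatim, with $\widetilde{\gamma}_c^{\rm HF}$ in that lemma replaced by $\widetilde{\gamma}_c^{\rm HF}$ itself and $V$ replaced by $W_{1,\Lambda^+_c\widetilde{\gamma}_c^{\rm HF}\Lambda^+_c}$. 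This yields an $\cO(c^{-4})$ bound on the $W_{1,\bullet}$ contribution.

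For the $W_{2,\bullet}$ piece, I would follow exactly the same $\cK_\rL/\cK_\rS$ decomposition used in Lemma \ref{lem:4.5}, writing
\begin{align*}
  \MoveEqLeft\Tr_{\cH}[W_{2,\Lambda^+_c\widetilde{\gamma}_c^{\rm HF}\Lambda^+_c}(\widetilde{\gamma}_c^{\rm HF}-\Lambda^+_c\widetilde{\gamma}_c^{\rm HF}\Lambda^+_c)]\\
  &=\sum_{j,j'\in\{\rL,\rS\}}\Big(\Tr_\cH[W_{2,\cK_j\Lambda^+_c\widetilde{\gamma}_c^{\rm HF}\Lambda^+_c\cK_{j'}}\cK_{j'}\Lambda^-_c\widetilde{\gamma}_c^{\rm HF}\cK_j]\\
  &\qquad\qquad\qquad+\Tr_\cH[W_{2,\cK_j\Lambda^+_c\widetilde{\gamma}_c^{\rm HF}\Lambda^+_c\cK_{j'}}\cK_{j'}\Lambda^+_c\widetilde{\gamma}_c^{\rm HF}\Lambda^-_c\cK_j]\Big),
\end{align*}
and then applying \eqref{eq:Holder}, Lemma \ref{lem:W2-estimate}, Corollary \ref{cor:lambda-non-gamma'} and Lemma \ref{lem:gammaHF'-estimate} term by term. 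Each factor of the form $\cK_j\Lambda^-_c\widetilde{\gamma}_c^{\rm HF}\Lambda^-_c\cK_j$ contributes a gain in $c$: by Corollary \ref{cor:lambda-non-gamma'}, $\|\cK_\rS\Lambda^-_c\widetilde{\gamma}_c^{\rm HF}\Lambda^-_c\cK_\rS\|_{\mathfrak{S}_1}=\cO(c^{-6})$ and $\|(-\Delta)^{-1/2}\cK_\rL\Lambda^-_c\widetilde{\gamma}_c^{\rm HF}\Lambda^-_c\cK_\rL(-\Delta)^{-1/2}\|_{\mathfrak{S}_1}=\cO(c^{-8})$, while the factors $\cK_j\Lambda^+_c\widetilde{\gamma}_c^{\rm HF}\Lambda^+_c\cK_j$ remain $\cO(1)$ (resp.\ $\cO(c^{-2})$ if $j=\rS$) in the relevant Schatten-type norms. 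Combining these pairwise always produces at least a factor $c^{-4}$, exactly as in the proof of Lemma \ref{lem:4.5}.

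The main obstacle is bookkeeping: the first argument of $W_{2,\bullet}$ now carries $\Lambda^+_c$ projectors, so when one of the $\cK_j$'s is $\cK_\rS$ one must use $\|\cK_\rS\Lambda^+_c\widetilde{\gamma}_c^{\rm HF}\Lambda^+_c\cK_\rS\|_{X^s}=\cO(c^{-2})$ (from Corollary \ref{cor:lambda-non-gamma'}) rather than the crude bound $\cO(1)$, otherwise one only recovers $\cO(c^{-2})$ instead of $\cO(c^{-4})$. Once this $c^{-2}$ improvement from the small component is tracked in every term, summing the eight contributions gives the required $\cO(c^{-4})$ bound, completing the proof.
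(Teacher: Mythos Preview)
Your proposal is correct and follows essentially the same route as the paper: use the symmetry of $(\alpha,\beta)\mapsto\Tr_\cH[W_\alpha\beta]$ to move $\Lambda^+_c\widetilde{\gamma}_c^{\rm HF}\Lambda^+_c$ into the subscript of $W_\bullet$, then rerun the proof of Lemma~\ref{lem:4.5} with $\widetilde{\gamma}_c^{\rm HF}$ replaced by $\Lambda^+_c\widetilde{\gamma}_c^{\rm HF}\Lambda^+_c$ inside $W_{1,\bullet}$ and $W_{2,\bullet}$, invoking Corollary~\ref{cor:lambda-non-gamma'} and Lemma~\ref{lem:gammaHF'-estimate} for the needed bounds. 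Your remark that the $\cK_\rS$ blocks of $\Lambda^+_c\widetilde{\gamma}_c^{\rm HF}\Lambda^+_c$ must be tracked as $\cO(c^{-2})$ rather than $\cO(1)$ is exactly the modification the paper points to.
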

\begin{proof}
    The proof is essentially the same as for Lemma \ref{lem:4.5}. Note that
    \begin{align*}
        \left|  \Tr_{\cH}[W_{\widetilde{\gamma}_c^{\rm HF}-\Lambda^+_c\widetilde{\gamma}_c^{\rm HF} \Lambda^+_c}\Lambda^+_c\widetilde{\gamma}_c^{\rm HF} \Lambda^+_c]\right|=\left|  \Tr_{\cH}[W_{\Lambda^+_c\widetilde{\gamma}_c^{\rm HF} \Lambda^+_c}\left(\widetilde{\gamma}_c^{\rm HF}-\Lambda^+_c\widetilde{\gamma}_c^{\rm HF} \Lambda^+_c\right)]\right|.
    \end{align*}
We just need to replace $\widetilde{\gamma}_c^{\rm HF}$ by $\Lambda^+_c \widetilde{\gamma}_c^{\rm HF} \Lambda^+_c$ in $W_{1,\bullet}$ and $W_{2,\bullet}$ in the proof of Lemma \ref{lem:4.5}. Then this lemma follows from Corollary \ref{cor:lambda-non-gamma'} and Lemma \ref{lem:gammaHF'-estimate}.
\end{proof}

\subsubsection{End of the proof of \texorpdfstring{\eqref{eq:E_c-EHF-c2}}{}} 

From Lemma \ref{lem:gammaHF'-projection}, we know $\widetilde{\gamma}_c^{\rm HF}=\cS\widetilde{\gamma}_c^{\rm HF}\cS^*\in \Gamma_q$. Then from \eqref{eq:3.21},  Lemma \ref{lem:W2-1} and Lemma \ref{lem:W2-2} we know that
\begin{align*}
\left| \Tr_{\cH}[W_{\widetilde{\gamma}^{\rm HF}_{c}}\widetilde{\gamma}^{\rm HF}_{c}]-\Tr_{\cH}[W_{\Lambda^+_c\widetilde{\gamma}^{\rm HF}_{c} \Lambda^+_c}\Lambda^+_c\widetilde{\gamma}^{\rm HF}_{c} \Lambda^+_c]\right|=\cO(c^{-4}).
\end{align*}
Thus \eqref{EDF-EHF}, \eqref{eq:D-c2-gamma'}, Lemma \ref{lem:4.4} and above estimate imply that
\begin{align*}
    |\cE_c(\Lambda^+\widetilde{\gamma}^{\rm HF}_{c}\Lambda^+)- \cE^{\rm HF}(\widetilde{\gamma}^{\rm HF}_{c})|=\cO(c^{-4}).
\end{align*}
This and Lemma \ref{lem:3.13} give \eqref{eq:E_c-EHF-c2}, thus the proof of Theorem \ref{th:3.1} is completed.

\section{From free picture to the DF ground-state energy}\label{sec:6}
In this section, we pass from fixed free picture $\Lambda^+_c$ to DF energy in the set $\Gamma_{q}^+$. More precisely, we are trying to prove the following.
\begin{theorem}[From free picture to the DF ground-state energy]\label{th:4.1}
 Let $\gamma_*^{\rm HF}, \widetilde{\gamma}_c^{\rm HF}$ be given as in Theorem \ref{th:3.1}. Then under Assumption \ref{ass:c}, we have
    \begin{align}\label{eq:E_cq-E_c-lambda}
      E_{c,q}\leq   \cE_c(\Lambda^+_c \gamma_*^{\rm HF}\Lambda^+_c)+\cO(c^{-2}).
    \end{align}
In addition, under Assumption \ref{ass:V}, for $c$ large enough, we also have
\begin{align}\label{eq:E_cq-E_c-lambda-c2}
      E_{c,q}\leq  \cE_c(\Lambda^+_c \widetilde{\gamma}_c^{\rm HF}\Lambda^+_c)+\cO(c^{-4}).
\end{align}
\end{theorem}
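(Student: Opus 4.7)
The strategy is to produce an admissible trial state in $\Gamma_q^+$ by applying the retraction $\theta_c$ (recalled in \eqref{eq:rela-renormalization}) to the free-picture density matrix, then invoke the minimality of $E_{c,q}$ and bound the resulting energy increase. For $\gamma_0 \in \{\Lambda^+_c \gamma_*^{\rm HF}\Lambda^+_c,\,\Lambda^+_c \widetilde{\gamma}_c^{\rm HF}\Lambda^+_c\}$, the remark following Assumption \ref{ass:c} (for the first case) together with Lemma \ref{lem:gammaHF'-estimate} and the margin $+1$ in the definition \eqref{def:R_0} of $R_0$ (for the second case) guarantees that $\gamma_0 \in \cU_{c,R_0}$, and hence $\theta_c(\gamma_0) \in \Gamma_q^+$. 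The definition \eqref{eq:min-DF} then yields $E_{c,q} \leq \cE_c(\theta_c(\gamma_0))$, and the task reduces to proving $\cE_c(\theta_c(\gamma_0)) - \cE_c(\gamma_0) = \cO(c^{-2})$ in the first case and $\cO(c^{-4})$ in the second.

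The quantitative heart of the argument is the estimate $\|P^+_{c,\gamma_0} - \Lambda^+_c\|_{\cB(\cH)} = \cO(c^{-2})$, obtained from the resolvent identity
\[
P^+_{c,\gamma_0} - \Lambda^+_c = \frac{1}{2\pi i}\oint_\Gamma (\zeta - \cD_{\gamma_0})^{-1}(-V + W_{\gamma_0})(\zeta - \cD)^{-1}\,d\zeta,
\]
with a contour $\Gamma$ surrounding $[c^2,+\infty)$ at distance of order $c^2$, combined with the Hardy-type bound \eqref{eq:Hardy} controlling $-V + W_{\gamma_0}$ by powers of $|\cD|^{1/2}$. The contraction/fixed-point construction of $\theta_c$ from \cite{meng2024rigorous} (whose contractivity is ensured by $2a_c R_0 \leq \tfrac{1}{2}$ under Assumption \ref{ass:c}) transfers this closeness of projectors to the density matrices themselves, giving $\|\theta_c(\gamma_0) - \gamma_0\|_{X_c} = \cO(c^{-2})$. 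The additional factor of $c^{-2}$ in the second case comes from the relativistic renormalization: by Lemma \ref{lem:gammaHF'-estimate}, $\|\cK_\rS\widetilde{\gamma}_c^{\rm HF}\cK_\rS\|_{X^4} = \cO(c^{-2})$, so $\Lambda^+_c \widetilde{\gamma}_c^{\rm HF}\Lambda^+_c$ already sits almost entirely in the positive free spectrum, and this extra $c^{-2}$ propagates through the estimates of Corollary \ref{cor:lambda-non-gamma'}.

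Finally, we convert closeness into an energy bound by writing
\[
\cE_c(\theta_c(\gamma_0)) - \cE_c(\gamma_0) = \Tr_\cH\Big[\big((\cD - c^2) - V + W_{(\theta_c(\gamma_0) + \gamma_0)/2}\big)(\theta_c(\gamma_0) - \gamma_0)\Big],
\]
controlling the kinetic term by the $X_c$-norm of $\theta_c(\gamma_0) - \gamma_0$, the nuclear attraction by Hardy's inequality \eqref{eq:Hardy}, and the electron-electron term by Lemma \ref{lem:W2-estimate} together with the uniform $X^2$-bounds on $\Lambda^+_c \gamma_*^{\rm HF}\Lambda^+_c$ and $\Lambda^+_c \widetilde{\gamma}_c^{\rm HF}\Lambda^+_c$ furnished by Corollaries \ref{cor:lambda-non-gamma} and \ref{cor:lambda-non-gamma'}. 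The main obstacle is the careful handling of $\theta_c$: one needs $\theta_c(\gamma_0)$ to simultaneously preserve the trace $\Tr = q$ and the order $0 \leq \gamma \leq \1_\cH$, satisfy the nonlinear fixed-point condition $\gamma = P^+_{c,\gamma}\gamma P^+_{c,\gamma}$, and move from $\gamma_0$ only by the claimed order. The smallness conditions built into Assumption \ref{ass:c} are precisely what make the contraction argument of \cite{meng2024rigorous} work here, and the rest of the proof is continuity of $\cE_c$ on this small neighborhood.
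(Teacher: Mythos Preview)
Your overall strategy---construct a trial state in $\Gamma_q^+$ via the retraction $\theta_c$, then bound $E_{c,q}\leq\cE_c(\theta_c(\gamma_0))$ and estimate $\cE_c(\theta_c(\gamma_0))-\cE_c(\gamma_0)$---is exactly the paper's approach. But there is a genuine gap in your quantitative step.

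Your central claim $\|\theta_c(\gamma_0)-\gamma_0\|_{X_c}=\cO(c^{-2})$ is incorrect: the paper's Lemma~\ref{lem:T-I} gives only $\|T_c(\gamma_0)-\gamma_0\|_{X_c}=\cO(1)$ under Assumption~\ref{ass:c} (and $\cO(c^{-1})$ under Assumption~\ref{ass:V}), and the contraction estimate \eqref{eq:retra} then yields $\|\theta_c(\gamma_0)-\gamma_0\|_{X_c}=\cO(1)$. Feeding this into your linear energy bound, the kinetic contribution $\Tr_\cH[(\cD-c^2)(\theta_c(\gamma_0)-\gamma_0)]$ is controlled only by $\|\theta_c(\gamma_0)-\gamma_0\|_{X_c}=\cO(1)$, since $|\cD-c^2|$ is of size $c^2$ on the negative spectral subspace of $\cD$ and $\theta_c(\gamma_0)-\gamma_0$ has a nontrivial component there. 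So the linear approach gives at best $\cO(1)$, not $\cO(c^{-2})$.

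What the paper actually uses is the \emph{quadratic} error bound of Lemma~\ref{lem:error-bound} (imported from \cite{meng2024rigorous}):
\[
|E_c(\gamma_0)-\cE_c(\gamma_0)|\;\lesssim\; c^{-3}\|T_c(\gamma_0)-\gamma_0\|_{X_c}^2 \;+\; \|P^-_{c,\gamma_0}\gamma_0 P^-_{c,\gamma_0}\|_{X_c}.
\]
The second term is the dominant one, and it is quadratic in the projection difference because $\gamma_0=\Lambda^+_c\gamma_0\Lambda^+_c$ forces $P^-_{c,\gamma_0}\gamma_0 P^-_{c,\gamma_0}=P^-_{c,\gamma_0}(\Lambda^+_c-P^+_{c,\gamma_0})\gamma_0(\Lambda^+_c-P^+_{c,\gamma_0})P^-_{c,\gamma_0}$. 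With the $\cO(c^{-1})$-type bounds on $|\cD|^{1/2}(P^+_{c,\gamma_0}-\Lambda^+_c)$ from Lemma~\ref{lem:Lambda-Pgamma}, this squares to $\cO(c^{-2})$; under Assumption~\ref{ass:V}, the refined $\cO(c^{-2})$ projection bounds of Lemma~\ref{lem:Lambda-Pgamma'} (which require the commutator estimate $\|[W_{\gamma_0},\beta]\|_{\cB(\cH)}=\cO(c^{-1})$, established in \eqref{eq:4.25}) square to $\cO(c^{-4})$. This quadratic mechanism is the missing ingredient in your argument; the linear continuity of $\cE_c$ is not sharp enough.
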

To prove Theorem \ref{th:4.1}, in Section \ref{sec:5.1} we recall a retraction mapping $\theta$ used to study the DF ground-state energy, and we summarize some useful arguments in \cite{sere2023new} and \cite{meng2024rigorous}. In particular, Lemma \ref{lem:error-bound} below plays an essential role in this part. To use Lemma \ref{lem:error-bound}, in  Section \ref{sec:5.2}, we study the difference between the  projector $P^+_{c,\gamma}$ and $\Lambda^+_c$. Note that
\begin{align*}
    P^+_{c,\gamma}- \Lambda^+_c= P^+_{c,\gamma}-P^+_{c,0} +P^+_{c,0} \Lambda^+_c.
\end{align*}
Thus in Section \ref{sec:5.2} we will study above terms separately. Finally, in Section \ref{sec:5.3}, we prove Theorem \ref{th:4.1} by using Lemma \ref{lem:error-bound}.

\subsection{New DF functional}\label{sec:5.1}
To prove the existence of the DF minimizers (i.e., Theorem \ref{th:min-DF}), the following retraction mapping $\theta(\gamma)$ is introduced in \cite{sere2023new}:
\begin{align}\label{eq:theta}
    \theta_c(\gamma):=\lim_{n\rightarrow +\infty} T_c^n(\gamma)
\end{align}
with
\[
T_c^n(\gamma)=T_c(T_c^{n-1}(\gamma)),\quad T_c(\gamma)=P^+_{c,\gamma}\gamma P^+_{c,\gamma}.
\]
To prove Theorem \ref{th:4.1}, we will also use this mapping.

The existence of the retraction $\theta$ is based on the following.
\begin{lemma}[Existence of the retraction {\cite{sere2023new,meng2024rigorous}}]\label{lem:retra}
Assume that $\kappa_c:=2c^{-1}(q+z) <1$ and $a_c:=\frac{\pi}{4c\sqrt{(1-\kappa_c)\lambda_{0,c}}}$. Given $R<\frac{1}{2a_c}$, let $A_c:=\max(\frac{1}{1-2a_cR},\frac{2+a_cq}{2})$, and let
\[
\mathcal{U}_{c,R}:=\left\{\gamma\in\Gamma_{q}; \frac{1}{c}\|\gamma|\cD|^{1/2}\|_{\mathfrak{S}_1}+\frac{A_c}{c^2}\|T_c(\gamma)-\gamma\|_{X_c}< R\right\}.
\]
Then, $T_c$ maps $\mathcal{U}_{c,R}$ into $\mathcal{U}_{c,R}$, and for any $\gamma\in\mathcal{U}_{c,R}$ the sequence $(T_c^n(\gamma))_{n\geq 0}$ converges to a limit $\theta_c(\gamma)\in\Gamma_{q}^+$. Moreover for any $\gamma\in \mathcal{U}_{c,R}$,
\begin{equation}\label{eq:retra}
    \begin{aligned}
    \|T_c^{n+1}(\gamma)-T_c^{n}(\gamma)\|_{X_c}&\leq L_c\|T_c^n(\gamma)-T_c^{n-1}(\gamma)\|_{X_c},\\
    \|\theta_c(\gamma)-T_c^n(\gamma)\|_{X_c}&\leq\frac{L_c^n}{1-L_c}\|T_c(\gamma)-\gamma\|_{X_c}
\end{aligned}
\end{equation}
with $L_c:=2a_cR$.
\end{lemma}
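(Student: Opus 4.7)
The plan is to prove the lemma by a Banach fixed-point argument for $T_c$ on $\mathcal{U}_{c,R}$, viewed as a subset of the Banach space $(X,\|\cdot\|_{X_c})$. The central technical input is a Lipschitz estimate for the positive spectral projector $\gamma\mapsto P^+_{c,\gamma}=\1_{(0,\infty)}(\cD_\gamma)$. I would obtain it from the Cauchy--Dunford representation
\[
P^+_{c,\gamma_1}-P^+_{c,\gamma_2}=\frac{1}{2\pi i}\oint_{\Gamma}(\zeta-\cD_{\gamma_1})^{-1}(W_{\gamma_1}-W_{\gamma_2})(\zeta-\cD_{\gamma_2})^{-1}\,d\zeta,
\]
where $\Gamma$ is a contour enclosing the positive spectrum of both $\cD_{\gamma_j}$. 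Two ingredients feed the bound: (i) the uniform spectral gap $\operatorname{dist}(0,\sigma(\cD_{\gamma_j}))\ge c^2(1-\kappa_c)\lambda_{0,c}$, which holds because $\kappa_c<1$ together with $\tfrac{1}{c}\|\gamma_j|\cD|^{1/2}\|_{\mathfrak{S}_1}<R$ and Kato's inequality $|\cdot|^{-1}\le \tfrac{\pi}{2}|\nabla|$ lets one dominate $V$ and $W_{\gamma_j}$ by a fraction of $|\cD|$; and (ii) the same Kato inequality to bound the middle factor $W_{\gamma_1-\gamma_2}$ by $|\cD|^{1/2}(\gamma_1-\gamma_2)|\cD|^{1/2}$ in Schatten norm. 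Combining both yields a Lipschitz bound with constant exactly $a_c=\pi/(4c\sqrt{(1-\kappa_c)\lambda_{0,c}})$, the factor $\pi/4$ being the product of the $\pi/2$ from Kato and a $1/2$ coming from the contour-length to gap ratio.

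From this projector estimate I would deduce contraction of $T_c$ by expanding
\[
T_c(\gamma_1)-T_c(\gamma_2)=(P^+_{c,\gamma_1}-P^+_{c,\gamma_2})\gamma_1 P^+_{c,\gamma_1}+P^+_{c,\gamma_2}(\gamma_1-\gamma_2)P^+_{c,\gamma_1}+P^+_{c,\gamma_2}\gamma_2(P^+_{c,\gamma_2}-P^+_{c,\gamma_1})
\]
and estimating each piece in the $X_c$-norm. The outer pieces each pick up a factor $a_c\cdot \tfrac{1}{c}\|\gamma_j|\cD|^{1/2}\|_{\mathfrak{S}_1}\le a_c R$ from the projector Lipschitz bound combined with the defining inequality of $\mathcal{U}_{c,R}$, producing
\[
\|T_c(\gamma_1)-T_c(\gamma_2)\|_{X_c}\le 2a_cR\,\|\gamma_1-\gamma_2\|_{X_c}=L_c\,\|\gamma_1-\gamma_2\|_{X_c}.
\]
Specialising to consecutive iterates $(\gamma_1,\gamma_2)=(T_c^n(\gamma),T_c^{n-1}(\gamma))$ gives the first inequality in \eqref{eq:retra}, and telescoping plus summing the geometric series $\sum_{k\ge n}L_c^k=L_c^n/(1-L_c)$ — convergent precisely because $R<1/(2a_c)$ forces $L_c<1$ — gives the second.

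Next I would verify invariance $T_c(\mathcal{U}_{c,R})\subset \mathcal{U}_{c,R}$: the summand $\tfrac{1}{c}\|T_c(\gamma)|\cD|^{1/2}\|_{\mathfrak{S}_1}$ is controlled by $\tfrac{1}{c}\|\gamma|\cD|^{1/2}\|_{\mathfrak{S}_1}$ up to a commutator correction $[P^+_{c,\gamma},|\cD|^{1/2}]$ which, bounded again by Kato, is responsible for the constant $(2+a_cq)/2$ appearing in $A_c$; the summand $\tfrac{A_c}{c^2}\|T_c^2(\gamma)-T_c(\gamma)\|_{X_c}$ uses the contraction $L_c\|T_c(\gamma)-\gamma\|_{X_c}$ to bring in the factor $1/(1-L_c)=1/(1-2a_cR)$. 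The choice $A_c=\max\!\bigl(\tfrac{1}{1-2a_cR},\tfrac{2+a_cq}{2}\bigr)$ is engineered precisely so that the two contributions together remain strictly below $R$. Completeness of $X_c$ then delivers $\theta_c(\gamma)$ as the $X_c$-limit of $T_c^n(\gamma)$, and continuity of $T_c$ on $\mathcal{U}_{c,R}$ forces $T_c(\theta_c(\gamma))=\theta_c(\gamma)$, i.e.\ $\theta_c(\gamma)\in\Gamma_q^+$, while $\theta_c(\gamma)\in\Gamma_q$ is preserved under the limit since $0\le T_c(\gamma)\le\1$ and $\Tr[T_c(\gamma)]\le\Tr[\gamma]\le q$ follow from $P^+_{c,\gamma}$ being an orthogonal projector.

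The hardest step is the projector Lipschitz estimate with the sharp constant $a_c$. It demands a judicious choice of contour around the spectral gap, simultaneous resolvent control on $\cD_{\gamma_1}$ and $\cD_{\gamma_2}$ in the $|\cD|^{1/2}$-weighted Schatten norms, and careful bookkeeping so as to preserve the Kato constant $\pi/2$ throughout the estimate; losing any of these would spoil the sharp tuning between $a_c$, $A_c$ and the constraint $R<1/(2a_c)$. A secondary subtlety is verifying that the $X_c$-norm inequality tolerates commuting $|\cD|^{1/2}$ past the projectors $P^+_{c,\gamma_j}$, which again reduces to a commutator bound controlled by Kato and the uniform gap.
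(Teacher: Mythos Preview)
The paper does not prove this lemma; it is quoted from \cite{sere2023new,meng2024rigorous} and used as a black box. So there is no ``paper's proof'' to compare against, and your outline is essentially a reconstruction of S\'er\'e's original argument. That said, your sketch has two genuine gaps worth flagging.

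First, your three-term decomposition of $T_c(\gamma_1)-T_c(\gamma_2)$ cannot yield the constant $L_c=2a_cR$ for arbitrary $\gamma_1,\gamma_2\in\mathcal{U}_{c,R}$: the middle term $P^+_{c,\gamma_2}(\gamma_1-\gamma_2)P^+_{c,\gamma_1}$ contributes an $O(1)$ multiple of $\|\gamma_1-\gamma_2\|_{X_c}$ (via \eqref{eq:DP} one gets a factor $(1+\kappa_c)/(1-\kappa_c)$, not something small), so your displayed bound is simply false in general. The statement of the lemma, however, only asks for contraction along \emph{consecutive iterates}, and there the key observation you are missing is that when $\gamma_1=T_c(\gamma_2)$ one has $P^+_{c,\gamma_2}\gamma_1 P^+_{c,\gamma_2}=\gamma_1$, so
\[
T_c(\gamma_1)-\gamma_1=(P^+_{c,\gamma_1}-P^+_{c,\gamma_2})\gamma_1 P^+_{c,\gamma_1}+P^+_{c,\gamma_2}\gamma_1(P^+_{c,\gamma_1}-P^+_{c,\gamma_2}),
\]
a genuinely two-term identity with no uncontrolled middle piece. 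This is what makes the sharp constant $2a_cR$ attainable.

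Second, your Cauchy--Dunford contour ``enclosing the positive spectrum'' is ill-posed because $\sigma(\cD_{\gamma_j})\cap(0,\infty)$ is unbounded. The correct representation (see the paper's own use in \eqref{eq:4.14}) integrates along the imaginary axis,
\[
P^+_{c,\gamma_1}-P^+_{c,\gamma_2}=\frac{1}{2\pi}\int_{\R}(\cD_{\gamma_1}-iz)^{-1}(W_{\gamma_2}-W_{\gamma_1})(\cD_{\gamma_2}-iz)^{-1}\,dz,
\]
and the constant $a_c$ then emerges from \eqref{eq:5.1-1}, \eqref{eq:5.1-3}, \eqref{eq:5.1-5} together with $\int_{\R}\frac{|A|}{A^2+z^2}\,dz=\pi$, not from a contour-length-to-gap ratio.
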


Therefore, one can define a new DF functional:
\begin{definition}[New DF functional]
Let $\kappa_c,\;a_c$, $R$ and $\mathcal{U}_{c,R}$ be given as in Lemma \ref{lem:retra}. For any $\gamma\in\mathcal{U}_{c,R}$, a new DF functional of $\gamma$ is defined by
\begin{align}\label{eq:def}
E_c(\gamma)=\cE_c(\theta_c(\gamma)).
\end{align}
\end{definition}
Then under Assumption \ref{ass:c}, if there exists some $R<\frac{1}{2a_c}$  such that
\begin{align}\label{eq:condition-R}
    \gamma_*^c,\; \Lambda^+_c\gamma_{*}^{\rm HF}\Lambda^+_c, \;\Lambda^+_c\widetilde{\gamma}_c^{\rm HF}\Lambda^+_c\in \mathcal{U}_{c,R},
\end{align}
then
\begin{align*}
   \gamma_*^c= \theta( \gamma_*^c),\qquad \theta(\Lambda^+_c\gamma_{*}^{\rm HF}\Lambda^+_c),\qquad \theta(\Lambda^+_c\widetilde{\gamma}_c^{\rm HF}\Lambda^+_c)
\end{align*}
are well-defined and are situated in $\Gamma_q^+$. Thus, according to the definition of the DF ground-state energy \eqref{eq:min-DF},
\begin{align}\label{eq:EDF-min-E-retra-HF}
    E_{c,q}\leq E_{c}(\Lambda^+_c\gamma_{*}^{\rm HF}\Lambda^+_c),\qquad  E_{c,q}\leq E_{c}(\Lambda^+_c\widetilde{\gamma}_c^{\rm HF}\Lambda^+_c).
\end{align}

Thus to prove Theorem \ref{th:4.1}, it remains to find some $R <\frac{1}{2a_c}$ such that  \eqref{eq:condition-R} holds, and that
\begin{align}\label{eq:error-theta-Lambda}
    |E_{c}(\Lambda^+_c\gamma_{*}^{\rm HF}\Lambda^+_c)-\cE_c(\Lambda^+_c\gamma_{*}^{\rm HF}\Lambda^+_c)|=\mathcal{O}(c^{-2})
\end{align}
or under Assumption \ref{ass:V},
\begin{align}\label{eq:error-theta-Lambda'}
     |E_{c}(\Lambda^+_c\widetilde{\gamma}_c^{\rm HF}\Lambda^+_c)-\cE_c(\Lambda^+_c\widetilde{\gamma}_c^{\rm HF}\Lambda^+_c)|=\mathcal{O}(c^{-4}).
\end{align}

To prove \eqref{eq:error-theta-Lambda} and \eqref{eq:error-theta-Lambda'}, the following lemma will be used.
 \begin{lemma}\cite[Lemma 5.1]{meng2024rigorous}\label{lem:error-bound}
 Let $R,z\in\mathbb{R}^+$ and $q\in\mathbb{N}^+$ be fixed. Assume that $\kappa_c<1$ and $L_c<1$ as in Lemma \ref{lem:retra}. Let $C_{\kappa_c,L_c}:=\frac{5\pi^2}{4(1-\kappa_c)^2\lambda_{0,c}^{3/2}(1-L_c)^2}$. Then for any $\gamma\in\mathcal{U}_{c,R}$, 
\begin{align}
    |E_c(\gamma)-\mathcal{E}_c(\gamma)|\leq C_{\kappa_c,L_c}(3c^{-1}R+3c^{-1}q+1)\frac{1}{c^3}\|T_c(\gamma)-\gamma\|_{X_c}^2+3\|P^-_{c,\gamma}\gamma P^-_{c,\gamma}\|_{X_c}.
\end{align}
 \end{lemma}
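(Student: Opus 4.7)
The plan is to telescope the energy difference along the orbit $\gamma_n := T_c^n(\gamma)$ (with $\gamma_0=\gamma$ and $\gamma_\infty = \theta_c(\gamma)$), to bound each increment by an exact second-order Taylor-type identity for the DF functional, and then to sum the increments using the geometric contraction from Lemma \ref{lem:retra}. Set $\eta_n := \gamma_{n+1}-\gamma_n = T_c(\gamma_n)-\gamma_n$ and $P^\pm_n := P^\pm_{c,\gamma_n}$, so that $\eta_n = -\bigl(P^+_n\gamma_n P^-_n + P^-_n\gamma_n P^+_n + P^-_n\gamma_n P^-_n\bigr)$.

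Since $\cE_c$ is quadratic in $\gamma$ (its only nonlinearity is $\frac12\Tr_\cH[W_\gamma\gamma]$), the identity
\[
\cE_c(\gamma_{n+1})-\cE_c(\gamma_n) = \Tr_\cH\bigl[(\cD_{\gamma_n}-c^2)\eta_n\bigr] + \tfrac12\Tr_\cH[W_{\eta_n}\eta_n]
\]
is exact. Because $P^\pm_n$ commute with $\cD_{\gamma_n}$, the two off-diagonal blocks drop out of the linear trace, leaving $\Tr_\cH[(\cD_{\gamma_n}-c^2)\eta_n] = \Tr_\cH\bigl[(|\cD_{\gamma_n}|+c^2)P^-_n\gamma_n P^-_n\bigr]$, which is nonnegative and is dominated by $\lesssim \|P^-_n\gamma_n P^-_n\|_{X_c}$ via the comparability $|\cD_{\gamma_n}|\sim |\cD|$ ensured by $\kappa_c<1$. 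The quadratic Coulomb piece $\frac12\Tr_\cH[W_{\eta_n}\eta_n]$ is handled by the Hardy/Kato-type bounds in Lemma \ref{lem:W2-estimate} and gives $\lesssim c^{-3}(1+c^{-1}q)\|\eta_n\|_{X_c}^2$ after factoring out the dependence on $\|\gamma_n\|_{\mathfrak S_1}\leq q$ and on the trace mass $\|\gamma_n|\cD|^{1/2}\|_{\mathfrak S_1}\leq cR$.

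Next, I would refine $\|P^-_n\gamma_n P^-_n\|_{X_c}$ to a \emph{quadratic} estimate for $n\geq 1$. Since $\gamma_n = P^+_{n-1}\gamma_n P^+_{n-1}$, the algebraic identity
\[
P^-_n\gamma_n P^-_n = P^-_n(P^+_{n-1}-P^+_n)\,\gamma_n\,(P^+_{n-1}-P^+_n)P^-_n
\]
holds, and the projection difference admits the resolvent representation $P^+_{n-1}-P^+_n = \frac{1}{2i\pi}\oint_\mathcal{C}[(\cD_{\gamma_{n-1}}-z)^{-1}-(\cD_{\gamma_n}-z)^{-1}]\,dz$ on a contour $\mathcal{C}$ separating the positive and negative spectra of $\cD_{\gamma_n}$. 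Applying the second resolvent identity reduces the control of the operator norm to a bound of the form $\|W_{\gamma_n-\gamma_{n-1}}\|\lesssim \|\eta_{n-1}\|_{X_c}$, while the length of $\mathcal{C}$ and its distance to the spectrum supply the factors $\pi/[(1-\kappa_c)\lambda_{0,c}^{1/2}]$. This yields $\|P^-_n\gamma_n P^-_n\|_{X_c} \lesssim c^{-3}(1-\kappa_c)^{-2}\lambda_{0,c}^{-3/2}\|\eta_{n-1}\|_{X_c}^2$ for every $n\geq 1$. For $n=0$ no such quadratic reduction is available, so the initial linear residual is retained as the explicit term $3\|P^-_{c,\gamma}\gamma P^-_{c,\gamma}\|_{X_c}$ (the factor $3$ collecting the kinetic, nuclear, and electronic Coulomb contributions).

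Finally, summing the contractive estimate $\|\eta_n\|_{X_c}\leq L_c^n\|T_c(\gamma)-\gamma\|_{X_c}$ from Lemma \ref{lem:retra} gives
\[
\sum_{n\geq 0}\|\eta_n\|_{X_c}^2 \leq (1-L_c)^{-2}\|T_c(\gamma)-\gamma\|_{X_c}^2,
\]
which produces the $(1-L_c)^{-2}$ factor in $C_{\kappa_c,L_c}$; the prefactor $3c^{-1}R+3c^{-1}q+1$ then arises by booking, term by term, the Coulomb constants (the $R$-term absorbs $\|\gamma_n|\cD|^{1/2}\|_{\mathfrak S_1}$, the $q$-term absorbs $\|\gamma_n\|_{\mathfrak S_1}$ from the Coulomb cross-terms, and the $1$ is the pure kinetic contribution). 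The main obstacle is the sharp, $c$-explicit control $\|P^+_{n-1}-P^+_n\|\lesssim c^{-2}(1-\kappa_c)^{-1}\lambda_{0,c}^{-1/2}\|\eta_{n-1}\|_{X_c}$ via the resolvent contour; this is exactly where the spectral-gap condition $\kappa_c<1$ is indispensable, and where the $\pi^2/[(1-\kappa_c)^2\lambda_{0,c}^{3/2}]$ factor of $C_{\kappa_c,L_c}$ is produced.
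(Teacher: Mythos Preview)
The paper does not prove this lemma; it is quoted verbatim from \cite[Lemma~5.1]{meng2024rigorous}, so there is no proof here to compare against. Your telescoping along the orbit $\gamma_n=T_c^n(\gamma)$, the exact quadratic expansion
\[
\cE_c(\gamma_{n+1})-\cE_c(\gamma_n)=\Tr_\cH\bigl[(\cD_{\gamma_n}-c^2)\eta_n\bigr]+\tfrac12\Tr_\cH[W_{\eta_n}\eta_n],
\]
the reduction of the linear trace to $\Tr_\cH[(|\cD_{\gamma_n}|+c^2)P^-_n\gamma_n P^-_n]$, and the quadratic control of $P^-_n\gamma_n P^-_n$ for $n\ge1$ through the identity $P^-_n\gamma_nP^-_n=P^-_n(P^+_{n-1}-P^+_n)\gamma_n(P^+_{n-1}-P^+_n)P^-_n$ and the resolvent bound on the projection difference, is precisely the argument of that reference; the geometric summation via Lemma~\ref{lem:retra} then produces the $(1-L_c)^{-2}$.

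One correction on bookkeeping: the factor $3$ in front of $\|P^-_{c,\gamma}\gamma P^-_{c,\gamma}\|_{X_c}$ does \emph{not} come from ``kinetic, nuclear, and electronic Coulomb contributions''. It comes entirely from the $n=0$ linear term: on the range of $P^-_{c,\gamma}$ one has $\cD_\gamma-c^2=-(|\cD_\gamma|+c^2)$, and then $|\cD_\gamma|+c^2\le(1+\kappa_c)|\cD|+|\cD|=(2+\kappa_c)|\cD|<3|\cD|$ gives the $3$. The $n=0$ quadratic Coulomb term $\frac12\Tr_\cH[W_{\eta_0}\eta_0]$ already lands in the $c^{-3}\|T_c(\gamma)-\gamma\|_{X_c}^2$ bucket (via $\|W_{\eta_0}\|_{\cB}\le\frac{\pi}{2c}\|\eta_0\|_{X_c}$ and $\|\eta_0\|_{\mathfrak S_1}\le c^{-2}\|\eta_0\|_{X_c}$), not in the $P^-\gamma P^-$ term.
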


\subsection{Estimates on the projections}\label{sec:5.2}
To prove \eqref{eq:condition-R}-\eqref{eq:error-theta-Lambda'}, from Lemma \ref{lem:error-bound}, we need to study $P^+_{c,0}-\Lambda^+_c$ and $P^+_{c,\gamma}-P^+_{c,0}$. 

The following lemma will be used to prove \eqref{eq:condition-R} and \eqref{eq:error-theta-Lambda}.
\begin{lemma}\label{lem:Lambda-Pgamma}
    For any $\gamma\in \Gamma_{q}$ and $\kappa_c<1$, we have
    \begin{align}\label{eq:P0-Lambda}
        \||\cD|^{1/2}(P^+_{c,0}-\Lambda^+_c)(-\Delta)^{-1/2}\|_{\cB(\cH)}\leq \frac{z}{(1-\kappa_c)^{1/2}c}
    \end{align}
and
    \begin{align}\label{eq:P-P}
        \||\cD|^{1/2}(P^+_{c,\gamma}-P^+_{c,0})\|_{\mathcal{B}(\mathcal{H})}\leq  \frac{1}{4(1-\kappa_c)^{1/2}\lambda_{0,c}^{1/2}c}\|\gamma\|_X.
    \end{align}
\end{lemma}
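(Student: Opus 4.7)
The plan is to represent each projector difference as a Cauchy integral of resolvents on the imaginary axis, and then to bound it using Hardy--Kato type inequalities and spectral calculus. For any self-adjoint $A$ whose spectrum avoids $0$ one has
\[
\mathds{1}_{(0,\infty)}(A) = \frac{1}{2}I + \frac{1}{2\pi}\int_{-\infty}^{+\infty}(A - i\eta)^{-1}\,d\eta
\]
(understood in a regularized sense). Applied to $\cD$, $\cD_0 = \cD - V$ and $\cD_\gamma = \cD_0 + W_\gamma$, each of which has a spectral gap around $0$ under Assumption~\ref{ass:c}, and combined with the second resolvent identity, this gives
\[
P^+_{c,0} - \Lambda^+_c = \frac{1}{2\pi}\int_{-\infty}^{+\infty}(\cD_0 - i\eta)^{-1}\,V\,(\cD - i\eta)^{-1}\,d\eta,
\]
together with an analogous formula for $P^+_{c,\gamma} - P^+_{c,0}$ in which $\cD_\gamma$, $W_\gamma$, $\cD_0$ replace $\cD_0$, $V$, $\cD$.

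To prove \eqref{eq:P0-Lambda}, I sandwich the first identity between $|\cD|^{1/2}$ on the left and $(-\Delta)^{-1/2}$ on the right. Since $[\cD, -\Delta] = 0$, I can commute $(-\Delta)^{-1/2}$ past $(\cD - i\eta)^{-1}$ and pair it with $V$. The three resulting factors are then estimated separately: (i) $\|V(-\Delta)^{-1/2}\|_{\mathcal{B}(\cH)} \le 2z$, obtained from $V(x)^2 \leq z\int |x-y|^{-2}\,d\mu(y)$ and Hardy's inequality \eqref{eq:Hardy}; (ii) $\|(\cD - i\eta)^{-1}\|_{\mathcal{B}(\cH)} = (c^4+\eta^2)^{-1/2}$, directly from spectral calculus; (iii) the dressed norm $\||\cD|^{1/2}(\cD_0 - i\eta)^{-1}\|_{\mathcal{B}(\cH)}$, handled via the identity $|\cD|^{1/2}(\cD_0 - i\eta)^{-1} = |\cD|^{1/2}(\cD - i\eta)^{-1}[I - V(\cD - i\eta)^{-1}]^{-1}$ together with the Kato/Tix form bound $V \leq (\kappa_c/2)|\cD|$, which yields a Neumann-series bound of order $(1 - \kappa_c)^{-1/2}(c^4 + \eta^2)^{-1/4}$. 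Integration in $\eta$ then produces the claimed right-hand side $z/((1-\kappa_c)^{1/2}c)$.

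The argument for \eqref{eq:P-P} is parallel with $W_\gamma$ in place of $V$. The weight $\|\gamma\|_X$ enters through the decomposition $W_\gamma = W_{1,\gamma} - W_{2,\gamma}$: the direct term is a Coulomb potential of total mass $\Tr\gamma \leq \|\gamma\|_X$ and is handled exactly as $V$, while the exchange term satisfies $\|W_{2,\gamma}\|_{\mathcal{B}(\cH)} \leq 2\|\gamma\|_X$ by \eqref{eq:5.1-1'}. The factor $(1 - \kappa_c)^{1/2}$ in the denominator arises from the Neumann bound on the dressed resolvent of $\cD_\gamma$ via the combined form bound $V + W_\gamma \leq (\kappa_c/2)|\cD|$, while $\lambda_{0,c}^{1/2}$ comes from the spectral-gap lower bound $|\cD_0| \geq \lambda_{0,c}^{1/2} c^2$ used to control the remaining resolvent $(\cD_0 - i\eta)^{-1}$.

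The main obstacle is the sharp tracking of constants. Recovering the square-root $(1-\kappa_c)^{-1/2}$ rather than the naive $(1-\kappa_c)^{-1}$ coming from a single Neumann series requires distributing $|\cD|^{1/2}$ across both resolvent factors and exploiting a Cauchy--Schwarz-type inequality of the form $\||\cD|^{1/2}(\cD_\bullet - i\eta)^{-1}\|_{\mathcal{B}(\cH)}^2 \leq \||\cD|^{1/2}(\cD_\bullet - i\eta)^{-1}|\cD|^{1/2}\|_{\mathcal{B}(\cH)}\cdot \|(\cD_\bullet - i\eta)^{-1}\|_{\mathcal{B}(\cH)}$. Similarly, isolating the factor $1/4$ in the denominator of \eqref{eq:P-P} demands a careful symmetrization in the $\eta$-integral and the exact matching of the Kato constant with $\kappa_c = 2c^{-1}(q+z)$; both are delicate but routine consequences of the spectral calculus for $\cD$ and the fact that $\cD$ and $|\cD|$ are simultaneous functions of $-\Delta$ in Fourier space.
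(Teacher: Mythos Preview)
Your starting point---the contour-integral representation of $P^+_{c,0}-\Lambda^+_c$ and $P^+_{c,\gamma}-P^+_{c,0}$ via the second resolvent identity---is exactly what the paper uses. The divergence is in the execution, and this is where your argument has a genuine gap.

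You propose to bound each factor in operator norm pointwise in $\eta$ and then integrate. With the Neumann-series estimate in step~(iii) this yields $(1-\kappa_c)^{-1}$, not $(1-\kappa_c)^{-1/2}$; you notice this yourself in the last paragraph, but the fix you sketch (a ``Cauchy--Schwarz-type'' inequality between $\||\cD|^{1/2}(\cD_\bullet-i\eta)^{-1}\|$ and $\||\cD|^{1/2}(\cD_\bullet-i\eta)^{-1}|\cD|^{1/2}\|$) is not correct as stated and does not close the argument. The paper's route avoids Neumann series entirely: one applies Cauchy--Schwarz \emph{in the $\eta$-integral} and then uses the elementary spectral identity
\[
\int_{\R}\|(\cD_\bullet - i\eta)^{-1}w\|_{\cH}^2\,d\eta \;=\; \int_{\R}\bigl\langle w,(|\cD_\bullet|^2+\eta^2)^{-1}w\bigr\rangle\,d\eta \;=\; \pi\,\bigl\||\cD_\bullet|^{-1/2}w\bigr\|_{\cH}^2,
\]
which converts the two resolvent factors directly into $|\cD_0|^{-1/2}$ and $|\cD|^{-1/2}$ (resp.\ $|\cD_\gamma|^{-1/2}$). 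The square-root constant then falls out of the form bound $(1-\kappa_c)|\cD|\le |\cD_0|$ from \eqref{eq:5.1-3}, since $\||\cD_0|^{-1/2}|\cD|^{1/2}\|_{\cB(\cH)}\le (1-\kappa_c)^{-1/2}$; no resolvent expansion is needed. For \eqref{eq:P0-Lambda} one first uses Hardy to replace $V$ by $2z|\nabla|$ before the Cauchy--Schwarz step.

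For \eqref{eq:P-P} the paper is even simpler than your outline: rather than splitting $W_\gamma$ into direct and exchange parts, one invokes the single operator bound $\|W_\gamma\|_{\cB(\cH)}\le \tfrac{\pi}{2}\|\gamma\|_X$ from \eqref{eq:5.1-1}, pulls it out of the integral, and proceeds as above. The factor $\lambda_{0,c}^{-1/2}$ then comes from $\||\cD_\gamma|^{-1/2}\|_{\cB(\cH)}\le (c^2\lambda_{0,c})^{-1/2}$ via \eqref{eq:5.1-5} (note: the spectral gap is $|\cD_\gamma|\ge \lambda_{0,c}c^2$, not $\lambda_{0,c}^{1/2}c^2$ as you wrote).
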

\begin{proof}
    Using the resolvent formula, we have
    \begin{align}\label{eq:4.13}
        P^+_{c,0}-\Lambda^+_c=\frac{1}{2\pi}\int_{\R}\frac{1}{\cD-V-iz}V \frac{1}{\cD-iz}dz
    \end{align}
    and
    \begin{align}\label{eq:4.14}
       P^+_{c,\gamma}-P^+_{c,0}= \frac{1}{2\pi}\int_{\R}\frac{1}{\cD_{\gamma}-iz}W_{\gamma} \frac{1}{\cD_0-iz}dz.
    \end{align}
Note that
\begin{align*}
    \int_{\R}\frac{|A|}{|A|^2+z^2}dz=\pi,\qquad \textrm{for } A\neq 0,
\end{align*}
Then we infer from Hardy inequality and \eqref{eq:5.1-3},
\begin{align*}
  \MoveEqLeft  \left|\lc v, |\cD|^{1/2}(P^+_{c,0}-\Lambda^+_c)u \rc_{\cH}\right|\\
  &\leq \frac{1}{2\pi}\int_{\R}\|(\cD+iz)^{-1} |\cD|^{1/2}v\|_{\cH}\|V(\cD-iz)^{-1} u\|_{\cH}dz\\
    &\leq \frac{z}{\pi}\left(\int_{\R}\|(\cD-V+iz)^{-1} |\cD|^{1/2}v\|_{\cH}^2dz\right)^{1/2}\left(\int_{\R}\|(\cD-iz)^{-1} \nabla u\|_{\cH}^2 dz\right)^{1/2}\\
    &\leq z\||\mathcal{D}^c_0|^{-1/2}\cD|^{1/2}v\|_{\cH}\||\cD|^{-1/2}\nabla u\|_{\cH}\\
    &\leq zc^{-1}(1-\kappa_c)^{-1/2}\|v\|_{\cH}\|(-\Delta)^{1/2} u\|_{\cH}.
\end{align*}
Here $\mathcal{D}^c_{0}=\cD-V$ is the DF operator $\mathcal{D}^c_{\gamma}$ with $\gamma=0$. This gives \eqref{eq:P0-Lambda}.

The estimate \eqref{eq:P-P} can be found in \cite{sere2023new}. For the reader's convenience, we give the proof. As for \eqref{eq:P0-Lambda}, from \eqref{eq:5.1-1}, \eqref{eq:5.1-3} and \eqref{eq:5.1-5} we have
\begin{align*}
\MoveEqLeft \left|\lc v, |\cD|^{1/2}(P^+_{c,\gamma}-P^+_{c,0})u \rc_{\cH}\right|\\
 &\leq \frac{1}{2\pi}\|W_{\gamma}\|_{\mathcal{B}(\cH)}\int_{\R}\|(\mathcal{D}^c_{0}+iz)^{-1} |\cD|^{1/2}v\|_{\cH}\|(\mathcal{D}^c_{\gamma}-iz)^{-1} u\|_{\cH}dz\\
    &\leq \frac{\pi}{4}\|\gamma\|_{X}\||\mathcal{D}^c_0|^{-1/2}|\cD|^{1/2}v\|_{\cH}\||\mathcal{D}^c_{\gamma}|^{-1/2}u\|_{\cH}\\
    &\leq \frac{\pi}{4} c^{-1}(1-\kappa_c)^{-1/2}\lambda_{0,c}^{-1/2}\|\gamma\|_X\|v\|_{\cH}\|u\|_{\cH}.
\end{align*}
This ends the proof.
\end{proof}

Under Assumption \ref{ass:V}, the following lemma will be used to prove \eqref{eq:error-theta-Lambda'}.
\begin{lemma}\label{lem:Lambda-Pgamma'}
Under Assumption \ref{ass:V}, for $c$ large enough, we have
    \begin{align}\label{eq:P0-Lambda'}
        \||\cD|^{1/2}(P^+_{c,0}-\Lambda^+_c)(1-\Delta)^{-1}\|_{\cB(\cH)}\lesssim c^{-2}
    \end{align}
and
    \begin{align}\label{eq:P-P'}
        \||\cD|^{1/2}(P^+_{c,\gamma}-P^+_{c,0})\|_{\mathcal{B}(\mathcal{H})}\lesssim    c^{-2}( 1+ \|\gamma\|_{X^2}^2+c\|[W_{\gamma},\beta]\|_{\mathcal{B}(\cH)}).
    \end{align}
\end{lemma}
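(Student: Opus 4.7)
The plan is to iterate the resolvent identities \eqref{eq:4.13} and \eqref{eq:4.14} once more, and to exploit the cancellation $\int_\R(\cD-iz)^{-2}\,dz=0$ (which holds on each spectral subspace of $\cD$ by contour integration, since every $\lambda\in\sigma(\cD)$ is nonzero) so that the leading first-order piece reduces to a commutator with $\cD$ involving $\nabla V$ (respectively $\nabla W_\gamma$ and $[\beta,W_{2,\gamma}]$). Assumption \ref{ass:V} provides the crucial bound $\|(\nabla V)(1-\Delta)^{-1}\|_{\cB(\cH)}\lesssim 1$, which is what allows the upgrade from the $\cO(c^{-1})$ estimate of Lemma \ref{lem:Lambda-Pgamma} to the desired $\cO(c^{-2})$.

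For \eqref{eq:P0-Lambda'}, the expansion $(\cD-V-iz)^{-1}=(\cD-iz)^{-1}+(\cD-V-iz)^{-1}V(\cD-iz)^{-1}$ substituted into \eqref{eq:4.13} yields $P^+_{c,0}-\Lambda^+_c=I_1+I_2$ where
\[
I_1=\frac{1}{2\pi}\int_\R(\cD-iz)^{-1}V(\cD-iz)^{-1}\,dz,\qquad I_2=\frac{1}{2\pi}\int_\R(\cD-V-iz)^{-1}V(\cD-iz)^{-1}V(\cD-iz)^{-1}\,dz.
\]
Using $[\cD,V]=-ic\,\alpha\cdot\nabla V$ together with the vanishing of $\int(\cD-iz)^{-2}V\,dz$, the term $I_1$ reduces, up to sign, to $\frac{ic}{2\pi}\int(\cD-iz)^{-2}(\alpha\cdot\nabla V)(\cD-iz)^{-1}\,dz$. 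Combining Fourier-multiplier resolvent bounds with Assumption \ref{ass:V}, applied after commuting $(1-\Delta)^{-1}$ past the rightmost resolvent (licit since both are Fourier multipliers), then gives $\|\,|\cD|^{1/2}I_1(1-\Delta)^{-1}\|_{\cB(\cH)}\lesssim c^{-2}$ after integrating in $z$. In $I_2$, each factor $V(\cD-iz)^{-1}$ contributes $\cO(c^{-1})$ by Hardy, while $\|V(1-\Delta)^{-1}\|_{\cB(\cH)}\lesssim 1$, so $I_2$ also yields $\cO(c^{-2})$.

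For \eqref{eq:P-P'} the same scheme is applied to \eqref{eq:4.14} with $V\mapsto W_\gamma$ and $\cD\mapsto\cD_0$. The analogue of $I_2$ carries two copies of $W_\gamma$, each controlled through Lemma \ref{lem:W2-estimate} and \eqref{eq:W1-estimate}, producing the $c^{-2}\|\gamma\|_{X^2}^2$ contribution. In the analogue of $I_1$, the commutator $[\cD_0,W_\gamma]=[-ic\,\alpha\cdot\nabla,W_\gamma]+c^2[\beta,W_\gamma]-[V,W_\gamma]$ decomposes into three pieces: the kinetic-gradient and $[V,\,\cdot\,]$ pieces behave like $c\,\alpha\cdot\nabla W_\gamma$ in norm and are bounded by $\|\gamma\|_{X^2}$, while the distinctive piece $c^2[\beta,W_\gamma]=-c^2[W_{2,\gamma},\beta]$ is non-zero precisely because $\beta$ does not commute with the matrix-valued kernel of the exchange operator $W_{2,\gamma}$. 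After sandwiching with the resolvents and $|\cD|^{1/2}$ and integrating in $z$, this last piece yields exactly the $c^{-1}\|[W_\gamma,\beta]\|_{\cB(\cH)}$ term of the final bound, while the others are absorbed into $c^{-2}(1+\|\gamma\|_{X^2}^2)$.

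The main obstacle will be the rigorous treatment of $[-ic\,\alpha\cdot\nabla,W_{2,\gamma}]$ and $[V,W_{2,\gamma}]$: since $W_{2,\gamma}$ is an integral operator whose kernel has a Coulomb singularity, the derivative must be moved onto $\gamma(x,y)/|x-y|$ by integration by parts, and the resulting operator bounded in $\cB(\cH)$ through the techniques of Lemma \ref{lem:W2-estimate} by $\|\gamma\|_{X^2}$. A secondary technical point is that $\int_\R(\cD-iz)^{-2}\,dz$ is only conditionally convergent, so the vanishing identity must be justified by truncation $\int_{-R}^R$ and a dominated-convergence argument on spectral subspaces of $\cD$.
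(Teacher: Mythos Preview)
Your proposal is correct and matches the paper's proof essentially step for step: iterate the resolvent once, use $\int_\R(\cD-iz)^{-2}\,dz=0$ (resp.\ $\int_\R(\cD_0-iz)^{-2}\,dz=0$) to convert the first-order term into a commutator with $\cD$ (resp.\ $\cD_0$), and bound the second-order remainder by Hardy and Cauchy--Schwarz in the integration variable. The only cosmetic difference is that where you propose to decompose $[\cD_0,W_\gamma]$ by hand into the three pieces $[-ic\,\alpha\cdot\nabla,W_\gamma]+c^2[\beta,W_\gamma]-[V,W_\gamma]$ and bound each, the paper simply invokes the ready-made estimate \eqref{eq:W-cD}, which packages exactly this decomposition and in particular disposes of your ``main obstacle'' (the commutator of $\nabla$ and $V$ with the exchange kernel) in one line; your concern about conditional convergence of $\int(\cD-iz)^{-2}\,dz$ is legitimate but standard, and the paper does not dwell on it.
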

\begin{proof}
This proof is mainly based on \cite{meng2024rigorous}. We need a more delicate study of \eqref{eq:4.13} and \eqref{eq:4.14}. Note that for $A\neq 0$
\begin{align*}
    \int_{\R} \frac{1}{(A+iz)^2}dz=0.
\end{align*}
Using this equation and \eqref{eq:4.13},
\begin{align*}
   \MoveEqLeft  P^+_{c,0}-\Lambda^+_c\\
   &=\frac{1}{2\pi}\int_{\R}\frac{1}{\cD-iz}V \frac{1}{\cD-iz}dz +\frac{1}{2\pi}\int_{\R}\frac{1}{\cD-V-iz}V \frac{1}{\cD-iz}V \frac{1}{\cD-iz}dz\\
     &= \frac{1}{2\pi}\int_{\R}\frac{1}{\cD-iz}\Big[V, \frac{1}{\cD-iz}\Big]dz +\frac{1}{2\pi}\int_{\R}\frac{1}{\cD-V-iz}V \frac{1}{\cD-iz}V \frac{1}{\cD-iz}dz\\
     &=\frac{1}{2\pi}\int_{\R}\frac{1}{(\cD-iz)^2} [\cD,V] \frac{1}{\cD-iz}dz +\frac{1}{2\pi}\int_{\R}\frac{1}{\cD-V-iz}V \frac{1}{\cD-iz}V \frac{1}{\cD-iz}dz
\end{align*}
and by \eqref{eq:4.14},
\begin{align*}
  \MoveEqLeft   P^+_{c,\gamma}-P^+_{c,0}\\
     &= \frac{1}{2\pi}\int_{\R}\frac{1}{\cD_{0}-iz}W_{\gamma} \frac{1}{\cD_0-iz}dz+\frac{1}{2\pi}\int_{\R}\frac{1}{\cD_{\gamma}-iz}W_{\gamma}  \frac{1}{\cD_0-iz}V  \frac{1}{\cD_0-iz}dz\\
     &=\frac{1}{2\pi}\int_{\R}\frac{1}{\cD_{0}-iz}\Big[W_{\gamma}, \frac{1}{\cD_0-iz}\Big]dz+\frac{1}{2\pi}\int_{\R}\frac{1}{\cD_{\gamma}-iz}W_{\gamma}  \frac{1}{\cD_0-iz}W_{\gamma}  \frac{1}{\cD_0-iz}dz\\
     &=\frac{1}{2\pi}\int_{\R}\frac{1}{(\cD_{0}-iz)^2}[\cD_0,W_{\gamma}]\frac{1}{\cD_0-iz}dz+\frac{1}{2\pi}\int_{\R}\frac{1}{\cD_{\gamma}-iz}W_{\gamma}  \frac{1}{\cD_0-iz}W_{\gamma}  \frac{1}{\cD_0-iz}dz.
\end{align*}
Analogous to the proof of Lemma \ref{lem:Lambda-Pgamma}, we know
\begin{align*}
  \MoveEqLeft  \left|\left<v, |\cD|^{1/2}(P^+_{c,0}-\Lambda^+_c)\right>_\cH\right|\\
  &\leq \frac{1}{2\pi}\||\cD|^{-1}\|_{\cB(\cH)}\int_{\R}\|(\cD+iz)^{-1}|\cD|^{1/2}v\|_\cH\|[\cD,V](\cD-iz)^{-1}u\|_{\cH}dz\\
  &\quad+\frac{1}{2\pi}\int_{\R}\|(\cD-V+iz)^{-1}|\cD|^{1/2}v\|_\cH\|V(\cD-iz)^{-1}V(\cD-iz)^{-1}u\|_{\cH}dz\\
  &\lesssim c^{-2}\|v\|_{\cH}\|(1-\Delta)u\|_{\cH}
\end{align*}
where we used  Assumption \ref{ass:V}, $|[\cD,V]|\leq c|\nabla V|$ and by \eqref{eq:nabla-V},
\begin{align*}
\MoveEqLeft  \|V (\cD-iz)^{-1}V(\cD-iz)^{-1} u \|_{\cH}\lesssim \|(\cD-iz)^{-1} \nabla [V(\cD-iz)^{-1} u] \|_{\cH}\\
    &\lesssim c^{-2}\|\nabla [V(\cD-iz)^{-1} u] \|_{\cH} \lesssim c^{-2}\|(\cD-iz)^{-1}(1-\Delta) u] \|_{\cH}.
\end{align*}
Next by \eqref{eq:5.1-1}, \eqref{eq:5.1-3} and \eqref{eq:W-cD},
\begin{align*}
\MoveEqLeft \left|\lc v, (P^+_{c,\gamma}-P^+_{c,0})u \rc_{\cH}\right|\\
 &\leq \frac{1}{2\pi}\||\cD|^{-1}\|_{\cB(\cH)}\|[\cD_0,W_{\gamma}]\|_{\mathcal{B}(\cH)}\int_{\R}\|(\mathcal{D}^c_{0}-iz)^{-1} |\cD|^{1/2}v\|_{\cH}\|(\mathcal{D}^c_{\gamma}-iz)^{-1} u\|_{\cH}dz\\
 &\quad+\frac{1}{2\pi}\|W_{\gamma}\|_{\cB(\cH)}^2\||\cD_0|^{-1}\|_{\cB(\cH)}\int_{\R}\|(\mathcal{D}^c_{0}-iz)^{-1} |\cD|^{1/2}v\|_{\cH}\|(\mathcal{D}^c_{\gamma}-iz)^{-1} u\|_{\cH}dz\\
    &\lesssim c^{-3}\Big(\|\gamma\|_X^2+c\|\gamma\|_{X^2}+c^2\|[W_{\gamma},\beta]\|_{\mathcal{B}(\cH)}\Big)\|v\|_{\cH}\|u\|_{\cH}\\
    &\lesssim c^{-2}\Big( 1+ \|\gamma\|_{X^2}^2+c\|[W_{\gamma},\beta]\|_{\mathcal{B}(\cH)}\Big)\|v\|_{\cH}\|u\|_{\cH}.
\end{align*}
This ends the proof.
\end{proof}

Next, to use Lemma \ref{lem:error-bound}, we study the term $T_c(\gamma)-\gamma$ and $P^-_{c,\gamma}\gamma P^-_{c,\gamma}$.
\begin{lemma}\label{lem:T-I}
Let $\gamma\in \mathcal{U}_{c,R}\cap X^2$ and $\kappa_c<1$. If $\Lambda^+_c\gamma\Lambda^+_c=\gamma$, under Assumption \ref{ass:c} we have
\begin{align}\label{eq:6.7}
    \|T_c(\gamma)-\gamma\|_{X_c}\leq (2\pi+4\sqrt{2}z)(1+\|\gamma\|_{X^2})^2
\end{align}
and
\begin{align}\label{eq:P-gamma-P}
    \|P^-_{c,\gamma}\gamma P^-_{c,\gamma}\|_{X_c}\leq c^{-2}(q+8z^2)(1+\|\gamma\|_{X^2})^2.
\end{align}
In addition, under Assumption \ref{ass:V}, for $c$ large enough and $\gamma\in \cU_{c,R}\cap X^4$, we have
\begin{align}\label{eq:6.7'}
    \|T_c(\gamma)-\gamma\|_{X_c}\lesssim c^{-1}(1+ \|\gamma\|_{X^2}^2+c\|[W_{\gamma},\beta]\|_{\mathcal{B}(\cH)})\|\gamma\|_{X^4}
\end{align}
and
\begin{align}\label{eq:P-gamma-P'}
    \|P^-_{c,\gamma}\gamma P^-_{c,\gamma}\|_{X_c}\lesssim c^{-4}(1+\|\gamma\|_{X^2}^2+c\|[W_{\gamma},\beta\|_{\mathcal{B}(\cH)})^2\|\gamma\|_{X^4}.
\end{align}
\end{lemma}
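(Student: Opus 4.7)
The plan is to reduce both bounds to the two operator-norm estimates of Lemma~\ref{lem:Lambda-Pgamma} (or Lemma~\ref{lem:Lambda-Pgamma'} under Assumption~\ref{ass:V}) via the algebraic identity
\begin{equation*}
    T_c(\gamma)-\gamma = A\gamma + \gamma A + A\gamma A, \qquad P^-_{c,\gamma}\gamma P^-_{c,\gamma} = A\gamma A,
\end{equation*}
where $A:=P^+_{c,\gamma}-\Lambda^+_c$. Both identities follow from $\gamma=\Lambda^+_c\gamma\Lambda^+_c$ by substituting $\Lambda^+_c = P^+_{c,\gamma}-A$ and expanding; in the second identity the cross terms in $(1-P^+_{c,\gamma})\gamma(1-P^+_{c,\gamma})$ match those in $A\gamma A$ once this substitution is made.

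I then split $A=A_1+A_2$ with $A_1:=P^+_{c,\gamma}-P^+_{c,0}$ and $A_2:=P^+_{c,0}-\Lambda^+_c$. Under Assumption~\ref{ass:c} we have $\kappa_c\le\tfrac{1}{2}$ and $\lambda_{0,c}\ge\tfrac{1}{2}$, so Lemma~\ref{lem:Lambda-Pgamma} yields
\begin{equation*}
    \||\cD|^{1/2}A_1\|_{\cB(\cH)} \le \frac{\|\gamma\|_X}{2c}, \qquad \||\cD|^{1/2}A_2(-\Delta)^{-1/2}\|_{\cB(\cH)} \le \frac{\sqrt{2}\,z}{c}.
\end{equation*}
The pivotal quantity is $\||\cD|^{1/2}A\gamma^{1/2}\|_{\mathfrak{S}_2}$. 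Splitting via $A=A_1+A_2$, inserting $(-\Delta)^{-1/2}(-\Delta)^{1/2}$ in the $A_2$-term, and using $\|\gamma^{1/2}\|_{\mathfrak{S}_2}^2=\Tr[\gamma]\le q$ together with $\|(-\Delta)^{1/2}\gamma^{1/2}\|_{\mathfrak{S}_2}^2=\Tr[(-\Delta)\gamma]\le\|\gamma\|_{X^2}$ and $\|\gamma\|_X\le\|\gamma\|_{X^2}$, I obtain
\begin{equation*}
    \||\cD|^{1/2}A\gamma^{1/2}\|_{\mathfrak{S}_2} \le \frac{1}{c}\Bigl(\tfrac{\sqrt{q}}{2}+\sqrt{2}z\Bigr)(1+\|\gamma\|_{X^2}).
\end{equation*}
Since $A^*=A$ and $\gamma\ge 0$, $\||\cD|^{1/2}A\gamma A|\cD|^{1/2}\|_{\mathfrak{S}_1}=\||\cD|^{1/2}A\gamma^{1/2}\|_{\mathfrak{S}_2}^2$, and squaring the above bound and using $\bigl(\tfrac{\sqrt{q}}{2}+\sqrt{2}z\bigr)^2\le \tfrac{q}{2}+4z^2\le q+8z^2$ delivers \eqref{eq:P-gamma-P}.

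For \eqref{eq:6.7}, the mixed terms $\||\cD|^{1/2}A\gamma|\cD|^{1/2}\|_{\mathfrak{S}_1}$ are handled by H\"older:
\begin{equation*}
  \||\cD|^{1/2}A\gamma|\cD|^{1/2}\|_{\mathfrak{S}_1}\le \||\cD|^{1/2}A\gamma^{1/2}\|_{\mathfrak{S}_2}\,\|\gamma^{1/2}|\cD|^{1/2}\|_{\mathfrak{S}_2},
\end{equation*}
and the crucial observation is that $\|\gamma^{1/2}|\cD|^{1/2}\|_{\mathfrak{S}_2}^2=\Tr[|\cD|\gamma]=\|\gamma\|_{X_c}\le c^2\|\gamma\|_{X^2}$ (using $|\cD|\le c^2(1-\Delta)$ for $c\ge 1$). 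The factor $c$ then cancels the $c^{-1}$ from the first factor, so the estimate becomes $c$-independent, and the smaller $A\gamma A$ piece from \eqref{eq:P-gamma-P} is absorbed. For the refined estimates under Assumption~\ref{ass:V} I substitute Lemma~\ref{lem:Lambda-Pgamma'} for Lemma~\ref{lem:Lambda-Pgamma}, and replace $(-\Delta)^{-1/2}(-\Delta)^{1/2}$ by $(1-\Delta)^{-1}(1-\Delta)$; the identity $\|(1-\Delta)\gamma^{1/2}\|_{\mathfrak{S}_2}^2=\Tr[(1-\Delta)^2\gamma]=\|\gamma\|_{X^4}$ then reproduces the argument with $c^{-2}$ in place of $c^{-1}$, yielding \eqref{eq:6.7'} and \eqref{eq:P-gamma-P'}.

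The main obstacle is the mismatch between the $c^{-1}$ decay of Lemma~\ref{lem:Lambda-Pgamma} and the explicit $c^2$-weight hidden inside $\|\cdot\|_{X_c}$. The square-root factorization $\gamma=\gamma^{1/2}\gamma^{1/2}$ combined with the identity $\Tr[|\cD|\gamma]=\|\gamma\|_{X_c}$ is what redistributes the weights so that the powers of $c$ balance correctly: two powers of $A$ produce the desired $c^{-2}$ (resp.\ $c^{-4}$) decay for $\|P^-_{c,\gamma}\gamma P^-_{c,\gamma}\|_{X_c}$, while a single power gives a $c^0$ estimate for $\|T_c(\gamma)-\gamma\|_{X_c}$.
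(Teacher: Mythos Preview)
Your proof is correct and follows essentially the same strategy as the paper: both split $P^+_{c,\gamma}-\Lambda^+_c$ into $(P^+_{c,\gamma}-P^+_{c,0})+(P^+_{c,0}-\Lambda^+_c)$ and invoke Lemma~\ref{lem:Lambda-Pgamma} (resp.\ Lemma~\ref{lem:Lambda-Pgamma'}). The only differences are cosmetic: the paper writes $T_c(\gamma)-\gamma=A\gamma P^+_{c,\gamma}+\gamma A$ and therefore needs the extra bound \eqref{eq:DP} on $|\cD|^{1/2}P^+_{c,\gamma}$, whereas your expansion $A\gamma+\gamma A+A\gamma A$ together with the $\gamma^{1/2}$-factorization avoids that step; as a consequence your numerical constant in \eqref{eq:6.7} comes out as roughly $\sqrt{q}+2\sqrt{2}z$ rather than $2\pi+4\sqrt{2}z$, but this is immaterial for the applications.
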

\begin{proof}
Under Assumption \ref{ass:c}, we have
\begin{align*}
    \kappa_c\leq \frac{1}{2}, \qquad \lambda_{0,c}\geq \frac{1}{2}.
\end{align*}
Then we have
\[
T_c(\gamma)-\gamma=(P^+_{c,\gamma}-\Lambda^+_c)\gamma P^+_{c,\gamma}+\gamma (P^+_{c,\gamma}-\Lambda^+_c).
\]
Using Lemma \ref{lem:Lambda-Pgamma}, \eqref{eq:DP} and the fact that $\||\cD|^{1/2}(1-\Delta)^{-1/4}\|_{\cB(\cH)}\leq c$,
\begin{align*}
\|T_c(\gamma)-\gamma\|_{X_c}&\leq \frac{2(1+\kappa_c)^{1/2}}{(1-\kappa_c)^{1/2}}\||\cD|^{1/2}(P^+_{c,\gamma}-P^+_{c,0})\|_{\mathcal{B}(\mathcal{H})}\|\gamma|\cD|^{1/2}\|_{\mathfrak{S}_1}\\
&\quad+\frac{2(1+\kappa_c)^{1/2}}{(1-\kappa_c)^{1/2}}\||\cD|^{1/2}(P^+_{c,0}-\Lambda^+_c)(-\Delta)^{-1/2}\|_{\mathcal{B}(\mathcal{H})}\|(-\Delta)^{1/2}\gamma|\cD|^{1/2}\|_{\mathfrak{S}_1}\\
&\leq \frac{\sqrt{2}\pi}{2(1-\kappa_c)\lambda_{0,c}^{1/2}}\|\gamma\|_{X}^2+\frac{2\sqrt{2}z}{(1-\kappa_c)}\|\gamma\|_{X^2}\leq (2\pi+4\sqrt{2}z)(1+\|\gamma\|_{X^2})^2.
\end{align*}
Here we use the fact that $0\leq \kappa_c<1$ and $\lambda_{0,c}\leq 1$.

Concerning the second one, we have
\[
P^-_{c,\gamma}\gamma P^-_{c,\gamma}=P^-_{c,\gamma}(\Lambda^+_c-P^+_{c,\gamma})\gamma (\Lambda^+_c-P^+_{c,\gamma})P^-_{c,\gamma}.
\]
Using $\kappa_c<1$, $\lambda_{0,c}\leq 1$ and the identity $\Lambda^+_c-P^+_{c,\gamma}= (\Lambda^+_c-P^+_{c,0})+(P^+_{c,0}-P^+_{c,\gamma})$, then by H\"older's inequality \eqref{eq:Holder},
\begin{align*}
 \MoveEqLeft\|P^-_{c,\gamma}(T_c(\gamma)-\gamma)P^-_{c,\gamma}\|_{X_c}\\
 &\leq \frac{1+\kappa_c}{1-\kappa_c}\left(\||\cD|^{1/2}(P^+_{c,0}-P^+_{c,\gamma})\|_{\mathcal{B}(\mathcal{H})}\|\gamma\|_{\mathfrak{S}_1}^{1/2}+\||\cD|^{1/2}(P^+_{c,0}-\Lambda^+_c)(-\Delta)^{-1/2}\|_{\mathcal{B}(\mathcal{H})}\|\gamma\|_{X^2}^{1/2}\right)^2\\
    &\leq c^{-2}(q+8z^2)(1+\|\gamma\|_{X^2})^2.
\end{align*}
where we used $\gamma\in \cU_{c,R}\subset \Gamma_{q}$.

\medskip

Now we consider the case under Assumption \ref{ass:V}. For $c$ large enough, arguing as above and by Lemma \ref{lem:Lambda-Pgamma'} we have
\begin{align*}
\|T_c(\gamma)-\gamma\|_{X_c}&\lesssim\||\cD|^{1/2}(P^+_{c,\gamma}-P^+_{c,0})\|_{\mathcal{B}(\mathcal{H})}\|\gamma|\cD|^{1/2}\|_{\mathfrak{S}_1}\\
&\quad+\||\cD|^{1/2}(P^+_{c,0}-\Lambda^+_c)(1-\Delta)^{-1}\|_{\mathcal{B}(\mathcal{H})}\|(1-\Delta)\gamma|\cD|^{1/2}\|_{\mathfrak{S}_1}\\
&\lesssim c^{-2}\|(1-\Delta)\gamma|\cD|^{1/2}\|_{\mathfrak{S}_1}+ c^{-2}\Big(1+\|\gamma\|_{X^2}^2+c\|[W_{\gamma},\beta]\|_{\mathcal{B}(\cH)}\Big)\|\gamma|\cD|^{1/2}\|_{\mathfrak{S}_1}\\
&\lesssim c^{-1}\Big(1+ \|\gamma\|_{X^2}^2+c\|[W_{\gamma},\beta]\|_{\mathcal{B}(\cH)}\Big)\|(1-\Delta)\gamma(1-\Delta)^{1/4}\|_{\mathfrak{S}_1}\\
&\lesssim c^{-1}(1+ \|\gamma\|_{X^2}^2+c\|[W_{\gamma},\beta]\|_{\mathcal{B}(\cH)})\|\gamma\|_{X^4},
\end{align*}
and 
\begin{align*}
 \MoveEqLeft\|P^-_{c,\gamma}(T_c(\gamma)-\gamma)P^-_{c,\gamma}\|_{X_c}\\
 &\lesssim \||\cD|^{1/2}(P^+_{c,0}-P^+_{c,\gamma})\|_{\mathcal{B}(\mathcal{H})}^2\|\gamma\|_{\mathfrak{S}_1}\\
 &\quad+\||\cD|^{1/2}(P^+_{c,0}-\Lambda^+_c)(1-\Delta)^{-1}\|_{\mathcal{B}(\mathcal{H})}^2\|\gamma\|_{X^4}\\
 &\quad +\||\cD|^{1/2}(P^+_{c,0}-P^+_{c,\gamma})\|_{\mathcal{B}(\mathcal{H})}\||\cD|^{1/2}(P^+_{c,0}-\Lambda^+_c)(1-\Delta)^{-1}\|_{\mathcal{B}(\mathcal{H})} \|(1-\Delta) \gamma\|_{\mathfrak{S}_1} \\
    &\lesssim c^{-4}(1+\|\gamma\|_{X^2}^2+c\|[W_{\gamma},\beta]\|_{\mathcal{B}(\cH)})^2\|\gamma\|_{X^4}.
\end{align*}

This ends the proof.
\end{proof}

\subsection{Proof of Theorem \texorpdfstring{\ref{th:4.1}}{}}\label{sec:5.3}
We prove \eqref{eq:E_cq-E_c-lambda} and \eqref{eq:E_cq-E_c-lambda-c2} separately.
\subsubsection{Proof of \texorpdfstring{\eqref{eq:E_cq-E_c-lambda}}{}}

We first choose $R$ such that $\Lambda^+_c\gamma_{*}^{\rm HF}\Lambda^+_c\in\mathcal{U}_{c,R}$.
\begin{lemma}\label{lem:R}
 Let $q\in \R^+$ and $z\in \R^+$ such that $q\leq z$. Then for any $c$ satisfying Assumption \ref{ass:c}, we have 
 \begin{align*}
     \gamma_*^c\in\mathcal{U}_{c,R_0},\qquad \Lambda^+_c\gamma_{*}^{\rm HF}\Lambda^+_c\in\mathcal{U}_{c,R_0}
 \end{align*}
 with $R_0$ given by \eqref{def:R_0} independently of $c$ and $L_c=2a_c R_0\leq \frac{1}{2}$.
\end{lemma}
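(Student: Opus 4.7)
Three facts must be established: (i) $L_c = 2a_cR_0 \leq \tfrac{1}{2}$; (ii) $\gamma_*^c \in \mathcal{U}_{c,R_0}$; (iii) $\Lambda^+_c\gamma_*^{\rm HF}\Lambda^+_c \in \mathcal{U}_{c,R_0}$. Item (i) is immediate from \eqref{eq:a-c==RcDF} and Assumption \ref{ass:c}: $a_c \leq \pi/(2c)$ and $c \geq 4\pi R_0$ give $L_c \leq \pi R_0/c \leq 1/4$. This also forces $A_c \leq 2$ (the branch $(1-L_c)^{-1} \leq 4/3$ dominates since $a_cq \leq \pi/8$ from $c \geq 4q$).

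For (ii), since $\gamma_*^c \in \Gamma_q^+$ we have $T_c(\gamma_*^c) = \gamma_*^c$, so the $X_c$ term in the definition of $\mathcal{U}_{c,R_0}$ vanishes. For the remaining term, Remark \ref{rem:R} gives $\|\gamma_*^c|\cD|\|_{\mathfrak{S}_1} < cR_c^{\rm DF}$, and the spectral inequality $|\cD|^{1/2} \leq c^{-1}|\cD|$ (valid because $|\cD| \geq c^2$) yields
\[
\tfrac{1}{c}\|\gamma_*^c|\cD|^{1/2}\|_{\mathfrak{S}_1} \leq \tfrac{1}{c^2}\|\gamma_*^c|\cD|\|_{\mathfrak{S}_1} < \tfrac{R_c^{\rm DF}}{c} \leq \tfrac{1+4q}{c} < R_0,
\]
using $R_c^{\rm DF} \leq 1+4q$ from \eqref{eq:a-c==RcDF} and $R_0 \geq 2+4q$ from \eqref{def:R_0}.

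For (iii), set $\gamma := \Lambda^+_c\gamma_*^{\rm HF}\Lambda^+_c$. The identity $\cD^2 = c^4 - c^2\Delta$ makes $\Lambda^+_c$ commute with $|\cD|$ and $(1-\Delta)^{1/2}$, so $\gamma \in \Gamma_q$ with $\|\gamma\|_{X^s} \leq \|\gamma_*^{\rm HF}\|_{X^s}$ and $\Lambda^+_c\gamma\Lambda^+_c = \gamma$. For the first term of the $\mathcal{U}_{c,R_0}$ seminorm, commutation with $|\cD|^{1/2}$, Cauchy--Schwarz using $(\gamma_*^{\rm HF})^2 = \gamma_*^{\rm HF}$, and the concavity bound $|\cD| \leq c^2 + \tfrac{1}{2}(-\Delta)$ give
\[
\tfrac{1}{c}\|\gamma|\cD|^{1/2}\|_{\mathfrak{S}_1} \leq \tfrac{q^{1/2}}{c}\bigl(\Tr[\gamma_*^{\rm HF}|\cD|]\bigr)^{1/2} \leq q^{1/2}\bigl(q + \tfrac{1}{2c^2}\|\gamma_*^{\rm HF}\|_{X^2}\bigr)^{1/2} \leq q + \tfrac{q^{1/2}}{c\sqrt{2}}\|\gamma_*^{\rm HF}\|_{X^2}^{1/2}.
\]
For the second term, apply \eqref{eq:6.7} of Lemma \ref{lem:T-I}: inspection of its proof shows that the stated hypothesis $\gamma \in \mathcal{U}_R$ only enters through the uniform bounds $\kappa_c \leq 1/2$, $\lambda_{0,c} \geq 1/2$ already granted by Assumption \ref{ass:c}, so no circular dependence arises. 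This delivers $\|T_c(\gamma)-\gamma\|_{X_c} \leq (2\pi + 4\sqrt{2}z)(1+\|\gamma_*^{\rm HF}\|_{X^2})^2$, and with $A_c \leq 2$ the second contribution to the seminorm is at most $\tfrac{4(\pi+2\sqrt{2}z)}{c^2}(1+\|\gamma_*^{\rm HF}\|_{X^2})^2$.

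The main subtlety is purely bookkeeping: $\mathcal{U}_{c,R_0}$ demands strict inequality, and the two-branch maximum defining $R_0$ in \eqref{def:R_0} is engineered precisely for this. The branch $2+4q$ absorbs the leading-order $q$ coming from the trace-norm estimate of $\gamma|\cD|^{1/2}$, while the branch $1+\|\gamma_*^{\rm HF}\|_{X^2}+4(\pi+2\sqrt{2}z)(1+\|\gamma_*^{\rm HF}\|_{X^2})^2$ absorbs the $c^{-1}$ remainder of the first estimate together with the $c^{-2}$ retraction-defect term, once $c \geq 4\pi R_0$ is used. Summing the two contributions for $\gamma$ therefore lies strictly below $R_0$, completing the proof.
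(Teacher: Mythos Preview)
Your proof is correct and follows essentially the same approach as the paper. The one substantive difference is in the bound on $\tfrac{1}{c}\|\Lambda^+_c\gamma_*^{\rm HF}\Lambda^+_c|\cD|^{1/2}\|_{\mathfrak{S}_1}$: the paper uses $\||\cD|^{1/2}(1-\Delta)^{-1/2}\|_{\cB(\cH)}\leq c$ to obtain the cleaner estimate $\|\gamma_*^{\rm HF}\|_{X^2}$, so that the sum of the two contributions is exactly one less than the second branch of $R_0$ and the strict inequality follows with no further appeal to the size of $c$. Your Cauchy--Schwarz route gives instead $q+O(c^{-1})$, which also works but requires $c\geq 4\pi R_0$ together with the first branch $R_0\geq 2+4q$ to close; that arithmetic is straightforward (the $c^{-1}$ and $c^{-2}$ terms are bounded by absolute constants well below $R_0-q\geq 2$), but it would be better written out explicitly than summarized as ``absorbs.'' Your observation that the proof of \eqref{eq:6.7} does not actually rely on the hypothesis $\gamma\in\mathcal{U}_R$---only on $\gamma\in\Gamma_q\cap X^2$ with $\Lambda^+_c\gamma\Lambda^+_c=\gamma$---is correct and a useful point to make.
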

\begin{proof}
First of all, $\gamma_*^c=P^+_{c,\gamma_*^c}\gamma_*^c P^+_{c,\gamma_*^c}\in \Gamma_{q}^+$. According to Remark \ref{rem:R} and \eqref{eq:a-c==RcDF},
\begin{align*}
  \MoveEqLeft  \frac{1}{c}\|\gamma_*^c|\cD|^{1/2}\|_{\mathfrak{S}_1}+ \frac{A_c}{c^2}\|T(\gamma_*^c)-\gamma_*^c\|_{X_c}\\
  &\qquad=\frac{1}{c}\|\gamma_*^c|\cD|^{1/2}\|_{\mathfrak{S}_1}< R^{\rm DF}_c\leq 1+4q<R_0.
\end{align*}
Thus, $\gamma_*^c\in\mathcal{U}_{c,R_0}$.

We turn to prove $\Lambda^+_c\gamma_{*}^{\rm HF}\Lambda^+_c\in\mathcal{U}_{c,R_0}$. We have $\Lambda^+_c\gamma_*^{\rm HF}\Lambda^+_c\in \Gamma_{q}$ and
\begin{align*}
    \frac{1}{c}\|\Lambda^+_c\gamma_*^{\rm HF}\Lambda^+_c|\cD|^{1/2}\|_{\mathfrak{S}_1}\leq  \frac{1}{c}\|\gamma_*^{\rm HF}|\cD|^{1/2}\|_{\mathfrak{S}_1} \leq \|\gamma_*^{\rm HF}\|_{X}.
\end{align*}
where we used the fact that $\Lambda^+_c$ is a projector, $[\Lambda^+_c,\cD]=0$ and $|\cD|^{1/2}\leq c(1-\Delta)^{1/4}$. From \eqref{eq:6.7} and as $0\leq \Lambda^+\leq 1$, we also have
\begin{align}\label{eq:T-+-gammaHF}
   \MoveEqLeft \|T_c(\Lambda^+_c\gamma_*^{\rm HF}\Lambda^+_c)-\Lambda^+_c\gamma_*^{\rm HF}\Lambda^+_c\|_{X_c}\leq (2\pi+4\sqrt{2}z)(1+\|\gamma_*^{\rm HF}\|_{X^2})^2.
\end{align}
Then from \eqref{eq:a-c==RcDF} and by Assumption \ref{ass:c},
\begin{align}\label{eq:2ac-R<1}
    2a_c R_0\leq \frac{\pi }{c}R_0 \leq \frac{1}{2}.
\end{align}
As $\frac{2q}{c}\leq \kappa_c\leq \frac{1}{2}$, we have
\begin{align*}
    A_c=\max\left\{\frac{1}{1-2a_cR_0},\frac{2+a_c q}{2}\right\}\leq \max\left\{2, \frac{2+\pi c^{-1}q}{2}\right\}\leq \max\left\{2, \frac{2+\pi/4}{2}\right\}\leq 2.
\end{align*}
Thus, $\Lambda^+_c\gamma_*^{\rm HF}\Lambda^+_c$ satisfies
\begin{align*}
 \MoveEqLeft\frac{1}{c}\|\Lambda^+_c\gamma_*^{\rm HF}\Lambda^+_c|\cD|^{1/2}\|_{\mathfrak{S}_1}+\frac{A_c}{c^2}\|T_c(\Lambda^+_c\gamma_*^{\rm HF}\Lambda^+_c)-\Lambda^+_c\gamma_*^{\rm HF}\Lambda^+_c\|_{X_c}\\
    &\qquad\qquad \leq \|\gamma_*^{\rm HF}\|_{X^2} + 4(\pi+2\sqrt{2}z)(1+\|\gamma_*^{\rm HF}\|_{X^2})^2<R_0.
\end{align*}
This shows that $\Lambda^+_c\gamma_{*}^{\rm HF}\Lambda^+_c\in\mathcal{U}_{c,R_0}$. From \eqref{eq:2ac-R<1}, we also know $L_c=2a_c R_0\leq \frac{1}{2}$. This ends the proof.
\end{proof}

Now assumptions in Lemma \ref{lem:error-bound} are satisfied under Assumption \ref{ass:c}. Gathering together Lemma \ref{lem:error-bound}, Lemma \ref{lem:T-I} with $R=R_0$ and Lemma \ref{lem:R}, we conclude that
\begin{lemma}\label{lem:6.9}
 Let $q\in \R^+$ and $z\in \R^+$ such that $q\leq z$.Then for any $c$ satisfying Assumption \ref{ass:c}, we have 
    \begin{align*}
     \MoveEqLeft   |E_c(\Lambda^+_c\gamma_*^{\rm HF}\Lambda^+_c)-\mathcal{E}_c(\Lambda^+_c\gamma_*^{\rm HF}\Lambda^+_c)|=\cO(c^{-2}).
    \end{align*}
\end{lemma}
\begin{proof}
  By Lemma \ref{lem:R}, we can choose $R=R_0$, then $L_c\leq \frac{1}{2}$. In addition, by Assumption \ref{ass:c},
  \begin{align*}
      c^{-1}R_0\leq \frac{1}{4\pi},\qquad c^{-1}q\leq \frac{1}{4}.
  \end{align*}
  Then by Lemma \ref{lem:error-bound}, \eqref{eq:P-gamma-P} and \eqref{eq:T-+-gammaHF}
  \begin{align*}
   \MoveEqLeft   |E_c(\Lambda^+_c\gamma_*^{\rm HF}\Lambda^+_c)-\mathcal{E}_c(\Lambda^+_c\gamma_*^{\rm HF}\Lambda^+_c)|\\
   &\lesssim (1+\|\gamma_*^{\rm HF}\|_{X^2})^4 c^{-2} +(1+\|\gamma_*^{\rm HF}\|_{X^2})^2c^{-2}=\cO(c^{-2}).
  \end{align*}
This ends the proof.
\end{proof}

Lemma \ref{lem:6.9} and \eqref{eq:EDF-min-E-retra-HF} shows that
\begin{align}
    E_{c,q}\leq \mathcal{E}_c(\Lambda^+_c\gamma_*^{\rm HF}\Lambda^+_c)+\cO(c^{-2})
\end{align}
which proves \eqref{eq:E_cq-E_c-lambda}.

 \subsubsection{Proof of \texorpdfstring{\eqref{eq:E_cq-E_c-lambda-c2}}{}}
We are now in the position to prove \eqref{eq:E_cq-E_c-lambda-c2}. We first verify \eqref{eq:condition-R},
\begin{lemma}\label{lem:R'}
Under Assumption \ref{ass:V}, for $c$ large enough, we have 
 \begin{align*}
     \Lambda^+_c\widetilde{\gamma}_c^{\rm HF}\Lambda^+_c\in\mathcal{U}_{c,R_0}
 \end{align*}
 with the same $R_0<\frac{1}{2a_c}$ as in Lemma \ref{lem:R}.
\end{lemma}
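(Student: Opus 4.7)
The plan is to imitate the proof of Lemma \ref{lem:R}, substituting $\widetilde{\gamma}_c^{\rm HF}$ for $\gamma_*^{\rm HF}$, and then to conclude via a continuity argument that exploits the slack deliberately built into the definition \eqref{def:R_0} of $R_0$.

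First I would note that $\Lambda^+_c\widetilde{\gamma}_c^{\rm HF}\Lambda^+_c\in \Gamma_q$: by Lemma \ref{lem:gammaHF'-projection}, $\widetilde{\gamma}_c^{\rm HF}$ is a rank-$q$ orthogonal projection, and compression by the orthogonal projection $\Lambda^+_c$ preserves $0\leq \cdot \leq \mathds{1}_\cH$ and does not increase the trace, so $\Tr[\Lambda^+_c\widetilde{\gamma}_c^{\rm HF}\Lambda^+_c]\leq q$. Next, since $\Lambda^+_c(\Lambda^+_c\widetilde{\gamma}_c^{\rm HF}\Lambda^+_c)\Lambda^+_c = \Lambda^+_c\widetilde{\gamma}_c^{\rm HF}\Lambda^+_c$, inequality \eqref{eq:6.7} of Lemma \ref{lem:T-I} applies directly and will yield
\[
\|T_c(\Lambda^+_c\widetilde{\gamma}_c^{\rm HF}\Lambda^+_c) - \Lambda^+_c\widetilde{\gamma}_c^{\rm HF}\Lambda^+_c\|_{X_c} \leq (2\pi + 4\sqrt{2}z)(1 + \|\widetilde{\gamma}_c^{\rm HF}\|_{X^2})^2.
\]
Combining this with the elementary bound $\tfrac{1}{c}\|\Lambda^+_c\widetilde{\gamma}_c^{\rm HF}\Lambda^+_c|\cD|^{1/2}\|_{\mathfrak{S}_1}\leq \|\widetilde{\gamma}_c^{\rm HF}\|_{X^2}$ (from $|\cD|\leq c^2(1-\Delta)$ for $c\geq 1$ and a Cauchy--Schwarz splitting $\gamma=\gamma^{1/2}\cdot\gamma^{1/2}$), together with $A_c\leq 2$ from Assumption \ref{ass:c}, will lead to
\[
\tfrac{1}{c}\|\Lambda^+_c\widetilde{\gamma}_c^{\rm HF}\Lambda^+_c|\cD|^{1/2}\|_{\mathfrak{S}_1} + \tfrac{A_c}{c^2}\|T_c(\Lambda^+_c\widetilde{\gamma}_c^{\rm HF}\Lambda^+_c) - \Lambda^+_c\widetilde{\gamma}_c^{\rm HF}\Lambda^+_c\|_{X_c} \leq \|\widetilde{\gamma}_c^{\rm HF}\|_{X^2} + 4(\pi + 2\sqrt{2}z)(1 + \|\widetilde{\gamma}_c^{\rm HF}\|_{X^2})^2.
\]

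It then remains to show that this right-hand side is strictly less than $R_0$ for $c$ large enough. By the definition \eqref{def:R_0}, the same quantity with $\gamma_*^{\rm HF}$ in place of $\widetilde{\gamma}_c^{\rm HF}$ is already at least $1$ below $R_0$, so the task reduces to verifying $\|\widetilde{\gamma}_c^{\rm HF}\|_{X^2}\to \|\gamma_*^{\rm HF}\|_{X^2}$ as $c\to\infty$. This $X^2$-convergence is the one step that requires genuine work and is where Assumption \ref{ass:V} enters the hypothesis: combining \eqref{eq:gammaHF''}, Lemma \ref{lem: gammaHF''-gammaHF'}, the expansion \eqref{eq:S-1}, and the $H^3$-regularity of the $u_n^{\rm HF}$ from Corollary \ref{cor:regularity-H3}, I expect the large block $\cK_\rL\widetilde{\gamma}_c^{\rm HF}\cK_\rL$ to converge to $\gamma_*^{\rm HF}$ in $X^2$ at rate $\mathcal{O}(c^{-2})$, while the cross and small blocks vanish in $X^2$ at rate $\mathcal{O}(c^{-2})$ by virtue of the $\tfrac{1}{2c}\cL$ factor in \eqref{eq:S-transform} and Lemma \ref{lem:gammaHF'-estimate}. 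Together with $L_c = 2a_c R_0\leq \tfrac{1}{2}$ (already noted under Assumption \ref{ass:c}), which in particular gives $R_0 < \tfrac{1}{2a_c}$, this will close the proof.
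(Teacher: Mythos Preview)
Your approach is correct and is essentially a fleshed-out version of the paper's own argument. The paper dispatches the lemma in one sentence: it notes that $\widetilde{\gamma}_c^{\rm HF}\to\gamma_*^{\rm HF}$ in $X$ (from Lemma~\ref{lem:gammaHF'-projection}, Lemma~\ref{lem: gammaHF''-gammaHF'} and \eqref{eq:gammaHF''}) and that $\gamma\mapsto T_c(\gamma)$ is continuous in $X$ by \eqref{eq:P-P}, and then asserts that ``this continuity implies'' membership in $\mathcal{U}_{c,R_0}$. You instead re-run the template of Lemma~\ref{lem:R} verbatim with $\widetilde{\gamma}_c^{\rm HF}$ in place of $\gamma_*^{\rm HF}$ and close via the $+1$ slack in \eqref{def:R_0}; this is exactly what the paper's one-liner is hiding, so the two arguments coincide in substance.

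Two small remarks. First, the cross blocks $\cK_\rL\widetilde{\gamma}_c^{\rm HF}\cK_\rS$ vanish in $X^2$ only at rate $\mathcal{O}(c^{-1})$, not $\mathcal{O}(c^{-2})$: each such block is a combination of $|u_m^{\rm HF}\rangle\langle\tfrac{1}{2c}\cL u_n^{\rm HF}|$, and $\|(1-\Delta)^{1/2}\tfrac{1}{2c}\cL u_n^{\rm HF}\|_\cH=\mathcal{O}(c^{-1})$. This does not affect your conclusion, since you only need $\|\widetilde{\gamma}_c^{\rm HF}\|_{X^2}\to\|\gamma_*^{\rm HF}\|_{X^2}$. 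Second, the $X^2$-convergence you need actually follows from $\gamma_*^{\rm HF}\in X^4$ alone (Corollary~\ref{cor:regularity-H3} without Assumption~\ref{ass:V}); Assumption~\ref{ass:V} is part of the hypothesis only because the lemma sits inside the proof of \eqref{eq:E_cq-E_c-lambda-c2}, not because your continuity step requires it.
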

\begin{proof}
    The proof is essentially the same as for Lemma \ref{lem:R}. We only need to use Lemma \ref{lem:gammaHF'-estimate} in addition. Indeed, by Lemma \ref{lem:gammaHF'-projection}, $  \widetilde{\gamma}_c^{\rm HF}\in \Gamma_{q}$. Then by Lemma \ref{lem: gammaHF''-gammaHF'} and \eqref{eq:gammaHF''}, as $c\to \infty$,
    \[
  \widetilde{\gamma}_c^{\rm HF}\to \gamma_*^{\rm HF}
    \] 
    in $X$. According to \eqref{eq:P-P}, the mapping $\gamma \mapsto T_c(\gamma)$ is continuous in $X$. This continuity implies that for $c$ large enough, 
    \begin{align*}
        \Lambda^+_c\widetilde{\gamma}_c^{\rm HF}\Lambda^+_c\in\mathcal{U}_{c,R_0}
    \end{align*}
    with the same $R_0<\frac{1}{2a_c}$ as in Lemma \ref{lem:R}.
\end{proof}
    \begin{lemma}\label{lem:6.10}
 Let $q\in \R^+$ and $z\in \R^+$ such that $q\leq z$.Then for any $c$ satisfying Assumption \ref{ass:c} and Assumption \ref{ass:V}, we have 
    \begin{align*}
     \MoveEqLeft   |E_c(\Lambda^+_c\gamma_*^{\rm HF}\Lambda^+_c)-\mathcal{E}_c(\Lambda^+_c\gamma_*^{\rm HF}\Lambda^+_c)|=\cO(c^{-4}).
    \end{align*}
\end{lemma}
\begin{proof}
    By \eqref{eq:6.7'}, \eqref{eq:P-gamma-P'} and Lemma \ref{lem:error-bound}, under Assumption \ref{ass:V}, for $c$ large enough, we have
\begin{align}\label{eq:4.23}
 \MoveEqLeft   |E_c(\Lambda^+_c \widetilde{\gamma}_c^{\rm HF}\Lambda^+_c )-\cE_c((\Lambda^+_c \widetilde{\gamma}_c^{\rm HF}\Lambda^+_c ))|\notag\\
 &\lesssim c^{-4}\Big(1+\|\Lambda^+_c \widetilde{\gamma}_c^{\rm HF}\Lambda^+_c\|_{X^4}^2+c\|[W_{\Lambda^+_c \widetilde{\gamma}_c^{\rm HF}\Lambda^+_c},\beta]\|_{\mathcal{B}(\cH)}\Big)^2.
\end{align}
By Lemma \ref{lem:gammaHF'-estimate}, 
\begin{align}\label{eq:4.24}
    \|\Lambda^+_c \widetilde{\gamma}_c^{\rm HF}\Lambda^+_c\|_{X^4}\leq \|\widetilde{\gamma}_c^{\rm HF}\|_{X^4}=\cO(1).
\end{align}
It remains to study $[W_{\Lambda^+_c \widetilde{\gamma}_c^{\rm HF}\Lambda^+_c},\beta]$. We decompose $\Lambda^+_c \widetilde{\gamma}_c^{\rm HF}\Lambda^+_c$ into four blocks:
\begin{align*}
    \Lambda^+_c \widetilde{\gamma}_c^{\rm HF}\Lambda^+_c=\sum_{j,j'\in\{\rL,\rS\} }\cK_{j} \Lambda^+_c \widetilde{\gamma}_c^{\rm HF}\Lambda^+_c \cK_{j'}.
\end{align*}
Note that for any $4\times 4$ matrix 
\begin{align*}
    A:=\begin{pmatrix}
        A_{1,1} & A_{1,2} \\
        A_{2,1} &A_{2,2}
    \end{pmatrix},\qquad A_{m,n}\in {\rm Mat}_{2\times 2},
\end{align*}
we have
\begin{align*}
   [A,\beta]=\begin{pmatrix}
       A_{1,1} & -A_{1,2}\\ A_{2,1} &-A_{2,2}
   \end{pmatrix}- \begin{pmatrix}
       A_{1,1} & A_{1,2}\\ -A_{2,1} &-A_{2,2}
   \end{pmatrix}= 2\begin{pmatrix}
       0& -A_{1,2}\\ A_{2,1}&0
   \end{pmatrix}.
\end{align*}
Thus,
\begin{align*}
   [W_{\Lambda^+_c \widetilde{\gamma}_c^{\rm HF}\Lambda^+_c},\beta]= W_{2,[\Lambda^+_c \widetilde{\gamma}_c^{\rm HF}\Lambda^+_c,\beta]}=-2 W_{2,\cK_{\rL } \Lambda^+_c \widetilde{\gamma}_c^{\rm HF}\Lambda^+_c \cK_\rS}+2 W_{2,\cK_{\rS } \Lambda^+_c \widetilde{\gamma}_c^{\rm HF}\Lambda^+_c \cK_\rL}.
\end{align*}
Then by H\"older's inequality \eqref{eq:Holder}, \eqref{eq:5.1-1'}, Corollary \ref{cor:lambda-non-gamma'} and Lemma \ref{lem:gammaHF'-estimate},
\begin{align}\label{eq:4.25}
    \|[W_{\Lambda^+_c \widetilde{\gamma}_c^{\rm HF}\Lambda^+_c},\beta]\|_{\cB(\cH)}&\lesssim \|\cK_{\rS } \Lambda^+_c \widetilde{\gamma}_c^{\rm HF}\Lambda^+_c \cK_\rL\|_{X}\notag\\
    &\lesssim \|\cK_\rS \Lambda^+_c \widetilde{\gamma}_c^{\rm HF} \Lambda^+_c \cK_\rS\|_X^{1/2}\|\cK_\rL \Lambda^+_c \widetilde{\gamma}_c^{\rm HF} \Lambda^+_c \cK_\rL\|_X^{1/2}\notag\\
    &\lesssim c^{-1}\|\cK_\rL \widetilde{\gamma}_c^{\rm HF}\cK_\rL\|_{X^3}\leq c^{-1}\|\widetilde{\gamma}_c^{\rm HF}\|_{X^3}=\cO(c^{-1}).
\end{align}
From \eqref{eq:4.23}-\eqref{eq:4.25}, we conclude that
\begin{align*}
     |E_c(\Lambda^+_c \widetilde{\gamma}_c^{\rm HF}\Lambda^+_c )-\cE_c((\Lambda^+_c \widetilde{\gamma}_c^{\rm HF}\Lambda^+_c ))|=\cO(c^{-4}).
\end{align*}
\end{proof}
Lemma \ref{lem:6.10} and \eqref{eq:EDF-min-E-retra-HF} shows that
\begin{align}
    E_{c,q}\leq \mathcal{E}_c(\Lambda^+_c\widetilde{\gamma}_*^{\rm HF}\Lambda^+_c)+\cO(c^{-4})
\end{align}
which proves \eqref{eq:E_cq-E_c-lambda-c2}. Now the proof of Theorem \ref{th:4.1} is completed.

\section{From DF problem to HF problem}\label{sec:7}
In this section, we are trying to understand the relationship between the DF ground-state energy and some HF energies. The main result of this section is the following.
\begin{theorem}[From DF problem to HF problem]\label{th:5.1}
    Let $\gamma_*^c\in \Gamma_{q}^+$ be any DF minimizer of $E_{c,q}$. Then under Assumption \ref{ass:c}, we have
    \begin{align}\label{eq:EHF-gammac}
        \cE^{\rm HF}(\cK_\rL \gamma^c_*\cL_\rL)\leq E_{c,q}+\cO(c^{-2}).
    \end{align}
In addition, under Assumption \ref{ass:V} and assume that $\delta_c=0$ with $\delta_c$ being defined by \eqref{eq:gamma-DF}, then for $c$ large enough, there exists a density matrix $\widetilde{\gamma}^c_*\in \Gamma_q^{\rm HF}$ such that
\begin{align}\label{eq:EHF-gammac-c2}
    \cE^{\rm HF}(\widetilde{\gamma}^c_*)\leq E_{c,q}-\widetilde{\cE}_c^{(2)}(\gamma_*^c)+\cO(c^{-4})
\end{align}
where
\begin{align*}
    \widetilde{\cE}_c^{(2)}(\gamma_*^c)&=-\frac{1}{4c^2}\Tr_{\cH}[(\cD_{\gamma^c_*}-c^2)\gamma^c_*\cL^2] \\
    &+\frac{1}{4c^2}\left(\Tr_\cH[(-V+W_{1,\cK_\rL \gamma_*^c\cK_\rL })\cL \cK_\rL \gamma_*^c  \cK_\rL \cL]-\Tr_\cH[W_{2,\cK_\rL \gamma_*^c \cK_\rL \cL}\cL \cK_\rL \gamma_*^c \cK_\rL]\right).
\end{align*}
More precisely, $\widetilde{\gamma}^c_* \in \Gamma_q^{\rm HF}$ is defined by \eqref{eq:gammaDF'} below and it is a non-relativistic renormalization of $\gamma_*^c$. 
\end{theorem}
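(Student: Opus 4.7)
The plan is to mirror Sections \ref{sec:5}--\ref{sec:6} in reverse, starting from the DF minimizer $\gamma_*^c$ and building a non-relativistically renormalized HF trial state via the Foldy--Wouthuysen-type reduction $u^\rS \approx \tfrac{1}{2c}\cL u^\rL$. By Theorem \ref{th:min-DF} one writes $\gamma_*^c = \sum_{n=1}^q \lvert u_{c,n}\rangle\langle u_{c,n}\rvert + \delta_c$, with $\{u_{c,n}\}$ orthonormal eigenfunctions of $\cD_{\gamma_*^c}$ with eigenvalues $\lambda_n^c\in(0,c^2]$; under the hypothesis of Part 2, $\delta_c=0$. Writing $u_{c,n}=(u_{c,n}^\rL,u_{c,n}^\rS)^{\top}$, the lower-block equation
\[
(c^2+\lambda_n^c)u_{c,n}^\rS = c\cL u_{c,n}^\rL + V u_{c,n}^\rS - (W_{\gamma_*^c}u_{c,n})^\rS
\]
yields by inversion $u_{c,n}^\rS = \tfrac{1}{2c}\cL u_{c,n}^\rL + \cO(c^{-3})$ in appropriate $H^s$-norms, so that $\|\cK_\rS\gamma_*^c\cK_\rS\|_{X^s} = \cO(c^{-2})$. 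A bootstrap on the full DF equation paralleling Corollary \ref{cor:regularity-H3} then gives $\|u_{c,n}^\rL\|_{H^2}=\cO(1)$, and $\|u_{c,n}^\rL\|_{H^3}=\cO(1)$ under Assumption \ref{ass:V}, uniformly in $c$.

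\textbf{Proof of \eqref{eq:EHF-gammac}.} Since $0\leq \cK_\rL\gamma_*^c\cK_\rL\leq \1_\cH$ with trace at most $q$, $\cK_\rL\gamma_*^c\cK_\rL\in\Gamma_q^{\rm HF}$. Decomposing as in \eqref{EDF-EHF},
\[
\cE^{\rm HF}(\cK_\rL\gamma_*^c\cK_\rL) - \cE_c(\gamma_*^c) = \Delta_{\rm kin}+\Delta_V+\Delta_W,
\]
we estimate each piece. Since $V$ is scalar and $\cK_\rL\cK_\rS=0$, $\Delta_V = -\Tr_\cH[V\cK_\rS\gamma_*^c\cK_\rS]$, controlled by Hardy's inequality and $\|\cK_\rS\gamma_*^c\cK_\rS\|_X=\cO(c^{-2})$. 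The interaction term $\Delta_W$ splits via \eqref{eq:Wpotential-decom}--\eqref{eq:W-K1-K2} into direct and exchange pieces, each bounded exactly as in Lemmas \ref{lem:W2-1}--\ref{lem:W2-2}, with the role played there by $\cK_\rL\Lambda^-_c\gamma\Lambda^-_c\cK_\rL=\cO(c^{-4})$ now played by $\cK_\rS\gamma_*^c\cK_\rS=\cO(c^{-2})$. For $\Delta_{\rm kin}$, the identities $\cK_\rL(\cD-c^2)\cK_\rL=0$, $\cK_\rS(\cD-c^2)\cK_\rS=-2c^2\cK_\rS$, together with the expansion of the small component, yield $\Delta_{\rm kin}=\cO(c^{-2})$.

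\textbf{Proof of \eqref{eq:EHF-gammac-c2}.} Introduce the large-component Gram matrix $\widetilde{S}_c := (\langle u_{c,m}^\rL,u_{c,n}^\rL\rangle_\HL)_{1\leq m,n\leq q}$. Orthonormality of $\{u_{c,n}\}$ in $\cH$ gives $\widetilde{S}_c = \1_{q\times q} - (\langle u_{c,m}^\rS,u_{c,n}^\rS\rangle_\HL)_{m,n}$, a perturbation of the identity of size $\cO(c^{-2})$, hence invertible for $c$ large. Set
\[
\widetilde{\gamma}_c^* := (\lvert u_{c,1}^\rL\rangle,\ldots,\lvert u_{c,q}^\rL\rangle)\,\widetilde{S}_c^{-1}\,(\langle u_{c,1}^\rL\rvert,\ldots,\langle u_{c,q}^\rL\rvert)^{\top},
\]
which is the rank-$q$ orthogonal projection on $\Span\{u_{c,n}^\rL\}\subset\HL$; in particular $\widetilde{\gamma}_c^*\in\Gamma_q^{\rm HF}$. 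Under Assumption \ref{ass:V}, the improved $H^3$ regularity allows us to sharpen the small-component expansion in stronger norms. Expanding $\widetilde{S}_c^{-1} = \1_{q\times q} + (\langle u_{c,m}^\rS,u_{c,n}^\rS\rangle_\HL)_{m,n} + \cO(c^{-4})$, substituting $u_{c,n}^\rS = \tfrac{1}{2c}\cL u_{c,n}^\rL+\cO(c^{-3})$ into the kinetic, $V$, $W_1$, and $W_2$ blocks of $\cE_c(\gamma_*^c)$, and organizing by powers of $c^{-2}$ in the spirit of Lemma \ref{lem:3.13} identifies the $\cO(c^{-2})$ correction with $\widetilde{E}_c^{(2)}(\gamma_*^c)$, with an $\cO(c^{-4})$ remainder. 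The $-c^2$ shift in the formula for $\widetilde{E}_c^{(2)}$ accounts for the rescaling between $\lambda_n^c$ and its non-relativistic limit.

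\textbf{Main obstacle.} The principal difficulties are twofold: (i) establishing $H^3$ regularity of the $u_{c,n}^\rL$ uniformly in $c$, which requires a $c$-uniform bootstrap on the DF equation together with Lemma \ref{lem:W2-estimate} and Assumption \ref{ass:V} to absorb the nonlinear $W_{\gamma_*^c}$ terms; and (ii) sharpening the small-component expansion $u_{c,n}^\rS - \tfrac{1}{2c}\cL u_{c,n}^\rL$ in norms strong enough to yield an $\cO(c^{-4})$ error after all substitutions, rather than the naive $\cO(c^{-3})$. Unlike in Sections \ref{sec:5}--\ref{sec:6}, the trial state is built from the unknown DF minimizer itself, so these $c$-uniform estimates must be extracted from the self-consistent equation $\cD_{\gamma_*^c} u_{c,n} = \lambda_n^c u_{c,n}$ rather than imported from a fixed HF object.
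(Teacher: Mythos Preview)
Your proposal is correct and follows essentially the same route as the paper: extract the Foldy--Wouthuysen relation $u_{c,n}^\rS=\tfrac{1}{2c}\cL u_{c,n}^\rL+\cO(c^{-3})$ from the lower block of the DF eigenvalue equation, bootstrap $c$-uniform regularity (Lemmas~\ref{lem:u-dec} and~\ref{lem:u-dec'} in the paper), and decompose the energy difference by blocks. One point worth sharpening: for the kinetic piece the paper does not merely bound the blocks but uses the exact algebraic identity
\[
\langle u,(\cD-c^2)u\rangle_\cH=\langle u^\rL,H_0 u^\rL\rangle_{L^2}-2c^2\Big\|u^\rS-\tfrac{1}{2c}\cL u^\rL\Big\|_{L^2}^2,
\]
which already gives an $\cO(c^{-4})$ (not just $\cO(c^{-2})$) kinetic error and is reused verbatim in Part~2; your phrasing via $\cK_\rL(\cD-c^2)\cK_\rL=0$ and $\cK_\rS(\cD-c^2)\cK_\rS=-2c^2\cK_\rS$ is the same computation once the off-diagonal $c\cL$ blocks are included, but the cancellation between the cross terms and $-2c^2\|u^\rS\|^2$ is the whole point and deserves to be written out.
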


The proof of Theorem \ref{th:5.1} is organized as follows: In Section \ref{sec:6.1}, we recall some useful estimates on the structure of the DF minimizer $\gamma_*^c$. The aim is to use $\cK_\rL\gamma_*^c\cK_\rL$ or $\widetilde{\gamma}^c_*$ to replace $\gamma_*^c$ in the proof of Theorem \ref{th:5.1}. In Section \ref{sec:6.2}, we prove \eqref{eq:EHF-gammac}. This is a direct application of the structure of $\gamma_*^c$. In Section \ref{sec:6.3}, we first construct the density matrix $\widetilde{\gamma}^c_*$ and study its property. The finer structure of $\widetilde{\gamma}^c_*$ gives the correction term $\widetilde{\cE}_c^{(2)}(\gamma_*^c)$. In particular, we shall point out that in Section \ref{sec:6.3.1}, we provide a new estimates on the kinetic term, which plays an essential role for the proof of Theorem \ref{th:5.1}.

As \eqref{EDF-EHF}, we study the kinetic term, the potential between nuclei and electrons, the potential between electrons separately:
\begin{align}\label{eq:EDF-EHF2}
  \MoveEqLeft   \cE_c(\gamma_{*}^c)- \cE^{\rm HF}(\gamma)\notag\\
     &= \underbrace{\Tr_{\cH}[(\cD-c^2) \gamma_{*}^c)] -\Tr_{\cH}[H_0 \gamma]}_{\textrm{kinetic term}}\notag\\
     &\quad- \underbrace{\Tr_{\cH}[V\gamma_*^c]-\Tr_{\cH}[V\gamma]}_{\textrm{potential between electrons and nuclei}}+\underbrace{\Tr_{\cH}[W_{\gamma_*^c}\gamma_*^c]-\Tr_{\cH}[W_{\gamma}\gamma]}_{\textrm{potential between electrons and electrons}}.
\end{align}
Before going further, we study the structure of $\gamma_*^c$ for the construction of $\widetilde{\gamma}_*^c$ and also for the proof of \eqref{eq:EHF-gammac} and \eqref{eq:EHF-gammac-c2}.

\subsection{Structure of \texorpdfstring{$\gamma_*^c$}{}}\label{sec:structure-gamma-c}\label{sec:6.1}

According to \eqref{eq:gamma-dec}, we can rewrite $\gamma_*^c$ as
\begin{align}\label{eq:gamma-c-dec}
\gamma_*^c=\sum_{n=1}^\infty\mu_n^c\left|u_{c,n}\right>\left<u_{c,n}\right|,\qquad \cD_{\gamma_*^c} u_{c,n}=\lambda_{n}^c u_{c,n}
\end{align}
where $\mu_n^c\geq 0$, $\sum_{n=1}^\infty \mu_n^c=q$, and, for any $n\geq 1$, $u_{c,n}$ is a normalized eigenfunction of $\mathcal{D}^c_{\gamma_*^c}$. In particular, if Theorem \ref{th:non-unfill} holds, then we can write
\begin{align}\label{eq:gamma-c-dec'}
\gamma_*^c=\sum_{n=1}^q\left|u_{c,n}\right>\left<u_{c,n}\right|.
\end{align}

We have
\begin{lemma}\label{lem:u-dec}
Under Assumption \ref{ass:c}, for any $n\geq 1$,
\begin{align*}
    \|u_{c,n}\|_{H^1}\leq K_1, \quad 
    \|u_{c,n}^{\rm S}\|_{H^1(\R^3;\C^2)}\leq \frac{1}{c}K_2,\quad \left\|u_{c,n}^{\rm S}-\frac{1}{2c}\cL u_{c,n}^{\rm L}\right\|_{L^2(\R^3;\C^2)}\leq \frac{1}{c^3}K_3
\end{align*}
and
\begin{align*}
  \|\gamma_*^c\|_{X^2}\leq K_1^2q,\qquad \|\cK_\rS \gamma_*^c \cK_\rS \|_{X^2}\leq \frac{1}{c^2}K_2^2q.  
\end{align*}
where $K_1,K_2,K_3$ are constants independent of $c$ and $n$, and are defined by \eqref{eq:K1}, \eqref{eq:K2} and \eqref{eq:K3} below respectively.
\end{lemma}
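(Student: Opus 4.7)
Since $\gamma_*^c$ is supported on the spectral range $(0,\nu_c]\subset (0,c^2]$ of $\cD_{\gamma_*^c}$ (Theorem~\ref{th:min-DF}), one only needs pointwise bounds on those eigenfunctions $u_{c,n}$ with $\lambda_n^c\in(0,c^2]$; the density-matrix estimates then follow from the spectral decomposition together with $\sum_n\mu_n^c=q$. All pointwise bounds will be extracted from the Dirac equation
\begin{align*}
 \cD u_{c,n}=(\lambda_n^c + V - W_{\gamma_*^c})u_{c,n}.
\end{align*}

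The first bound $\|u_{c,n}\|_{H^1}\le K_1$ follows from the identity $\|\cD u\|_{\cH}^2 = c^4 + c^2\|\nabla u\|_{\cH}^2$: expanding the right-hand side via the eigenvalue equation, bounding $\|Vu\|\le 2z\|\nabla u\|$ by Hardy and $\|W_{\gamma_*^c}u\|\lesssim\|\nabla u\|$ by Lemma~\ref{lem:W2-estimate}, and using $(\lambda_n^c)^2\le c^4$, one obtains a quadratic inequality
\begin{align*}
 (c^2-C(z,q))\,\|\nabla u_{c,n}\|^2 \le 2c^2 C'(z,q)\,\|\nabla u_{c,n}\|,
\end{align*}
whose positive root is uniformly bounded under Assumption~\ref{ass:c}.

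The bounds on $u_{c,n}^S$ come from the lower-component equation $c\cL u_{c,n}^L = A_n u_{c,n}^S$, where $A_n := \lambda_n^c+c^2+(V-W_{\gamma_*^c})|_{\mathrm{SS}}$ acts on $\HL$. By Kato-type estimates, Assumption~\ref{ass:c} yields the coercivity $\|A_n^{-1}\|_{\cB(\HL)} \le 2c^{-2}$; hence $u_{c,n}^S = A_n^{-1} c\cL u_{c,n}^L$, giving immediately $\|u_{c,n}^S\|_{\cH}\le 2c^{-1}K_1$. Substituting this back into the upper-component equation yields an effective Schr\"odinger-like equation for $u_{c,n}^L$; standard elliptic regularity bootstrapped from the $H^1$-bound produces a uniform $H^2$-bound on $u_{c,n}^L$, which via differentiation of $u^S = A_n^{-1}c\cL u^L$ gives $\|u_{c,n}^S\|_{H^1}\le K_2/c$. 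The resolvent identity
\begin{align*}
 A_n^{-1} = \frac{1}{2c^2} - \frac{1}{2c^2}(\lambda_n^c-c^2+V-W_{\gamma_*^c})A_n^{-1}
\end{align*}
then yields the expansion
\begin{align*}
 u_{c,n}^S - \tfrac{1}{2c}\cL u_{c,n}^L = -\tfrac{1}{4c^3}(\lambda_n^c - c^2 + V - W_{\gamma_*^c})\cL u_{c,n}^L + \cO(c^{-5}),
\end{align*}
whose $L^2$-norm is at most $K_3/c^3$, provided $|\lambda_n^c - c^2|$ is uniformly bounded---a fact that itself follows from the effective equation combined with the $H^2$-bound on $u^L$.

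The principal technical obstacle is the self-consistent bootstrap between the coercivity of $A_n$, the $H^1/H^2$-bounds on $u_{c,n}^L$, and the uniform bound on $|\lambda_n^c-c^2|$: each ingredient depends on partial control of the others, so the argument must be sequenced carefully to avoid circularity. A secondary subtlety is that the exchange operator $W_{2,\gamma_*^c}$ is non-local; extracting its SS-block cleanly in the definition of $A_n$ requires the kernel estimates of Lemma~\ref{lem:W2-estimate}.
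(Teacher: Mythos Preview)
Your approach to the $H^1$ bound on $u_{c,n}$ is correct and matches the paper's, but the rest of the argument has a genuine gap.

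First, the paper obtains the eigenvalue bound $0\le c^2-\lambda_n^c\le e$ \emph{a priori}, before any eigenfunction estimate, via min--max: since $W_{\gamma_*^c}\ge 0$, the positive eigenvalues of $\cD_{\gamma_*^c}$ dominate those of $\cD-V$, whose distance to $c^2$ is bounded uniformly in $c$. This eliminates the circularity you flag.

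For $\|u_{c,n}^{\rm S}\|_{H^1}$, the paper reads $\|\nabla u^{\rm S}\|$ directly off the \emph{upper} row of the $2\times2$ block system, namely $c\cL u^{\rm S}=(\lambda_n^c-c^2)u^{\rm L}-\cK_\rL(-V+W_{\gamma_*^c})u$, which needs only $\|u\|_{H^1}$ and the eigenvalue bound. Your route---differentiating $u^{\rm S}=A_n^{-1}c\cL u^{\rm L}$---produces a commutator $[\nabla,A_n^{-1}]$ involving $\nabla V$, which is not controlled without Assumption~\ref{ass:V}.

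More fundamentally, the claimed uniform $H^2$ bound on $u_{c,n}^{\rm L}$ is \emph{false} in general. The lemma allows point Coulomb $V=z|x|^{-1}$ (take $\mu=z\delta_0\in\mathcal{M}_+$), and for Dirac--Coulomb the large component of an eigenfunction behaves like $|x|^{\gamma-1}$ near the origin with $\gamma=\sqrt{1-(z/c)^2}<1$; since the bounded perturbation $W_{\gamma_*^c}$ does not alter this leading singularity, one has $u^{\rm L}\notin H^2$. The effective operator $c^2\cL A_n^{-1}\cL$ degenerates where $V$ is large, so ``standard elliptic regularity'' does not apply---this is precisely why the $H^2$ version, Lemma~\ref{lem:u-dec'}, is stated only under Assumption~\ref{ass:V}. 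Your $\cO(c^{-3})$ expansion, containing $V\cL u^{\rm L}$, inherits the same defect; the paper instead keeps $u^{\rm S}$ on the right of the lower-row identity and iterates once $\|u^{\rm S}\|_{H^1}=\cO(c^{-1})$ and $\|\cK_\rS\gamma_*^c\cK_\rS\|_{\mathfrak{S}_1}=\cO(c^{-2})$ are in hand.
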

\begin{proof}
Similar proof can be found in \cite[Lemma 7 and Theorem 3]{esteban2001nonrelativistic} or \cite[Lemma B.1]{meng2024rigorous}. For the reader's convenience and for the proof of Lemma \ref{lem:u-dec'} below, we provide the details. Before going further, we need some estimates on the eigenvalues $\lambda^c_n$. 

Note that the potential $W_{\gamma}$ is non-negative. Applying the abstract min-max theorem
(see e.g., \cite{dolbeault2000eigenvalues}) to the self-adjoint operator $\mathcal{D}^c_{\gamma_*^c}$ and the splitting of $\cH$ associated with the free projectors $\Lambda_c^\pm$, we infer that for any $n\in \N^+$,
\begin{align}
c^2\geq \sigma_n^+(\mathcal{D}_{\gamma_*^c}^c)\geq \sigma_n^+(\cD-V),
\end{align}
where $\sigma_n^+(A)$ is the $n$-th positive eigenvalue (counted with multiplicity) of the operator $A$. Here assumptions in \cite{dolbeault2000eigenvalues} are verified under Assumption \ref{ass:c} (see e.g., \cite[Lemma 3.6]{sere2023new}).

According to the spectral analysis of Dirac operator (see e.g.,\cite{thaller2013dirac,sere2023new}), it is easy to see that there exists a constant $e>0$ independent of $c$ such that
\begin{align}\label{eq:7.7}
   0\leq  c^2-\sigma_n^+(\cD-V) \leq e.
\end{align}
This implies that for any $n\in \N^+$,
\begin{align}\label{eq:lambda-c2}
  c^2-e\leq \inf \sigma^+(\cD-V)\leq  \lambda_n^c \leq c^2.
\end{align}

\medskip

Next, we prove $\|u_{c,n}\|_{H^1}\leq K_1$. As $\cD_{\gamma_*^c} u_{c,n}=\lambda_{n}^c u_{c,n}$, we know
\begin{align*}
    \|\cD u_{c,n}\|_{\cH}=\|(\lambda_n^c+V-W_{\gamma})u_{c,n}\|_{\cH}.
\end{align*}
By Hardy's inequality and $|\lambda|\leq c^2$, we infer that
\begin{align*}
\MoveEqLeft    c^4\|u_{c,n}\|_{\cH}^2+c^2\|\nabla u_{c,n}\|_{\cH}^2\\
&\leq c^4\|u_{c,n}\|_{\cH}^2+4(z+q)c^2\|\nabla u_{c,n}\|_\cH\|u_{c,n}\|_\cH+4(z+q)^2\|\nabla u_{c,n}\|_\cH^2
\end{align*}
which implies that for any $n\geq 1$
\begin{align*}
    \|\nabla u_{c,n}\|_{\cH}\leq \left(\frac{4z+4q}{1-\kappa_c^2}\right)^{1/2},\qquad \|u_{c,n}\|_{H^1}\leq \left(\frac{1+ 4z+4q}{1-\kappa_c^2}\right)^{1/2}.
\end{align*}
Note that under Assumption \ref{ass:c}, $\kappa_c\leq \frac{1}{2}$. Thus 
\begin{align*}
    \|\gamma_*^c\|_{X^2}=\sum_{n\geq 1}\mu_{n}^c\|u_{c,n}\|^2_{H^1} \leq K_1^2q.
\end{align*}
with
\begin{align}\label{eq:K1}
    K_1:=2\left(1+ 4z+4q\right)^{1/2}.
\end{align}

\medskip

Note that the equation $\mathcal{D}^c_{\gamma_*^c} u_{c,n}=\lambda^c_n u_{c,n}$ can be rewritten as
\begin{align}\label{eq:decom-Dirac}
    \begin{pmatrix}
    c\mathcal{L} u_{c,n}^{\rm S}\\
     c\mathcal{L} u_{c,n}^{\rm L}
\end{pmatrix} +(-V+W_{\gamma_*^c}) u_{c,n} =\begin{pmatrix}
    (\lambda_n^c-c^2)u_{c,n}^{\rm L}\\
    (\lambda_n^c+c^2)u_{c,n}^{\rm S}
\end{pmatrix}.
\end{align}
Dividing by $c$ the first equation of \eqref{eq:decom-Dirac} and by Hardy's inequality, \eqref{eq:lambda-c2} and \eqref{eq:5.1-2}, we get
\begin{align}\label{eq:nabla-uS}
    \left\|\nabla u_{c,n}^{\rm S}\right\|_{L^2(\R^3;\C^2)}&= \left\|\cL u_{c,n}^{\rm S}\right\|_{L^2(\R^3;\C^2)}\\
    &\leq \frac{1}{c}\|\cK_{\rm L}(-V+W_{\gamma_*^c})u_{c,n}\|_{\cH}+\frac{|\lambda_n^c-c^2|}{c^2}\|\cK_{\rm L} u_{c,n}\|_{\cH}\notag\\
    &\leq \frac{2(z+q)+e}{c}\|u_{c,n}\|_{H^1}.
\end{align}
Dividing by $2c^2$ the second equation of \eqref{eq:decom-Dirac} and using \eqref{eq:5.18} and \eqref{eq:lambda-c2}, we get
\begin{align}\label{eq:uS-LuL}
   \left\|u_{c,n}^{\rm S}- \frac{1}{2c}\cL u_{c,n}^{\rm L}\right\|_{L^2(\R^3;\C^2)}&\leq 
  \frac{1}{2c^2}\|\cK_{\rm S}(-V+W_{\gamma_*^c})u_{c,n} \|_{\cH}+\frac{|\lambda_n^c-c^2|}{2c^2}\|\cK_{\rm S} u_{c,n}\|_{\cH}\notag\\
  &\leq \frac{2(q+z)+e}{2c^2}\|\cK_{\rm S} u_{c,n}\|_{H^1}+ \frac{q^{1/2}}{c^2}\|\cK_\rS \gamma_*^c \cK_\rS\|_{\mathfrak{S}_1}^{1/2}\|u_{c,n}\|_{H^1}
\end{align}
since $\cK_\rS W_{2,\gamma_*^c}=W_{2,\cK_\rS \gamma_*^c}$. Then for any $n\geq 1$,
\begin{align*}
   \left\|u_{c,n}^{\rm S}- \frac{1}{2c}\cL u_{c,n}^{\rm L}\right\|_{L^2(\R^3;\C^2)}&\leq \frac{2(q+z)+e}{2c^2}\|u_{c,n}\|_{H^1}+ \frac{q}{c^2}\|u_{c,n}\|_{H^1}\leq \frac{2(z+2q)+e}{2c^2}K_1
\end{align*}
and as $\|\cL u_{c,n}^\rL\|_{L^2(\R^3;\C^2)}=\|\nabla  u_{c,n}^\rL\|_{L^2(\R^3;\C^2)}\leq \|u_{c,n}\|_{H^1}$,
\begin{align*}
    \|u_{c,n}^{\rm S}\|_{L^2(\R^3;\C^2)}&\leq \frac{1}{2c}\|\cL u_{c,n}^{\rL}\|_{L^2(\R^3;\C^2)}+ \left\|u_{c,n}^{\rm S}- \frac{1}{2c}\cL u_{c,n}^{\rm L}\right\|_{L^2(\R^3;\C^2)}\\
    &\leq  \frac{1}{2c}\left(1+\frac{2(z+2q)+e}{c}\right)K_1.
\end{align*}
This and \eqref{eq:nabla-uS} show that under Assumption \ref{ass:c},
\begin{align*}
    \|u_{c,n}^{\rm S}\|_{H^1(\R^3;\C^2)}\leq \frac{1}{c}K_2
\end{align*}
with
\begin{align}\label{eq:K2}
    K_2:=  \left(1+4(z+2q)+2e\right)K_1.
\end{align}
Then
\begin{align*}
    \|\cK_\rS \gamma_*^c \cK_\rS\|_{X^2}\leq \frac{1}{c^2}K_2^2q.
\end{align*}
Inserting above two estimates into \eqref{eq:uS-LuL},
\begin{align*}
     \left\|u_{c,n}^{\rm S}- \frac{1}{2c}\cL u_{c,n}^{\rm L}\right\|_{L^2(\R^3;\C^2)}\leq \frac{1}{c^3}\left(\frac{2(q+z)+e}{2} K_2+ K_1K_2 q\right).
\end{align*}
Thus
\begin{align}\label{eq:K3}
    K_3:=\frac{2(q+z)+e}{2} K_2+ K_1K_2 q.
\end{align}
This completes the proof.
\end{proof}

In addition, we can get higher regularity under Assumption \ref{ass:V}. More precisely,
\begin{lemma}\label{lem:u-dec'}
Under Assumption \ref{ass:V}, for $c$ large enough
\begin{align*}
    \|u_{c,n}\|_{H^2}=\cO(1), \quad 
    \|u_{c,n}^{\rm S}\|_{H^2(\R^3;\C^2)}=\cO(c^{-1}),\quad \left\|u_{c,n}^{\rm S}-\frac{1}{2c}\cL u_{c,n}^{\rm L}\right\|_{H^1(\R^3;\C^2)}= \cO(c^{-3})
\end{align*}
and
\begin{align*}
  \|\gamma_*^c\|_{X^4}=\cO(1),\qquad \|\cK_\rS \gamma_*^c \cK_\rS \|_{X^4}=\cO(c^{-2}).  
\end{align*}
\end{lemma}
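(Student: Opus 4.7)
The plan is to bootstrap the regularity obtained in Lemma \ref{lem:u-dec} by one level, using Assumption \ref{ass:V}, and then redo the component-wise analysis exactly as in the proof of Lemma \ref{lem:u-dec} at one higher order of differentiability.

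First I would establish the $H^2$ bound on $u_{c,n}$. Starting from $\cD u_{c,n}=(\lambda_n^c+V-W_{\gamma_*^c})u_{c,n}$ and applying $\cD$ once more, the identity $\cD^2=c^4-c^2\Delta$ together with $\cD(fu)=f\cD u+[\cD,f]u$ (and its analogue for the non-local piece $W_{2,\gamma_*^c}$) gives
\begin{align*}
-c^2\Delta u_{c,n}=\big((\lambda_n^c)^2-c^4\big)u_{c,n}+2\lambda_n^c(V-W_{\gamma_*^c})u_{c,n}+(V-W_{\gamma_*^c})^2 u_{c,n}+[\cD,V-W_{\gamma_*^c}]u_{c,n}.
\end{align*}
After dividing by $c^2$, I would estimate each term: by \eqref{eq:lambda-c2} the prefactor $|(\lambda_n^c)^2-c^4|/c^2$ is $\cO(1)$; the linear and quadratic potential terms are $\cO(\|u_{c,n}\|_{H^1})+\cO(c^{-2}\|u_{c,n}\|_{H^2})$ by iterating Hardy's inequality together with \eqref{eq:W1-estimate}, Lemma \ref{lem:W2-estimate}, and Assumption \ref{ass:V}; and the commutator $[\cD,V]u=-ic\,\pmb\alpha\cdot(\nabla V)u$ contributes $c^{-1}\|u_{c,n}\|_{H^2}$ after normalisation, again thanks to Assumption \ref{ass:V}. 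Absorbing the $\cO(c^{-1})$ pieces into the left-hand side for $c$ large yields $\|u_{c,n}\|_{H^2}=\cO(1)$ and hence $\|\gamma_*^c\|_{X^4}=\cO(1)$.

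Next, I would repeat the component-wise argument of Lemma \ref{lem:u-dec} at one higher regularity level. Writing row one of \eqref{eq:decom-Dirac} as $c\cL u_{c,n}^{\rS}=(\lambda_n^c-c^2)u_{c,n}^{\rL}-\cK_\rL(-V+W_{\gamma_*^c})u_{c,n}$, applying a further gradient, and invoking Assumption \ref{ass:V} together with the just-proved $H^2$ bound, one obtains $\|u_{c,n}^{\rS}\|_{H^2(\R^3;\C^2)}=\cO(c^{-1})$ and hence $\|\cK_\rS\gamma_*^c\cK_\rS\|_{X^4}=\cO(c^{-2})$. Rearranging row two of \eqref{eq:decom-Dirac} as in \eqref{eq:uS-LuL}, differentiating once, and feeding in the new $H^2(\R^3;\C^2)$ estimate on $u_{c,n}^{\rS}$ together with the $\cO(c^{-2})$ bound on $\cK_\rS\gamma_*^c\cK_\rS$, produces the refined $\|u_{c,n}^{\rS}-\frac{1}{2c}\cL u_{c,n}^{\rL}\|_{H^1(\R^3;\C^2)}=\cO(c^{-3})$.

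The hard part will be handling the commutator $[\cD,W_{2,\gamma_*^c}]$, since $W_{2,\gamma}$ is a non-local integral operator with a $|x-y|^{-1}$ kernel and a direct integration by parts creates distributional singularities. The trick is to integrate by parts symmetrically in both slots of $\gamma(x,y)$: the singular $\nabla|x-y|^{-1}$ contributions cancel exactly, leaving the smoothed identity $[\cD,W_{2,\gamma}]u=-ic\,\pmb\alpha\cdot W_{2,(\nabla_x+\nabla_y)\gamma}u+c^2 W_{2,[\beta,\gamma]}u$. The first summand is controlled by the $\nabla\gamma$-variant of Lemma \ref{lem:W2-estimate} using $\|\gamma_*^c\|_{X^2}=\cO(1)$ from Lemma \ref{lem:u-dec}; the second is $\cO(c^{-1})$ in operator norm because $\|[\beta,\gamma_*^c]\|_{\mathfrak{S}_1}\lesssim c^{-1}$, which follows by applying Cauchy–Schwarz to the off-diagonal blocks together with $\|\cK_\rS\gamma_*^c\cK_\rS\|_{\mathfrak{S}_1}=\cO(c^{-2})$. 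Both contributions then become $\cO(c^{-1})$ after dividing by $c^2$, closing the bootstrap. Finally, summing the per-eigenfunction estimates with weights $\mu_n^c$ (or $1$ if Theorem \ref{th:non-unfill} applies) delivers the stated density-matrix bounds.
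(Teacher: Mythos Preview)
Your bootstrap strategy is sound and would work, but it differs from the paper's route and contains one concrete error. The commutator identity you write for $[\cD,W_{2,\gamma}]$ is not correct: because $\gamma(x,y)$ is a $4\times4$ matrix-valued kernel, the Dirac matrices $\alpha_k$ do not commute with it. After integrating by parts in $y$, the singular $\nabla_x|x-y|^{-1}$ contributions from $\cD W_{2,\gamma}u$ and $W_{2,\gamma}\cD u$ do \emph{not} cancel exactly; what remains is
\[
-ic\sum_k\int_{\R^3}[\alpha_k,\gamma(x,y)]\,u(y)\,\partial_{x_k}|x-y|^{-1}\,dy,
\]
and in the two ``smooth'' pieces the matrix $\alpha_k$ sits on the left of $\partial_{x_k}\gamma$ but on the right of $\partial_{y_k}\gamma$, so they do not combine into a single $W_{2,(\nabla_x+\nabla_y)\gamma}$. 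Your clean formula is only valid when $[\alpha_k,\gamma]=0$. The repair is not hard---one can bound these residual terms directly via the $X^2$ control on $\gamma_*^c$ from Lemma~\ref{lem:u-dec}, or simply appeal to an operator-norm estimate of the type \eqref{eq:W-cD}, which together with your own observation $\|[W_{\gamma_*^c},\beta]\|_{\cB(\cH)}=\cO(c^{-1})$ gives $\|[\cD,W_{\gamma_*^c}]\|_{\cB(\cH)}=\cO(c)$, hence $\cO(c^{-1})$ after dividing by $c^2$---so the strategy survives.

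The paper bypasses all of this by applying $(-\Delta)^{1/2}$ rather than $\cD$ to the eigenvalue equation. Since $(-\Delta)^{1/2}$ commutes with $\cD$, no commutator arises; one only needs $\|(-\Delta)^{1/2}(V-W_{\gamma_*^c})u_{c,n}\|_\cH$, which follows directly from Assumption~\ref{ass:V}, \eqref{eq:W1-estimate}, and the ready-made estimate \eqref{eq:5.19'} of Lemma~\ref{lem:W2-estimate}. The component-wise estimates are then obtained by multiplying \eqref{eq:decom-Dirac} by $(-\Delta)^{1/2}$ and repeating the proof of Lemma~\ref{lem:u-dec} verbatim at one higher derivative. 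Your route works once the matrix-ordering issue is patched, but the paper's is shorter and sidesteps the commutator subtleties entirely.
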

\begin{proof}
We begin from $\cD_{\gamma_*^c}u_{c,n}=\lambda_n^c u_{c,n}$. Taking $(-\Delta)^{1/2}$ to both side of this eigenvalue equation and proceeding as for Lemma \ref{lem:u-dec}, we know
\begin{align*}
    \|(-\Delta)^{1/2} \cD u_{c,n}\|_{\cH}=\|(-\Delta)^{1/2} (\lambda_n^c+V-W_{\gamma})u_{c,n}\|_{\cH}.
\end{align*}
Then arguing as for \eqref{eq:K1}, the estimate $\|u_{c,n}\|_{H^2}=\cO(1)$ follows immediately from Assumption \ref{ass:V}, \eqref{eq:W1-estimate} and \eqref{eq:5.19}. 

Next, multiplying \eqref{eq:decom-Dirac} by $(-\Delta)^{1/2}$ again, we get
\begin{align}\label{eq:decom-Dirac'}
    \begin{pmatrix}
    c\mathcal{L} (-\Delta)^{1/2} u_{c,n}^{\rm S}\\
     c\mathcal{L} (-\Delta)^{1/2} u_{c,n}^{\rm L}
\end{pmatrix} +(-\Delta)^{1/2}(-V+W_{\gamma_*^c}) u_{c,n} =\begin{pmatrix}
    (\lambda_n^c-c^2)(-\Delta)^{1/2}u_{c,n}^{\rm L}\\
    (\lambda_n^c+c^2)(-\Delta)^{1/2} u_{c,n}^{\rm S}
\end{pmatrix}.
\end{align}
Then arguing as for \eqref{eq:K2}, \eqref{eq:K3}, and by Assumption \ref{ass:V}, \eqref{eq:W1-estimate} and \eqref{eq:5.19}, we get the remaining estimates in this lemma.

\end{proof}

\subsection{Proof of \texorpdfstring{\eqref{eq:EHF-gammac}}{}}\label{sec:6.2}
We now turn to the proof of \eqref{eq:EHF-gammac}. We use \eqref{eq:EDF-EHF2} with $\gamma=\cK_\rL \gamma_*^c \cK_\rL$.

\medskip

\subsubsection{Kinetic term}\label{sec:6.3.1} Note that
\begin{align*}
  \MoveEqLeft  \lc\begin{pmatrix}
        u_{c,m}^{\rm L}\\u_{c,m}^{\rm S}
    \end{pmatrix}, (\cD-c^2) \begin{pmatrix}
        u_{c,n}^{\rm L}\\u_{c,n}^{\rm S}
    \end{pmatrix}\rc_{\cH}\\
    &= \lc u_{c,m}^{\rm L},c\cL u_{c,n}^{\rm S}\rc_{L^2(\R^3;\C^2)}+\lc u_{c,m}^{\rm S}, c\cL u_{c,n}^{\rm L}-2c^2 u_{c,n}^{\rm S}\rc_{L^2(\R^3;\C^2)}.
\end{align*}
Concerning the terms on the right-hand side, as $\cL^2=-\Delta$, we have
\begin{align*}
 \MoveEqLeft   \lc u_{c,m}^{\rm L},c\cL u_{c,n}^{\rm S}\rc_{L^2(\R^3;\C^2)}\\
 &=\lc u_{c,m}^{\rm L},-\frac{1}{2}\Delta u_{c,n}^{\rm L}\rc_{L^2(\R^3;\C^2)}+\lc u_{c,m}^{\rm L},\;c\cL \left(u_{c,n}^{\rm S}-\frac{1}{2c}\cL u_{c,n}^{\rm L}\right)\rc_{L^2(\R^3;\C^2)},
\end{align*}
and
\begin{align*}
  \MoveEqLeft \lc u_{c,m}^{\rm S}, c\cL u_{c,n}^{\rm L}-2c^2 u_{c,n}^{\rm S}\rc_{L^2(\R^3;\C^2)}\\
   &= \lc \frac{1}{2c}\cL u_{c,m}^{\rm L}, \left(c\cL u_{c,n}^{\rm L}- 2c^2u_{c,n}^{\rm S}\right)\rc_{L^2(\R^3;\C^2)}\\
    &\quad +\lc \left(u_{c,m}^{\rm S}-\frac{1}{2c}\cL u_{c,m}^{\rm L}\right), \left(c\cL u_{c,n}^{\rm L}- 2c^2u_{c,n}^{\rm S}\right)\rc_{L^2(\R^3;\C^2)}\\
    &=-\lc u_{c,m}^{\rm L},\;c\cL \left(u_{c,n}^{\rm S}-\frac{1}{2c}\cL u_{c,n}^{\rm L}\right)\rc_{L^2(\R^3;\C^2)}\\
    &\quad -2c^2 \lc \left(c\cL u_{c,n}^{\rm L}- 2c^2u_{c,n}^{\rm S}\right), \left(\frac{1}{2c}\cL u_{c,n}^{\rm L}- u_{c,n}^{\rm S}\right)\rc_{L^2(\R^3;\C^2)}.
\end{align*}
Then using above estimates,
\begin{align}\label{eq:D-c2-uL-S}
\MoveEqLeft \lc\begin{pmatrix}
        u_{c,m}^{\rm L}\\u_{c,m}^{\rm S}
    \end{pmatrix}, (\cD-c^2) \begin{pmatrix}
        u_{c,n}^{\rm L}\\u_{c,n}^{\rm S}
    \end{pmatrix}\rc_{\cH}=\lc u_{c,m}^{\rm L},H_0 u_{c,n}^{\rm L}\rc_{L^2(\R^3;\C^2)}\notag\\
    &\qquad\qquad\qquad-2c^2 \lc \left(c\cL u_{c,n}^{\rm L}- 2c^2u_{c,n}^{\rm S}\right), \left(\frac{1}{2c}\cL u_{c,n}^{\rm L}- u_{c,n}^{\rm S}\right)\rc_{L^2(\R^3;\C^2)}.
\end{align}
By Lemma \ref{lem:u-dec},
\begin{align}\label{eq:H0-D-c2-gammac-L}
\MoveEqLeft  \Big| \Tr_{\cH}[(\cD-c^2)\gamma_*^c] -\Tr_\cH[H_0 \cK_\rL  \gamma_*^c \cK_\rL]\Big|\notag\\
&= \left|\sum_{n=1}^\infty\mu_n\left<u_{c,n},(\cD-c^2)u_{c,n}\right>_\cH- \sum_{n=1}^\infty\mu_n \lc u_{c,n}^{\rm L},H_0 u_{c,n}^{\rm L}\rc_{L^2(\R^3;\C^2)}\right|\notag\\
   &\leq  2c^2\sum_{n=1}^\infty\mu_n \left\|u_{c,n}^{\rS}-\frac{1}{2c}\cL u_{c,n}^{\rL}\right\|_{L^2(\R^3;\C^2)}^2 \leq  2K_3^2q c^{-4}.
\end{align}

\subsubsection{Potential term} By Lemma \ref{lem:u-dec} and Kato's inequality,
\begin{align}\label{eq:V-L-S}
   \Big| \Tr_{\cH}[V\gamma_*^c]-\Tr_\cH[V\cK_\rL  \gamma_*^c \cK_\rL]\Big|=\Big|\Tr[V\cK_\rS\gamma_*^c\cK_\rS]\Big|\leq \frac{\pi}{2} z \|\cK_\rS\gamma_*^c\cK_\rS\|_{X}\leq \frac{\pi}{2c^2}K_2^2qz.
\end{align}
Analogously,
\begin{align}\label{eq:W1-L-S}
   \MoveEqLeft  \Big| \Tr_{\cH}[W_{1,\gamma_*^c}\gamma_*^c]-\Tr_\cH[W_{1,\cK_\rL  \gamma_*^c \cK_\rL} \cK_\rL  \gamma_*^c \cK_\rL]\Big|\notag\\
     &= \Big|\Tr[W_{\cK_\rS  \gamma_*^c \cK_\rS}\cK_\rS\gamma_*^c\cK_\rS]\Big|\leq \frac{\pi}{2}  \|\cK_\rS\gamma_*^c\cK_\rS\|_{\mathfrak{S}_1}\|\cK_\rS\gamma_*^c\cK_\rS\|_{X}\leq \frac{\pi }{2c^2}K_2^2q^2.
\end{align}
Concerning the term associated with $W_{2,\bullet}$, by \eqref{eq:Wpotential-decom}, H\"older's inequality \eqref{eq:Holder}, \eqref{eq:5.1-1'} and Lemma \ref{lem:u-dec},
\begin{align}\label{eq:W2-L-S}
\MoveEqLeft  \Big|\Tr_{\cH}[W_{2,\gamma_*^c}\gamma_*^c] -\Tr_\cH[W_{2,\cK_\rL \gamma_*^c \cK_\rL}\cK_\rL \gamma_*^c \cK_\rL]\Big|\notag\\
&\leq \sum_{j\in\{\rL,\rS\}}\Big|\Tr_{\cH}[W_{2,\cK_j\gamma_*^c\cK_\rS}\cK_\rS\gamma_*^c\cK_j]\Big|+\Big|\Tr_{\cH}[W_{2,\cK_\rS\gamma_*^c\cK_\rL}\cK_\rL\gamma_*^c\cK_\rS]\Big|\notag\\
&\leq 2\|\cK_\rS\gamma_*^c\cK_\rL\|_{X}^2+\|\cK_\rS\gamma_*^c\cK_\rS\|_{X}^2\leq \frac{1}{c^2}(1+2K_1^2)K_1^2q^2.
\end{align}

\subsubsection{Conclusion} From above estimates, we know
\begin{align*}
    \cE_c(\gamma_*^c)=\cE^{\rm HF}(\cK_{\rm L}\gamma_*^c \cK_{\rm L})+\mathcal{O}(c^{-2}).
\end{align*}
As $\gamma_*^c$ is a DF minimizer, we get
\begin{align*}
    \cE^{\rm HF}(\cK_\rL \gamma_*^c \cK_\rL)\leq \cE_c(\gamma_*^c)+\cO(c^{-2}) =E_{c,q}+\mathcal{O}(c^{-2}).
\end{align*}
This ends the proof of \eqref{eq:EHF-gammac}.

\subsection{Proof of \texorpdfstring{\eqref{eq:EHF-gammac-c2}}{}}\label{sec:6.3}
We are now in the position to prove \eqref{eq:EHF-gammac-c2}. To do so, we first study the properties of $\widetilde{\gamma}_*^c$, and then we use \eqref{eq:EDF-EHF2} to split the calculation into three parts. For further convenience, we assume $\delta_c=0$ throughout this subsection. 

\subsubsection{Construction of \texorpdfstring{$\widetilde{\gamma}_*^c$}{}}
In this subsection, we introduce $\widetilde{\gamma}_*^c$ and study its properties under Assumption \ref{ass:V} and $\delta_c=0$. In this case, we can write
\begin{align*}
    \gamma_*^c:= \sum_{n= 1}^q |u_{c,n} \rangle\,\langle u_{c,n}|
\end{align*}
with $\left<u_{c,m},u_{c,n}\right>_\cH=\delta_{m,n}$. To get the term $\cE^{(2)}_c(\cK_\rL\gamma_*^c\cK_\rL )$ in \eqref{eq:EHF-gammac-c2}, we need to modify the density matrix $\gamma_*^c$. We will use the projection on the space spanned by $\{\cK_\rL u_{c,n}\}_{1\leq n\leq q}$. Note that
\begin{align*}
 \left<\cK_\rL u_{c,m},\cK_\rL u_{c,n}\right>_{\cH}= \delta_{m,n}- \left<\cK_\rS u_{c,m},\cK_\rS u_{c,n}\right>_\cH.
\end{align*}
We introduce now the overlap matrix on $\{\cK_\rL u_{c,n}\}_{1\leq n\leq q}$:
\begin{align}
    S_{\rm DF}:&= \Big(\left<\cK_\rL u_{c,m},\cK_\rL u_{c,n}\right>_{\cH}\Big)_{1\leq m,n\leq q}=\1_{q\times q}-\frac{1}{4c^2} \widetilde{S}_{\rm DF},\label{eq:S-DF}\\
     \widetilde{S}_{\rm DF}:&=\Big(4c^2 \left<\cK_\rS u_{c,m},\cK_\rS u_{c,n}\right>_\cH\Big)_{1\leq m,n\leq q}.\label{eq:S'-DF}
\end{align}
For $c$ large enough, by Lemma \ref{lem:u-dec'},
\begin{align}\label{eq:diagonally dominated1'}
\sup_{1\leq m,n\leq q} \left|\left<\cK_\rS u_{c,m},\cK_\rS u_{c,n}\right>_\cH\right|=\cO(c^{-2}).
\end{align}
 This implies $S_{\rm DF}$ is a strictly diagonally dominated matrix, thus $S_{\rm DF}$ is invertible, 
\begin{align}\label{eq:S-1'}
    \left\|S_{\rm DF}^{-1}-\left(\1_{q\times q}+\frac{1}{4c^2}\widetilde{S}_{\rm DF}\right)\right\|_{\mathfrak{S}_1(\C^q)}=\cO(c^{-4})
\end{align}
and there exists some constants $0<C_0'<1<C_1'$ such that for any $c$ large enough,
\begin{align}\label{eq:S-1-bound'}
    C_0'\1_{q\times q}\leq S_{\rm DF}^{-1}\leq C_1'\1_{q\times q}
\end{align}
in the sense of operator. The density matrix $\widetilde{\gamma}_c^{\rm HF}$ on the space spanned by $\{\cS \cK_\rL u_{c,n}\}_{1\leq n\leq q}$ is defined by 
\begin{align}\label{eq:gammaDF'}
    \widetilde{\gamma}_*^c:&=\Big(\left|\cK_\rL u_{c,1}\right>,\cdots, \left|\cK_\rL u_{c,q}\right>\Big) S_{\rm DF}^{-1} \begin{pmatrix}
        \left<\cK_\rL u_{c,1}\right|\\
        \vdots\\
        \left<\cK_\rL u_{c,q}\right|
    \end{pmatrix}.
\end{align}

\subsubsection{Property of \texorpdfstring{$\widetilde{\gamma}_*^c$}{}}
Analogous to Lemma \ref{lem:gammaHF'-projection} and Lemma \ref{lem:gammaHF'-estimate}, we have
\begin{lemma}\label{lem:gammaDF'-projection}
For $c$ large enough, $\widetilde{\gamma}_*^c\in \cB(\cH;\cH_\rL)$ is a projector with ${\rm Rank}(\widetilde{\gamma}_*^c)=q$.
\end{lemma}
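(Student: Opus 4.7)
The plan is to mirror the argument used for Lemma~\ref{lem:gammaHF'-projection} (which handled the HF side), now working with the overlap matrix $S_{\rm DF}$ in place of $S_{\rm HF}$. The statement factors into two parts: $\widetilde{\gamma}_*^c$ is an orthogonal projector, and its rank equals $q$. Both will follow from the non‑degeneracy of the Gram matrix $S_{\rm DF}$ for $c$ large.

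First I would verify that $S_{\rm DF}$ is invertible for $c$ sufficiently large. By Lemma~\ref{lem:u-dec'}, the entries of $\widetilde{S}_{\rm DF}$ defined in \eqref{eq:S'-DF} are $\cO(1)$, hence
\[
S_{\rm DF}=\1_{q\times q}-\frac{1}{4c^2}\widetilde{S}_{\rm DF}
\]
is strictly diagonally dominant once $c$ is large, and \eqref{eq:S-1'}--\eqref{eq:S-1-bound'} give quantitative bounds on $S_{\rm DF}^{-1}$. This already entails that the family $\{\cK_\rL u_{c,n}\}_{1\leq n\leq q}$ is linearly independent, as a non‑degenerate Gram matrix rules out any non‑trivial linear relation.

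Next I would compute $(\widetilde{\gamma}_*^c)^2$ by direct expansion. The definition \eqref{eq:gammaDF'} gives
\[
(\widetilde{\gamma}_*^c)^2=\Big(\left|\cK_\rL u_{c,1}\right>,\ldots,\left|\cK_\rL u_{c,q}\right>\Big)\,S_{\rm DF}^{-1}\,M\,S_{\rm DF}^{-1}\begin{pmatrix}\left<\cK_\rL u_{c,1}\right|\\ \vdots\\ \left<\cK_\rL u_{c,q}\right|\end{pmatrix},
\]
where the middle factor $M=\bigl(\left<\cK_\rL u_{c,m},\cK_\rL u_{c,n}\right>_\cH\bigr)_{1\leq m,n\leq q}$ is precisely $S_{\rm DF}$ by \eqref{eq:S-DF}. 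Thus $S_{\rm DF}^{-1}\,S_{\rm DF}\,S_{\rm DF}^{-1}=S_{\rm DF}^{-1}$ and $(\widetilde{\gamma}_*^c)^2=\widetilde{\gamma}_*^c$. Self‑adjointness of $\widetilde{\gamma}_*^c$ is immediate, since $S_{\rm DF}$ (and hence $S_{\rm DF}^{-1}$) is Hermitian. Combined with idempotency, this says $\widetilde{\gamma}_*^c$ is an orthogonal projector.

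Finally, for the rank, the range of $\widetilde{\gamma}_*^c$ is contained in $\mathrm{Span}\{\cK_\rL u_{c,n}\,:\,1\leq n\leq q\}$ by construction, and equality holds because $S_{\rm DF}^{-1}$ is invertible on $\C^q$: indeed, for every $w\in\C^q$ the vector $\sum_m w_m\cK_\rL u_{c,m}$ can be written as $\widetilde{\gamma}_*^c$ applied to $\sum_n (S_{\rm DF}w)_n\cK_\rL u_{c,n}$. Since $\{\cK_\rL u_{c,n}\}_{1\leq n\leq q}$ is linearly independent, this span has dimension $q$, so $\mathrm{Rank}(\widetilde{\gamma}_*^c)=q$. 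There is no substantive obstacle: the entire argument is a linear‑algebra identity for rank‑$q$ Gram projectors, the only analytic input being the smallness \eqref{eq:diagonally dominated1'} of $\widetilde{S}_{\rm DF}/(4c^2)$ furnished by Lemma~\ref{lem:u-dec'}.
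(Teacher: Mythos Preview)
Your proof is correct and follows exactly the approach the paper intends: the paper states this lemma as ``analogous to Lemma~\ref{lem:gammaHF'-projection}'' without further detail, and for that HF analogue the argument is precisely the Gram--matrix computation $(\widetilde{\gamma})^2=\widetilde{\gamma}$ via $S^{-1}S\,S^{-1}=S^{-1}$, together with the diagonal dominance of the overlap matrix for large $c$. Your additional remarks on self-adjointness and the rank determination from linear independence of $\{\cK_\rL u_{c,n}\}$ (which the invertibility of $S_{\rm DF}$ guarantees) fill in the obvious steps the paper leaves implicit.
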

and
\begin{lemma}\label{lem:gammaDF'-estimate}
Under Assumption \ref{ass:V}, for $c$ large enough, $ \|\widetilde{\gamma}_*^c\|_{X^4}=\cO(1)$.
\end{lemma}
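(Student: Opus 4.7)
The plan is to mirror the strategy used in Lemma \ref{lem:gammaHF'-estimate}, exploiting the two-sided bound \eqref{eq:S-1-bound'} on $S_{\rm DF}^{-1}$ to sandwich $\widetilde{\gamma}_*^c$ between multiples of $\cK_\rL \gamma_*^c \cK_\rL$, and then to appeal to Lemma \ref{lem:u-dec'} for the $X^4$-bound on $\gamma_*^c$.

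More concretely, define the finite-rank map $K: \C^q \to \cH$ by $K e_n := \cK_\rL u_{c,n}$ where $\{e_n\}_{n=1}^q$ is the canonical basis of $\C^q$. Then $K^*K = S_{\rm DF}$ (the Gram matrix \eqref{eq:S-DF}), and since $\delta_c = 0$ we have $KK^* = \sum_{n=1}^q |\cK_\rL u_{c,n}\rangle\langle \cK_\rL u_{c,n}| = \cK_\rL \gamma_*^c \cK_\rL$. The definition \eqref{eq:gammaDF'} thus reads $\widetilde{\gamma}_*^c = K S_{\rm DF}^{-1} K^*$, which is visibly an orthogonal projection of rank $q$ (as already noted in Lemma \ref{lem:gammaDF'-projection}). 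From \eqref{eq:S-1-bound'}, $S_{\rm DF}^{-1} \leq C_1' \1_{q\times q}$ as an operator on $\C^q$; sandwiching with $K$ and $K^*$ yields the operator inequality
\begin{equation*}
0 \leq \widetilde{\gamma}_*^c \leq C_1'\, \cK_\rL \gamma_*^c \cK_\rL.
\end{equation*}

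Next, since $\cK_\rL$ commutes with $(1-\Delta)$ (it acts only on the spinor index), we can conjugate the above inequality by $(1-\Delta)^{1/2}$ without issues and take the trace. Using that $\cK_\rL$ is an orthogonal projection, we obtain
\begin{equation*}
\|\widetilde{\gamma}_*^c\|_{X^4} = \Tr_\cH\!\left[(1-\Delta)\widetilde{\gamma}_*^c (1-\Delta)\right] \leq C_1' \Tr_\cH\!\left[(1-\Delta) \cK_\rL \gamma_*^c \cK_\rL (1-\Delta)\right] \leq C_1' \|\gamma_*^c\|_{X^4}.
\end{equation*}

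Finally, Lemma \ref{lem:u-dec'} gives $\|\gamma_*^c\|_{X^4} = \cO(1)$ under Assumption \ref{ass:V} for $c$ large enough, and we conclude $\|\widetilde{\gamma}_*^c\|_{X^4} = \cO(1)$. There is no real obstacle: the only subtlety is the identification of $\widetilde{\gamma}_*^c$ as $K(K^*K)^{-1}K^*$ so that the scalar bound on $S_{\rm DF}^{-1}$ lifts to an operator bound, which is exactly parallel to the argument already carried out in Lemma \ref{lem:gammaHF'-estimate}.
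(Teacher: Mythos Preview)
Your argument is correct and is exactly the approach the paper intends: it states Lemma \ref{lem:gammaDF'-estimate} as ``analogous to Lemma \ref{lem:gammaHF'-estimate}'' without writing out the proof, and your sandwich $0\leq \widetilde{\gamma}_*^c\leq C_1'\,\cK_\rL\gamma_*^c\cK_\rL$ together with Lemma \ref{lem:u-dec'} is precisely that analogue. One trivial slip: you write ``conjugate by $(1-\Delta)^{1/2}$'' but then (correctly) conjugate by $(1-\Delta)$ in the display, which is what is needed for the $X^4$ norm.
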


Next, by Lemma \ref{lem:u-dec'}, 
\begin{align*}
    \left<\cK_\rS u_{c,m}, \cK_\rS u_{c,n}\right>_\cH= \frac{1}{4c^2}\left<\cL \cK_\rL u_{c,m}, \cL \cK_\rL u_{c,n}\right>_\cH+\cO(c^{-4}).
\end{align*}
Let
\begin{align*}
    \widetilde{\widetilde{S}}_{\rm DF}:=\Big(\left<\cL \cK_\rL u_{c,m}, \cL \cK_\rL u_{c,n}\right>_\cH\Big)_{1\leq m,n\leq q},
\end{align*}
Thus
\begin{align*}
    \|\widetilde{S}_{\rm DF}-\widetilde{\widetilde{S}}_{\rm DF}\|_{\cB(\C^q)}=\cO(c^{-2}).
\end{align*}
This implies
\begin{align}
    S_{\rm DF}^{-1}=\1_{q\times  q}+\frac{1}{4c^2}\widetilde{S}_{\rm DF}+\cO(c^{-4}) =\1_{q\times  q}+\frac{1}{4c^2}\widetilde{\widetilde{S}}_{\rm DF}+\cO(c^{-4}).
\end{align}
Now the density matrix $\widetilde{\gamma}_*^c$ can be further approximated by  $ \widetilde{\widetilde{\gamma}}_*^c$ which is defined by
\begin{align}\label{eq:gammaDF''}
   \widetilde{\widetilde{\gamma}}_*^c:&=\Big(\left|\cK_\rL u_{c,1}\right>,\cdots, \left|\cK_\rL u_{c,q}\right>\Big)\left(\1_{q\times q}+ \frac{1}{4c^2}\widetilde{\widetilde{S}}_{\rm DF}\right)\begin{pmatrix}
        \left<\cK_\rL u_{c,1}\right|\\
        \vdots\\
        \left<\cK_\rL u_{c,q}\right|
    \end{pmatrix}\notag\\
    &= \cK_\rL \gamma_*^c\cK_\rL  +\frac{1}{4c^2}\sum_{1\leq m,n\leq q}  \left<\cL \cK_\rL u_{c,m},\cL \cK_\rL u_{c,n}\right>_\HL\left|\cK_\rL u_{c,m}\right>\left<\cK_\rL u_{c,n}\right|.
\end{align}
Analogous to the proof of Lemma \ref{lem: gammaHF''-gammaHF'}, by Lemma \ref{lem:u-dec'} we get
\begin{lemma}\label{lem: gammaDF''-gammaDF'}
    For $c$ large enough, we have
    \begin{align*}
        \|\widetilde{\widetilde{\gamma}}_*^c- \widetilde{\gamma}_*^c\|_{X^4}=\cO(c^{-4}).
    \end{align*}
\end{lemma}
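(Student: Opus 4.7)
The plan is to mirror exactly the argument already deployed for Lemma~\ref{lem: gammaHF''-gammaHF'}, substituting the DF orbitals $u_{c,n}$ and their $\cK_\rL$-components in place of the HF orbitals $u_n^{\rm HF}$ and their relativistic lifts $\cS u_n^{\rm HF}$, and using Lemma~\ref{lem:u-dec'} in place of Corollary~\ref{cor:regularity-H3} as the source of uniform regularity.

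The first step is to set $A := S_{\rm DF}^{-1} - \bigl(\1_{q\times q} + \tfrac{1}{4c^2}\widetilde{\widetilde{S}}_{\rm DF}\bigr)$ and verify that $\|A\|_{\mathfrak{S}_1(\C^q)} = \cO(c^{-4})$. This combines two inputs: the bound \eqref{eq:S-1'} controlling $S_{\rm DF}^{-1}$ against $\1+\tfrac{1}{4c^2}\widetilde{S}_{\rm DF}$, together with the estimate $\tfrac{1}{4c^2}\|\widetilde{S}_{\rm DF}-\widetilde{\widetilde{S}}_{\rm DF}\|_{\mathfrak{S}_1(\C^q)}=\cO(c^{-4})$, which follows from expanding $\langle\cK_\rS u_{c,m},\cK_\rS u_{c,n}\rangle$ around $\tfrac{1}{4c^2}\langle\cL u_{c,m}^{\rm L},\cL u_{c,n}^{\rm L}\rangle$ and exploiting the $H^1$-estimate $\|u_{c,n}^{\rm S}-\tfrac{1}{2c}\cL u_{c,n}^{\rm L}\|_{H^1(\R^3;\C^2)}=\cO(c^{-3})$ provided by Lemma~\ref{lem:u-dec'}.

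Next, since $A$ is a Hermitian $q\times q$ matrix, diagonalise $A = U^* {\rm Diag}(\lambda_1(A),\ldots,\lambda_q(A)) U$ with $\sum_{n=1}^q|\lambda_n(A)| = \|A\|_{\mathfrak{S}_1(\C^q)}=\cO(c^{-4})$. Define the rotated orbitals $(v_1,\ldots,v_q)^T := U(u_{c,1},\ldots,u_{c,q})^T$. A direct computation from \eqref{eq:gammaDF'} and \eqref{eq:gammaDF''} gives
\begin{align*}
\widetilde{\widetilde{\gamma}}_*^c - \widetilde{\gamma}_*^c = -\sum_{n=1}^q \lambda_n(A)\,\bigl|\cK_\rL v_n\bigr\rangle\bigl\langle \cK_\rL v_n\bigr|.
\end{align*}
By unitarity of $U$ and Lemma~\ref{lem:u-dec'} one has $\max_n\|v_n\|_{H^2}\lesssim \max_n\|u_{c,n}\|_{H^2}=\cO(1)$, hence $\|\cK_\rL v_n\|_{H^2(\R^3;\C^2)}=\cO(1)$. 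The triangle inequality in $X^4$ then yields
\begin{align*}
\|\widetilde{\widetilde{\gamma}}_*^c - \widetilde{\gamma}_*^c\|_{X^4} \leq \sum_{n=1}^q|\lambda_n(A)|\,\|\cK_\rL v_n\|_{H^2}^{2}\lesssim \|A\|_{\mathfrak{S}_1(\C^q)}=\cO(c^{-4}),
\end{align*}
which is the claim.

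The only potential obstacle is confirming that the replacement of $\widetilde{S}_{\rm DF}$ by $\widetilde{\widetilde{S}}_{\rm DF}$ costs only $\cO(c^{-4})$ after division by $4c^2$; this is the single place where Assumption~\ref{ass:V} and the sharp small-component expansion of Lemma~\ref{lem:u-dec'} are indispensable, since the naive $L^2$-expansion of $u_{c,n}^{\rm S}$ only yields $\cO(c^{-3})$ and one must use the cross term $\langle\tfrac{1}{2c}\cL u_{c,m}^{\rm L},\,u_{c,n}^{\rm S}-\tfrac{1}{2c}\cL u_{c,n}^{\rm L}\rangle$ together with $\|\cL u_{c,m}^{\rm L}\|_{L^2}=\cO(1)$ to squeeze out the extra power of $c^{-1}$. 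Once this expansion is in place, the rest is the same clean diagonalisation-plus-Schatten-bound argument as in Lemma~\ref{lem: gammaHF''-gammaHF'}.
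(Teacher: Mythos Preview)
Your proof is correct and follows exactly the route the paper indicates, which simply says ``Analogous to the proof of Lemma~\ref{lem: gammaHF''-gammaHF'}, by Lemma~\ref{lem:u-dec'} we get.'' One small misattribution in your final paragraph: the matrix replacement $\tfrac{1}{4c^2}(\widetilde{S}_{\rm DF}-\widetilde{\widetilde{S}}_{\rm DF})=\cO(c^{-4})$ already follows from the $L^2$-bound $\|u_{c,n}^{\rm S}-\tfrac{1}{2c}\cL u_{c,n}^{\rm L}\|_{L^2}=\cO(c^{-3})$ of Lemma~\ref{lem:u-dec} (the cross terms are $\tfrac{1}{2c}\cdot\cO(1)\cdot\cO(c^{-3})$), so Assumption~\ref{ass:V} is not needed there; the place where Lemma~\ref{lem:u-dec'} is genuinely indispensable is the uniform bound $\|u_{c,n}\|_{H^2}=\cO(1)$, which you use to control $\|\cK_\rL v_n\|_{H^2}^2$ in the final $X^4$-estimate.
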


\subsubsection{End of the proof}
Estimate \eqref{eq:EHF-gammac-c2} follows from the following lemma.
\begin{lemma}
    For $c$ large enough,
    \begin{align*}
        \cE^{\rm HF}(\widetilde{\gamma}_*^c)=\cE^{\rm HF}(\widetilde{\widetilde{\gamma}}_*^c)+\cO(c^{-4})=E_{c,q}-\widetilde{\cE}_c^{(2)}(\gamma_*^c)+\cO(c^{-4}).
    \end{align*}
\end{lemma}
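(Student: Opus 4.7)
The statement contains two separate estimates that I will prove in turn. For the first equality $\cE^{\rm HF}(\widetilde{\gamma}_*^c)=\cE^{\rm HF}(\widetilde{\widetilde{\gamma}}_*^c)+\cO(c^{-4})$, I will apply Lemma \ref{lem: gammaDF''-gammaDF'}, which supplies $\|\widetilde{\gamma}_*^c-\widetilde{\widetilde{\gamma}}_*^c\|_{X^4}=\cO(c^{-4})$, and combine it with a continuity bound for $\cE^{\rm HF}$. Writing $\Delta:=\widetilde{\gamma}_*^c-\widetilde{\widetilde{\gamma}}_*^c$, the difference decomposes into a kinetic piece bounded by $\|\Delta\|_{X^2}$, a $V$-piece controlled by Kato's inequality, and a quadratic $W$-piece controlled by Lemma \ref{lem:W2-estimate}; the needed uniform $X^4$-bounds on $\widetilde{\gamma}_*^c$ and $\widetilde{\widetilde{\gamma}}_*^c$ are supplied by Lemmas \ref{lem:gammaDF'-estimate} and \ref{lem:u-dec'}, and each piece is $\cO(c^{-4})$.

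For the second equality, I compute $\cE^{\rm HF}(\widetilde{\widetilde{\gamma}}_*^c)-\cE_c(\gamma_*^c)$ directly, mirroring Lemma \ref{lem:3.13} but in the reverse (DF-to-HF) direction. Using \eqref{eq:gammaDF''}, decompose
\begin{align*}
\widetilde{\widetilde{\gamma}}_*^c=\cK_\rL\gamma_*^c\cK_\rL+\tfrac{1}{4c^2}A,\qquad A=\sum_{1\leq m,n\leq q}\langle\cL u_{c,m}^\rL,\cL u_{c,n}^\rL\rangle_{\HL}\,|\cK_\rL u_{c,m}\rangle\langle\cK_\rL u_{c,n}|.
\end{align*}
The linear pieces $\Tr[H_0\,\cdot\,]$ and $\Tr[V\,\cdot\,]$ split exactly; the quadratic $W$-piece expands symmetrically with an $\cO(c^{-4})$ remainder controlled via Lemma \ref{lem:W2-estimate} (using $\|A\|_{X^2}=\cO(1)$). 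This yields
\begin{align*}
\cE^{\rm HF}(\widetilde{\widetilde{\gamma}}_*^c)=\cE^{\rm HF}(\cK_\rL\gamma_*^c\cK_\rL)+\tfrac{1}{4c^2}\Tr_\cH\bigl[(H_0-V+W_{\cK_\rL\gamma_*^c\cK_\rL})A\bigr]+\cO(c^{-4}).
\end{align*}

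The next task is to relate $\cE^{\rm HF}(\cK_\rL\gamma_*^c\cK_\rL)$ to $E_{c,q}=\cE_c(\gamma_*^c)$ to order $c^{-4}$. The kinetic difference, via \eqref{eq:D-c2-uL-S}, equals $-2c^2\sum_n\|u_{c,n}^\rS-\tfrac{1}{2c}\cL u_{c,n}^\rL\|_{L^2}^2$, which is $\cO(c^{-4})$ under Assumption \ref{ass:V} by Lemma \ref{lem:u-dec'}. The $V$- and $W$-differences arise solely from the $\cK_\rS$-blocks of $\gamma_*^c$, and the key input is the approximation $\cK_\rS u_{c,n}=\tfrac{1}{2c}(0,\cL u_{c,n}^\rL)+r_n$ with $\|r_n\|_{H^1}=\cO(c^{-3})$ from Lemma \ref{lem:u-dec'}. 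Replacing $\cK_\rS\gamma_*^c\cK_\rS$ by $\tfrac{1}{4c^2}\cL\cK_\rL\gamma_*^c\cK_\rL\cL$ up to $\cO(c^{-4})$, and using the block decomposition \eqref{eq:Wpotential-decom} for $W_{2,\bullet}$, these differences assemble precisely into the potential-like integrals of $\widetilde{E}_c^{(2)}(\gamma_*^c)$.

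The final piece is $\tfrac{1}{4c^2}\Tr[H_0 A]$ together with the matching potential contributions. From the DF equation $\cD_{\gamma_*^c}u_{c,n}=\lambda_n^c u_{c,n}$ and identity \eqref{eq:D-c2-uL-S} I obtain, uniformly in $m,n$,
\begin{align*}
\langle u_{c,m}^\rL, H_0 u_{c,n}^\rL\rangle_{\HL}=(\lambda_n^c-c^2)\delta_{m,n}+\langle u_{c,m},(V-W_{\gamma_*^c})u_{c,n}\rangle_\cH+\cO(c^{-4}).
\end{align*}
Summing this against the matrix $\widetilde{\widetilde{S}}_{\rm DF}$ of \eqref{eq:S'-DF} and combining with $\tfrac{1}{4c^2}\Tr[(-V+W_{\cK_\rL\gamma_*^c\cK_\rL})A]$ and the potential contributions from the previous paragraph, the off-diagonal $(V-W_{\gamma_*^c})$ matrix elements cancel (using that $W_{\gamma_*^c}-W_{\cK_\rL\gamma_*^c\cK_\rL}=\cO(c^{-2})$ and the $\cK_j$-$\cK_{j'}$ mixing structure of $W_2$ encoded in \eqref{eq:Wpotential-decom}). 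What remains is exactly $\tfrac{1}{4c^2}\sum_n(\lambda_n^c-c^2)\langle\cL u_{c,n}^\rL,\cL u_{c,n}^\rL\rangle_{\HL}$ plus the designated $V$, $W_1$ and $W_2$ integrals, which together form $-\widetilde{E}_c^{(2)}(\gamma_*^c)$ by definition. The main obstacle is this last bookkeeping step: potential matrix elements arise from three distinct sources---the DF kinetic-to-potential conversion, the $A$-term, and the $\cK_\rS$-blocks of $\gamma_*^c$---and their cancellation must be tracked carefully to leave the purely diagonal-in-$n$ structure of $\widetilde{E}_c^{(2)}$.
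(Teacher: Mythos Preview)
Your proposal is correct and follows essentially the same route as the paper's proof: both establish the first equality via Lemma~\ref{lem: gammaDF''-gammaDF'} and standard continuity bounds, then expand $\cE^{\rm HF}(\widetilde{\widetilde{\gamma}}_*^c)$ around $\cE^{\rm HF}(\cK_\rL\gamma_*^c\cK_\rL)$, compute the latter minus $\cE_c(\gamma_*^c)$ block by block (kinetic via \eqref{eq:D-c2-uL-S}, potentials via the $\cK_\rS$-replacement $u_{c,n}^\rS\approx\tfrac{1}{2c}\cL u_{c,n}^\rL$), and identify the linear $A$-correction with the eigenvalue term of $\widetilde E_c^{(2)}$. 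The only organizational difference is that the paper keeps the HF operator $H_{0,\cK_\rL\gamma_*^c\cK_\rL}$ intact when evaluating $\tfrac{1}{4c^2}\Tr[H_{0,\cK_\rL\gamma_*^c\cK_\rL}A]$ and shows directly that $\langle\cK_\rL u_{c,n},H_{0,\cK_\rL\gamma_*^c\cK_\rL}\cK_\rL u_{c,m}\rangle=(\lambda_n^c-c^2)\delta_{m,n}+\cO(c^{-2})$, whereas you split off $H_0$ first and then recombine with the $(-V+W)$ piece; your cancellation claim (that the potential matrix elements from the $H_0$-expansion cancel against $\Tr[(-V+W_{\cK_\rL\gamma_*^c\cK_\rL})A]$ up to $\cO(c^{-2})$) is correct and amounts to the same computation. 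One wording quibble: it is not only the \emph{off-diagonal} potential matrix elements that cancel---the diagonal ones do too, leaving purely the eigenvalue contribution.
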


\begin{proof}
We have
    \begin{align*}
        \cE^{\rm HF}(\widetilde{\gamma}_*^c)-\cE^{\rm HF}(\widetilde{\widetilde{\gamma}}_*^c)=\Tr_\cH\Big[H_{0} (\widetilde{\gamma}_*^c- \widetilde{\widetilde{\gamma}}_*^c)\Big]+\Tr_\cH\left[\left(-V+\frac{1}{2}W_{\widetilde{\gamma}_*^c+\widetilde{\widetilde{\gamma}}_*^c}\right)(\widetilde{\gamma}_*^c-\widetilde{\widetilde{\gamma}}_*^c)\right].
    \end{align*}
Then analogous to the proof of Lemma \ref{lem:3.13}, by Lemma \ref{lem: gammaDF''-gammaDF'}, Kato's inequality and \eqref{eq:5.1-1}, we infer that
    \begin{align*}
        \left| \cE^{\rm HF}(\widetilde{\gamma}_*^c)-\cE^{\rm HF}(\widetilde{\widetilde{\gamma}}_*^c)\right|\lesssim \|\widetilde{\gamma}_*^c - \widetilde{\widetilde{\gamma}}_*^c\|_{X^2}=\cO(c^{-4}).
    \end{align*}

\medskip

Now we prove $\cE^{\rm HF}(\widetilde{\widetilde{\gamma}}_*^c)=E_{c,q}+\widetilde{\cE}_c^{(2)}(\cK_\rL \gamma_*^c \cK_\rL)+\cO(c^{-4})$. Note that
\begin{align*}
    \cE^{\rm HF}(\gamma+h)=\cE^{\rm HF}(\gamma)+\Tr_\HL\big[H_{0,\gamma}h\big]+\frac{1}{2}\Tr_\HL\big[W_h h\big].
\end{align*}
Then according to the definition of $\widetilde{\widetilde{\gamma}}_*^c$ (see \eqref{eq:gammaDF''}), by the Taylor's expansion of the HF functional,
\begin{align}\label{eq:7.28}
 \cE^{\rm HF}(\widetilde{\widetilde{\gamma}}_*^c)&=\cE^{\rm HF}(\cK_\rL \gamma_*^c\cK_\rL  )\notag\\
  &\quad+\Tr_{\cH_\rL}[H_{0,\cK_\rL \gamma_*^c\cK_\rL}(\widetilde{\widetilde{\gamma}}_*^c-\cK_\rL \gamma_*^c\cK_\rL)]+\frac{1}{2}\Tr_{\cH_\rL}[W_{\widetilde{\widetilde{\gamma}}_*^c-\cK_\rL \gamma_*^c\cK_\rL}(\widetilde{\widetilde{\gamma}}_*^c-\cK_\rL \gamma_*^c\cK_\rL)] \notag\\
  &=\cE^{\rm HF}(\cK_\rL \gamma_*^c\cK_\rL  )+\Tr_{\cH_\rL}[H_{0,\cK_\rL \gamma_*^c\cK_\rL}(\widetilde{\widetilde{\gamma}}_*^c-\cK_\rL \gamma_*^c\cK_\rL)]+\cO(c^{-4})\notag\\
  &=\frac{1}{4c^2}\sum_{1\leq m,n\leq q}\left<\cL \cK_\rL u_{c,m},\cL \cK_\rL u_{c,n}\right>_\cH\left<\cK_\rL u_{c,n},H_{0, \cK_\rL \gamma_*^c \cK_\rL}\cK_\rL u_{c,m} \right>_\cH\notag\\
  &\quad+\cE^{\rm HF}(\cK_\rL \gamma_*^c \cK_\rL)+\cO(c^{-4}),
\end{align}
where in the second estimate, we used \eqref{eq:5.1-1} and Lemma \ref{lem:u-dec}. To end the proof, we need to study 
\begin{align*}
    I:&=\left<\cK_\rL u_{c,n},H_{0, \cK_\rL \gamma_*^c \cK_\rL}\cK_\rL u_{c,m} \right>_\cH,\\
    II:&=\cE^{\rm HF}(\cK_\rL \gamma_*^c \cK_\rL).
\end{align*}
We study them separately.

\medskip

\noindent{\bf Estimate on $I$.} We now consider the term $\left<\cK_\rL u_{c,n},H_{0, \cK_\rL \gamma_*^c \cK_\rL}\cK_\rL u_{c,m} \right>_\cH$ on the right-hand side of \eqref{eq:7.28}. According to \eqref{eq:D-c2-uL-S} and Lemma \ref{lem:u-dec}, we have
\begin{align*}
    \left<\cK_\rL u_{c,n},H_0\cK_\rL u_{c,m} \right>_\cH=  \left<u_{c,n},(\cD-c^2)u_{c,m} \right>_\cH+\cO(c^{-2}).
\end{align*}
Concerning the potential terms, analogous to \eqref{eq:V-L-S}-\eqref{eq:W2-L-S} we have
\begin{align*}
     \left<\cK_\rL u_{c,n},(-V+W_{\cK_\rL \gamma_*^c \cK_\rL}) \cK_\rL u_{c,m} \right>_\cH - \left<u_{c,n},(-V+W_{ \gamma_*^c}) u_{c,m} \right>_\cH=\cO(c^{-2}).
\end{align*}
Thus,
\begin{align*}
    \left<\cK_\rL u_{c,n},H_{0, \cK_\rL \gamma_*^c \cK_\rL}\cK_\rL u_{c,m} \right>_\cH=\left< u_{c,n},(\cD_{ \gamma_*^c }-c^2)u_{c,m} \right>_\cH+\cO(c^{-2}).
\end{align*}
Then as $\left<u_{c,n},\cD_{\gamma_*^c}u_{c,m}\right>_{\cH}=\lambda_n\delta_{n,m}$, we get
\begin{align}\label{eq:estimate-I}
    \cE^{\rm HF}(\widetilde{\widetilde{\gamma}}_*^c)&=\cE^{\rm HF}(\cK_\rL \gamma_*^c \cK_\rL)\notag\\
    &+\frac{1}{4c^2}\sum_{1\leq m,n\leq q}\left<\cL \cK_\rL u_{c,m},\cL \cK_\rL u_{c,n}\right>_\cH\left<u_{c,n},(\cD_{\gamma_*^c} -c^2)u_{c,m} \right>_\cH+\cO(c^{-4})\notag\\
    &=\cE^{\rm HF}(\cK_\rL \gamma_*^c \cK_\rL) +\frac{1}{4c^2}\Tr_{\cH}[(\cD_{\gamma_*^c} -c^2)\gamma_*^c \cL^2]  +\cO(c^{-4}).
\end{align}

\medskip

\noindent{\bf Estimate on $II$.} We now consider the term $\cE^{\rm HF}(\cK_\rL \gamma_*^c \cK_\rL) $. By \eqref{eq:H0-D-c2-gammac-L}, for the kinetic term, we have
\begin{align}\label{eq:7.30}
    \Tr_\cH[H_0 \cK_\rL \gamma_*^c \cK_\rL]=\Tr_\cH[(\cD-c^2)\gamma_*^c]+\cO(c^{-4}).
\end{align}

Concerning the potential between electrons and nuclei, by Lemma \ref{lem:u-dec},
\begin{align*}
    \Tr_\cH[-V \cK_\rL \gamma_*^c \cK_\rL]&=\Tr_\cH[-V\gamma_*^c]-\Tr[-V \cK_\rS \gamma_*^c \cK_\rS]\\
    &=\Tr_\cH[-V\gamma_*^c]-\Tr_\cH[-V \cK_\rS \cS\gamma_*^c \cS^* \cK_\rS]+\Tr_\cH[-V \cK_\rS (\cS\gamma_*^c \cS^* -\gamma_*^c) \cK_\rS].
\end{align*}
By Lemma \ref{lem:u-dec'} and Hardy's inequality,
\begin{align*}
 \MoveEqLeft  \Big| \Tr[V \cK_\rS (\cS\gamma_*^c \cS^* -\gamma_*^c) \cK_\rS]\Big|\\
 &=\left|\sum_{n=1}^q \left<(u_{c,n}^\rS +\frac{1}{2c}\cL u_{c,n}^\rL), V(u_{c,n}^\rS -\frac{1}{2c}\cL u_{c,n}^\rL)\right>_{L^2(\R^3;\C^2)}\right|\\
   &\lesssim \sum_{n=1}^q \left(\|u_{c,n}^\rS\|_{H^1(\R^3;\C^2)}+\frac{1}{2c}\|u_{c,n}^\rL\|_{H^2(\R^3;\C^2)}\right)\left\|u_{c,n}^\rS -\frac{1}{2c}\cL u_{c,n}^\rL\right\|_{L^2(\R^3;\C^2)}=\cO(c^{-4}).
\end{align*}
Thus,
\begin{align}\label{eq:7.31}
     \Tr_\cH[-V \cK_\rL \gamma_*^c \cK_\rL]&=\Tr_\cH[-V\gamma_*^c]-\Tr_\cH[-V \cK_\rS \cS\gamma_*^c \cS^* \cK_\rS]+\cO(c^{-4})\notag\\
    &=\Tr_\cH[-V\gamma_*^c]-\frac{1}{4c^2}\Tr_\cH[-V \cL\cK_\rL \gamma_*^c  \cK_\rL \cL]+\cO(c^{-4}).
\end{align}

Concerning the potential between electrons and electrons, we write $W_{\bullet}=W_{1,\bullet}-W_{2,\bullet}$. For the term associated with $W_{1,\bullet}$, analogous to above estimate, by \eqref{eq:V-K1-K2} and Lemma \ref{lem:u-dec},
\begin{align}\label{eq:7.32}
   \MoveEqLeft \frac{1}{2}\Tr_\cH[W_{1,\cK_\rL \gamma_*^c \cK_\rL} \cK_\rL \gamma_*^c \cK_\rL]\notag\\
   &=\frac{1}{2}\Tr_\cH[W_{1,\gamma_*^c}\gamma_*^c]-\Tr_\cH[W_{1,\cK_\rL \gamma_*^c \cK_\rL} \cK_\rS \gamma_*^c \cK_\rS]- \frac{1}{2}\Tr_\cH[W_{1,\cK_\rS \gamma_*^c \cK_\rS} \cK_\rS \gamma_*^c \cK_\rS]\notag\\
    &=\frac{1}{2}\Tr_\cH[W_{1,\gamma_*^c}\gamma_*^c]-\Tr_\cH[W_{1,\cK_\rL \gamma_*^c \cK_\rL}\cK_\rS \cS\gamma_*^c \cS^* \cK_\rS]+\cO(c^{-4})\notag\\
    &=\frac{1}{2}\Tr_\cH[W_{1,\gamma_*^c}\gamma_*^c]-\frac{1}{4c^2}\Tr_\cH[W_{1,\cK_\rL \gamma_*^c \cK_\rL}\cL\cK_\rL \gamma_*^c\cK_\rL \cL]+\cO(c^{-4}).
\end{align}
Next, we consider the term associated with $W_{2,\bullet}$. By \eqref{eq:W-K1-K2}, we have
\begin{align*}
   \MoveEqLeft \frac{1}{2}\Tr_\cH[W_{2,\cK_\rL \gamma_*^c \cK_\rL}\cK_\rL \gamma_*^c \cK_\rL]\\
   &=\frac{1}{2}\Tr_\cH[W_{2, \gamma_*^c } \gamma_*^c ]-\Tr_\cH[W_{2, \cK_\rL\gamma_*^c\cK_\rS } \cK_\rS \gamma_*^c \cK_\rL]-\frac{1}{2}\Tr_\cH[W_{ 2,\cK_\rS\gamma_*^c\cK_\rS } \cK_\rS \gamma_*^c \cK_\rS]\\
    &=\frac{1}{2}\Tr_\cH[W_{2, \gamma_*^c } \gamma_*^c ]-\Tr_\cH[W_{ 2,\cK_\rL\gamma_*^c\cK_\rS } \cK_\rS \gamma_*^c \cK_\rL]+\cO(c^{-4})
\end{align*}
where in the last estimate we used Lemma \ref{lem:W2-estimate} and Lemma \ref{lem:u-dec}. Now we try to replace the term $\Tr_\cH[W_{ 2,\cK_\rL\gamma_*^c\cK_\rS } \cK_\rS \gamma_*^c \cK_\rL]$ in above estimate by $\Tr_\cH[W_{ 2,\cK_\rL\gamma_*^c \cS^* \cK_\rS } \cK_\rS\cS \gamma_*^c \cK_\rL]$: by H\"older's inequality \eqref{eq:Holder} in addition,
\begin{align*}
   \MoveEqLeft \Big|\Tr_\cH[W_{2, \cK_\rL\gamma_*^c\cK_\rS } \cK_\rS \gamma_*^c \cK_\rL] - \Tr_\cH[W_{ 2,\cK_\rL\gamma_*^c \cS^* \cK_\rS } \cK_\rS\cS \gamma_*^c \cK_\rL] \Big|\\
    &\leq \Big| \Tr_\cH[W_{ 2,\cK_\rL\gamma_*^c(1-\cS^*)\cK_\rS } \cK_\rS \gamma_*^c \cK_\rL]\Big|+\Big|\Tr_\cH[W_{2, \cK_\rL\gamma_*^c\cS^*\cK_\rS } \cK_\rS (1-\cS)\gamma_*^c \cK_\rL]\Big|\\
    &\lesssim \left(\|\cK_\rS \gamma_*^c \cK_\rL\|_{X^2}+\|\cK_\rL \gamma_*^c \cS^*\cK_\rS\|_{X^2} \right)\|\cK_\rL \gamma_*^c (1-\cS^*)\cK_\rS\|_{\mathfrak{S}_1}\\
    &\lesssim \left(\|\cK_\rS \gamma_*^c \cK_\rS\|_{X^2}^{1/2}+c^{-1}\|\cL \cK_\rL \gamma_*^c \cK_\rL \cL\|_{X^2}^{1/2} \right) \|\cK_\rL \gamma_*^c \cK_\rL\|_{X^2} \|\cK_\rS (1-\cS)\gamma_*^c (1-\cS^*)\cK_\rS\|_{\mathfrak{S}_1}^{1/2}\\
    &\lesssim c^{-1}\|\cK_\rS (1-\cS)\gamma_*^c (1-\cS^*)\cK_\rS\|_{\mathfrak{S}_1}^{1/2}\\
    &\lesssim  c^{-1}\left(\sum_{n=1}^q \left\|u_{c,n}^\rS -\frac{1}{2c}\cL u_{c,n}^\rL\right\|_{H^1(\R^3;\C^4)}^2\right)^{1/2}=\cO(c^{-4}).
\end{align*}
Thus,
\begin{align}\label{eq:7.33}
   \MoveEqLeft \frac{1}{2}\Tr_\cH[W_{2,\cK_\rL \gamma_*^c \cK_\rL}\cK_\rL \gamma_*^c \cK_\rL]\notag\\
   &=\frac{1}{2}\Tr_\cH[W_{2, \gamma_*^c } \gamma_*^c ]-\Tr_\cH[W_{2, \cK_\rL\gamma_*^c \cS^*\cK_\rS } \cK_\rS\cS \gamma_*^c \cK_\rL]+\cO(c^{-4})\notag\\
    &=\frac{1}{2}\Tr_\cH[W_{2, \gamma_*^c } \gamma_*^c ]-\frac{1}{4c^2}\Tr_\cH[W_{2, \cK_\rL\gamma_*^c \cK_\rL \cL } \cL \cK_\rL \gamma_*^c \cK_\rL]+\cO(c^{-4}).
\end{align}
By \eqref{eq:7.30}-\eqref{eq:7.33}, we get
\begin{align}\label{eq:estimate-II}
    \cE^{\rm HF}(\cK_\rL \gamma_*^c \cK_\rL)&= \cE_c(\gamma_*^c)-\frac{1}{4c^2}\Tr_\cH[(-V+W_{1,\cK_\rL \gamma_*^c\cK_\rL })\cL \cK_\rL\gamma_*^c \cK_\rL \cL]\notag\\
    &\quad +\frac{1}{4c^2}\Tr_\cH[W_{2,\cK_\rL \gamma_*^c \cK_\rL \cL}\cL \cK_\rL\gamma_*^c \cK_\rL]+\cO(c^{-4}).
\end{align}

\noindent{\bf Conclusion.} Thus, from \eqref{eq:estimate-I} and \eqref{eq:estimate-II}, we conclude that
\begin{align*}
    \cE^{\rm HF}(\widetilde{\widetilde{\gamma}}_*^c)
    &=\cE_c(\gamma_*^c) -\widetilde{\cE}_c^{(2)}(\gamma_*^c) +\cO(c^{-4})=E_{c,q}-\widetilde{\cE}_c^{(2)}(\gamma_*^c) +\cO(c^{-4})
\end{align*}
where we recall that 
\begin{align*}
    \widetilde{\cE}_c^{(2)}(\gamma_*^c)&=-\frac{1}{4c^2}\Tr_{\cH}[(\cD_{\gamma^c_*}-c^2)\gamma^c_*\cL^2]  \\
    &+\frac{1}{4c^2}\Big(\Tr_\cH[(-V+W_{1,\cK_\rL \gamma_*^c\cK_\rL })\cL \cK_\rL \gamma_*^c  \cK_\rL \cL]-\Tr_\cH[W_{2,\cK_\rL \gamma_*^c \cK_\rL \cL}\cL \cK_\rL \gamma_*^c \cK_\rL]\Big).
\end{align*}
Finally, from Lemma \ref{lem:gammaDF'-projection} and Lemma \ref{lem:gammaDF'-estimate}, we conclude that $\gamma\in \Gamma_q^{\rm HF}$. This ends the proof.
\end{proof}

\section{Proofs}\label{sec:8}
We are now in the position to prove Theorems \ref{th:rela-effect}, \ref{th:non-unfill}, \ref{th:rela-correction} and Proposition \ref{prop:rela-decomp}.
\subsection{Proof of Theorem \ref{th:rela-effect}}\label{sec:8.1}
From Theorem \ref{th:3.1} and Theorem \ref{th:4.1}, we infer that
\begin{align*}
   E_{c,q} \leq E_q^{\rm HF}+\cO(c^{-2}).
\end{align*}
To prove the inverse estimate, we study first $\cK_\rL\gamma_*^c \cK_\rL$. As $0\leq \cK_\rL\gamma_*^c \cK_\rL\leq \gamma_*^c$, we know $\cK_\rL\gamma_*^c \cK_\rL\in \Gamma_{q}$. By the definition of $\cK_\rL$, we have $\cK_\rL\gamma_*^c \cK_\rL \in \Gamma_{q}\cap \cB(\cH,\HL)\cap X^2$. From Theorem \ref{th:5.1}, we know that
\begin{align}\label{eq:L-gamma-L-min}
    E_q^{\rm HF}\leq  \cE^{\rm HF}(\cK_\rL \gamma^c_*\cL_\rL)\leq E_{c,q}+\cO(c^{-2})\leq E_q^{\rm HF}+\cO(c^{-2}).
\end{align}
This proves Theorem \ref{th:rela-effect}.

\subsection{Proof of Theorem \ref{th:non-unfill}}\label{sec:8.2}
As $q\leq z$, from Theorem \ref{th:min-DF}, we can find a set of DF minimizers  $(\gamma^{c}_*)_c$ of $E_{c,q}$ satisfying \eqref{eq:gamma-DF} and $\Tr_\cH[\gamma_*^c]=q$. We argue by contradiction: there exists a subsequence of $(\gamma^{c_j}_*)_{c_j}$ such that $0<\delta_{c_j}<\1_{\nu_{c_j}}(\mathcal{D}^{c_j}_{\gamma_*^{c_j}})$ for any $j\geq 1$ with $c_j\to \infty$ when $j\to \infty$. From Theorem \ref{th:rela-effect} and \eqref{eq:L-gamma-L-min}, $(\cK_\rL\gamma_*^{c_j}\cK_\rL)_j$ is a minimizing sequence of the HF minimum problem \eqref{eq:min-HF}. Thus, from the existence of HF minimizers (see e.g., \cite{lieb1977hartree}), up to subsequences,
   \begin{align*}
       \cK_\rL\gamma_*^{c_j}\cK_\rL\to \gamma_*^{\rm HF}\qquad\textrm{in}\quad X^2.
   \end{align*}
From Lemma \ref{lem:u-dec}, we infer that for $j\to \infty$,
\begin{align}\label{eq:7.2}
    \gamma_*^{c_j}\to \gamma_*^{\rm HF}\qquad\textrm{in}\quad X^2.
\end{align}
In addition, from \eqref{eq:lambda-c2}, we know $-e\leq v_{c_j}-c^2\leq 0$. Thus up to subsequences, there exists $-e\leq \nu_*\leq 0$ such that $\nu_{c_j}-c^2\to \nu_*$ as $j\to \infty$. Now we prove $\nu_*=\nu$ with $\nu$ being given in \eqref{eq:gamma-HF}. Note that $\nu_{c_j}$ is the $q$-th eigenvalue of $\mathcal{D}^{c_j}_{\gamma_*^{c_j}}$. According to min-max principle (see e.g. \cite{dolbeault2000eigenvalues}) and \eqref{eq:7.2},
\begin{align*}
 \nu_{c_j}-c^2=   \sigma_q^+(\mathcal{D}^{c_j}_{\gamma_*^{c_j}})-c^2=\sigma_q^+(\mathcal{D}^{c_j}_{\gamma_*^{\rm HF}})-c^2+o_{j\to \infty}(1).
\end{align*}
Here assumptions in \cite{dolbeault2000eigenvalues} are verified under Assumption \ref{ass:c} (see e.g., \cite[Lemma 3.6]{sere2023new}). Then according to \cite[Theorem 6.6 and Theorem 6.7]{thaller2013dirac} or \cite[Theorem 3 and Eq. (8)]{esteban2001nonrelativistic},
\begin{align*}
 \nu_{c_j}-c^2= \sigma_q^+(\mathcal{D}^{c_j}_{\gamma_*^{\rm HF}})-c^2=\sigma_q(H_{0,\gamma_*^{\rm HF}}) +o_{j\to \infty}(1)=\nu+o_{j\to \infty}(1).
\end{align*}
As $\nu_{c_j}-c^2 \to \nu_*$, we know $\nu_*=\nu$.

\medskip

Let
\begin{align*}
  d:=  \frac{1}{2}{\rm dist}\Big(\sigma(H_{0,\gamma^{\rm HF}_*})\setminus\{\nu\},\nu\Big)
\end{align*}
Then
\begin{align*}
  q= \Tr_\HL\Big[\1_{(-\infty,\nu+d]}(H_{0,\gamma_{*}^{\rm HF}})\Big],
\end{align*}
and there are $q$ eigenvalues of $H_{0,\gamma_*^{\rm HF}}$ in the interval $(-\infty,\nu+d]$. Thus for $j$ large enough, by the non-relativistic limit of eigenvalues of Dirac operators (see e.g., \cite[Chp. 6 and Theorem 6.7]{thaller2013dirac}) we know that for $c$ large enough, there are at most $q$ eigenvalues of $\mathcal{D}^{c_j}_{\gamma_*^{\rm HF}}$ in the interval $(0,c^2+\nu+\frac{3}{4}d]$. Thus, for $c$ large enough
\begin{align*}
    \Tr_\cH\Big[\1_{\left(0,\nu+\frac{1}{2}d\right]}(\mathcal{D}^{c_j}_{\gamma_*^{c_j}})\Big]\leq  \Tr_\cH\Big[\1_{(0,c^2+\nu+\frac{3}{4}d]}(\mathcal{D}^{c_j}_{\gamma_*^{\rm HF}})\Big]\leq q.
\end{align*}
However as $0<\delta_{c_j}<\1_{\nu_{c_j}}\left(\mathcal{D}^{c_j}_{\gamma_*^{c_j}}\right)$,
\begin{align*}
  q=\Tr_\cH[\gamma_*^{c_j}]<\Tr_\cH\Big[\1_{\left(0,c^2+\nu_{c_j}\right]}(\mathcal{D}^{c_j}_{\gamma_*^{c_j}})\Big] \leq  \Tr_\cH\Big[\1_{\left(0,c^2+\nu+\frac{1}{2}d\right]}(\mathcal{D}^{c_j}_{\gamma_*^{c_j}})\Big]\leq q.
\end{align*}
 which is impossible. Thus $\delta_c=0$.

\medskip

Now we can write
\begin{align*}
    \gamma_*^c=\sum_{n=1}^q \left|u_{c,n}\right>\left<u_{c,n}\right|
\end{align*}
with $\cD_{\gamma_*^c} u_{c,n}=\lambda_n^c u_{c,n}$, and we can use the orthonormal set $\{u_{c,1},\cdots,u_{c,q}\}$ to represent $\gamma_*^c$. In this sense, the orthonormal set $\{u_{c,1},\cdots, u_{c,q}\}$ is a minimizing sequence of $E_q^{\rm HF}$ in $(H^1)^q$. From \cite{lieb1977hartree}, we infer that as $c\to \infty$,
\begin{align*}
   u_{c,n} \to u_n\qquad \mbox{in } H^1.
\end{align*}
Then by \eqref{lem:u-dec}, as $c\to \infty$,
\begin{align*}
    \cK_\rL  u_{c,n} \to u_n\qquad \mbox{in } H^1.
\end{align*}
Next, from \cite[Corollary 6.5]{thaller2013dirac}, for $c\to \infty$, we have 
\begin{align*}
   \left\| \frac{1}{\cD_{\gamma_*^{\rm HF}}-c^2+i }-  \cK_\rL\frac{1}{H_{0,\gamma_*^{\rm HF}}+i } \cK_\rL\right\|_{\cB(\cH)}\to 0.
\end{align*}
Thus,
\begin{align*}
    \frac{1}{\lambda_n^c-c^2+i}&=\left<u_{c,n},\frac{1}{\cD_{\gamma_*^c}-c^2+i } u_{c,n}\right>_{\cH}=\left<u_{c,n,},\frac{1}{\cD_{\gamma_*^{\rm HF}}-c^2+i } u_{c,n}\right>_{\cH} +o_{c\to \infty}(1)\\
    &=\left<\cK_\rL u_{c,n,},\frac{1}{H_{0,\gamma_*^{\rm HF}}+i } \cK_\rL u_{c,n}\right>_{\HL}+o_{c\to \infty}(1) \\
    &=\left<u_{n},\frac{1}{H_{0,\gamma_*^{\rm HF}}+i } u_{n}\right>_{\HL}+o_{c\to \infty}(1)= \frac{1}{\lambda_n+i}+o_{c\to \infty}(1).
\end{align*}
This shows that $\lambda_n^c-c^2\to \lambda_n$ as $c\to \infty$. Thus this ends the proof.

 \subsection{Proof of Theorem \ref{th:rela-correction}}\label{sec:8.3}
Finally, we prove Theorem \ref{th:rela-correction}. According to Theorem \ref{th:3.1} and Theorem \ref{th:4.1}, we know that under Assumption \ref{ass:V} and for $c$ large enough,
\begin{align*}
    E_{c,q}\leq E_q^{\rm HF}+\cE^{(2)}_c(\gamma_*^{\rm HF})+\cO(c^{-4})
\end{align*}
where $\gamma_*^{\rm HF}$ is any HF minimizer of $E_q^{\rm HF}$. Thus
\begin{align*}
    E_{c,q}\leq E_q^{\rm HF}+\min_{\gamma_*^{\rm HF}\in \mathcal{G}_{\rm HF}}\cE^{(2)}_c(\gamma_*^{\rm HF})+\cO(c^{-4})
\end{align*}
where we recall that $\mathcal{G}_{\rm HF}:=\{\gamma_*^{\rm HF}\in \Gamma_{q}\cap \cB(\cH,\HL)\cap X^2;\; \gamma_*^{\rm HF} \mbox{is a HF minimizer of }E_q^{\rm HF}\}$.

Now we turn to prove
\begin{align}\label{eq:6.1}
  E_q^{\rm HF}\leq   E_{c,q}-\min_{\gamma_*^{\rm HF}\in \mathcal{G}_{\rm HF}} \cE^{(2)}_c(\gamma_*^{\rm HF})+o(c^{-2}).
\end{align}
From Theorem \ref{th:non-unfill}, we know $\delta_c=0$. Thus under Assumption \ref{ass:V}, estimate \eqref{eq:EHF-gammac-c2} holds with $\widetilde{\gamma}_*^c\in \Gamma_{q}^{\rm HF}$. Thus, from Theorem \ref{th:5.1}, under Assumption \ref{ass:V} and for $c$ large enough,
\begin{align*}
    E_q^{\rm HF}\leq  \cE^{\rm HF}(\widetilde{\gamma}^c_*)= E_{c,q}-\widetilde{\cE}_c^{(2)}(\gamma_*^c)+\cO(c^{-4}).
\end{align*}
To end the proof, it remains to show that
\begin{align}\label{eq:6.2}
    \min_{\gamma_*^{\rm HF}\in \mathcal{G}_{\rm HF}} \cE^{(2)}_c(\gamma_*^{\rm HF})\leq \widetilde{\cE}_c^{(2)}(\gamma_*^c) +o(c^{-2}).
\end{align}
Once \eqref{eq:6.2} is proven, 
\begin{align*}
     E_q^{\rm HF}+\min_{\gamma_*^{\rm HF}\in \mathcal{G}_{\rm HF}} \cE^{(2)}_c(\gamma_*^{\rm HF}) \leq E_q^{\rm HF}+\widetilde{\cE}_c^{(2)}(\gamma_*^c) +o(c^{-2}) \leq E_{c,q}+o(c^{-2}). 
\end{align*}
This and \eqref{eq:6.1} show
\begin{align*}
     E_q^{\rm HF}= E_{c,q}- \min_{\gamma_*^{\rm HF}\in \mathcal{G}_{\rm HF}} \cE^{(2)}_c(\gamma_*^{\rm HF})+o(c^{-2}).
\end{align*}
Thus the proof is completed.

\medskip

Now we prove \eqref{eq:6.2}. From Theorem \ref{th:non-unfill}, we write
\begin{align*}
    \gamma_*^c=\sum_{n=1}^q \left|u_{c,n}\right>\left<u_{c,n}\right|
\end{align*}
Then proceeding as for Theorem \ref{th:non-unfill}, there exists $\gamma_*^{\rm HF}\in \mathcal{G}_{\rm HF}$ such that as $c\to \infty$,
\begin{align*}
    \gamma_*^c  \to \gamma_*^{\rm HF},\quad \cK_\rL \gamma_*^c\cK_\rL \to \gamma_*^{\rm HF}, \quad \mbox{in }X^2
\end{align*}
and 
\begin{align}\label{eq:uL-uHF}
  \cK_\rL  u_{c,n}\to u_n\quad \mbox{in }H^1,\qquad \lambda_n^c\to \lambda_n.
\end{align}

Now we claim that as $c\to \infty$,
\begin{align*}
    4c^2\widetilde{\cE}_c^{(2)}(\gamma_*^c) \to 4c^2\cE^{(2)}_c(\gamma_*^{\rm HF}).
\end{align*}
Recall that
\begin{align*}
    4c^2\widetilde{\cE}_c^{(2)}(\gamma_*^c)&=-\sum_{1\leq n\leq q}(\lambda_n^c-c^2)\left<\cL \cK_\rL u_{c,n},\cL \cK_\rL u_{c,n}\right>_\cH \\
    &+\Tr_\cH[(-V+W_{1,\cK_\rL \gamma_*^c\cK_\rL })\cL \cK_\rL \gamma_*^c  \cK_\rL \cL]-\Tr_\cH[W_{2,\cK_\rL \gamma_*^c \cK_\rL \cL}\cL \cK_\rL \gamma_*^c \cK_\rL].
\end{align*}
We study it term by term. By Lemma \ref{lem:u-dec'} and \eqref{eq:uL-uHF},
\begin{align*}
  \left| \left<\cL \cK_\rL u_{c,n},\cL \cK_\rL u_{c,n}\right>_\cH -\left<\cL  u_{n},\cL \cK_\rL u_{n}\right>_\cH \right|\leq \|\cK_\rL u_{c,n}+u_n\|_{H^1}\|\cK_\rL u_{c,n}- u_n\|_{H^1}\to 0.
\end{align*}
Thus,
\begin{align*}
    \sum_{1\leq n\leq q}(\lambda_n^c-c^2)\left<\cL \cK_\rL u_{c,n},\cL \cK_\rL u_{c,n}\right>_\cH  \to \sum_{1\leq n\leq q}\lambda_n\left<\cL u_{n},\cL  u_{n}\right>_\cH.
\end{align*}
Next, by Hardy's inequality and Lemma \ref{lem:u-dec'},
\begin{align*}
   \MoveEqLeft \left|\Tr_\cH[V \cL \cK_\rL \gamma_*^c  \cK_\rL \cL]-\Tr_\cH[V \cL \gamma_*^{\rm HF} \cL]\right|\\
   &=\left|\sum_{n=1}^q\left<V(\cL u_{c,n}^{\rL} +\cL u_n^\rL), \cL u_{c,n}^{\rL} -\cL u_n^\rL\right>_{L^2(\R^3;\C^2)}\right|\\
   &\lesssim \sum_{n=1}^q\| u_{c,n}^{\rL} +u_n^\rL\|_{H^2(\R^3;\C^2)}\|u_{c,n}^{\rL} -u_n^\rL\|_{H^1(\R^3;\C^2)}\to 0.
\end{align*}
Concerning $W_{1,\bullet}$, we have
\begin{align*}
   \MoveEqLeft \left|\Tr_\cH[W_{1,\cK_\rL\gamma_*^c \cK_\rL} \cL \cK_\rL \gamma_*^c  \cK_\rL \cL]-\Tr_\cH[W_{1,\gamma_*^{\rm HF}} \cL \gamma_*^{\rm HF} \cL]\right|\\
   &\leq \Big|\Tr_\cH[W_{1,\cK_\rL\gamma_*^c \cK_\rL- \gamma_*^{\rm HF}} \cL \cK_\rL \gamma_*^c  \cK_\rL \cL]\Big|\\
   &\quad +\Big|\Tr_\cH[W_{1,\gamma_*^{\rm HF}} (\cL \cK_\rL \gamma_*^c \cK_\rL\cL- \cL \gamma_*^{\rm HF}\cL)]\Big|\\
   &\lesssim \|\cK_\rL\gamma_*^c \cK_\rL- \gamma_*^{\rm HF}\|_{X}\|\gamma_*^c\|_{X^2}+\|\gamma_*^{\rm HF}\|_{X}\|\cK_\rL\gamma_*^c \cK_\rL- \gamma_*^{\rm HF}\|_{X^2}\to 0.
\end{align*}
Concerning $W_{2,\bullet}$, by Lemma \ref{lem:W2-estimate}
\begin{align*}
\MoveEqLeft   \left| \Tr_\cH[W_{2,\cK_\rL \gamma_*^c \cK_\rL \cL}\cL \cK_\rL \gamma_*^c \cK_\rL] -\Tr_\cH[W_{2,\gamma_*^{\rm HF} \cL}\cL \gamma_*^{\rm HF}]\right|\\
&\leq \left| \Tr_\cH[\cL \cK_\rL \gamma_*^c \cK_\rL W_{2,\cK_\rL \gamma_*^c \cK_\rL \cL- \gamma_*^{\rm HF} \cL}]\right| +\left|\Tr_\cH[(\cL \cK_\rL \gamma_*^c \cK_\rL -\cL \gamma_*^{\rm HF})W_{2,\gamma_*^{\rm HF} \cL} ]\right|\\
&\lesssim \left(\|\cK_\rL \gamma_*^c \cK_\rL\|_{X^2}+\|\gamma_*^{\rm HF}\|_{X^2}\right)\|\cK_\rL \gamma_*^c \cK_\rL - \gamma_*^{\rm HF} \|_{X^2}\to 0.
\end{align*}
Thus we can conclude that
\begin{align*}
     4c^2\widetilde{\cE}_c^{(2)}(\gamma_*^c) = 4c^2\cE^{(2)}_c(\gamma_*^{\rm HF})+o_{c\to \infty}(1)
\end{align*}
for some $\gamma_*^{\rm HF}\in\mathcal{G}_{\rm HF}$. As a result, 
\begin{align*}
    \min_{\gamma\in \mathcal{G}_{\rm HF}} \cE^{(2)}_c(\gamma)\leq \cE^{(2)}_c(\gamma_*^{\rm HF}) = \widetilde{\cE}_c^{(2)}(\gamma_*^c) +o(c^{-2}).
\end{align*}
This proves \eqref{eq:6.2}. Now the proof is completed.

\subsection{Proof of Proposition \ref{prop:rela-decomp}}\label{sec:8.4}
We mainly focus on the term $- \sum_{n=1}^q \lambda_{n}^{\rm HF}\left<\cL u_{n}^{\rm HF},\cL  u_n^{\rm HF}\right>_\HL$. The others can be counterbalanced by reformulating this term. We have
\begin{align*}
 - \sum_{n=1}^q \lambda_{n}^{\rm HF}\left<\cL u_{n}^{\rm HF},\cL  u_n^{\rm HF}\right>_\HL&=  -\sum_{n=1}^q\Re \left<H_{0,\gamma_*^{\rm HF}} u_n^{\rm HF},  \cL^2u_n^{\rm HF}\right>_\HL\\
    &= -\sum_{n=1}^q\Re \left<(H_0-V+W_{1,\gamma_*^{\rm HF}}-W_{2,\gamma_*^{\rm HF}}) u_n^{\rm HF},  \cL^2u_n^{\rm HF}\right>_\HL.
\end{align*}
We study terms associated with $H_0, (-V+W_{1,\bullet})$ and $W_{2,\bullet}$ separately.

\medskip

\noindent{\bf Term with $H_0$.} Concerning $H_0$,
\begin{align}\label{eq:8.5}
     -\sum_{n=1}^q\Re \left<H_0 u_n^{\rm HF}, \cL^2 u_n^{\rm HF}\right>_\HL= -2\sum_{n=1}^q\left<u_n^{\rm HF}, H_0^2 u_n^{\rm HF}\right>=E_{\rm mv} .
\end{align}

\medskip

\noindent{\bf Term with $-V+W_{1,\bullet}$.} Next, we study the term associated with $-V+W_{1,\gamma_*^{\rm HF}}$:
\begin{align*}
   \MoveEqLeft  -\sum_{n=1}^q\Re \left<(-V+W_{1,\gamma_*^{\rm HF}}) u_n^{\rm HF}, \cL^2 u_n^{\rm HF}\right>_\HL\\
     &=-\sum_{n=1}^q\Re \left<\Big[\cL, (-V+W_{1,\gamma_*^{\rm HF}})\Big] u_n^{\rm HF}, \cL u_n^{\rm HF}\right>_\HL\\
     &\quad-\sum_{n=1}^q \left<\cL u_n^{\rm HF}, (-V+W_{1,\gamma_*^{\rm HF}})\cL u_n^{\rm HF}\right>_{\HL}
\end{align*}
Note that
\begin{align*}
\MoveEqLeft \Re \left<\Big[\cL, (-V+W_{1,\gamma_*^{\rm HF}})\Big] u_n^{\rm HF}, \cL u_n^{\rm HF}\right>_\HL\\
&=\Re  \left<\Big[\cL,\Big[\cL, (-V+W_{1,\gamma_*^{\rm HF}})\Big] \Big] u_n^{\rm HF},  u_n^{\rm HF}\right>_\HL+\Re \left<\Big[\cL, (-V+W_{1,\gamma_*^{\rm HF}})\Big] \cL u_n^{\rm HF},  u_n^{\rm HF}\right>_\HL\\
&=\Re\left<\Big[\cL,\Big[\cL, (-V+W_{1,\gamma_*^{\rm HF}})\Big] \Big]u_n^{\rm HF},  u_n^{\rm HF}\right>_\HL-\Re \left<\Big[\cL, (-V+W_{1,\gamma_*^{\rm HF}})\Big]  u_n^{\rm HF}, \cL  u_n^{\rm HF}\right>_\HL.
\end{align*}
Then,
\begin{align*}
   \MoveEqLeft \Re \left<\Big[\cL, (-V+W_{1,\gamma_*^{\rm HF}})\Big] u_n^{\rm HF}, \cL u_n^{\rm HF}\right>_\HL\\
    &=\frac{1}{2}\Re\left<\Big[\cL,\Big[\cL, (-V+W_{1,\gamma_*^{\rm HF}})\Big]\Big]u_n^{\rm HF},  u_n^{\rm HF}\right>_\HL=\frac{1}{2}\left<\Big[\cL,\Big[\cL, (-V+W_{1,\gamma_*^{\rm HF}})\Big]\Big]u_n^{\rm HF},  u_n^{\rm HF}\right>_\HL.
\end{align*}
Thus,
\begin{align}\label{eq:7.7'}
   \MoveEqLeft  -\sum_{n=1}^q\Re \left<(-V+W_{1,\gamma_*^{\rm HF}}) u_n^{\rm HF}, \cL^2 u_n^{\rm HF}\right>_\HL\notag\\
     &=-\frac{1}{2}\sum_{n=1}^q\left<\Big[\cL,\Big[\cL, (-V+W_{1,\gamma_*^{\rm HF}})\Big] \Big]u_n^{\rm HF},  u_n^{\rm HF}\right>_\HL\notag\\
     &\quad-\sum_{n=1}^q \left<\cL u_n^{\rm HF}, (-V+W_{1,\gamma_*^{\rm HF}})\cL u_n^{\rm HF}\right>_{\HL}.
\end{align}
In addition, for any potential $\widetilde{V}$, we have formally
\begin{align}\label{eq:[L,[L,V]]}
    [\cL,[\cL, \widetilde{V}]]u&=-\Delta(\widetilde{V} u)-2\cL ( \widetilde{V} \cL u)+ \widetilde{V}(-\Delta u)\notag\\
    &=(-\Delta \widetilde{V})u-2\nabla \widetilde{V}\cdot \nabla u-2(\cL V)\cL u\notag\\
    &=(-\Delta \widetilde{V}) u+2i\pmb \sigma\cdot\Big((\nabla \widetilde{V}) \times \nabla\Big)u
\end{align}
where  we recall $\cL=-i\pmb \sigma\cdot \nabla$, the notation ``$\times$'' is the cross product and in the last equation we used the fact that for any vector $\pmb a,\pmb b\in \R^3$,
\begin{align*}
 (\pmb \sigma\cdot\pmb a)(\pmb \sigma \cdot \pmb b)=\pmb a\cdot \pmb b +i\pmb \sigma \cdot (\pmb a\times \pmb b ).
\end{align*}
Thus, 
\begin{align}\label{eq:8.6}
   \MoveEqLeft  -\sum_{n=1}^q\Re \left<(-V+W_{1,\gamma_*^{\rm HF}}) u_n^{\rm HF}, \cL^2 u_n^{\rm HF}\right>_\HL\notag\\
     &=\frac{1}{2}\sum_{n=1}^q\left<u_n^{\rm HF}, \Big[\Delta (-V+W_{1,\gamma_*^{\rm HF}})\Big]  u_n^{\rm HF}\right>_\HL\notag\\
     &\quad +\frac{1}{2}\sum_{n=1}^q\left<u_n^{\rm HF},\pmb\sigma \cdot \Big[ (-\nabla V+\nabla W_{1,\gamma_*^{\rm HF}})\times (-i\nabla)\Big]  u_n^{\rm HF}\right>_\HL\notag\\
     &\quad-\sum_{n=1}^q \left<\cL u_n^{\rm HF}, (-V+W_{1,\gamma_*^{\rm HF}})\cL u_n^{\rm HF}\right>_{\HL}.
\end{align}

\medskip

\noindent{\bf Term with $W_{2,\bullet}$.} Finally, we study terms associated with $W_{2,\bullet}$. We have
\begin{align*}
 \MoveEqLeft   \sum_{n=1}^q \Re\left<W_{2,\gamma_*^{\rm HF}}u_n^{\rm HF}, \cL^2 u_n^{\rm HF}\right>\\
    &=   \sum_{m,n=1}^q \Re\int_{\R^3} (u_n^{\rm HF})^*(x) u_m^{\rm HF}(x)  \left<W(x-\cdot) u_m^{\rm HF}, \cL^2 u_n^{\rm HF}\right>_{\HL}dx\\
    &=\sum_{m,n=1}^q \Re\int_{\R^3} (u_n^{\rm HF})^*(x) u_m^{\rm HF}(x)\left<W(x-\cdot) \cL u_m^{\rm HF}, \cL u_n^{\rm HF}\right>_{\HL}dx\\
    &\quad +\sum_{m,n=1}^q \Re\int_{\R^3} (u_n^{\rm HF})^*(x) u_m^{\rm HF}(x)\left<\Big[\cL, W(x-\cdot)\Big] u_m^{\rm HF}, \cL u_n^{\rm HF}\right>_{\HL} dx.
\end{align*}
Analogous to \eqref{eq:7.7'}, we have
\begin{align*}
  \MoveEqLeft  \sum_{m,n=1}^q \Re\int_{\R^3} (u_n^{\rm HF})^*(x) u_m^{\rm HF}(x)\left< \Big[\cL, W(x-\cdot)\Big] u_m^{\rm HF},\cL u_n^{\rm HF}\right>_{\HL} dx\\
  &= \frac{1}{2}\sum_{m,n=1}^q \int_{\R^3} (u_n^{\rm HF})^*(x) u_m^{\rm HF}(x)\left< \Big[\cL, \Big[\cL, W(x-\cdot)\Big]\Big] u_m^{\rm HF}, u_n^{\rm HF}\right>_{\HL} dx.
\end{align*}
Thus by \eqref{eq:[L,[L,V]]},
\begin{align}\label{eq:8.7}
     \MoveEqLeft   \sum_{n=1}^q \Re\left<W_{2,\gamma_*^{\rm HF}}u_n^{\rm HF}, \cL^2 u_n^{\rm HF}\right>\notag\\
     &=\sum_{m,n=1}^q \Re\int_{\R^3} (u_n^{\rm HF})^*(x) u_m^{\rm HF}(x)\left<W(x-\cdot) \cL u_m^{\rm HF}, \cL u_n^{\rm HF}\right>_{\HL}dx\notag\\
    &\quad +\frac{1}{2}\sum_{m,n=1}^q\int_{\R^3} (u_n^{\rm HF})^*(x) u_m^{\rm HF}(x)\left< \Big[\cL, \Big[\cL, W(x-\cdot)\Big]\Big] u_m^{\rm HF}, u_n^{\rm HF}\right>_{\HL} dx\notag\\
    &=\sum_{m,n=1}^q \int_{\R^3} (u_n^{\rm HF})^*(x) u_m^{\rm HF}(x)\left<W(x-\cdot) \cL u_m^{\rm HF}, \cL u_n^{\rm HF}\right>_{\HL}dx\notag\\
    &\quad -\frac{1}{2}\sum_{m,n=1}^q \int_{\R^3} (u_n^{\rm HF})^*(x) u_m^{\rm HF}(x)\left< u_m^{\rm HF}, \Big[\Delta_y W(x-\cdot)\Big] u_n^{\rm HF}\right>_{\HL} dx\notag\\
    &\quad- \frac{1}{2}\sum_{m,n=1}^q \int_{\R^3} (u_n^{\rm HF})^*(x) u_m^{\rm HF}(x)\left<  u_m^{\rm HF},\pmb \sigma\cdot \Big[(\nabla_y W(x-\cdot))\times (-i\nabla)\Big] u_n^{\rm HF}\right>_{\HL} dx.
\end{align}

\medskip

\noindent{\bf Conclusion.} From \eqref{eq:8.5}-\eqref{eq:8.7}, we conclude that
\begin{align*}
\MoveEqLeft   - \sum_{n=1}^q \lambda_{n}^{\rm HF}\left<\cL u_{n}^{\rm HF},\cL  u_n^{\rm HF}\right>_\HL\\
&=E_{\rm mv}+E_{\rm D}+E_{\rm so}-\sum_{n=1}^q\left<\cL u_n^{\rm HF}, (-V+W_{1,\gamma_*^{\rm HF}})\cL u_n^{\rm HF}\right>_{\HL}\\
     &\quad+ \sum_{m,n=1}^q\int_{\R^3}  (u_n^{\rm HF})^*(x) u_m^{\rm HF}(x) \left<\cL u_m^{\rm HF}, |x-\cdot|^{-1}\cL u_n^{\rm HF}\right>_{\HL} dx.
\end{align*}
Inserting this into the formula of $E^{(2)}_c$, we get
\begin{align*}
    4c^2 E^{(2)}_c= E_{\rm mv}+E_{\rm D}+E_{\rm so}.
\end{align*}
This ends the proof.

 
\appendix

\section{Some technical estimates}\label{sec:A}
In this section, we list some basic estimates used in this paper taken from \cite{sere2023new,meng2024rigorous}. The difference is only because of the change of units for $Z$, $\alpha$ and $c$.
\begin{lemma}\label{lem:ope}
Let $\gamma\in X$. 
\begin{enumerate}
    \item 
    \begin{align}\label{eq:5.1-1}
    \|W_{\gamma}\|_{\mathcal{B}(\mathcal{H})}\leq \frac{\pi}{2}\|\gamma\|_X\leq \frac{\pi}{2 c}\|\gamma\|_{X_c}    
    \end{align}
    \item 
    \begin{align}\label{eq:5.1-2}
        \|W_{\gamma}u\|_{\mathcal{H}}\leq 2\|\gamma\|_{\mathfrak{S}_1}\|\nabla u\|_{\mathcal{H}}\leq \frac{2\|\gamma\|_{\mathfrak{S}_1}}{c}\||\cD|^{1/2}u\|_{\mathcal{H}}.
    \end{align}
    \item  Let $\gamma\in \Gamma_{q}$ and $\kappa_c<1$. Then
    \begin{align}\label{eq:5.1-3}
        (1-\kappa_c)^2|\cD|^2\leq |\mathcal{D}^c_{\gamma}|^2\leq  (1+\kappa_c)^{2}|\cD|^2.
    \end{align}
As a result,
\begin{align}\label{eq:D-D}
        (1-\kappa_c)|\cD|\leq |\mathcal{D}^c_{\gamma}|\leq  (1+\kappa_c)|\cD|.
    \end{align}
\item Let $\gamma\in \Gamma_{q}$, we have
\begin{equation}\label{eq:DP}
    \||\cD|^{1/2}P^{\pm}_{\gamma}u\|_{\mathcal{H}}\leq \frac{(1+\kappa_c)^{1/2}}{(1-\kappa_c)^{1/2}}\||\cD|^{1/2} u\|_{\mathcal{H}}.
\end{equation}
\item Let $\gamma\in \Gamma_{q}$ and $\max(q,Z)<\frac{2}{\pi/2+2/\pi}$, then
\begin{align}\label{eq:5.1-5}
    \inf|\sigma(\mathcal{D}^c_{\gamma})|\geq c^2\lambda_{0,c}(\alpha,c)=c^2(1-\max(\alpha_c q,Z_c)).
\end{align}
\item Let $h\in X^2$ and $\gamma\in \Gamma_{q}$, then 
\begin{align}\label{eq:W-cD}
    \|[W_h,\cD_{\gamma}]\|_{\cB(\cH)}\leq 16c(1+\kappa_c)\|h\|_{X^2}+c^2\|[W_h,\beta]\|_{\cB(\cH)}.
\end{align}
\end{enumerate}
\end{lemma}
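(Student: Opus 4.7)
The six estimates follow the lines of \cite{sere2023new,meng2024rigorous}, and I would organize the proof by grouping items according to the underlying technique.

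\textbf{Items (1)--(2), mean-field bounds.} Decompose $W_\gamma=W_{1,\gamma}-W_{2,\gamma}$. For \eqref{eq:5.1-1}, combine Kato's inequality $|x|^{-1}\leq \frac{\pi}{2}(-\Delta)^{1/2}$ with the two-variable kernel estimate $(-\Delta_x)^{-1/4}(-\Delta_y)^{-1/4}|x-y|^{-1}\in \mathcal{B}(L^2(\mathbb{R}^3\times\mathbb{R}^3))$ already used in Section~\ref{sec:5}: the former controls the direct term $W_{1,\gamma}$ through duality against $\rho_\gamma$, the latter the exchange term $W_{2,\gamma}$ through its integral kernel. The passage from $\|\gamma\|_X$ to $c^{-1}\|\gamma\|_{X_c}$ uses the operator inequality $|\cD|^2=c^4-c^2\Delta\geq c^2(1-\Delta)$, valid for $c\geq 1$, which by functional calculus yields $c^{1/2}(1-\Delta)^{1/4}\leq |\cD|^{1/2}$. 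For \eqref{eq:5.1-2}, the first inequality combines Hardy's bound for $W_{1,\gamma}$ with Lemma~\ref{lem:W2-estimate} for $W_{2,\gamma}$; the second uses the same comparison between $|\cD|$ and $-\Delta$.

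\textbf{Items (3)--(4), comparison of $|\cD_\gamma^c|$ with $|\cD|$.} Since $\gamma\in \Gamma_{q}$ satisfies $\Tr\gamma\leq q$ and $V\leq z|x|^{-1}$, applying \eqref{eq:5.1-1} together with Kato for $V$ yields the form bound $|{-V+W_\gamma}|\leq \kappa_c|\cD|$, which is precisely why the constant $\kappa_c=2(q+z)/c$ is chosen. Squaring the identity $\cD_\gamma^c=\cD+(-V+W_\gamma)$ and controlling the anticommutator via this form bound produces \eqref{eq:5.1-3}; operator monotonicity of $t\mapsto\sqrt{t}$ then delivers \eqref{eq:D-D}. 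For \eqref{eq:DP}, factor
\[
|\cD|^{1/2}P_\gamma^{\pm}u=\bigl(|\cD|^{1/2}|\cD_\gamma^c|^{-1/2}\bigr)\,P_\gamma^{\pm}\,\bigl(|\cD_\gamma^c|^{1/2}|\cD|^{-1/2}\bigr)|\cD|^{1/2}u,
\]
use that $P_\gamma^{\pm}$ commutes with $|\cD_\gamma^c|$, and apply \eqref{eq:D-D} to each parenthesized factor.

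\textbf{Items (5)--(6), spectral gap and commutator.} For \eqref{eq:5.1-5}, replace the Kato constant $\pi/2$ by Tix's optimal Coulomb--Dirac constant $\pi/2+2/\pi$ on the free-projected subspaces $\Lambda_c^{\pm}\cH$: decomposing $u=\Lambda_c^+u+\Lambda_c^-u$ and evaluating $\langle u,(\cD_\gamma^c)^2 u\rangle$ block by block yields $|\sigma(\cD_\gamma^c)|\geq c^2(1-\max(q/c,z/c))$ provided $\max(q,z)<2/(\pi/2+2/\pi)$. For \eqref{eq:W-cD}, write $[W_h,\cD_\gamma]=[W_h,\cD]+[W_h,W_\gamma-V]$. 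The multiplication--multiplication commutators with $-ic\,\alpha\cdot\nabla$ generate gradients of Newton potentials, bounded by $c\|\nabla W_{1,h}\|_\infty\lesssim c\|h\|_{X^2}$; the commutators involving the exchange kernel are controlled via $\|(-\Delta)^{1/2}W_{2,h}f\|_\cH\lesssim \|h\|_{X^2}\|\nabla f\|_\cH$ from Lemma~\ref{lem:W2-estimate}, again of order $c\|h\|_{X^2}$. The only piece that cannot be absorbed into gradient bounds is the off-diagonal spin block of $W_{2,h}$: it does not commute with the mass term $c^2\beta$, and this produces the residual $c^2\|[W_h,\beta]\|_{\cB(\cH)}$. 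The main obstacle is securing the sharp constant in (5), which rests on Tix's $\Lambda_c^{\pm}$-block analysis; the commutator bound (6) is routine in principle but bookkeeping-heavy, since both the $c$ produced by the Dirac gradient and the $\beta$-commutator generated by the exchange kernel must be tracked separately.
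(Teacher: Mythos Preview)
Your sketch is correct and follows the standard arguments in the works the paper cites; the paper's own proof of this lemma is nothing more than a pointer to \cite[Lemma~2.6]{sere2023new}, \cite[Lemma~A.1]{meng2024rigorous}, and \cite[Lemma~5.5]{meng2024rigorous}, so you have supplied more detail than the paper itself. Your organization by tool---Kato/Hardy for (1)--(2), the relative bound $\|(-V+W_\gamma)u\|\le\kappa_c\|\cD u\|$ for (3)--(4), Tix's sharp Coulomb--Dirac constant on the $\Lambda_c^\pm$ blocks for (5), and the commutator expansion isolating the $c^2\beta$ contribution for (6)---matches those references.
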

\begin{proof}
Estimates \eqref{eq:5.1-1}-\eqref{eq:5.1-5} can be found in \cite[Lemma 2.6]{sere2023new}. Here they are a direct copy of \cite[Lemma A.1]{meng2024rigorous}. The last estimate \eqref{eq:W-cD} is a modification of \cite[Lemma 5.5]{meng2024rigorous}.
\end{proof}

{\bf Declarations.}\medskip

{\bf Competing Interests and Funding} The authors declare no conflict of interest.\medskip

{\bf Data Availability} My manuscript has no associated data.

{\bf Acknowledgments.}: Supports by the Deutsche
Forschungsgemeinschaft (DFG, German Research Foundation) through TRR 352 -- Project 470903074 and by ERC CoG RAMBAS – Project-Nr. 10104424 are acknowledged. \medskip

\medskip
\begin{refcontext}[sorting=nyt]
\printbibliography[heading=bibintoc, title={Bibliography}]
\end{refcontext}
\end{document}